\documentclass[11pt]{article}
\usepackage[utf8]{inputenc}
\usepackage[margin=0.9in]{geometry}
\usepackage{authblk}
\usepackage{amsmath,amssymb,amsthm,mathtools}
\usepackage{aliascnt}
\usepackage{microtype}
\usepackage{enumitem}
\usepackage{booktabs}
\usepackage{array}
\usepackage{tikz}
\usetikzlibrary{arrows.meta,calc,positioning,decorations.pathreplacing}
\usepackage{cite}
\usepackage[hidelinks]{hyperref}
\usepackage[nameinlink,capitalize,noabbrev]{cleveref}
\hypersetup{
  pdftitle={The Kikuchi Hierarchy is Sharp for kXOR},
  pdfauthor={Alexander Schmidhuber and Matthew B. Hastings}
}
\newtheorem{theorem}{Theorem}[section]
\newaliascnt{proposition}{theorem}
\newtheorem{proposition}[proposition]{Proposition}
\aliascntresetthe{proposition}
\newaliascnt{lemma}{theorem}
\newtheorem{lemma}[lemma]{Lemma}
\aliascntresetthe{lemma}
\newaliascnt{corollary}{theorem}
\newtheorem{corollary}[corollary]{Corollary}
\aliascntresetthe{corollary}
\newaliascnt{definition}{theorem}
\theoremstyle{definition}
\newtheorem{definition}[definition]{Definition}
\aliascntresetthe{definition}
\theoremstyle{remark}
\newaliascnt{remark}{theorem}
\newtheorem{remark}[remark]{Remark}
\aliascntresetthe{remark}
\crefformat{equation}{(#2#1#3)}
\crefrangeformat{equation}{(#3#1#4)--(#5#2#6)}
\crefmultiformat{equation}{(#2#1#3)}{ and (#2#1#3)}{, (#2#1#3)}{, and (#2#1#3)}
\newcommand{\E}{\mathbb E}
\newcommand{\bbP}{\mathbb P}

\newcommand{\1}{\mathbf 1}
\newcommand{\cB}{\mathcal B}
\newcommand{\cC}{\mathcal C}
\newcommand{\cG}{\mathcal G}
\newcommand{\cH}{\mathcal H}
\newcommand{\cL}{\mathcal L}
\newcommand{\cO}{\mathcal O}
\newcommand{\cP}{\mathcal P}
\newcommand{\cQ}{\mathcal Q}
\newcommand{\cR}{\mathcal R}
\newcommand{\cS}{\mathcal S}
\newcommand{\cT}{\mathcal T}
\newcommand{\cW}{\mathcal W}
\newcommand{\sd}{\mathbin{\triangle}}
\newcommand{\bigsd}{\mathop{\bigtriangleup}}
\newcommand{\bd}{\bar d}
\newcommand{\eps}{\varepsilon}
\newcommand{\Dkl}{\mathsf D_{k,\ell}}
\newcommand{\xstar}{{x^\star}}
\newcommand{\rhoone}{\rho_1}
\newcommand{\ip}[2]{\left\langle #1,#2\right\rangle}
\DeclareMathOperator{\Tr}{Tr}
\DeclareMathOperator{\diag}{diag}
\DeclareMathOperator{\sign}{sign}
\DeclareMathOperator{\Var}{Var}
\DeclareMathOperator{\Cert}{Cert}
\DeclareMathOperator{\Def}{Def}
\DeclareMathOperator{\val}{val}
\DeclareMathOperator{\Ran}{Ran}
\DeclareMathOperator{\rank}{rank}

\DeclareMathOperator{\err}{err}
\DeclareMathOperator{\Adv}{Adv}
\numberwithin{equation}{section}
\allowdisplaybreaks
\setlist[enumerate]{leftmargin=2.1em,itemsep=0.2em,topsep=0.3em}
\title{The Kikuchi Hierarchy is Sharp for $k$XOR}
\author[1,2]{Alexander Schmidhuber}
\author[2]{Matthew B. Hastings}

\affil[1]{Center for Theoretical Physics, MIT}
\affil[2]{Microsoft Station Q}
\date{}

\begin{document}
\maketitle

\begin{abstract}
Planted noisy $k$XOR and the strong refutation of random $k$XOR are
 governed by a conjectured trade-off between signal strength and time: Level $\ell$ of the Kikuchi hierarchy should achieve the smooth curve
\begin{equation*}
 m\ \gtrsim\ \rho^{-2}n^{k/2}/\ell^{k/2-1}\ \text{clauses}
 \quad\Longleftrightarrow\quad
 \text{solvable in time }n^{O(\ell)},
\end{equation*}
where $\rho$ is the bias of the planted signal or, for refutation, the
target advantage.  However, every spectral analysis of sparse $k$XOR to date
loses polylogarithmic factors against this curve, a loss that enters the exponent of the running time.

We show that a normalized variant of the Kikuchi hierarchy
achieves the sharp conjectured trade-off, with no
logarithmic loss, at every arity $k\ge3$.  At the scale above, our
algorithms achieve strong detection, weak recovery, and
strong refutation; an additional cleanup step boosts
weak recovery to exact recovery, and the refutation certificates yield sum-of-squares proofs of degree $O(\ell)$.  We also prove matching lower bounds in the same model.  The inference and refutation upper bounds transfer to more general planting laws and predicates.
Finally, we give a quantum algorithm that achieves a quartic speedup
over the classical spectral algorithms for detection and weak recovery.

The proofs rest on two key ingredients: a  normalization of the sparse Kikuchi matrix, and a
sharp count of the closed walks in its trace expansion. We use a closely related trace-walk count to prove Feige's 2008 hypergraph Moore bound conjecture in a companion paper~\cite{SHmoore}.
\end{abstract}

\section{Introduction}\label{sec:intro}

Planted inference problems concern the detection or recovery of a low-rank signal hidden inside a large random
structure, such as a spike in a Gaussian tensor or a planted assignment in a random
constraint satisfaction problem.  For many such problems, the signal-to-noise ratio (SNR) at which the
task becomes statistically possible and the ratio  at which known
efficient algorithms begin to succeed differ by polynomial factors, and
in between the problem appears computationally hard.  These
statistical--computational gaps come with sharp and remarkably
consistent predictions, many originating in statistical
physics~\cite{ZK,BPW}.  Within the computationally hard regime,
a more refined phenomenon may appear: running time can be traded smoothly
against signal strength, producing a hierarchy of superpolynomial algorithms whose runtime increases as the SNR decreases towards the statistical threshold.  The precise shape of this trade-off is the central object of study~\cite{WEM,KWB}.

The Kikuchi hierarchy,\footnote{The name refers to the Kikuchi
approximation of free energies in statistical
mechanics~\cite{Kikuchi1951}; see also the mean-field interpretation
in~\cite{Hastings2019} and Appendix B of~\cite{SOKB}.} defined by
Wein, El~Alaoui, and Moore~\cite{WEM} (and in a different
language by
Hastings~\cite{Hastings2019}), is a canonical spectral realization of
this trade-off.  At level $\ell$, the observations are assembled into a
matrix indexed by the $\ell$-element subsets of the variables, built so
that its leading eigenvalues and eigenvectors carry information about
the hidden signal.  For $k>2$, raising the level of the hierarchy lowers the amount of
signal the algorithm needs, at the price of $n^{O(\ell)}$ running time.
The hierarchy has since become the standard route to smooth time--data
trade-offs in planted inference and random-CSP
refutation~\cite{GKM,HKM,HKMMS,SOKB,SZ}; closely related Kikuchi-graph
and even-cover methods also power recent lower bounds for locally
decodable and correctable codes~\cite{AGKM,KothariManohar}.

This paper concerns the central sparse example, \emph{planted noisy $k$XOR}, defined
$m$ clauses each specifying the parity of $k$ out of $n$ Boolean variables.  For this problem, the signal strength corresponds to the number of sampled clauses. The
conjectured trade-off between the hierarchy level $\ell$ and the number of clauses is
\begin{equation}\label{eq:scale}
 m\ \asymp\ \frac{\Dkl}{\rho^2},
 \qquad
 \Dkl:=\frac{n^{k/2}}{\ell^{k/2-1}},
\end{equation}
where $\rho$ is the bias of each noisy parity check; for strong
refutation, $\rho$ is replaced by the
target advantage $\eps$.  For inference, the same scale is predicted by the
low-degree likelihood-ratio calculation~\cite{HopkinsSteurer,KWB}; and
at $\rho=1$ it is the density at which Feige conjectured that even
covers of length $O(\ell\log n)$ must appear in any $k$-uniform
hypergraph, $k\ge3$~\cite{Feige}.

This curve, however, has so far been realized by the spectral Kikuchi
hierarchy only up to logarithmic
losses, that is, at a trade-off
\begin{equation}\label{eq:scale-polylog}
 m\ \asymp\ \frac{\Dkl}{\rho^2}\cdot\mathrm{polylog}(n).
\end{equation}
This loss matters as it sits in the
exponent of the running time.  Inverting \cref{eq:scale}, the level
required at clause density $\Delta=m/n$ is
$\ell\asymp n/(\rho^2\Delta)^{2/(k-2)}$ for $k>2$, so a polylogarithmic
loss in clauses multiplies the exponent of the $n^{O(\ell)}$ running time
by a polylogarithmic factor and thus creates a superpolynomial
slowdown.  Removing these factors at growing levels has been open since the Kikuchi
hierarchy was introduced~\cite{WEM}, and through all subsequent analyses
of spectral algorithms for this and related sparse problems~\cite{GKM,HKM,HKMMS,SOKB,SZ}.  %the spectral
%template of~\cite{RRS} predates the hierarchy, and  At the constant-level endpoint a log-free polynomial-time refutation is known~\cite{dOrsiTrevisan} (see \cref{sec:related}).

In this work we close the gap: a normalized variant of the Kikuchi hierarchy achieves the tight
curve \cref{eq:scale} up to constants, with no logarithmic loss, for
detection, weak recovery, and two-sided strong refutation, at every
arity $k\ge3$ (and at $k=2$, where the curve is flat).  For detection
and weak recovery we also prove matching lower bounds in the same
fixed-$m$ model, against a class of low-degree statistics broad
enough to allow arbitrary preprocessing of the observed clause
supports.  \Cref{tab:ledger} in \cref{sec:results} records the
resulting picture task by task.

\subsection{The obstruction, and how it is removed}\label{sec:obstruction}

The logarithm in \cref{eq:scale-polylog} is not merely an artifact of loose
bookkeeping.  At small levels, the \emph{unnormalized} sparse Kikuchi
matrix has a genuine localization obstruction.  At the target curve, a
row of the level-$\ell$ matrix has mean degree of order
$\rho^{-2}\ell$, but the degrees fluctuate, and among $\binom n\ell$
rows some exceed the mean by a logarithmic factor.  A high-degree row
whose neighbors are otherwise nearly isolated forms a signed star, and
a star with $q$ edges carries an eigenvalue of size $\sqrt q$ localized
on it.  More generally, existing unnormalized
analyses pay for extreme row degrees through a \emph{maximum-degree}
estimate, and that estimate is where the logarithm enters. %whether the unnormalized operator itself becomes log-free at sufficiently large levels remains open (\cref{sec:discussion}).  
The dense Gaussian analogue of the problem, tensor PCA~\cite{RichardMontanari}, has no
degree fluctuations; this is why sharp unnormalized bounds are
possible there~\cite{KX} but do not by themselves solve sparse $k$XOR.

 Our solution to this obstruction has two
parts.  First, we normalize locally.  At even $k$, if $A$ is the signed
level-$\ell$ adjacency matrix of the Kikuchi graph, $D$ its diagonal matrix of degrees, and
$\bd$ its expected mean degree, we study
\[
 \Gamma:=D+\bd I,
 \qquad
 K:=\Gamma^{-1/2}A\,\Gamma^{-1/2}.
\]
Both terms in $\Gamma$ are necessary: the realized degree $D$ tames the
dense rows that defeat the unnormalized matrix, while the deterministic
floor $\bd I$ tames very sparse rows.  This
degree-plus-mean reweighting was introduced by Hsieh, Kothari, and
Mohanty in their work on the hypergraph Moore bound~\cite{HKM}.  At odd arity, where a single clause cannot act
on the level-$\ell$ slice, we use the same principle but with clauses
grouped into rooted pairs and with diagonal completions engineered to be
\emph{monotone}: they never decrease when clauses are added.

Second, we bound the spectrum of the normalized matrix by a sharp trace
computation.
Expanding a high even trace of $K$ produces closed walks on the Kikuchi
graph, and averaging the random signs kills every walk that uses some
clause an odd number of times.  Call a clause \emph{active} at a given
point of the walk if it has been used an odd number of times so far.
Each surviving walk then decomposes chronologically into
\emph{openings}, where a clause is used for the first time or becomes
active again, and \emph{closings}, where an active clause is reused.
Openings are paid locally.  If a walk opens a clause at row $S$, and
$M_S$ denotes the degree already created at $S$ by the clauses the walk
has exposed, then previously seen clauses have total weight at most
$M_S$ and are charged to the corresponding part of the denominator,
while one computes that fresh clauses have conditional expected mass at
most the mean degree $d=\bd$ and are charged to the floor:
\begin{equation}\label{eq:opening-identity}
 \frac{M_S}{d+M_S}+\frac{d}{d+M_S}=1.
\end{equation}

The main combinatorial task is to correctly account for closings, and this is where the second half of
the mechanism enters: a careful count of the \emph{excess}.  A fresh
clause in the walk normally explores new variables that have not appeared before in the walk. We define the excess $\chi$ to count the
variables a walk revisits instead. Each such recycled variable is
rare, costing a factor $O(t/n)$ in a walk of length $2t$.  A parity
argument shows that at most $O(\ell+\chi)$ active clauses can be chosen as next step
at the current row, so each of the $t$ closings costs
$O\bigl((\ell+\chi)/(d+\ell)\bigr)$, and summing the resulting
geometric series over the excess is what produces our sharp spectral upper bound.
\Cref{sec:techniques} gives a more detailed quantitative form of this accounting,
and \cref{sec:trace} the proofs.

The lower bounds are a mirror image of the upper bounds.  Conditioning
on the clause supports and expanding the likelihood ratio in the
orthonormal basis of sign monomials reduces everything to counting families of clauses whose supports cancel to a prescribed parity boundary. The even covers generated by the trace walks are precisely the case of empty boundary.
A direct ordered count of these families, with no logarithmic slack,
yields detection and recovery lower bounds at the curve \cref{eq:scale}
in the same fixed-$m$ model as the algorithms.

A closely related count of even trace walks resolves Feige's 2008
conjecture on the hypergraph Moore bound~\cite{Feige} in our companion
paper~\cite{SHmoore}; a concurrent and independent proof appears
in~\cite{BKNPRW}.

\subsection{Contributions}\label{sec:contributions}

Throughout, $k\ge2$ and the bias $\rho$ (for refutation, the target
advantage $\eps$) are fixed, and the level $\ell$ ranges over
essentially the full hierarchy, up to $\ell\le n^{1-\delta}/\log n$.
Formal statements appear in \cref{sec:results}.

\begin{enumerate}
\item\emph{Sharp inference (\cref{thm:main-inference}).}  At
$m\gtrsim\rho^{-2}\Dkl$, a spectral algorithm running in
$n^{\ell+O_k(1)}$ time achieves strong detection and weak recovery (up to a global sign for even $k$).  This is the curve \cref{eq:scale}
with no logarithmic loss, at every arity and every level in the stated
range.
\item\emph{Accurate and exact recovery from an independent sample
(\cref{thm:main-boost}).}  A targeted majority vote boosts weak
recovery to error $\vartheta$ using
$O_k(\rho^{-2}n\log(1/\vartheta))$ additional clauses, and to exact
recovery using $O_k(\rho^{-2}n\log n)$.  The same logarithmic
target-error dependence appears in the unconditional total-sample
minimax rate (\cref{thm:minimax}).
\item\emph{Log-free two-sided refutation, with certificates and
sum-of-squares proofs (\cref{thm:main-refutation}).}  At
$m\gtrsim\eps^{-2}\Dkl$, an algorithm outputs a bound on the maximal
absolute advantage $\sup_x|V_I(x)|$ (\cref{eq:VI}) that is sound on
\emph{every} input
and at most $\eps$ on a random one; it can be evaluated exactly, and it
yields degree-$O_k(\ell)$ sum-of-squares proofs.
\item\emph{Matching lower bounds in the same model
(\cref{thm:main-lower}).}  Below the curve, no statistic of sign degree
$O(\ell)$ distinguishes planted from null, and no pointwise-bounded
estimator of sign degree $O(\ell)$ attains nonvanishing overlap, even
when the coefficients may depend arbitrarily on the support hypergraph.

\item\emph{A quartic quantum speedup (\cref{thm:quantum-inference}).}
These normalized Kikuchi operators fit the guided sparse Hamiltonian framework
of~\cite{SOKB}, giving quantum detection and weak recovery with gate complexity $\widetilde O(n^{\ell(1/4 + o(1))}) \cdot \mathrm{poly}(n)$. This improves existing quantum algorithms for this problem by extending them from even $k$ to all $k$, proving efficient recovery, and sharpening the SNR-runtime trade-off to match the log-free sample curve. The stated runtime is a
quartic quantum speedup in the Kikuchi dimension when $\ell$ is taken to grow slowly with $n$, in particular for $\ell = n^{o(1)}$ (\cref{rem:quantum-scope}).

%, we also prove efficient recovery and extend from  on the
%log-free sample curve, at both parities, with gate complexity
%.  
\item\emph{Transfer to Boolean CSPs (\cref{sec:boolean-csp}).}  Every
nonzero Fourier character of a fixed planting law is an exact noisy-XOR
channel, so the inference upper bounds apply channelwise, to every
character of degree at least two, for fixed known nonuniform local
planting laws (degree-one characters are handled directly); the
refutation and sum-of-squares upper bounds extend to every fixed
predicate in the uniform random-scope/random-literal model.
\end{enumerate}

\subsection{Related work}\label{sec:related}

\paragraph{The hierarchy and its uses.}
The Kikuchi hierarchy was introduced by Wein, El~Alaoui, and
Moore~\cite{WEM}, for tensor PCA and for the refutation of random
even-arity $k$XOR, and subsequently became a basic tool for random-CSP
refutation and planted inference~\cite{GKM,HKM,HKMMS,SOKB}, and for
lower bounds on locally decodable and correctable
codes~\cite{AGKM,KothariManohar}.  Our refutation results and lower bounds build on the
spectral and sum-of-squares literature for random
CSPs~\cite{Feige02,CGL,AOW,BarakMoitra,RRS,GKM,Grigoriev,Schoenebeck,KMOW}.
The successive sparse analyses reduced the loss in
\cref{eq:scale-polylog} from $\log^{4k+1}n$~\cite{GKM} to a single
$\log n$~\cite{HKM}, and to $(\log n)^{1/(k+1)}$ at odd
arity~\cite{HKMMS}, but never reached \cref{eq:scale}.  At the
constant-level endpoint, Naor and
Verstra\"ete~\cite{NaorVerstraete} established the high-density
endpoint $\ell=\Theta(1)$ of the corresponding even-cover problem, for
all uniformities $k\ge3$. At the same endpoint, Allen, O'Donnell, and Witmer~\cite{AOW} (even $k$) 
and d'Orsi and
Trevisan~\cite{dOrsiTrevisan} (odd $k$) obtained log-free polynomial-time strong
refutation, the later via a non-backtracking Ihara--Bass
construction; neither extends to the growing-level trade-off proved
here.  %Sharp matrix concentration in the
%sense of~\cite{BBvH,BvH}
%does not remove the loss on its own: it demotes %the dimensional factor
%of noncommutative Khintchine and matrix Bernstein
%bounds~\cite{Tropp} to a lower-order term multiplying the
%free-probability prediction for the spectral norm.
Kothari and Xu~\cite{KX} obtained sharp unnormalized bounds in the dense
setting of tensor PCA, answering the analogous conjecture in \cite{WEM}.  Kothari~\cite{KothariECCC} proves a
whole-spectrum approximation for random even-uniform Kikuchi graphs,
sharp up to a logarithmic factor, and Basu, Hsieh, Lin, and
Manohar~\cite{BHLM} obtain list recovery for randomly planted CSPs
through a degree-$O(\ell)$ sum-of-squares relaxation, with a
logarithmic sample loss at fixed bias (and worse known bias dependence
in the noisy specialization).

\paragraph{Recent work of Mao.}
In recent work, Mao~\cite{Mao2026} gave a log-free recovery, and hence
detection, algorithm for $k\ge3$ that does not go through Kikuchi
matrices, using color coding and dynamic programming.  His formal
theorem applies when the degree parameter $D$ is a sufficiently large
multiple of $k$ with $Ck\le D\le0.1\log n/\log\log n$, for an absolute
constant $C$.  It achieves exact recovery with constant success
probability at the threshold, and with probability $1-o(1)$ when
$D\to\infty$ or the sample size is asymptotically above the threshold.
The two results are complementary: Mao's algorithm is non-spectral, whereas ours is spectral,
covers the full level range \cref{eq:range} at fixed bias, and
additionally produces pointwise-sound refutation certificates and
sum-of-squares proofs.  Mao also proves a low-degree detection converse
at the same curve, for ordinary degree-$D$ polynomials in a Bernoulli
observation model, whereas our converse is proved directly in the fixed-sample
occurrence model, permits arbitrary preprocessing of the multihypergraph, and also treats pointwise-bounded recovery
estimators (\cref{def:sign-degree}).

\paragraph{Lower-bound framework.}
Our lower bounds use the low-degree framework
of~\cite{HopkinsSteurer,KWB,SchrammWein,BHK+16,HKP+17,Hop18}, extended to allow arbitrary
preprocessing of the sampled supports.  They match detection and
pointwise-bounded recovery within that framework. Naturally, these are not
unconditional runtime lower bounds.  \Cref{rem:ld-status} discusses what
such bounds do and do not certify.

\paragraph{Earlier hierarchy lower bounds.}
Earlier evidence for the polynomial-time barrier came from lower bounds
for polynomial calculus and for LP/SDP
hierarchies~\cite{Grigoriev,AlekhnovichRazborov,Schoenebeck,ODonnellWitmer,MoriWitmer,KMOW}.
These results rule out specified proof systems or relaxation
hierarchies rather than arbitrary algorithms; the trade-off
of~\cite{KMOW} is the most general, and for $k$XOR its degree bound
retains the denominator $\log\Delta$ (\cref{rem:kmow}).

\subsection{How to read this paper}\label{sec:reading}

\Cref{sec:results} states the model and the main theorems, and
\cref{sec:techniques} is a self-contained overview of all proofs.  After
the preliminaries of \cref{sec:prelim}, the paper splits into four
largely independent threads:
\begin{itemize}
\item\emph{Inference.}  \Cref{sec:trace} proves the two trace theorems;
\cref{sec:even,sec:odd} apply them at even and odd arity;
\cref{sec:exact} boosts weak recovery to exact recovery and completes
the proofs of \cref{thm:main-inference,thm:main-boost}.
\item\emph{Refutation.}  \Cref{sec:refutation} constructs the sound
certificates at both parities, and \cref{sec:evaluation} supplies exact
evaluation, sum-of-squares proofs, and the proof of
\cref{thm:main-refutation}.  This thread needs \cref{sec:trace}, the
even matrix and its null norm from \cref{sec:even}
(\cref{eq:even-operator,prop:even-null}), and the mutual exact-two rule
and the counting quantities $D_+$ and $\Delta_{r,\ell}$ from
\cref{sec:odd}; it is independent of the rest of the inference thread.
\item\emph{Lower bounds.}  \Cref{sec:lower} is self-contained given
\cref{sec:results} and proves \cref{thm:main-lower} and
\cref{thm:minimax}.
\item\emph{Extensions.}  \Cref{sec:quantum} develops the quantum
algorithm, and \cref{sec:boolean-csp} the transfer to Boolean CSPs;
the CSP transfer uses the earlier upper bounds as black boxes, while
the quantum section builds quantum-ready variants of the normalized
operators.
\end{itemize}
\Cref{sec:discussion} collects open problems.

\section{Model and main results}\label{sec:results}

In planted noisy $k$XOR one observes an ordered sequence
$I=((F_1,y_1),\ldots,(F_m,y_m))$ of clauses.  The supports
$F_a\in\binom{[n]}k$ are independent and uniform, and the labels
$y_a\in\{\pm1\}$.  Under the null law $Q_m$, the labels are independent
unbiased signs, independent of the supports.  Under the planted law $P_{m,x^\star}$, there is a
hidden assignment $\xstar\in\{\pm1\}^n$ and
\begin{equation}\label{eq:model}
 y_a=\eta_a\,\xstar^{F_a},
 \qquad
 \E\eta_a=\rho,
 \qquad
 x^F:=\prod_{i\in F}x_i\ \ \text{for }x\in\{\pm1\}^n,
\end{equation}
where the noise signs $\eta_a\in\{\pm1\}$ are independent of one
another and of the supports.  Repeated supports
remain distinct constraints.  Each clause asserts the parity equation
$x^{F_a}=y_a$; for an assignment $x\in\{\pm1\}^n$ we write $\val_I(x)$
for the fraction of the $m$ equations it satisfies, and
\begin{equation}\label{eq:VI}
 V_I(x):=2\val_I(x)-1=\frac1m\sum_{a=1}^m y_ax^{F_a}
\end{equation}
for its advantage over random guessing.  The problem is also known as sparse learning parities with noise~\cite{Ale03}, specializing
~\cite{BKW}, and is a standard testbed for
statistical--computational gaps in random constraint
satisfaction~\cite{FPV}; at $k=2$ it is the censored block
model~\cite{ABBS}.

Two further probability spaces appear alongside $Q_m$.  Let
$\mathbf P_m$ denote the joint law of $(\xstar,I)$, where $\xstar$ is
uniform on $\{\pm1\}^n$ and, conditional on $\xstar=x$, the instance
has law $P_{m,x}$; and let
\[
 P_m:=2^{-n}\sum_{x\in\{\pm1\}^n}P_{m,x}
\]
be the corresponding marginal law on instances, the \emph{planted
mixture}.  Detection
distinguishes $P_m$ from $Q_m$, and recovery probabilities refer to the
joint law $\mathbf P_m$.  The three
tasks are as follows.

\begin{definition}[Detection, recovery, refutation]\label{def:tasks}
\begin{enumerate}
\item An algorithm achieves \emph{strong detection} if it outputs a value
in $\{\textsf{planted},\textsf{null}\}$ whose total error,
$P_m(\textsf{null})+Q_m(\textsf{planted})$, tends to $0$.
\item An algorithm achieves \emph{weak recovery} if, under
$\mathbf P_m$, it outputs $\widehat x\in\{\pm1\}^n$ with
$|\ip{\widehat x}{\xstar}|\ge c_kn$ with probability $1-o(1)$, for a
constant $c_k>0$; it achieves \emph{$\vartheta$-accurate recovery} if
$\err_k(\widehat x,\xstar)\le\vartheta$ with probability $1-o(1)$, and
\emph{exact recovery} if $\vartheta<1/n$.  For odd $k$ the orientation
of $\xstar$ is identifiable, and our algorithms achieve the oriented
guarantee $\ip{\widehat x}{\xstar}\ge c_kn$.
\item An algorithm achieves \emph{two-sided strong refutation at
advantage $\eps$} if, on \emph{every} input $I$, it outputs a value
$\Cert(I)\ge\sup_x|V_I(x)|$, and under $Q_m$ it outputs
$\Cert(I)\le\eps$ with probability $1-o(1)$.  Strong refutation
ordinarily asks only for a certified upper bound on $\sup_xV_I(x)$;
controlling $\sup_x|V_I(x)|$ is the stronger two-sided statement, and
the two coincide at odd $k$, where $V_I(-x)=-V_I(x)$.
\end{enumerate}
%The inference algorithms use randomized subsampling, candidate
%generation, rounding, validation, and eigensolvers.  Even refutation is
%deterministic.  Odd refutation uses a random rooting, but its soundness
%holds for every rooting.  
All probabilities above are over the instance
and the internal randomness jointly.  %The asymmetry in (3) is the point
%of a certificate: soundness is pointwise and unconditional, and only
%the value attained is a probabilistic statement about the null model.
\end{definition}

Throughout the classical upper bounds, the arity $k\ge2$, the range
parameter $\delta>0$, and the bias $\rho\in(0,1]$ are fixed
independently of $n$; for refutation, the target advantage
$\eps\in(0,1)$ is also fixed.  The target recovery error $\vartheta$ may
depend on $n$.  Put $r:=\lfloor k/2\rfloor$ and assume
\begin{equation}\label{eq:range}
\begin{cases}
 k/2\le\ell\le n/3,&k\text{ even},\\
 k-1\le\ell\le n/3,&k\text{ odd},
\end{cases}
\qquad
 \ell\log n\le n^{1-\delta}.
\end{equation}
For $\widehat x\in\{\pm1\}^n$, the recovery error is measured by
\begin{equation}\label{eq:recovery-error}
 \err_k(\widehat x,\xstar):=
 \begin{cases}
 n^{-1}\min_{\sigma\in\{\pm1\}}d_H(\widehat x,\sigma\xstar),
   &k\text{ even},\\[1mm]
 n^{-1}d_H(\widehat x,\xstar),&k\text{ odd},
 \end{cases}
\end{equation}
which quotients by the global sign exactly when the model requires it.
Our main results are as follows.

\begin{theorem}[Detection and weak recovery at level $\ell$]
\label{thm:main-inference}
There are constants $C_{k,\delta},c_k>0$ such that, for every level satisfying
\cref{eq:range},
\begin{equation}\label{eq:inference-threshold}
 m\ge C_{k,\delta}\frac{\Dkl}{\rho^2}
\end{equation}
suffices for a spectral algorithm running in $n^{\ell+O_k(1)}$ time to
achieve strong detection and to output, under planting, an assignment
$\widehat x$ with
\[
 \begin{cases}
 |\ip{\widehat x}{\xstar}|\ge c_kn,&k\text{ even},\\[1mm]
 \ip{\widehat x}{\xstar}\ge c_kn,&k\text{ odd},
 \end{cases}
\]
with probability $1-o(1)$.
\end{theorem}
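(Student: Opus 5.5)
Our plan is to construct the normalized level-$\ell$ Kikuchi operator, prove a sharp null spectral bound by the trace method, show a matching planted lower bound on a test vector built from $\xstar$, and then read off detection from the top eigenvalue and recovery from the top eigenvectors. At even $k$ we take $A_{S,T}=\sum_{a:\,S\sd T=F_a}y_a$ on $\binom{[n]}{\ell}$, with $D$ the diagonal degree matrix and $\bd=\E$ of the mean degree, and set $K=\Gamma^{-1/2}A\Gamma^{-1/2}$ with $\Gamma=D+\bd I$. At odd $k$ we group clauses into rooted pairs $(F_a,F_b)$ sharing a root variable, so that $F_a\sd F_b$ is an even set of size $2k-2$ acting on the slice. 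We use the same degree-plus-floor normalization, with the diagonal completion chosen monotone in the clause set. The scale \cref{eq:inference-threshold} is exactly the regime in which $\bd\gtrsim C\rho^{-2}\ell$.

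\textbf{Null upper bound.} Under $Q_m$, we bound $\E\Tr K^{2t}$ for $t\asymp\ell\log n$. The sign average kills every walk that uses some clause an odd number of times. We split each surviving walk into openings and closings. Openings are charged via \cref{eq:opening-identity}: previously exposed clauses are charged to $D$, and fresh clauses, whose conditional expected mass is at most $d$, are charged to the floor. Each opening therefore costs at most $1$. For closings we use the excess count: a parity argument shows that at most $O(\ell+\chi)$ active clauses are available at the current row. Each closing thus costs $O((\ell+\chi)/(d+\ell))$, and each unit of excess costs $O(t/n)$. Summing the geometric series over $\chi$, and using $\ell\log n\le n^{1-\delta}$ so that $t^2/n$-type terms stay small, gives
\[
\E\Tr K^{2t}\le\binom n\ell\bigl(C_k\sqrt{\ell/\bd}\bigr)^{2t}.
\]
Since $\binom n\ell^{1/2t}=O(1)$, this yields $\|K\|\le C'_k\sqrt{\ell/\bd}\le C'_k\rho/\sqrt{C_{k,\delta}}$ with probability $1-o(1)$. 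At odd $k$ the paired structure changes the walk combinatorics only by $k$-dependent constants, provided the rooted-pair degrees are controlled by the monotone completion.

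\textbf{Planted lower bound, detection, and recovery.} Take $v_S=\xstar^S\Gamma_{SS}^{1/2}$, suitably normalized. Then $v^\top Kv/\|v\|^2$ is a degree-weighted average of $y_a\xstar^{F_a}=\eta_a$, which concentrates around $\rho$ times a constant fraction of the normalized degree. The normalization keeps this fraction bounded below, since $D/(D+\bd)$ is of order $1$ on typical rows. This gives $\lambda_{\max}(K)\ge c\rho$ under planting. Choosing $C_{k,\delta}$ large makes $c\rho$ exceed the null bound, so thresholding $\lambda_{\max}$ achieves strong detection. For recovery, a Davis–Kahan-type argument shows that the leading eigenspace correlates with $v$. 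We then round: we compute the top eigenvectors in time $n^{\ell+O_k(1)}$, and for each $i$ read off $x_i$ from the signs $\sum_{S\ni i}u_Su_{S\setminus\{i\}\cup\{j\}}$. Equivalently, we apply a voting rule over row pairs differing in one coordinate, with randomized subsampling and validation over polynomially many candidates. This gives overlap $c_kn$, up to sign at even $k$; at odd $k$ the sign is fixed by testing $V_I(\widehat x)>0$.

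\textbf{Main obstacle.} The hardest step is the closing count in the trace bound. The $O(\ell+\chi)$ bound on active clauses at a row must hold without a maximum-degree estimate, and the excess must be charged so that the series over $\chi$ converges with no $\log n$ loss. We would first prove this at even $k$, where each clause is one Kikuchi edge. We would then adapt it to odd $k$, where rooted pairs can share clauses, which breaks the independence of the signs. There the fix is to average over the signs of individual clauses and to track active status per clause rather than per pair. A secondary difficulty is that rounding requires the planted spectral gap to be a constant fraction, not just detection-level. This forces $C_{k,\delta}$ large, and makes the lower bound on $\E D/(D+\bd)$ under planting depend on degree concentration at typical rows.
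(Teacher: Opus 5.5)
Your detection argument is essentially the paper's. It uses the same operator $K=\Gamma^{-1/2}A\Gamma^{-1/2}$, the same opening/closing charges with the parity bound on active legal labels, and the same witness $Z\Gamma^{1/2}\1$. That witness's Rayleigh quotient is exactly $\sum_a\eta_a/(M_{\rm s}+m_{\rm s})$, because every clause has the same legal-row count $\Tr P_F$, so the ``typical row'' concentration you worry about is not needed.

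The gap is in recovery, where three things fail.
\begin{itemize}
\item You never bound the \emph{planted} matrix. The trace theorem needs fair signs, but under planting $ZAZ=\sum_a\eta_aA_{F_a}$. Its deviation from the population $\rho\tau J_r$ contains $\rho\bigl(\sum_aA_{F_a}-\tau J_r\bigr)$, the fluctuation of the unsigned Kikuchi graph, and your null bound says nothing about it.
\item Davis--Kahan has no gap to use. The gauged population $\rho\tau J_r$ has a first harmonic eigenspace of dimension $n-1$ with eigenvalue $(1-O(r/\ell))b_{r,\ell}$, just below the top eigenvalue $b_{r,\ell}$. The noise, by contrast, is a constant fraction $C_{k,\delta}^{-1/2}$ of the signal. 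For growing $\ell$ the leading eigenvector of $K$ therefore need not correlate with $Z\Gamma^{1/2}\1$. Moreover, without a gap an $n^{\ell+O_k(1)}$-time computation only certifies some vector of near-maximal Rayleigh quotient (gap-free Krylov, \cref{lem:krylov}), not an approximate eigenvector. Recovery must therefore succeed from \emph{every} such vector.
\item Your coordinatewise vote reads individual entries of the one-particle density matrix $\rho_1$ of that vector (up to a factor $\ell$). Only the quadratic form $\bar x^{\mathsf T}\rho_1\bar x$, with $\bar x=\xstar/\sqrt n$, can be controlled.
\end{itemize}

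The paper repairs this in three steps.
\begin{enumerate}
\item It symmetrizes with an independent Poisson copy: $Z(A-A')Z$ has the law of $\sum_j\varsigma_j\eta_jA_{F_j}$ with fair signs $\varsigma_j$. The single-label trace theorem then applies with normalizer $D+D'+2\bd I$. The independent-copy transfer (\cref{lem:copy-transfer}) moves the bound to $\Gamma$ without any net, giving $\|\Gamma^{-1/2}(ZAZ-\rho\tau J_r)\Gamma^{-1/2}\|\le C\sqrt{\ell/\bd}$ (\cref{prop:even-comparison}).
\item Any $v$ with $v^{\mathsf T}Kv\ge\rho/3$ then gives, for $u\propto\Gamma^{-1/2}Zv$, the bound $u^{\mathsf T}J_ru\ge b_{r,\ell}/4$. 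This uses $\bd\|\Gamma^{-1/2}Zv\|^2\le1$ and $\bd=\tau b_{r,\ell}$, and no eigengap enters.
\item Down operators convert this Johnson energy into $\bar x^{\mathsf T}\rho_1(Zu)\bar x\ge c$ (\cref{thm:transfer}). A constant-rank spectral projector of $\rho_1$, clipped rounding, and validation then produce $\widehat x$ (\cref{lem:rounding,lem:validation}).
\end{enumerate}

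At odd $k$ you correctly anticipate per-label parity tracking and a monotone completion. However, the sketch omits what makes the pair-trace hypotheses hold:
\begin{itemize}
\item The mutual exact-two rule, which makes each root's retained graph a partial matching. This port capacity is what keeps a closing with two active labels at cost $O((\ell+\chi)/d)$; otherwise nothing bounds how many same-root partners an active label has at the current row. It also gives the deterministic contraction $\|H_{\rm o}\|\le1$.
\item The concrete capped completion $\min\{1,h_u(S)\}$.
\item The dummy candidates, which give the linear term of the centered planted matrix the tiny coefficient $\rho\alpha$. The quadratic term is then handled by decoupling against an independent copy of the marks.
\end{itemize}
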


\begin{theorem}
\label{thm:main-boost}
Let $\widetilde x$ satisfy $|\ip{\widetilde x}{\xstar}|\ge cn$ for a
constant $c>0$.  There is a constant $\vartheta_0(k)\in(0,1)$ such
that, for every $1/(2n)\le\vartheta\le\vartheta_0(k)$, an
independent sample of
\begin{equation}\label{eq:cleanup-curve}
 m_{\rm c}\ge C_{k,c}\frac n{\rho^2}\log\frac1\vartheta
\end{equation}
clauses suffices for an $O(km_{\rm c}+n)$-time majority vote to output
$\widehat x$ with $\err_k(\widehat x,\xstar)\le\vartheta$ with
probability $1-o(1)$.  Taking $\vartheta=1/(2n)$ gives exact recovery.
\end{theorem}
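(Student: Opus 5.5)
We propose a two-stage majority vote in which the warm start $\widetilde x$ is used to decode each variable from the clauses that contain it.

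\emph{Setup.} Split the fresh sample into two halves. For each variable $i$ and each clause $a$ of the first half with $i\in F_a$, define the vote
\[
 v_{a,i}:=y_a\,\widetilde x^{F_a\setminus\{i\}}.
\]
Under planting,
\[
 \E\bigl[v_{a,i}\mid F_a\ni i\bigr]=\rho\,\xstar_i\,\E\bigl[(\widetilde x\xstar)^{F_a\setminus\{i\}}\bigr].
\]
Write $\alpha:=\ip{\widetilde x}{\xstar}/n$, so that $|\alpha|\ge c$. The set $F_a\setminus\{i\}$ is uniform among $(k-1)$-subsets avoiding $i$, which gives
\[
 \E\bigl[(\widetilde x\xstar)^{F_a\setminus\{i\}}\bigr]=\alpha^{k-1}+O(k^2/n).
\]
This follows from sampling without replacement on the $\pm1$ vector $\widetilde x\xstar$.

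\emph{Sign conventions and independence.} For even $k$, $k-1$ is odd, so the vote estimates $\sign(\alpha)\xstar_i$; this is exactly the global-sign quotient in $\err_k$. For odd $k$, $k-1$ is even, so the vote estimates $\xstar_i$ with the correct orientation. Because the sample is independent of $\widetilde x$, conditionally on $\widetilde x$ the votes for a fixed $i$ are i.i.d.\ with bias at least $\rho c^{k-1}/2$. The number of clauses containing $i$ is Binomial$(m_{\rm c}/2,k/n)$, hence of order $k m_{\rm c}/n\gtrsim \rho^{-2}\log(1/\vartheta)$.

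\emph{First vote.} Set $\widehat x^{(1)}_i:=\sign\sum_a v_{a,i}$. By Hoeffding's inequality, each coordinate errs with probability at most $\exp(-c'\rho^2 c^{2k-2}km_{\rm c}/n)$, which is at most $\vartheta^2$ for a suitable $C_{k,c}$.

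\emph{Concentration of the error count.} The coordinate errors are not independent, since a clause votes on $k$ coordinates. However, the number of wrong coordinates is a function of independent clauses and has bounded differences of at most $k$ per clause. Under the unconditional model, including the Poisson-like degrees, its expectation is $\le n\vartheta^2$. By Markov's inequality, the error count is at most $n\vartheta/2$ with probability $1-O(\vartheta)$.

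This is the main obstacle: the $o(1)$ failure probability must hold even when $\vartheta$ is a constant. For such $\vartheta$ I would use McDiarmid's inequality or a second-moment bound instead of Markov, since the errors are only weakly dependent (covariances are $O(k^2 m_{\rm c}/n^2)$ per pair). When $\vartheta\to0$, Markov suffices. The exact case $\vartheta=1/(2n)$ instead uses a union bound: with $m_{\rm c}\gtrsim\rho^{-2}n\log n$, each coordinate errs with probability $n^{-2}$.

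\emph{Second vote and cleanup.} To remove the dependence on the constant $c$ in the exponent, and to reach the threshold $\vartheta_0(k)$, I would rerun the vote on the second half using $\widehat x^{(1)}$. Its agreement now satisfies $|\alpha|\ge1-2\vartheta_0$, so the per-vote bias is at least $\rho/2$. This second round gives the stated dependence $C_{k,c}\rho^{-2}n\log(1/\vartheta)$; the factor $c$ enters only through the first round, which needs only a constant accuracy $\vartheta_0$.

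\emph{Running time.} Computing all votes requires one pass over the clauses, and prefix parities over each clause cost $O(k)$, giving time $O(km_{\rm c}+n)$.
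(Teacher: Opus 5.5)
Your proof is correct in substance, but it takes a different route from the paper.

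\textbf{The paper's route.} The paper Poissonizes the cleanup sample via \cref{lem:subsample} and lets each clause vote for a single uniformly chosen target $i\in F$. By Poisson marking, the vote streams at the $n$ coordinates are then independent Poisson processes, so the coordinate errors are exactly independent. The compound-Poisson moment generating function bounds each error by $\exp(-c_k'\rho^2m_{\rm c}/n)\le\vartheta^4$. The proof then finishes in one round with either a Chernoff bound on a $\operatorname{Bin}(n,\vartheta^4)$ error count (when $\vartheta n\ge10\log n$) or a union bound (otherwise), uniformly in $\vartheta$.

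\textbf{Your route and what each buys.} You let every clause vote on all $k$ of its coordinates in the fixed-$m$ model. Each coordinate then receives about $k$ times as many votes, which improves the constant. The cost is that the coordinate errors become dependent, and that forces your case split: Markov when $\vartheta\to0$ (failure probability $O(\vartheta)$), and McDiarmid when $\vartheta$ is bounded below. The paper's targeted votes buy exact independence and so avoid this split altogether.

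\textbf{Details your version needs.}
\begin{enumerate}
\item Make the split explicit so the $o(1)$ is uniform in $\vartheta$. For example, run on a prefix of exactly threshold size. Use McDiarmid for $\vartheta\ge n^{-1/4}$, where the exponent is of order $n\vartheta^2\rho^2/(k^2C\log(1/\vartheta))\to\infty$, and Markov below that.
\item Without the prefix, McDiarmid alone degrades once $m_{\rm c}\gg n^2\vartheta^2$. In that regime, however, the per-coordinate error is so small that Markov or a union bound suffices.
\item Replacing one clause can change the votes on up to $2k$ coordinates, not $k$, so the bounded-difference constant is $2k$.
\end{enumerate}

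\textbf{The second round is unnecessary.} The theorem's constant $C_{k,c}$ may depend on $c$. Your first round, together with the concentration step, therefore already proves the statement, including exact recovery via the union bound at $\vartheta=1/(2n)$.
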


\begin{remark}
\label{rem:cleanup-converse}
Every estimator based on $m$ total clauses has minimax expected error
at least $c_k\exp(-C_k\rho^2m/n)$, so minimax expected error $\vartheta$
requires $m=\Omega_k(n\rho^{-2}\log(1/\vartheta))$; by a separate
argument, exact recovery with probability $1-o(1)$ requires
$m=\Omega_k(n\rho^{-2}\log n)$ (\cref{thm:minimax}).  These are lower
bounds on the \emph{total} sample size, matching the target-error
dependence of \cref{eq:cleanup-curve}; they do not assert that an
additional fresh cleanup sample is necessary once the spectral stage's
output is given.  The argument is standard, and sharp fixed-noise
thresholds are known in closely related parity-measurement
models~\cite{HWX,ALS}; see the preamble to \cref{sec:minimax} for
the precise statements and model distinctions.  
%We include
%\cref{thm:minimax} because we need the bound %uniformly in $\rho$, as a
%rate for the expected error, and in the same %fixed-$m$ model as the
%algorithms.  As a by-product, it certifies the %$n\rho^{-2}\log n$
%total-sample floor as necessary in the %noisy-$k$XOR specialization of
%the setting of~\cite{BHLM}.
\end{remark}

For fixed $k\ge3$, the exact-recovery cleanup term
$O_k(\rho^{-2}n\log n)$ is lower order than the spectral term throughout
\cref{eq:range}; for $k=2$ it is the familiar coverage cost.

\begin{theorem}[Two-sided strong refutation at level $\ell$]
\label{thm:main-refutation}
Fix $\eps\in(0,1)$ and a level satisfying \cref{eq:range}.  There is an
algorithm which, on every input,
outputs a bound satisfying with certainty
\[
 \Cert(I)\ge\sup_{x\in\{\pm1\}^n}|V_I(x)|.
\]
For a random null instance,
\begin{equation}\label{eq:refutation-threshold}
 m\ge C_{k,\delta}\frac{\Dkl}{\eps^2}
\end{equation}
implies $\Cert(I)\le\eps$ with probability $1-o(1)$.  The even
algorithm is deterministic.  The odd algorithm uses a random rooting, but
soundness holds for every rooting.  In the standard explicit-input regime
$m=n^{O_k(1)}$, deterministically sound exact evaluation takes
$n^{O_k(\ell)}$ bit
operations and yields a degree-$O_k(\ell)$ SoS proof.  %For larger $m$,
%the odd construction uses deterministic threshold-sized blocks and
%costs $O(m)+J n^{\ell+O_k(1)}$ to construct and
%$J n^{O_k(\ell)}\operatorname{poly}(\log m)$ bit %operations to evaluate,
%%where $J=\Theta(m\eps^2/\Dkl)$ is the number of blocks.
\end{theorem}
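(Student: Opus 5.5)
The plan is to certify $\sup_x|V_I(x)|$ by embedding the degree-$k$ form $\sum_a y_a x^{F_a}$ into a quadratic form of a level-$\ell$ Kikuchi-type matrix. Soundness then follows from a Rayleigh-quotient argument that holds on every input. The null-case value will be controlled by the trace bound for the normalized matrix $K$.

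\textbf{Even $k$.} For $x\in\{\pm1\}^n$, I would define the lifted vector $u_x\in\mathbb R^{\binom{[n]}\ell}$ by $u_x(S)=x^S$. Let $A=\sum_a y_a A_a$, where $A_a(S,T)=1$ when $S\sd T=F_a$ and $|S\cap F_a|=k/2$. Each clause contributes $x^{F_a}$ times a fixed count $N_{k,\ell}=\binom{k}{k/2}\binom{n-k}{\ell-k/2}$ of edges, so
\[
 u_x^\top A u_x = N_{k,\ell}\, m V_I(x).
\]
The degree matrix $D$ is determined by the supports alone. Writing $A=\Gamma^{1/2}K\Gamma^{1/2}$ gives
\[
 |u_x^\top A u_x|\le\|K\|\,u_x^\top\Gamma u_x=\|K\|\bigl(\Tr D+\bd\tbinom n\ell\bigr).
\]
Here $\Tr D=N_{k,\ell}m$ holds deterministically, and $\bd\binom n\ell$ is of the same order by definition of $\bd$. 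So I set
\[
 \Cert(I)=\|K\|\,\frac{\Tr D+\bd\binom n\ell}{N_{k,\ell}m},
\]
which is sound pointwise. Under $Q_m$, I would apply the trace theorem of \cref{sec:trace} through \cref{prop:even-null}. It gives $\|K\|\le C\sqrt{\ell/(\bd+\ell)}$ roughly, with $\bd\asymp m\ell^{k/2}/n^{k/2}$, up to $\ell$-dependent normalizations. Plugging in $m\ge C\Dkl/\eps^2$ yields $\Cert\le\eps$. I would check this arithmetic carefully: the prefactor is $2$, and $\|K\|\lesssim\sqrt{\ell/\bd}\lesssim\eps$ once $\bd\gtrsim\ell/\eps^2$, which is exactly \cref{eq:refutation-threshold}.

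\textbf{Odd $k$.} Here I would use the standard Cauchy--Schwarz reduction. Root each clause at a random vertex, so $V_I(x)=\frac1m\sum_i x_i\sum_{a\ni i\text{ rooted}}y_a x^{F_a\setminus i}$, and then
\[
 |V_I(x)|^2\le \frac{n}{m^2}\sum_i\Bigl(\sum_a\cdots\Bigr)^2 .
\]
The diagonal terms of this square are deterministic and bounded by $(\#\text{clauses per root})$. The off-diagonal terms form an even-arity $(2k-2)$-form over rooted pairs, sharing the root and possibly overlapping via the mutual exact-two rule. I then certify this even form with the odd normalized matrix of \cref{sec:odd}, with $\Gamma$ built from monotone diagonal completions, again by the Rayleigh-quotient sandwich. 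Soundness holds for every rooting because the inequality chain never uses randomness. Randomness enters only in bounding $\|K\|$ via the trace theorem and the counting quantities $D_+$ and $\Delta_{r,\ell}$. To make the certificate two-sided as required, I use that $|V_I|$ is symmetric under $x\mapsto-x$ for odd $k$.

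\textbf{Evaluation, SoS, and the main obstacle.} Exact evaluation and the SoS proof follow from \cref{sec:evaluation}. The certificate is a spectral norm, or the top eigenvalue of $\pm K$, of an $n^{O(\ell)}$-dimensional rational matrix. It can be bounded soundly by rational PSD verification of $\lambda\Gamma\mp A\succeq0$. Writing that PSD matrix as a Gram matrix in the monomials $x^S$ gives a degree-$2\ell$ (odd case $O(k\ell)$) SoS proof. The main obstacle I expect is the odd case: the diagonal and overlap terms arising from the Cauchy--Schwarz step must be absorbed into the monotone completion without loss, and the resulting null norm bound must still scale as $\eps$ at $m\asymp\Dkl/\eps^2$, with no $\log$ factor.
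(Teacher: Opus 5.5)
Your even-arity argument is the paper's (\cref{prop:even-certificate}). It uses $z^{\mathsf T}Az=m\Tr P_F\,V_I(x)$ and $z^{\mathsf T}\Gamma z=2m\Tr P_F$ (since $\bd N=m\Tr P_F$), the fixed-$m$ null bound of \cref{prop:even-null}, monotonicity of $\bd$ in $m$, and an exactly verified enclosure of $\|K\|$.

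The odd case has a genuine gap at the step you call ``again by the Rayleigh-quotient sandwich.'' An operator built on the mutual exact-two rule, including the one of \cref{sec:odd}, does not represent the pair form. A disjoint same-root pair $p=\{a,b\}$ contributes only at rows where $\cL_u(S)=\cL_u(T)=\{a,b\}$. The quadratic form in $z_S=x^S$ is therefore $2\sum_pW_py_ay_bx^{C_a\sd C_b}$ with $0\le W_p\le q_0$, whereas the full form is $q_0P_0(x)=2q_0\sum_py_ay_bx^{C_a\sd C_b}$. The discarded terms have arbitrary signs, so bounding the retained matrix says nothing about $P_0(x)$ on a worst-case input.

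The missing idea is an explicit deficit. The termwise inequality $2s\le2(W_p/q_0)s+2(1-W_p/q_0)$ for $s\in\{\pm1\}$ gives the pointwise bound $P_0(x)\le q_0^{-1}z^{\mathsf T}C_{\rm ref}z+2q_0^{-1}\Def$ (\cref{lem:odd-cert}). Under the null, $\Def\le3T_3$, where the triple-collision count $T_3$ of root--row cells has mean at most $C_r\mu_0Nd_*$ (\cref{lem:odd-corrections}).

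You cannot sidestep this by keeping every pair and quoting \cref{thm:packet-trace}. The port capacity \ref{P-port} would fail, because a clause pairs with every disjoint legal clause in its cell. You would then have to control occupancies of all $nN$ root--row cells uniformly, which is exactly the maximum-degree estimate the normalization is designed to avoid.

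Several further ingredients depend on this same point:
\begin{itemize}
\item \emph{Blocking.} The deficit estimate needs the cell sparsity $\mu_0=\Theta_r(m\ell^r/n^{r+1})$ to be $o(1)$, which fails far above threshold. Proving the claim for all $m\ge C_{k,\delta}\eps^{-2}\Dkl$ therefore requires the deterministic blocking of \cref{lem:blocking}; monotonicity in $m$ rescues only the even case.
\item \emph{Intersecting residuals.} The intersecting-residual pairs from the Cauchy--Schwarz square are not absorbed into any completion. They are bounded trivially by $N_{\rm int}$ and shown to be $o(m^2/n)$.
\item \emph{No subsampling or dummies.} The operator of \cref{sec:odd} is built from a Poisson subsample padded with dummy candidates, but a certificate must account for every input clause. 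The paper rebuilds the exact-two graph from the actual rooted clauses, with indicator caps $g^{\rm ref}_u$ and floor $d_*$. It runs the pair trace in the fixed-sample model through \cref{eq:fixed-deletion}, and separately shows $\bar g\le(1+o(1))d_*$, since the random quantity $\Tr\Gamma_{\rm ref}$ enters the certificate.
\item \emph{SoS at odd arity.} Beyond the PSD Gram factorization, the proof needs the Lagrange identity \cref{eq:sos-lagrange} for the Cauchy--Schwarz step and \cref{eq:sos-monomial} for the deficit and intersection corrections.
\end{itemize}
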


The lower bounds are phrased for a class of statistics that is
deliberately generous on the support side.  Condition on the supports
$F=(F_1,\ldots,F_m)$; the monomials $y_A:=\prod_{a\in A}y_a$ then form
an orthonormal basis under the null.  A statistic has \emph{sign degree}
$D$ if only terms with $|A|\le D$ occur, while their coefficients may
depend arbitrarily on the full support sequence.  This class contains
the usual degree-$D$ polynomial encodings and permits unrestricted
preprocessing of the support hypergraph
(\cref{def:sign-degree,lem:class-containment}).

\begin{theorem}[Matching inference lower bounds]
\label{thm:main-lower}
For every $k\ge2$, there are constants $c_k,C_k>0$ such that, in the same
fixed-$m$ model:
\begin{enumerate}
\item If $D\le n/(2k)$ and
$m\rho^2D^{k/2-1}/n^{k/2}\le c_k$, every positive-variance sign-degree-$D$
statistic $g$ satisfies
\[
 \frac{|\E_Pg-\E_Qg|}{\sqrt{\Var_Q(g)}}
 \le C_k\frac{m\rho^2D^{k/2-1}}{n^{k/2}},
\]
where $Q=Q_m$ and $P=P_m$ is the planted mixture
(\cref{thm:ld-detection}).
\item If $d\le n/(4k)$ and
$m\rho^2(2d)^{k/2-1}/n^{k/2}\le c_k$, every estimator
$f=(f_1,\ldots,f_n)$ with pointwise bounds $|f_i|\le1$ and sign degree at
most $d$ in each coordinate satisfies
\[
 \E\left(\frac1n\sum_i x_i^\star f_i(I)\right)^2
 \le \frac1n+C_k\sqrt{\frac dn}.
\]
Here the expectation is over a uniform planted assignment and its planted
instance.  Thus such estimators have vanishing overlap for $d=o(n)$ below the
same curve (\cref{thm:ld-recovery}).
\end{enumerate}
Together with the unconditional cleanup converse of
\cref{rem:cleanup-converse}, this matches every upper bound of
\cref{thm:main-inference,thm:main-boost} within its class.
\end{theorem}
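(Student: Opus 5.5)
We argue conditionally on the support sequence $F=(F_1,\ldots,F_m)$ and expand in the orthonormal basis $\{y_A\}_{A\subseteq[m]}$ of the null law. Under $Q_m$ the labels are i.i.d.\ unbiased given $F$. Under $P_{m,x}$ we have $\E[y_A\mid F]=\rho^{|A|}x^{\partial A}$, where $\partial A:=\bigsd_{a\in A}F_a$ is the parity boundary of the clause family $A$. Averaging over a uniform $x$ kills every term with $\partial A\neq\emptyset$, so
\[
\E_P[y_A\mid F]=\rho^{|A|}\1[\partial A=\emptyset].
\]

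\emph{Detection.} Write $g=\sum_{|A|\le D}\hat g_A(F)y_A$. Then $\E_Pg-\E_Qg=\E_F\sum_{1\le|A|\le D}\hat g_A(F)\rho^{|A|}\1[\partial A=\emptyset]$, while $\Var_Q g=\E_F\sum_{A\ne\emptyset}\hat g_A(F)^2$. Cauchy--Schwarz, applied jointly over $F$ and $A$, gives
\[
\frac{|\E_Pg-\E_Qg|^2}{\Var_Q g}\le \sum_{1\le s\le D}\rho^{2s}\,N_s,
\qquad
N_s:=\E_F\#\{A:|A|=s,\ \partial A=\emptyset\}=\binom ms\,\bbP[\text{$s$ random $k$-sets form an even cover}].
\]
The core of the proof is the sharp bound $N_s\le (C_k m D^{k/2-1}/n^{k/2})^{s}$ for $s\le D\le n/(2k)$, with no logarithmic slack. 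The intended route is a direct ordered count. Reveal the $s$ sets one at a time. Each vertex of an even cover is covered at least twice, so the union has at most $ks/2$ vertices. Encode the cover by recording, for each set, which of its $k$ slots are new vertices and which reuse earlier ones; a reused vertex is specified by an index among at most $ks$ previous vertices. If $v\le ks/2$ distinct vertices appear, the probability is at most $n^{v-ks}(ks)^{ks-v}$ times a $k^{O(ks)}$ pattern factor, up to $s!$ symmetries. At $v=ks/2$ this gives $\binom ms n^{-ks/2}(Cks)^{ks/2}\approx (Cm\,s^{k/2-1}/n^{k/2})^s$, after using $\binom ms\le (em/s)^s$ to cancel one power of $s$. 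Smaller $v$ costs an extra factor $ks/n\le 1/2$ per missing vertex, so these terms form a geometric series. Since $s\le D$, the resulting sum $\sum_{s}(C m\rho^2D^{k/2-1}/n^{k/2})^s$ is geometric; the $s=1$ term vanishes and the $s=2$ term vanishes for $k\ge1$ unless repeated supports occur, and repeated supports are absorbed into the count. Below $c_k$ the total is bounded by its leading term. To get the stated linear (rather than square-root) dependence, I would pair the $s=1$ and $s=2$ structure with the bound $(\cdot)^2$ and take square roots, checking that the leading surviving order gives the stated $C_k m\rho^2D^{k/2-1}/n^{k/2}$.

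\emph{Recovery.} Follow the Schramm--Wein argument. By exchangeability it suffices to bound $\E[x^\star_1f_1]$ and the cross terms $\E[x_i^\star x_j^\star f_if_j]$. The pointwise bound $|f_i|\le1$ gives the $1/n$ diagonal contribution. For the off-diagonal terms, write $f_i=\sum_{|A|\le d}\hat f_{i,A}(F)y_A$ and compute $\E[x^\star_i y_A\mid F]=\rho^{|A|}\1[\partial A=\{i\}]$. Cauchy--Schwarz then bounds the correlation by $\E_F\sum_{|A|\le d}\rho^{2|A|}\1[\partial A=\{i\}]$ times $\E\|f_i\|^2\le1$. The needed count is of families with boundary exactly $\{i\}$. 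The same ordered encoding works: one boundary vertex, which is pinned as $i$, breaks the even-cover saving by half a vertex, costing a factor of order $\sqrt{d/n}$ relative to the even case. Products $f_if_j$ have sign degree $2d$, which explains the $(2d)^{k/2-1}$ in the hypothesis; I would handle the pair terms by applying the boundary-$\{i,j\}$ count to the degree-$2d$ function $f_if_j$.

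\emph{Main obstacle.} The hardest step is the log-free even-cover and boundary count. The risk is that crude ordering factors such as $s!$ or $(ks)^{ks-v}$ fail to cancel against $\binom ms$ precisely enough to produce $D^{k/2-1}$ rather than $D^{k/2}$ or extra logarithms. I would first check the count at odd $k$, where $ks/2$ must be an integer and each clause must introduce at most $\lfloor k/2\rfloor$ or $\lceil k/2\rceil$ new vertices in alternation. There I would use the chronological first-appearance encoding, borrowed from the trace-walk count, to keep the count exact.
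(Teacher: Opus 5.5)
Your proposal is correct, and its reduction is the paper's. Conditionally on the supports, averaging over a uniform assignment gives $\rho^{|A|}\1[\partial_FA=B]$, which is the paper's exact expansion of the tilted likelihood ratio $L_B$. Detection then reduces by Cauchy--Schwarz to $\|\Pi_{\le D}(L_\varnothing-1)\|_2^2=\sum_{2\le t\le D}\binom mt\rho^{2t}q_{t,\varnothing}\le2\Lambda_D^2$. The vanishing of the $t=1$ term is what gives linear rather than square-root dependence, and the collision term $\binom m2\rho^4/\binom nk$ stays inside the count. Recovery expands the square, pays $1/n$ on the diagonal, and writes each cross term as $\E_{Q_m}[f_if_j\,\Pi_{\le2d}L_{\{i,j\}}]$ with $f_if_j\in\cH_{2d}$.

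The genuine difference is how you prove the key count $q_{t,B}\le(C_kt/n)^{(kt+b)/2}$.
\begin{itemize}
\item \emph{The paper} passes from support tuples to vertex sequences and drops distinctness within blocks. It counts sequences with prescribed degree parities by multinomial coefficients, and closes with $(2a)!\ge2^aa!$ and $\sum_{\sum_va_v=h}\prod_v(a_v!)^{-1}=n^h/h!$. This gives a short closed-form bound with an explicit constant and no stratification.
\item \emph{Your first-appearance encoding} stratifies by the number $v$ of distinct vertices and uses only the degree constraint $v\le(kt+b)/2$. The $b$ boundary vertices must appear and are pinned, so only $v-b$ new vertices range over $[n]$, at a cost of $v^b$ for their positions. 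You then sum a geometric series in $kt-v$ with ratio at most $kt/n\le1/2$.
\end{itemize}
Your route carries an extra $4^{kt}$ pattern factor, which is harmless since it is absorbed into $C_k^t$. It has the merit of mirroring the excess bookkeeping of the trace-walk upper bounds. Your worry about odd $k$ is unnecessary: neither count uses any per-clause structure, only the degree parities.

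A few small corrections:
\begin{itemize}
\item $\Var_{Q_m}g=\E_F\sum_{A\ne\varnothing}\hat g_A(F)^2+\Var_F\hat g_\varnothing(F)$, so your variance identity is only an inequality. Fortunately it goes in the direction you need.
\item The boundary-$\{i\}$ first-moment step is not needed for part~(2); it is the paper's separate signed corollary at odd $k$. Only $B=\{i,j\}$ enters.
\item In the pair terms the pointwise hypothesis is used a second time, to get $\|f_if_j\|_{L^2(Q_m)}\le1$. An $L^2$ normalization of each $f_i$ alone would not give this.
\item No $s!$ correction arises, since $q_{t,B}$ is a probability over an ordered i.i.d.\ tuple and $\binom mt$ already counts unordered label sets.
\item In part~(2) the count is applied for $t\le2d$, so the ratio bound $kt/n\le1/2$ is exactly where $d\le n/(4k)$ is used.
\end{itemize}
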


\begin{remark}[Quantitative extension]\label{rem:quantitative-ranges}
The formulation above with $\rho$ and $\eps$ set to  fixed constants is chosen for readability.  Our proofs
are actually uniform in the wider ranges
\begin{align}
 n^{-k}\le\rho\le1
 &\qquad\text{(even inference: polynomial pool size in the comparison)},
 \label{eq:even-floor}\\
 \rho^{-2}\sqrt{\ell/n}\le c_{k,\delta}
 &\qquad\text{(odd inference: cell sparsity, \cref{sec:odd})},
 \label{eq:modbias}\\
 \ell/n\le c_{k,\delta}\eps^{12}
 &\qquad\text{(odd refutation: deficit budget, \cref{sec:refutation})}.
 \label{eq:odd-ref-modbias}
\end{align}
%The even certificate and the lower bounds need no condition beyond
%their stated hypotheses.  For fixed positive %$\rho$ and $\eps$, all
%three conditions follow from \cref{eq:range} for %sufficiently large
%$n$.
\end{remark}

\begin{corollary}[The matched inference curve]\label{cor:curve}
For $k>2$ and clause density $\Delta=m/n$, the inference threshold is,
up to constants depending on $(k,\delta)$,
\[
 \ell\gtrsim \frac{n}{(\rho^2\Delta)^{2/(k-2)}}.
\]
Together with \cref{thm:main-lower}, this shows that the spectral upper
curve agrees, up to constants, with the sign-degree
obstruction in the same fixed-sample model.  Replacing $\rho$ by $\eps$
gives the refutation upper curve; the best canonical sum-of-squares
converse there still has polylogarithmic slack.
\end{corollary}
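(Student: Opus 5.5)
This corollary follows by inverting the upper threshold of \cref{thm:main-inference} and comparing it with the lower bound of \cref{thm:main-lower}, both written in terms of the clause density $\Delta=m/n$.

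\emph{Upper curve.} Write $m=\Delta n$ in \cref{eq:inference-threshold}. The condition becomes
\[
 \Delta n\ \ge\ C_{k,\delta}\,\rho^{-2}\,n^{k/2}\,\ell^{-(k/2-1)},
\]
which is equivalent to
\[
 \ell^{(k-2)/2}\ \ge\ C_{k,\delta}\,\frac{n^{(k-2)/2}}{\rho^2\Delta}.
\]
Since $k>2$, the exponent $(k-2)/2$ is positive. Raising both sides to the power $2/(k-2)$ gives
\[
 \ell\ \ge\ C'_{k,\delta}\,\frac{n}{(\rho^2\Delta)^{2/(k-2)}},
\]
with $C'_{k,\delta}=C_{k,\delta}^{2/(k-2)}$. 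So every admissible level $\ell$ in \cref{eq:range} that meets this bound gives detection and weak recovery in time $n^{\ell+O_k(1)}$. If the right-hand side drops below the lower end of \cref{eq:range}, we simply take the smallest admissible level. The same algebra applied to \cref{eq:refutation-threshold} with $\eps$ in place of $\rho$ gives the refutation upper curve.

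\emph{Matching obstruction.} In \cref{thm:main-lower}(1), set $D=c\ell$. The hypothesis $m\rho^2D^{k/2-1}/n^{k/2}\le c_k$ inverts in the same way to
\[
 D\ \le\ c'_k\,\frac{n}{(\rho^2\Delta)^{2/(k-2)}}.
\]
In that regime the advantage bound $C_k m\rho^2D^{k/2-1}/n^{k/2}$ is at most $C_kc_k$, and it tends to zero whenever $D$ is a vanishing fraction of the threshold. The condition $D\le n/(2k)$ is automatic once $\rho^2\Delta$ is large enough, which is the only regime where the curve is nontrivial. Part (2) gives the same inversion for recovery, with $d$ in place of $D$ and the factor $2^{k/2-1}$ absorbed into the constants. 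Sign-degree-$O(\ell)$ statistics therefore fail below $\ell\asymp n/(\rho^2\Delta)^{2/(k-2)}$, and the level-$\ell$ spectral algorithm succeeds above it. The two curves thus coincide up to constants depending on $(k,\delta)$.

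There is no real obstacle here. The one point needing care is identifying the spectral level $\ell$ with the sign degree $O(\ell)$ of the comparison class, so that the constants line up. This works because the exponent $2/(k-2)$ is a fixed number, so multiplicative constants in $m$ become only constant factors in $\ell$. The closing remark about refutation is a statement about the literature: the best canonical SoS lower bound (\cref{rem:kmow}) carries a $\log\Delta$ denominator, so no matching refutation converse is claimed.
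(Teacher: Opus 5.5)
Your proposal is correct and follows the paper's own argument: substitute $m=\Delta n$ into \cref{eq:inference-threshold}, use $\Dkl/n=(n/\ell)^{k/2-1}$, and solve for $\ell$. The matching obstruction is then read off from \cref{thm:main-lower} by the same inversion. As you note, the positive exponent $2/(k-2)$ turns the constant factors in $m$ into constant factors in $\ell$.
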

%\begin{proof}
%Divide \cref{eq:inference-threshold} by $n$, use
%$\Dkl/n=(n/\ell)^{k/2-1}$, and solve for $\ell$.
%\end{proof}

\begin{table}[t]
\centering\small\setlength{\tabcolsep}{5pt}\renewcommand{\arraystretch}{1.2}
\begin{tabular}{@{}>{\raggedright\arraybackslash}p{0.14\textwidth}
>{\raggedright\arraybackslash}p{0.27\textwidth}
>{\raggedright\arraybackslash}p{0.19\textwidth}
>{\raggedright\arraybackslash}p{0.30\textwidth}@{}}
\toprule
Task & Best prior general-level spectral/SoS bound & This work &
 Matching converse\\
\midrule
Detection &
 $\rho^{-2}\mathsf D\log n$, even $k$~\cite{WEM,SOKB};
 polylogarithmic losses at all $k$, via
 refutation~\cite{RRS,GKM} &
 $\rho^{-2}\mathsf D$, all $k\ge2$ &
 $\rho^{-2}\mathsf D$ against sign-degree-$O(\ell)$ tests
 (\cref{thm:ld-detection})\\
Weak recovery &
 $\mathsf D\log n$ at fixed bias, via degree-$O(\ell)$ SoS list
 recovery in related planted-CSP models~\cite{BHLM} &
 $\rho^{-2}\mathsf D$, all levels in \cref{eq:range} &
 $\rho^{-2}\mathsf D$ against pointwise-bounded sign-degree-$O(\ell)$
 estimators (\cref{thm:ld-recovery})\\
Recovery to error $\vartheta$ &
 exact/list recovery with a $\log n$ sample loss, in related planted
 models~\cite{BHLM} &
 $+\,C_k\rho^{-2}n\log(1/\vartheta)$ &
 $\Omega_k(\rho^{-2}n\log(1/\vartheta))$, total-sample minimax
 (\cref{thm:minimax})\\
Strong refutation (two-sided) &
 $\mathsf D\operatorname{polylog}n$ at fixed
 $\eps$~\cite{RRS,GKM,HKM} &
 $\eps^{-2}\mathsf D$, sound on every input, SoS &
 canonical SoS lower bound, polylogarithmic slack~\cite{KMOW}\\
\bottomrule
\end{tabular}
\caption{The level-$\ell$ trade-off, task by task, at the scale
$\mathsf D=\Dkl$; constants depending on $(k,\delta)$ are suppressed.
At growing level, the prior bounds shown all retain logarithmic losses;
log-free constant-level endpoints are discussed in \cref{sec:related}.
The first two rows have matching upper and fixed-sample sign-degree
lower curves, and the cleanup row's target-error dependence matches the
total-sample minimax rate; the refutation upper curve is log-free, but
its canonical sum-of-squares converse is not.  In recent
\emph{non-spectral} work, Mao~\cite{Mao2026} achieves the log-free
curve for small $\ell \leq 0.1 \log(n)/\mathrm{loglog}(n)$ for detection and exact recovery at $k\ge3$ by color coding and
dynamic programming, with constant success probability at the threshold
and probability $1-o(1)$ above it; his low-degree detection converse is
formulated for ordinary degree-$D$ polynomials in a Bernoulli
observation model (\cref{sec:related}).}
\label{tab:ledger}
\end{table}

\begin{remark}[What ``sharp'' means in this paper]\label{rem:sharp-meaning}
Sharpness here is task-specific.  Detection and pointwise-bounded
recovery match their fixed-sample sign-degree converses, and the
cleanup term's target-error dependence matches the unconditional
total-sample minimax rate, by a standard argument
included to complete the picture.  The first two converses are
theorem-level lower bounds for specified classes of statistics, not
unconditional runtime lower bounds, and the spectral algorithms
themselves are not shown to be sign-degree-$O(\ell)$ polynomials.  For
refutation we prove a log-free upper curve, not a matching canonical
sum-of-squares lower bound.  \Cref{rem:ld-status} discusses the
evidential status of low-degree converses.
\end{remark}

Two extensions are developed in
\cref{sec:quantum,sec:boolean-csp}.  For $\ell\ge k$, the normalized constructions admit a quantum
algorithm for detection and weak recovery with gate complexity
$\widetilde O(\binom{n}{L}^{1/4}L^{O_k(L)}n^{O_k(1)})$ in the range
$L^2\log n=o(n)$ (\cref{thm:quantum-inference}).  And since every
nonzero Fourier character of a fixed local planting law is an exact
noisy-XOR channel, the inference upper bounds transfer channelwise (for degree $\geq 2$) to
fixed known nonuniform local planting laws, while the refutation and
sum-of-squares upper bounds transfer to fixed predicates in the uniform
random-scope/random-literal model.

\section{Overview of the proofs}\label{sec:techniques}

All the spectral and certificate upper bounds in this paper share two
ingredients: a locally normalized matrix whose null norm we can bound
sharply, and a task-specific comparison that converts the norm bound
into a planted witness or a sound certificate.  Recovery adds a
rounding step (which we prove without any eigengap assumption on the Johnson graph).  The lower
bounds count closely related parity configurations in the fixed-sample
likelihood expansion.  This section explains each piece in sequence.

\subsection{One normalized matrix: the even case}\label{sec:tech-even}

Let $k=2r$ be even and fix a level $\ell$.  The level-$\ell$ Kikuchi
matrix is indexed by the $\ell$-element subsets $S\subseteq[n]$.  A
single clause $(F,y)$ contributes the signed partial matching that pairs
$S$ with its symmetric difference $S\sd F$ whenever exactly half of $F$
lies in $S$; summing over clauses gives a signed adjacency matrix $A$
with diagonal degree matrix $D$.  Under planting,
each edge sign is biased toward its planted parity $\xstar^{F_a}$, so
the top of the spectrum of $A$ sees the signal; under the null, the
signs are independent, and the first question is how tightly the
null norm concentrates.

As explained in
\cref{sec:obstruction}, it does not concentrate tightly enough without an additional normalization due to localized high-degree structures. 
We thus work with the
interpolation
\[
 \Gamma:=D+\bd I,
 \qquad
 K:=\Gamma^{-1/2}A\,\Gamma^{-1/2},
\]
with $\bd$ the mean degree. At odd arity the same
principle applies, with a locally sparse numerator and a diagonal
completion that only increases when clauses are added; we return to
this in \cref{sec:tech-odd}.

\subsection{The trace count and the charging scheme}\label{sec:tech-trace}

Our main technical ingredient is a sharp bound on
$\|\Gamma^{-1/2}A\,\Gamma^{-1/2}\|$ under the null, proved by the trace
method with an accounting that never inspects the maximum degree.
Expand a high even trace $\E\Tr K^{2t}$.  Each term is a closed walk of
length $2t$ through $\ell$-subsets, labeled at every step by the clause
used, and carrying one inverse denominator for each departure row.
Averaging the random signs kills every walk in which some clause is
used an odd number of times, so a surviving walk can be read
chronologically as a sequence of \emph{openings}, where a clause is used
for the first time or becomes active again, and \emph{closings}, where
an active clause is reused (\cref{fig:walk}).

\begin{figure}[t]
\centering
\begin{tikzpicture}[scale=1.0,every node/.style={font=\small},>=Stealth]
  \foreach \i in {0,...,7} {
    \node[draw,circle,inner sep=1.6pt,fill=black] (s\i) at (\i*1.55,0) {};
  }
  \node[draw,circle,inner sep=1.6pt,fill=black] (s8) at (8*1.55,0) {};
  \foreach \i in {0,...,8} {
    \node[below=3pt] at (\i*1.55,0) {$S_{\i}$};
  }
  \draw[->,thick] (s0) -- node[above]{$a$\,{\scriptsize(O)}} (s1);
  \draw[->,thick] (s1) -- node[above]{$b$\,{\scriptsize(O)}} (s2);
  \draw[->,thick] (s2) -- node[above]{$b$\,{\scriptsize(C)}} (s3);
  \draw[->,thick] (s3) -- node[above]{$c$\,{\scriptsize(O)}} (s4);
  \draw[->,thick] (s4) -- node[above]{$c$\,{\scriptsize(C)}} (s5);
  \draw[->,thick] (s5) -- node[above]{$a$\,{\scriptsize(C)}} (s6);
  \draw[->,thick] (s6) -- node[above]{$d$\,{\scriptsize(O)}} (s7);
  \draw[->,thick] (s7) -- node[above]{$d$\,{\scriptsize(C)}} (s8);
  \path (-0.15,-0.75) -- node[below=5pt]{closed: $S_{8}=S_0$; every label used an even number of times} (12.55,-0.75);
\end{tikzpicture}
\caption{A surviving trace walk of length $2t=8$ with clause labels
$a,b,c,d$.  Each step is an opening (O) or a closing (C) of an active
label; a step is an opening when its label is fresh, or seen before but
currently inactive.  There are exactly $t$ of each.  Openings are
charged by the identity
\cref{eq:opening-identity}; closings are charged through the parity
bound of \cref{lem:active} on active legal labels.  A fresh label that
reuses previously exposed vertices raises the \emph{excess} $\chi$ and
pays an additional factor $O(t/n)$ per recycled vertex.}
\label{fig:walk}
\end{figure}

The openings are paid \emph{locally}, through the
identity \cref{eq:opening-identity} of \cref{sec:obstruction}: with
$d=\bd$ the floor and $M_S$ the degree already created at the departure
row $S$ by the clauses exposed on the current branch of the trace
expansion, previously exposed openings are charged to the realized
degree and fresh ones to the floor, at total cost one.  This identity
replaces the maximum-degree union bound of the previous Kikuchi
analyses.

Closings contain the real combinatorics.  At any moment, only clauses
used an odd number of times are \emph{active}, and a parity argument
(\cref{lem:active}) shows that at most $(\ell+2\chi)/r$ active clauses
can be legal at the current row, where the \emph{excess} $\chi$ counts
the vertices that were recycled when fresh clauses were introduced.
Every recycled vertex costs a factor $O(t/n)$.  There are exactly $t$
closings, so after summing over $\chi$ the moment takes the form
\[
 \E\Tr K^{2t}
 \le \binom n\ell
 \left(\frac{C\ell}{rd+\ell}\right)^t
 \sum_{\chi\ge0}
 \left(\frac{Ct}{n}\right)^\chi
 \left(1+\frac{2\chi}{\ell}\right)^t.
\]
Taking $t\asymp\ell\log n$ makes the last sum geometric in the range
\cref{eq:range} and yields the log-free norm scale
$\|K\|\lesssim\sqrt{\ell/(rd+\ell)}$, with failure probability
$n^{-\Omega(\ell)}$.  \Cref{tab:charges} summarizes the accounting.
At the sample threshold, $d\asymp\rho^{-2}\ell$, so this norm scale is
$\lesssim\rho$, exactly small enough to separate null from planted.

\begin{table}[t]
\centering\small\renewcommand{\arraystretch}{1.35}
\begin{tabular}{@{}llll@{}}
\toprule
move & available weight & charge & paid by\\
\midrule
opening, exposed clause & $\le M_S$ & $M_S/(d+M_S)$ & realized degree\\
opening, fresh clause & mass $\le d$ & $d/(d+M_S)$ & the floor $d$\\
\quad with $h$ recycled vertices & mass $\le d\,(Ct/n)^h$ &
 $\frac{d}{d+M_S}\,(Ct/n)^h$ & floor $+$ excess\\
closing & $\le(\ell+2\chi)/r$ active legal &
 $\le\dfrac{\ell+2\chi}{rd+\ell+2\chi}$ & parity (\cref{lem:active})\\
\bottomrule
\end{tabular}
\caption{The charging scheme for one step of a trace walk, at the
departure row $S$ with branch-exposed degree $M_S$.  The first two
charges sum to exactly one, which is the identity
\cref{eq:opening-identity}, and the third row refines the second when a
fresh clause recycles $h$ previously exposed vertices.}
\label{tab:charges}
\end{table}

\subsection{Dependence, branches, and monotone completions}
\label{sec:tech-branch}

One issue hides inside this description, which we now describe.  The walk \emph{branches} on the choice of each
fresh clause, so incompatible prefixes of the trace expansion do not
share a filtration: there is no single conditioning under which all the
charges above could be justified simultaneously.  We therefore argue one
branch at a time.  Fix a branch and attach each inverse denominator to
its departure row.  The denominators of the completed instance can only
exceed the floor plus the degree created by the clauses already exposed
on that branch,
because unexposed clauses add nonnegative degree; replacing each inverse
denominator by its chronological upper bound therefore only increases
the weight, and makes it measurable with respect to the data of that one
branch.

For the Poisson constructions at odd arity, one more monotonicity is
needed: inserting the unexposed background may \emph{destroy} previously
retained numerator terms (a new candidate can spoil an exact-two
condition) but must only \emph{increase} the completed denominator.  We
call a diagonal with this property a \emph{monotone completion}, and we
design one for each construction in this paper.  The multivariate
Mecke equation, stated in the monotone deletion form of
\cref{lem:poisson-insertion} together with its fixed-sample analogue,
is iterated through the trace
by the backward recursion of \cref{lem:branch-recursion}; this
integrates each fresh label at its first use and justifies the charges
branch by branch.  The single-label trace theorem
(\cref{thm:single-trace}) carries out the core argument; the symmetric
pair theorem (\cref{thm:packet-trace}) runs the same accounting for
moves that toggle two residual labels of a single signed family at
once, under four short structural hypotheses that each application
verifies in a few lines.

\subsection{Odd arity: rooted pairs}\label{sec:tech-odd}

For $k=2r+1$, a single clause toggles an odd number of vertices and
maps the level-$\ell$ slice out of itself.  We restore parity by
pairing.  Root every clause at one of its variables, leaving an even
residual of size $2r$, and move along pairs of clauses with the same
root and disjoint residuals.  The common root cancels, so the pair
again acts on $\ell$-subsets.  One family of rooted
proposals suffices: under the null every clause carries its own
independent sign, so the trace average forces \emph{both} labels of
every surviving step to even multiplicity, and the opening/closing
accounting runs symmetrically in the two labels of a move.  A
\emph{mutual exact-two} rule (the pair through a given root and row is
retained only if its two members are the only candidates legal there,
at both endpoints) makes each root's contribution a partial matching.
Because a candidate arriving later can destroy exact-twoness, the
retained numerator shrinks under insertion, and it is normalized by a
completion that charges the \emph{potential} pairs of each root--row
cell, capped at one, and therefore only grows: exactly the monotone
structure that \cref{sec:tech-branch} demands.  The same cap-one
principle serves classical inference, refutation, and the quantum
oracle, and the classical and quantum inference operators coincide.  The pair trace then gives the null scale $\sqrt{\ell/d}$,
where the floor $d$ is a pair baseline fixed in \cref{sec:odd}.  At the
odd sample curve $d\asymp\rho^{-4}\ell$, so the scale is $\rho^{2}$,
matching the $\rho^{2}$ signal of a rooted pair, whose two clauses each
carry bias $\rho$.

Inference and refutation then part ways in how they treat the planted
mean and the clauses that the retention discards.  For inference, we
adjoin independent \emph{dummy} candidates so that every rooted type
has the same intensity.  This smooths the planted mean into a highly
symmetric matrix whose entries depend only on the distance between two
$\ell$-subsets, which is the structure that recovery needs; it also
keeps the linear part of the centered planted matrix small
(\cref{sec:odd-comparison}).  For refutation, no clause may be thrown
away silently: the same exact-two graph is built from the actual
clauses, every omitted transition is charged to an explicit scalar
deficit controlled by the triple collisions of the rooted cells, and
the null value is bounded by the same pair trace, run directly in the
fixed-sample model.  A generalized-eigenvalue bound plus the deficit
gives a certificate that is sound on every input and for every
rooting.

\subsection{From spectra to assignments}\label{sec:tech-recovery}

A vector with a large Rayleigh quotient has constant planted energy in a
symmetric matrix connecting pairs of $\ell$-subsets at one fixed
distance.  Elementary \emph{down operators}, which sum amplitudes over
supersets, convert this energy into constant mass in the planted
direction of a matrix of one-coordinate marginals, the one-particle
reduced density matrix (\cref{thm:transfer}).  A constant-rank spectral
projector then yields a short list of candidate assignments without any
eigengap assumption; an independent sample validates the list; and a
targeted majority vote drives the error down at a cleanup cost whose
$\log(1/\vartheta)$ dependence matches the total-sample minimax rate.

\subsection{Lower bounds: counting even covers}\label{sec:tech-lower}

For the lower bounds, condition on the supports and expand in the
orthonormal basis of sign monomials.  A monomial survives the planted
average precisely when the symmetric difference of its supports equals a
prescribed boundary $B$.  For $B=\varnothing$, these are exactly the
labeled even covers that the trace walks of \cref{sec:tech-trace}
generate, which is the sense in which the lower bounds mirror the upper
bounds.  A direct ordered count of these configurations, valid even with
repeated supports, controls them with no logarithmic slack.  The case
$B=\varnothing$ bounds detection; $|B|=2$ bounds pointwise-bounded
recovery; and standard binary-testing and Fano arguments give the
minimax cleanup and exact-recovery converses.  Because the coefficients
of a sign-degree statistic may depend arbitrarily on the support
sequence, the resulting bounds tolerate arbitrary preprocessing of the
support hypergraph.

\subsection{Exact evaluation, quantum algorithms, and CSPs}
\label{sec:tech-ext}

Exact rational inertia tests turn the spectral pencils into
deterministically sound certificates, and rational positive-semidefinite
factorizations turn those into degree-$O_k(\ell)$ sum-of-squares
proofs.  The quantum extension feeds quantum-ready normalized variants
of the same operators, together with a guide state built from products
of clauses, into the guided sparse-Hamiltonian framework
of~\cite{SOKB}.  For Boolean CSPs, each nonzero Fourier character of a
known local planting law produces an exact noisy-XOR channel, so the
inference upper bounds transfer channelwise, and predicate refutation
is assembled through the Fourier expansion in the random-literal model.

\section{Preliminaries}\label{sec:prelim}

This section collects the machinery shared by the rest of the paper:
the operators living on the level-$\ell$ slice, the sampling and
deletion devices, the independent-copy transfer, the eigenvalue computations,
and the pipeline that converts
spectral energy into Boolean assignments.  It can be skimmed on first
reading and consulted as the tools are invoked.
\Cref{tab:notation} records the recurring notation; constants such as
$c_r$, $C_{k,\delta}$ depend only on their subscripts and may change
from one occurrence to the next.

\begin{table}[t]
\centering\small\renewcommand{\arraystretch}{1.15}
\begin{tabular}{@{}lll@{}}
\toprule
symbol & meaning & where\\
\midrule
$Q_m$, $P_{m,x}$, $P_m$, $\mathbf P_m$ & null, planted, planted-marginal, and joint laws & \cref{sec:results}\\
$S\sd T$;\ $A\mathbin{\dot\cup}B$ & symmetric difference; union of disjoint sets & --\\
$r$, $\Dkl$ & $\lfloor k/2\rfloor$; the sample scale $n^{k/2}/\ell^{k/2-1}$ & \cref{eq:scale,eq:range}\\
$\eta_a$; $\vartheta$; $\eps$ & noise signs ($\E\eta_a=\rho$); target recovery error; target advantage & \cref{sec:results}\\
$\Omega_\ell$, $N$ & the slice $\binom{[n]}\ell$ and its size $\binom n\ell$ & \cref{sec:slice}\\
$A_F$, $P_F$ & partial matching of an even set $F$; its support projector & \cref{eq:matching}\\
$J_s$, $b_{s,\ell}$ & Johnson adjacency at distance $s$; its degree & \cref{eq:johnson-degree}\\
$Z$ & planted gauge $\diag(\xstar^S)$; analysis only & \cref{eq:gauge}\\
$\partial_\ell$, $\rhoone$ & down operators; one-particle density matrix & \cref{sec:common-recovery}\\
$t$, $\chi$ & half-length and excess of a trace walk & \cref{sec:trace}\\
$d$ & the floor of a normalized trace bound & \cref{thm:single-trace}\\
$\bd$; $D$ & even mean degree; even degree matrix & \cref{eq:even-operator}\\
$\Gamma$; $K$ & degree-plus-floor normalizer; normalized Kikuchi matrix & \cref{eq:even-operator}\\
$\alpha$, $q$, $\pi$ & odd cell sparsity; candidate and clause intensities & \cref{eq:odd-pi,eq:odd-alpha}\\
$\cG_{\rm o}$; $\Gamma_{\rm o}$; $H_{\rm o}$ & capped odd completion; odd normalizer and operator & \cref{eq:odd-completed,eq:odd-H}\\
$d_\circ$; $d_*$, $\bar g$ & odd population scale; odd-refutation floor and mean capped degree & \cref{sec:odd,sec:refutation}\\
$D_{u,S}$, $D_+$ & rooted-cell counts and their maximum & \cref{eq:odd-cell,eq:odd-alpha}\\
\bottomrule
\end{tabular}
\caption{Recurring notation.  The floor $d$ is instantiated as $\bd$ at
even arity, as $n\alpha^2$ for odd inference (classically and
quantumly), and as $d_*$ for odd refutation.}
\label{tab:notation}
\end{table}

\subsection{Slice operators and the gauge}\label{sec:slice}

Fix the level $\ell$ and write
\[
 \Omega_\ell:=\binom{[n]}\ell,
 \qquad N:=|\Omega_\ell|=\binom n\ell.
\]
For an even set $F\subseteq[n]$ of size $2s$, let $A_F$ be the partial
matching on $\Omega_\ell$ defined by
\begin{equation}\label{eq:matching}
 (A_F)_{S,T}:=\1\{S\sd T=F\}.
\end{equation}
It pairs $S$ with $S\sd F$ when $|S\cap F|=s$.  Its square $P_F:=A_F^2$ is
the diagonal projector onto the legal rows,
$P_F(S,S)=\1\{|S\cap F|=s\}$.  Since $A_F$ is an involution on its support,
\begin{equation}\label{eq:domination}
 |v^{\mathsf T}A_Fv|\le v^{\mathsf T}P_Fv
 \qquad\text{for every }v.
\end{equation}
The number of legal rows depends only on $s$:
\begin{equation}\label{eq:support-count}
 \Tr P_F=\1^{\mathsf T}A_F\1
 =\binom{2s}s\binom{n-2s}{\ell-s}.
\end{equation}
If $C\cap D=\varnothing$ are two even sets, the matchings $A_C$ and $A_D$
commute, and $B_{C,D}:=A_CA_D$ is a symmetric partial matching with support
projector $B_{C,D}^2=P_CP_D$; it moves $S$ to $S\sd C\sd D$ whenever
$|S\cap C|=|C|/2$ and $|S\cap D|=|D|/2$.  For rows $S,T$, we write
$E_{S,T}$ for the matrix unit with a single one in position $(S,T)$.
Throughout, $S\sd T$ denotes the symmetric difference and
$A\mathbin{\dot\cup}B$ a union of disjoint sets.

For $0\le s\le\ell$, let $J_s$ be the distance-$s$ Johnson adjacency,
$(J_s)_{S,T}=\1\{|S\sd T|=2s\}$, with degree
\begin{equation}\label{eq:johnson-degree}
 b_{s,\ell}:=\binom\ell s\binom{n-\ell}s.
\end{equation}
The support count and the Johnson degree are related by
$N\,b_{s,\ell}=\binom n{2s}\Tr P_F$.

Finally, for the planted assignment define the diagonal gauge
\begin{equation}\label{eq:gauge}
 Z:=\diag\bigl(\xstar^S:S\in\Omega_\ell\bigr).
\end{equation}
If $S\sd T=F$ then $Z_{S,S}Z_{T,T}=\xstar^F$, so conjugation by $Z$
removes the planted monomial from every signed matching.  All
normalizers in this paper are diagonal and hence commute with $Z$.  The
gauge is used only in the analysis; no algorithm needs to know it.

\subsection{Sampling devices: subsampling, deletion, and the branch
recursion}\label{sec:devices}

All theorems in this paper concern the fixed-$m$ laws of
\cref{sec:results}.  Poisson processes enter only through the following
explicit algorithmic reduction, which is classical; we state it in the
form we use.

\begin{lemma}[Poisson subsampling]\label{lem:subsample}
Let $I$ consist of $m$ i.i.d.\ clauses from the null or a planted law, and
let $0<\mu\le m/2$.  Draw $M\sim\operatorname{Pois}(\mu)$ independently of
$I$; declare failure if $M>m$, and otherwise retain the first $M$
clauses.  Failure has probability at most $e^{-\mu/8}$.  There is a
coupling to an ideal Poisson clause process of total intensity $\mu$,
with atoms i.i.d.\ from the same one-clause law, under which, on the
event $M\le m$, the retained collection agrees with the ideal process
\emph{pathwise}.  Moreover, assigning
each retained clause independently to one of finitely many pools with
fixed probabilities produces independent Poisson pools with the
corresponding sub-intensities.
\end{lemma}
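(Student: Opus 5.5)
The proof has three parts: a Chernoff bound for the failure event, a pathwise coupling, and Poisson thinning. First, the failure bound is a Poisson upper-tail estimate. Since $\mu\le m/2$, the event $M>m$ is contained in $M\ge 2\mu$, i.e.\ $M\ge(1+\delta)\mu$ with $\delta=1$. The standard Chernoff bound $\bbP(M\ge(1+\delta)\mu)\le\exp(-\mu\delta^2/(2+\delta))$ then gives $e^{-\mu/3}\le e^{-\mu/8}$. Alternatively, one can bound $\bbP(M\ge 2\mu)\le (e/4)^{\mu}$ directly from the moment generating function with parameter $\log 2$, and note that $(e/4)^\mu\le e^{-\mu/8}$ because $\log(4/e)\approx0.386>1/8$.

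For the coupling, we extend the instance. Let $(G_a)_{a\ge1}$ be an i.i.d.\ sequence from the one-clause law whose first $m$ entries are the clauses of $I$; the entries beyond $m$ are fresh draws independent of everything else. Take $M\sim\operatorname{Pois}(\mu)$ independent of this sequence. The ideal process is the point process $\Pi:=\sum_{a\le M}\delta_{G_a}$, which is the standard construction of a Poisson process with intensity $\mu$ times the one-clause law: conditionally on its total count, its atoms are i.i.d. On $\{M\le m\}$, every atom of $\Pi$ lies among the first $m$ entries, so $\Pi$ is exactly the collection the algorithm retains. The two therefore agree pathwise, including the order of the atoms if we want it. The only point that needs care is that $M$ is independent of $I$; this holds by construction, so the ideal process really is Poisson and is not biased by the data.

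Finally, the pool statement is the Poisson coloring (thinning) theorem applied to $\Pi$. Assign each $G_a$ an independent color $c_a\in\{1,\dots,J\}$ with probabilities $p_1,\dots,p_J$, drawn independently of the clauses and of $M$; the coloring can be done on the whole sequence, so it is the same on both sides of the coupling. Conditioning on $M$, the color counts are multinomial, and mixing over $M\sim\operatorname{Pois}(\mu)$ gives independent counts $\operatorname{Pois}(p_j\mu)$. Within each color, the atoms are i.i.d.\ from the one-clause law and independent across colors, because the colors are independent of the clause values. This is checked by factorizing the joint Laplace functional,
\[
\E\exp\Bigl(-\sum_j\sum_{a\le M:\,c_a=j}f_j(G_a)\Bigr)=\exp\Bigl(-\mu\sum_j p_j\bigl(1-\E e^{-f_j(G)}\bigr)\Bigr).
\]
On $\{M\le m\}$ the retained pools coincide pathwise with these ideal pools.
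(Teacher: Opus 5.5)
Your proof is correct and follows the paper's argument. You extend $I$ to an infinite i.i.d.\ sequence independent of $M$, take its first $M$ terms as the ideal Poisson process (which coincides with the retained clauses on $\{M\le m\}$), obtain the pools by Poisson thinning, and bound failure by the Poisson upper tail at $2\mu$. Your explicit Chernoff constant ($e^{-\mu/3}$, or $(e/4)^\mu$) and the Laplace-functional check of the coloring theorem fill in details the paper treats as standard.
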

\begin{proof}
Extend $I$ to an infinite i.i.d.\ clause sequence (a proof device: the
algorithm reads past index $m$ only on the failure event, where it
stops).  The first $M$ terms of that sequence form exactly the stated
Poisson process, and on $M\le m$ they are exactly the retained clauses.
Independent assignment is Poisson thinning, and the failure bound is
the Poisson upper tail
$\bbP\{\operatorname{Pois}(\mu)>2\mu\}\le e^{-\mu/8}$.
\end{proof}

The inference algorithms of
\cref{sec:even,sec:odd,sec:exact} apply \cref{lem:subsample} once, with
$\mu$ a fixed constant multiple of the sample threshold, and analyze the
resulting Poisson pools; on the failure event the detector outputs
\textsf{null} and the recovery output is arbitrary, which adds $o(1)$ to
every error probability since $\mu\to\infty$ in all our uses.  A
refutation algorithm cannot subsample, because its certificate must cover
every clause of the given input.  This costs nothing at either parity:
the even null trace holds directly for a deterministic number of
i.i.d.\ clauses (\cref{prop:even-null}), and so does the odd pair
trace, through the fixed-sample deletion bound \cref{eq:fixed-deletion}
below; both certificates are therefore analyzed
in the fixed-$m$ model with no Poisson device at all.

The next lemmas are the dependence backbone of the trace proofs of
\cref{sec:trace}.  The odd constructions below impose local conditions
whose truth value may change when further candidates or clauses are
inserted, so a sum of incompatible trace branches can never be
conditioned on one common filtration.  Instead, after fixing one
branch, we insert its fresh labels at their first uses and compare the
completed configuration with the chronological one.  All the
probability this comparison needs is the multivariate Mecke equation,
in the following monotone form, together with its elementary
fixed-sample analogue.

\begin{lemma}[Monotone deletion]\label{lem:poisson-insertion}\label{lem:fixed-deletion}
Let $F\ge0$ be insertion-nonincreasing: adjoining a further marked
label never increases $F$.
\begin{enumerate}
\item If $\Xi$ is a marked Poisson process with finite intensity
measure $\nu$, then for every nonnegative measurable $g$ and $s\ge1$,
\begin{equation}\label{eq:poisson-insertion}
 \E\!\sum_{(z_1,\ldots,z_s)\in\Xi_{\ne}^{s}}
 F(\Xi)g(z_1,\ldots,z_s)
 \le
 \int\!\cdots\!\int
 \E F(\Xi)g(z_1,\ldots,z_s)
 \,d\nu(z_1)\cdots d\nu(z_s).
\end{equation}
\item If instead $\Xi$ consists of $m$ i.i.d.\ marked labels
$Z_1,\ldots,Z_m$ and $F$ is a function of the unordered sample, then
for $1\le s\le m$,
\begin{equation}\label{eq:fixed-deletion}
 \E\sum_{\substack{i_1,\ldots,i_s\le m\\ \text{distinct}}}
 F(Z_1,\ldots,Z_m)\,g(Z_{i_1},\ldots,Z_{i_s})
 \ \le\
 (m)_s\,\E\bigl[F(Z_{s+1},\ldots,Z_m)\,g(Z_1,\ldots,Z_s)\bigr],
\end{equation}
where $(m)_s:=m(m-1)\cdots(m-s+1)$.
\end{enumerate}
Both parts hold conditionally after any finite set of labels has been
exposed explicitly, with the intensity, respectively the number of
unexposed labels, adjusted accordingly.
\end{lemma}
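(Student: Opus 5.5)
Both parts come from the same two steps: an exact identity that holds when $F$ is evaluated on the full configuration, followed by one use of insertion-monotonicity to pass from $F$ at the full configuration to $F$ at the configuration with the selected labels deleted. No inequality is used except that comparison.

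\emph{Part 1.} Apply the multivariate Mecke equation to the nonnegative functional
\[
 h(z_1,\ldots,z_s;\Xi):=F(\Xi)\,g(z_1,\ldots,z_s).
\]
Mecke's equation states that
\[
 \E\sum_{(z_1,\ldots,z_s)\in\Xi^s_{\ne}}h(z_1,\ldots,z_s;\Xi)
 =\int\!\cdots\!\int\E\,h\bigl(z_1,\ldots,z_s;\Xi+\delta_{z_1}+\cdots+\delta_{z_s}\bigr)\,d\nu(z_1)\cdots d\nu(z_s).
\]
On the right, $F$ is evaluated on $\Xi$ with the $s$ points $z_1,\ldots,z_s$ adjoined. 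Adjoining labels never increases $F$, so
\[
 F\bigl(\Xi+\textstyle\sum_j\delta_{z_j}\bigr)\le F(\Xi)
\]
pointwise. Since $g\ge0$, this gives \cref{eq:poisson-insertion}. The hypothesis of finite intensity $\nu$ makes $\Xi$ a.s.\ finite, so the Mecke equation applies with no integrability issues; for nonnegative $h$ it holds in $[0,\infty]$.

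\emph{Part 2.} Exchangeability of the i.i.d.\ labels, together with the symmetry of $F$ in its arguments, shows that every ordered $s$-tuple of distinct indices contributes the same expectation as $(1,\ldots,s)$. The left side therefore equals
\[
 (m)_s\,\E\bigl[F(Z_1,\ldots,Z_m)\,g(Z_1,\ldots,Z_s)\bigr].
\]
The unordered sample $\{Z_1,\ldots,Z_m\}$ is obtained from $\{Z_{s+1},\ldots,Z_m\}$ by adjoining $Z_1,\ldots,Z_s$. Insertion-monotonicity thus gives
\[
 F(Z_1,\ldots,Z_m)\le F(Z_{s+1},\ldots,Z_m),
\]
and multiplying by $g\ge0$ yields \cref{eq:fixed-deletion}.

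\emph{Conditional versions.} Condition on a finite set of explicitly exposed labels and treat them as fixed parameters in both $F$ and $g$. Since $F$ is insertion-nonincreasing in the remaining labels as well, the same two steps apply. In the Poisson case, the unexposed part is still Poisson with the reduced intensity, by the independence property of Poisson processes (restriction/thinning), and the Mecke equation is applied to it. In the fixed-sample case, the unexposed labels are $m'$ i.i.d.\ labels, and the argument runs with $m'$ in place of $m$.

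The only point needing care is bookkeeping rather than a real obstacle. One must check that "adjoining a marked label" is the same operation as the Mecke insertion $\Xi+\delta_z$, including marks. One must also check that in the fixed-sample case $F$ is well defined on samples of smaller size $m-s$; this is implicit in the notation $F(Z_{s+1},\ldots,Z_m)$, with $F$ read as a function of finite unordered samples.
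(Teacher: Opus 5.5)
Your proof is correct and follows the paper's argument. The paper likewise uses the multivariate Mecke equation for (i) and exchangeability of the i.i.d.\ labels for (ii) to get an exact identity with $F$ evaluated on the configuration that still contains the selected labels, then deletes those labels from the argument of $F$ by insertion-monotonicity, with marks keeping coincident supports distinct as you note.
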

\begin{proof}
For (i), the multivariate Mecke equation makes the left side equal to
$\int\cdots\int\E F(\Xi+\sum_j\delta_{z_j})\,g(z_1,\ldots,z_s)\,d\nu^s$;
delete the inserted atoms one at a time and use monotonicity.  For
(ii), exchangeability makes the left side equal to
$(m)_s\,\E[F(Z_1,\ldots,Z_m)\,g(Z_1,\ldots,Z_s)]$, and deleting
$Z_1,\ldots,Z_s$ from the argument of $F$ can only increase it.
Physical labels are part of the marks, so coincident supports remain
distinct atoms throughout.
\end{proof}

A trace expansion applies this one-step bound many times, once per
fresh label, along every branch of a finite tree of chronological
prefixes.  The following recursion packages the iteration.

\begin{lemma}[Branch recursion]\label{lem:branch-recursion}
Let $\Xi$ be a marked Poisson process of finite intensity $\nu$, and let
$\mathfrak T$ be a finite rooted tree of chronological prefixes.  At a node
$P$, a finite list of atoms has been exposed explicitly and the remaining
background is an independent copy of $\Xi$.  Every node functional below is
evaluated on this residual background; the explicitly exposed atoms are
part of the recorded prefix rather than arguments of $\Xi$.  Let
$\Phi_P(\Xi)\ge0$ be an insertion-nonincreasing prefix weight.  A child is of one of two types.
\begin{enumerate}
\item An \emph{old} child introduces no atom and multiplies by a nonnegative
factor $\psi_{P,j}(\Xi)$ such that $\Phi_P\psi_{P,j}$ remains
insertion-nonincreasing.
\item A \emph{fresh} child of type $\beta$ selects an ordered tuple of
$s_\beta$ distinct background atoms, where $s_\beta\in\{1,2\}$ (this
range suffices for all our applications).  Its transition factor
satisfies
\[
 0\le\psi_{P,\beta}(\Xi;z_1,\ldots,z_{s_\beta})
 \le h_{P,\beta}(z_1,\ldots,z_{s_\beta})/d,
\]
where $h_{P,\beta}\ge0$ is deterministic, and after the selected atoms are
moved to the explicit list the child prefix weight is
insertion-nonincreasing in the residual background.
\end{enumerate}
Let $T_P$ be the sum of the products of all future transition factors below
$P$, with $T_P=1$ at leaves.  Suppose deterministic bounds $B_P$ are chosen
backwards, with $B_P=1$ at leaves, so that pointwise at every internal node
\begin{equation}\label{eq:branch-recursion}
\begin{aligned}
 B_P\ge{}&\sum_{j\text{ old}}\psi_{P,j}(\Xi)B_{P,j}\\
 &+\sum_{\beta\text{ fresh}}
 \int \frac{h_{P,\beta}(\mathbf z)}d
 B_{P,\beta,\mathbf z}\,d\nu^{s_\beta}(\mathbf z).
\end{aligned}
\end{equation}
Here a deterministic uniform bound may replace
$B_{P,\beta,\mathbf z}$ for all atoms of one combinatorial branch type.
Then
\begin{equation}\label{eq:branch-conclusion}
 \E[\Phi_P(\Xi)T_P(\Xi)]\le B_P\,\E\Phi_P(\Xi)
\end{equation}
at every node.
\end{lemma}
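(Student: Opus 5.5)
We argue by backward induction on the finite tree $\mathfrak T$, from the leaves to the root. At a leaf, $T_P=1$ and $B_P=1$, so \cref{eq:branch-conclusion} is an equality. For an internal node $P$, assume the conclusion at every child, with the child's own prefix weight. Expand
\[
 T_P=\sum_{j\text{ old}}\psi_{P,j}T_{P,j}
 +\sum_{\beta\text{ fresh}}\sum_{\mathbf z\in\Xi_{\ne}^{s_\beta}}
 \psi_{P,\beta}(\Xi;\mathbf z)\,T_{P,\beta,\mathbf z},
\]
where $T_{P,\beta,\mathbf z}$ is evaluated on the residual background $\Xi-\sum_i\delta_{z_i}$. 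We then bound the two kinds of terms of $\E[\Phi_PT_P]$ separately.

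\emph{Old children.}  The child $(P,j)$ has prefix weight $\Phi_P\psi_{P,j}$, which is insertion-nonincreasing by hypothesis. Its explicit list and residual background are those of $P$. The induction hypothesis therefore gives
\[
 \E[\Phi_P\psi_{P,j}T_{P,j}]\le B_{P,j}\,\E[\Phi_P\psi_{P,j}].
\]

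\emph{Fresh children.}  Bound $\psi_{P,\beta}\le h_{P,\beta}(\mathbf z)/d$ pointwise. The summand is then $\Phi_P(\Xi)\,T_{P,\beta,\mathbf z}(\Xi-\sum\delta_{z_i})\,h(\mathbf z)/d$. Apply the Mecke equation to write its expectation as an integral over $\mathbf z$ against $\nu^{s_\beta}$. After the Mecke insertion, the background is $\Xi+\sum\delta_{z_i}$, so the residual background of the child is exactly an independent copy of $\Xi$, and the prefix weight becomes $\Phi_P(\Xi+\sum\delta_{z_i})$.

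At this point we use the hypothesis that the child prefix weight is insertion-nonincreasing in the residual background, with the selected atoms counted as explicit. Call this weight $\Phi_{P,\beta,\mathbf z}(\Xi)$. Up to the factor already bounded by $h/d$, it is $\Phi_P$ with the atoms $\mathbf z$ recorded explicitly, and it is dominated by the monotone-deletion form \cref{eq:poisson-insertion}. The induction hypothesis at the child, applied for each fixed $\mathbf z$, gives
\[
 \E[\Phi_{P,\beta,\mathbf z}T_{P,\beta,\mathbf z}]\le B_{P,\beta,\mathbf z}\,\E\Phi_{P,\beta,\mathbf z}.
\]
Deleting the inserted atoms, as in \cref{lem:poisson-insertion}(i), then yields $\E\Phi_{P,\beta,\mathbf z}\le\E\Phi_P(\Xi)$. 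If a uniform deterministic bound replaces $B_{P,\beta,\mathbf z}$ for a branch type, the same chain applies with that constant.

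\emph{Combining.}  The two estimates give
\[
 \E[\Phi_PT_P]\le\E\Bigl[\Phi_P\Bigl(\sum_j\psi_{P,j}B_{P,j}+\sum_\beta\int\tfrac{h}{d}B_{P,\beta,\mathbf z}\,d\nu^{s_\beta}\Bigr)\Bigr]\le B_P\,\E\Phi_P,
\]
where the last step is the pointwise inequality \cref{eq:branch-recursion}. This closes the induction.

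The main obstacle is the order of operations at the fresh step. After the Mecke insertion, the child's prefix weight sits on the augmented background, and we must use monotonicity to return to $\E\Phi_P$ \emph{after} applying the child's inductive bound, not before. The induction hypothesis has to be stated for the child's own prefix weight, and that weight is insertion-nonincreasing only in the residual background. We will therefore keep the explicit atoms as fixed parameters throughout: the bound at the child holds for every fixed $\mathbf z$, and Fubini then integrates over $\mathbf z$.
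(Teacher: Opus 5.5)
Your skeleton matches the paper's proof: backward induction, old children handled by the child bound with weight $\Phi_P\psi_{P,j}$, and Mecke for fresh children. The gap is in the fresh step. It is done in the wrong order, and the inequality you then need is false in general.

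You bound $\psi_{P,\beta}\le h_{P,\beta}/d$ \emph{before} invoking the induction. You then apply the child's bound to the stripped weight $\Phi_P(\Xi+\sum_i\delta_{z_i})$. But the child's own prefix weight is $\Phi_P(\Xi+\sum_i\delta_{z_i})\,\psi_{P,\beta}(\Xi+\sum_i\delta_{z_i};\mathbf z)$. The hypotheses on the subtree below it refer to that weight only; in particular, they require that this weight times each later old factor stays insertion-nonincreasing. Once $\psi_{P,\beta}$ is removed, a later old factor can make the accumulated weight insertion-increasing, and the deletion step at the next fresh child breaks. You correctly flagged the order of operations as the main obstacle, but the transition estimate must also come after the inductive bound, not just the monotonicity step.

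Here is a concrete failure. Take $\Phi_P\equiv1$ and one fresh child with $\psi(\Xi;z)=2^{1-|\Xi|}\le1=h/d$, so the true child weight is $2^{-|\Xi|}$. Partition the mark space into $J$ cells $R_j$ of $\nu$-mass one. Give the child the old factors $\psi_{c,j}(\Xi)=(1+|\Xi\cap R_j|)/(J+|\Xi|)$; these sum to one pointwise, and $2^{-|\Xi|}\psi_{c,j}$ is insertion-nonincreasing while $\psi_{c,j}$ alone is not. Follow each old child by one fresh child with factor $\mathbf 1_{R_j}(w)$ and then a leaf.

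All hypotheses hold with $B_{c,j}=B_c=1$. Yet
$\E T_c=\E\sum_j|\Xi\cap R_j|(1+|\Xi\cap R_j|)/(J+|\Xi|)\to3/2$ as $J\to\infty$. So your intermediate inequality $\E[\Phi_P(\Xi+\delta_z)T_c(\Xi)]\le B_c\,\E\Phi_P$ fails. With the true weight $2^{-|\Xi|}$, the background is tilted to intensity $\nu/2$ and the normalized limit is $5/6\le1$, consistent with the lemma.

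The repair is the paper's order:
\begin{enumerate}
\item After the Mecke rewrite, keep $\psi_{P,\beta}$ inside the expectation.
\item Apply the inductive bound to the true child weight.
\item Only then use $\psi_{P,\beta}\le h_{P,\beta}/d$ and $\Phi_P(\Xi+\sum_i\delta_{z_i})\le\Phi_P(\Xi)$ to get $\E[\Phi_P(\Xi+\sum_i\delta_{z_i})\,\psi_{P,\beta}(\Xi+\sum_i\delta_{z_i};\mathbf z)]\le(h_{P,\beta}(\mathbf z)/d)\,\E\Phi_P(\Xi)$.
\end{enumerate}
Your combining step with \cref{eq:branch-recursion} then goes through as written.
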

\begin{proof}
Induct backwards from the leaves.  Old children are bounded directly by the
first line of \cref{eq:branch-recursion}.  For a fresh child,
the multivariate Mecke equation rewrites its contribution as
\[
 \int \E\!\left[
  \Phi_P\!\left(\Xi+\sum_i\delta_{z_i}\right)
  \psi_{P,\beta}\!\left(\Xi+\sum_i\delta_{z_i};\mathbf z\right)
  T_{P,\beta,\mathbf z}(\Xi)
 \right]d\nu^{s_\beta}(\mathbf z).
\]
Apply the inductive bound at the child.  Insertion monotonicity and the
transition estimate bound the resulting child prefix by
$\Phi_P(\Xi)h_{P,\beta}(\mathbf z)/d$.  Sum over the fresh types and combine
with the old-child bound in \cref{eq:branch-recursion}.  This proves
\cref{eq:branch-conclusion}.
\end{proof}

\begin{remark}\label{rem:fixed-recursion}
\Cref{lem:branch-recursion} holds verbatim over a deterministic number
of i.i.d.\ marked labels: a fresh child selects an ordered tuple of
distinct unexposed sample labels, the Mecke step in the proof is
replaced by \cref{eq:fixed-deletion}, and the fresh-child integral in
\cref{eq:branch-recursion} becomes
$(m')_{s_\beta}\,\E[h_{P,\beta}(\mathbf Z)B_{P,\beta,\mathbf Z}]/d$,
with $m'$ the number of unexposed labels and $\mathbf Z$ i.i.d.\ copies
of one label.  All deterministic bounds, in
particular the floor $d$, are unchanged.
\end{remark}

\subsection{The independent-copy transfer}\label{sec:copy}

Both planted comparisons in this paper average over an independent
copy: of the clause sample at even arity, and of the noise marks on a
shared candidate skeleton at odd arity.  The average cannot simply be
pulled through a norm: rare copies have atypically large degrees or
weights, and paying for them uniformly would cost exactly the logarithm
we remove.  The following elementary transfer isolates what is actually
needed.

\begin{lemma}[Independent-copy transfer]\label{lem:copy-transfer}
Let $A'$ be an independent copy of a random object $A$ and suppose
$X(A)=\E_{A'}S(A,A')$, where $S(A,A')$ is self-adjoint.  Let
$R(A,A')\succ0$ satisfy $|v^{\mathsf T}Sv|\le v^{\mathsf T}Rv$ for all
$v$.  Suppose that, on an event $\cG$ depending only on $A$,
\begin{equation}\label{eq:copy-mean}
 \E_{A'}R(A,A')\preceq C\,\Gamma(A),
\end{equation}
that there is a pair event $\cH$ with
\begin{equation}\label{eq:copy-pointwise}
 R(A,A')\preceq M\,\Gamma(A)\quad\text{on }\cG\cap\cH,
 \qquad
 \E_{A'}\bigl[\1_{\cH^c}R(A,A')\bigr]\preceq\zeta\,\Gamma(A)
 \quad\text{on }\cG,
\end{equation}
and that jointly
$\bbP\{\|R^{-1/2}SR^{-1/2}\|>\lambda_0\}\le p_0$.  Then, outside an event
of probability at most $\bbP(\cG^c)+\sqrt{p_0}$,
\begin{equation}\label{eq:copy-conclusion}
 \|\Gamma^{-1/2}X\,\Gamma^{-1/2}\|
 \le C\lambda_0+M\sqrt{p_0}+\zeta.
\end{equation}
\end{lemma}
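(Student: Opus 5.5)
Since $X$ is self-adjoint, I will bound its normalized norm through quadratic forms, splitting the expectation over the copy $A'$ according to whether the pair $(A,A')$ is \emph{good} for the joint norm bound. Fix $A$ with $\cG$ holding, and a unit vector $v$, and put $w:=\Gamma(A)^{-1/2}v$, so that $w^{\mathsf T}\Gamma(A)w=1$. Then
\[
 \bigl|v^{\mathsf T}\Gamma^{-1/2}X\Gamma^{-1/2}v\bigr|
 =\bigl|\E_{A'}w^{\mathsf T}S(A,A')w\bigr|
 \le\E_{A'}\bigl|w^{\mathsf T}S(A,A')w\bigr|.
\]
Let $\cB$ be the pair event $\{\|R^{-1/2}SR^{-1/2}\|\le\lambda_0\}$. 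On $\cB$ we have $|w^{\mathsf T}Sw|\le\lambda_0\,w^{\mathsf T}Rw$. On $\cB^c$ we only use the standing domination $|w^{\mathsf T}Sw|\le w^{\mathsf T}Rw$. This gives
\[
 \E_{A'}|w^{\mathsf T}Sw|
 \le\lambda_0\,\E_{A'}w^{\mathsf T}Rw
 +\E_{A'}\bigl[\1_{\cB^c}w^{\mathsf T}Rw\bigr].
\]

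The first term is at most $C\lambda_0$ by \cref{eq:copy-mean}, since $w^{\mathsf T}\Gamma w=1$. For the second term, I split $\cB^c$ further along $\cH$. On $\cB^c\cap\cH$ the pointwise bound in \cref{eq:copy-pointwise} gives $w^{\mathsf T}Rw\le M$, so this piece contributes at most $M\,p(A)$, where $p(A):=\bbP_{A'}(\cB^c\mid A)$. On $\cB^c\cap\cH^c$ I drop the indicator $\1_{\cB^c}$ and use the second half of \cref{eq:copy-pointwise}, which bounds the contribution by $\zeta$. Taking the supremum over unit $v$ then yields, for every $A\in\cG$,
\[
 \|\Gamma^{-1/2}X\Gamma^{-1/2}\|\le C\lambda_0+M\,p(A)+\zeta .
\]

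It remains to control the random quantity $p(A)$. By Fubini, $\E_A\,p(A)=\bbP(\cB^c)\le p_0$. Markov's inequality then gives $\bbP_A\{p(A)>\sqrt{p_0}\}\le\sqrt{p_0}$. Outside the union of $\cG^c$ and this event, which has probability at most $\bbP(\cG^c)+\sqrt{p_0}$, the displayed bound becomes \cref{eq:copy-conclusion}.

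The only delicate point is not to average $R$ over all copies uniformly. The trick is to confine the rare bad copies to the conditional probability $p(A)$, which is then handled by Markov's inequality, and to the $\zeta$ term. I expect the main care to go into measurability and into using the conditional form of each hypothesis correctly: $p(A)$ must be a function of $A$ alone, and \cref{eq:copy-mean} and \cref{eq:copy-pointwise} must be applied with $A$ fixed inside $\cG$.
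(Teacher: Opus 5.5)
Your proof is correct and follows the paper's argument essentially step for step. The paper likewise uses Markov's inequality to discard the actual samples $A$ whose conditional probability of the bad trace event $\{\|R^{-1/2}SR^{-1/2}\|>\lambda_0\}$ exceeds $\sqrt{p_0}$. It then splits $\E_{A'}|v^{\mathsf T}Sv|$ into the good trace event, its complement inside $\cH$, and $\cH^c$, bounding the three pieces by $C\lambda_0$, $M\sqrt{p_0}$, and $\zeta$ (times $v^{\mathsf T}\Gamma v$) exactly as you do.
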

\begin{proof}
Discard actual samples for which the conditional probability of the event
$\{\|R^{-1/2}SR^{-1/2}\|>\lambda_0\}$ exceeds $\sqrt{p_0}$; by Markov's
inequality their probability is at most $\sqrt{p_0}$.  For every
remaining actual sample in $\cG$ and every vector $v$, split the copy
expectation of $|v^{\mathsf T}Sv|$ into the good trace event, its
complement inside $\cH$, and $\cH^c$.  The three parts are at most
$C\lambda_0\,v^{\mathsf T}\Gamma v$,
$M\sqrt{p_0}\,v^{\mathsf T}\Gamma v$,
and $\zeta\,v^{\mathsf T}\Gamma v$ respectively, using
\cref{eq:copy-mean,eq:copy-pointwise}.  Since
$|v^{\mathsf T}Xv|\le\E_{A'}|v^{\mathsf T}Sv|$, taking the supremum
over $v$ proves the claim; no union bound over data-dependent vectors
is needed anywhere, which is the reason for routing the comparison
through an independent copy rather than through a net.
\end{proof}

The lemma is also used conditionally: if $A$ and $A'$ are i.i.d.\ only
conditional on a base $\sigma$-field $\cB$, and the hypotheses
\cref{eq:copy-mean,eq:copy-pointwise} hold conditionally on
$(\cB,A)$, the identical proof gives the identical conclusion.
This is the form used in \cref{sec:odd}, where the decoupled copy of
the marks shares the candidate skeleton with the actual sample.

\subsection{Eigenvalue computation and exact spectral enclosures}
\label{sec:solvers}

Detection and recovery need only a unit vector whose Rayleigh quotient
is additively close to the top eigenvalue, with no eigengap assumption;
this is a standard Krylov guarantee.

\begin{lemma}\label{lem:krylov}
Let $H$ be a symmetric $N\times N$ matrix with $\|H\|\le1$, accessible
through a matrix--vector product oracle, and let $\gamma\in(0,1)$.
Random-start block Krylov iteration with
$O(\gamma^{-1/2}\log(N/\gamma))$ products outputs a unit vector $v$ with
\[
 v^{\mathsf T}Hv\ \ge\ \lambda_{\max}(H)-\gamma
\]
with probability at least $1-N^{-2}$, and every output satisfies
$v^{\mathsf T}Hv\le\lambda_{\max}(H)$ deterministically.
\end{lemma}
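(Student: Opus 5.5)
The plan is to invoke the standard gap-free analysis of randomized block Krylov methods, as in Musco--Musco, applied after a shift that makes the relevant part of the spectrum positive. Set $H':=(H+I)/2$, so that $0\preceq H'\preceq I$ and $\lambda_{\max}(H')=(\lambda_{\max}(H)+1)/2$. A Rayleigh quotient for $H'$ within $\gamma/2$ of its top eigenvalue gives a Rayleigh quotient for $H$ within $\gamma$ of its top eigenvalue. The upper bound $v^{\mathsf T}Hv\le\lambda_{\max}(H)$ holds for every unit vector by the Courant--Fischer characterization, so it is automatic for any output. Only the lower bound needs work.

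\emph{Polynomial step.} Draw a random Gaussian start $g$ (block size one suffices) and form the Krylov space $\mathcal K_q=\mathrm{span}\{g,H'g,\dots,H'^{q}g\}$ using $q$ oracle calls. Run Rayleigh--Ritz on $\mathcal K_q$ and output the top Ritz vector. Its Rayleigh quotient is at least $u^{\mathsf T}H'u/u^{\mathsf T}u$ for $u=p(H')g$, for any polynomial $p$ of degree $q$. Take $p$ to be a shifted Chebyshev polynomial that is at most $1$ on $[0,\lambda_{\max}-\gamma/4]$ and grows to at least $e^{c q\sqrt{\gamma}}$ at $\lambda_{\max}$. This growth rate is the standard Chebyshev estimate $T_q(1+\epsilon)\ge \tfrac12 e^{q\sqrt{\epsilon}}$. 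Expand $g$ in the eigenbasis of $H'$. The mass of $u$ on eigenvalues below $\lambda_{\max}-\gamma/4$ is at most $\|g\|^2$. The mass on eigenvalues in $[\lambda_{\max}-\gamma/4,\lambda_{\max}]$ is at least $e^{2cq\sqrt\gamma}\langle g,e_1\rangle^2$, where $e_1$ is a top eigenvector.

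\emph{Probability step.} With probability at least $1-N^{-2}$ we have $\langle g,e_1\rangle^2\ge N^{-5}\|g\|^2/N$, by Gaussian anti-concentration together with a $\chi^2$ tail bound for $\|g\|^2$. So once $q\ge C\gamma^{-1/2}\log(N/\gamma)$, the low-eigenvalue mass is at most a $\gamma/4$ fraction of the total. The Rayleigh quotient is then at least $(1-\gamma/4)(\lambda_{\max}-\gamma/4)\ge\lambda_{\max}-\gamma/2$, using $\lambda_{\max}(H')\le1$.

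\emph{Main obstacle and presentation.} The only delicate point is keeping this gap-free: the statement assumes no separation between $\lambda_1$ and $\lambda_2$. The argument above sidesteps this by lumping every eigenvalue within $\gamma/4$ of the top into the good mass. I would present the result mainly as a citation of the Musco--Musco theorem (gap-independent bound with $O(\gamma^{-1/2}\log(N/\gamma))$ iterations), and include the Chebyshev sketch above for completeness.
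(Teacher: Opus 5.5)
Your proposal is correct and follows the paper's route: shift $H$ to a positive semidefinite matrix and invoke the standard gap-free random-start Lanczos/block Krylov guarantee of Kuczy\'nski--Wo\'zniakowski or Musco--Musco, with the upper bound automatic from Courant--Fischer. The paper uses $H+I$ with the multiplicative guarantee at $\gamma'=\gamma/2$ and $\lambda_{\max}(H+I)\le2$, while you use $(H+I)/2$ with an additive guarantee, and your Chebyshev-plus-anticoncentration sketch is a valid self-contained proof of the cited bound; the case $\lambda_{\max}(H')\le\gamma/4$ is trivial, and otherwise monotonicity of $T_q$ on $[1,\infty)$ handles any larger argument.
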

\begin{proof}
Apply the random-start Lanczos guarantee of~\cite{KW} (see
also~\cite{MuscoMusco}) to $H+I\succeq0$: after
$O(\gamma'^{-1/2}\log(N/\gamma'))$ iterations it yields a unit $v$ with
$v^{\mathsf T}(H+I)v\ge(1-\gamma')\lambda_{\max}(H+I)$, except with
probability $N^{-2}$; take $\gamma'=\gamma/2$ and use
$\lambda_{\max}(H+I)\le2$.  The Krylov spaces of $H+I$ and $H$
coincide.
\end{proof}

In our applications $\gamma$ is a small constant multiple of the
planted signal scale ($\rho$ at even arity, $\rho^2$ at odd), so the
iteration count is $O_{k,\rho}(\log N)$, and $n^{O_k(1)}$ throughout
the range of \cref{rem:quantitative-ranges}.

Refutation is different: a high Rayleigh quotient is not a certificate.
What saves us is that spectral upper bounds of rational pencils can be
verified exactly.  If $C=C^{\mathsf T}$ and $\Gamma\succ0$ are rational,
then for rational $\theta$,
\begin{equation}\label{eq:pencil}
 \lambda_{\max}(\Gamma^{-1/2}C\,\Gamma^{-1/2})\le\theta
 \ \Longleftrightarrow\
 \theta\Gamma-C\succeq0,
 \qquad
 \|\Gamma^{-1/2}C\,\Gamma^{-1/2}\|\le\theta
 \ \Longleftrightarrow\
 \theta\Gamma\mp C\succeq0.
\end{equation}
Rational bisection combined with exact inertia tests therefore produces
a zero-error upper enclosure.  The inertia of an indefinite rational
symmetric matrix is computed by symmetric elimination with $2\times2$
block pivots in the style of Bunch--Kaufman~\cite{BunchKaufman} (a
plain $LDL^{\mathsf T}$ need not exist; the positive semidefinite
factorizations used for the sum-of-squares proofs are unaffected), and
fraction-free elimination controls coefficient growth since
intermediate entries are minors~\cite{Bareiss}.  Together with rational
bisection this gives a conservative dense bound of $n^{O_k(\ell)}$ bit
operations; the matrices themselves, and their products with vectors,
are constructed in $n^{\ell+O_k(1)}$ time (\cref{sec:evaluation}).

\subsection{From Johnson energy to Boolean assignments}
\label{sec:common-recovery}

The spectral arguments of
\cref{sec:even,sec:odd} both end at a unit vector with
constant positive normalized energy in one fixed Johnson channel; this
subsection is the common pipeline that turns such a vector into a
Boolean assignment of constant overlap.  Down operators transfer the
energy directly to the one-particle density matrix; no diagonalization
of the Johnson scheme is needed.

For $1\le j\le n$, define
\[
 \partial_j:\mathbb R^{\Omega_j}\longrightarrow\mathbb R^{\Omega_{j-1}},
 \qquad
 (\partial_jf)(R):=\sum_{i\notin R}f(R\cup\{i\}),
\]
and, for $1\le s\le\ell$,
\begin{equation}\label{eq:iterated-down}
 \partial_\ell^{[s]}
 :=\frac1{s!}\partial_{\ell-s+1}\cdots\partial_{\ell-1}\partial_\ell.
\end{equation}
Thus
$(\partial_\ell^{[s]}f)(R)=\sum_{S\supseteq R,\,|S|=\ell}f(S)$ for
$|R|=\ell-s$.

\begin{lemma}\label{lem:down-identities}
With $J_0:=I$,
\begin{align}
 \partial_\ell^{\mathsf T}\partial_\ell&=\ell I+J_1,
 \label{eq:down-one}\\
 (\partial_\ell^{[s]})^{\mathsf T}\partial_\ell^{[s]}
 &=\sum_{j=0}^s\binom{\ell-j}{s-j}J_j,
 \label{eq:down-s}\\
 \|\partial_j\|^2&=j(n-j+1).
 \label{eq:down-norm}
\end{align}
\end{lemma}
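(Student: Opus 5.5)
I will verify all three identities by computing matrix entries directly from the defining sums; no spectral theory of the Johnson scheme is needed.

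\emph{Identity \cref{eq:down-s}.} The adjoint of $\partial_\ell^{[s]}$ sends $g\in\mathbb R^{\Omega_{\ell-s}}$ to $S\mapsto\sum_{R\subseteq S,|R|=\ell-s}g(R)$, by the superset-sum formula following \cref{eq:iterated-down}. Hence
\[
\bigl((\partial_\ell^{[s]})^{\mathsf T}\partial_\ell^{[s]}\bigr)_{S,T}=\#\{R\in\Omega_{\ell-s}:R\subseteq S\cap T\}=\binom{|S\cap T|}{\ell-s}.
\]
If $|S\sd T|=2j$, then $|S\cap T|=\ell-j$, so the entry equals $\binom{\ell-j}{\ell-s}=\binom{\ell-j}{s-j}$. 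This vanishes for $j>s$, and since $J_j$ is exactly the indicator of $|S\sd T|=2j$, the sum \cref{eq:down-s} follows.

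\emph{Identity \cref{eq:down-one}.} This is the case $s=1$: $j=0$ gives $\binom{\ell}{1}=\ell$ on the diagonal, and $j=1$ gives $\binom{\ell-1}{0}=1$ on $J_1$. As a consistency check, $\partial_\ell^{[1]}=\partial_\ell$ since the $1/1!$ factor is trivial.

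\emph{Identity \cref{eq:down-norm}.} Apply \cref{eq:down-one} at level $j$ to get $\|\partial_j\|^2=\lambda_{\max}(jI+J_1)$, where $J_1$ is the distance-one Johnson adjacency on $\Omega_j$. Because $J_1$ has nonnegative entries and constant row sum $b_{1,j}=j(n-j)$ by \cref{eq:johnson-degree}, Perron--Frobenius (or Gershgorin, with the all-ones vector attaining the bound) gives $\lambda_{\max}(J_1)=j(n-j)$. Since $jI+J_1$ is nonnegative and symmetric, its norm is this top eigenvalue, so $\|\partial_j\|^2=j+j(n-j)=j(n-j+1)$.

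The only delicate point is the boundary bookkeeping. When $j=n$, $J_1=0$ and the formula gives $n$, which matches $\partial_n$ mapping the single set $[n]$ to $n$ singletons-complements each with value $f([n])$. The convention $J_0=I$ handles the $j=0$ term. With the index shift done correctly, I expect no real obstacle.
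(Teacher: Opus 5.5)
Your proof is correct and follows essentially the same route as the paper: both compute the Gram entries by counting the common $(\ell-s)$-subsets of two $\ell$-sets at Johnson distance $j$, and both obtain the norm from the constant row sum $j+j(n-j)$ of the nonnegative symmetric matrix $jI+J_1$. The only difference is cosmetic: you obtain \cref{eq:down-one} as the $s=1$ case of \cref{eq:down-s}, whereas the paper states \cref{eq:down-one} directly first and then generalizes.
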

\begin{proof}
Two $\ell$-sets have a common $(\ell-1)$-subset exactly when they are
equal or have Johnson distance one, proving \cref{eq:down-one}.  More
generally, if $S,T\in\Omega_\ell$ have Johnson distance $j$, then the
number of $(\ell-s)$-sets contained in both is
$\binom{\ell-j}{\ell-s}=\binom{\ell-j}{s-j}$ for $j\le s$, and zero
otherwise; this proves \cref{eq:down-s}.  Finally,
$\partial_j^{\mathsf T}\partial_j=jI+J_1$ on $\Omega_j$.  This
nonnegative matrix has constant row sum $j+j(n-j)=j(n-j+1)$, which is
both its Perron eigenvalue and its operator norm.
\end{proof}

\begin{definition}[One-particle density matrix]\label{def:rdm}
For $u\in\mathbb R^{\Omega_\ell}$ define
$B_u\in\mathbb R^{[n]\times\binom{[n]}{\ell-1}}$ by
\[
 B_u(i,R):=u_{R\cup\{i\}}\1\{i\notin R\},
 \qquad
 \rhoone(u):=\frac1\ell B_uB_u^{\mathsf T}.
\]
For unit $u$, $\rhoone(u)\succeq0$ and $\Tr\rhoone(u)=1$.
\end{definition}

\begin{lemma}\label{lem:rdm-identity}
For every $u\in\mathbb R^{\Omega_\ell}$ and
$\bar x:=\xstar/\sqrt n$,
\begin{equation}\label{eq:rdm-identity}
 \bar x^{\mathsf T}\rhoone(Zu)\bar x
 =\frac{\|\partial_\ell u\|^2}{n\ell}.
\end{equation}
\end{lemma}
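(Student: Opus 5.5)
The identity is a direct expansion of the quadratic form; no spectral input is needed. Write $w:=Zu$, so $w_S=\xstar^S u_S$. By \cref{def:rdm},
\[
 \bar x^{\mathsf T}\rhoone(w)\bar x=\frac1{n\ell}\bigl\|B_w^{\mathsf T}\xstar\bigr\|^2
 =\frac1{n\ell}\sum_{R\in\Omega_{\ell-1}}\Bigl(\sum_{i\notin R}\xstar_i\,w_{R\cup\{i\}}\Bigr)^2 .
\]
The plan is to simplify the inner sum for each fixed $R$ and recognize it as $(\partial_\ell u)(R)$ up to a sign depending only on $R$.

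The key step uses the multiplicativity of the gauge. For $i\notin R$, with $S=R\cup\{i\}$, we have $\xstar^S=\xstar^R\xstar_i$. Hence
\[
 \xstar_i\,w_{R\cup\{i\}}=\xstar_i\,\xstar^R\xstar_i\,u_{R\cup\{i\}}=\xstar^R\,u_{R\cup\{i\}},
\]
since $\xstar_i^2=1$. Summing over $i\notin R$ gives $\xstar^R\sum_{i\notin R}u_{R\cup\{i\}}=\xstar^R(\partial_\ell u)(R)$. Squaring removes the sign $\xstar^R\in\{\pm1\}$, and summing over $R$ yields $\|\partial_\ell u\|^2$. Dividing by $n\ell$ (the $1/\ell$ from $\rhoone$ and the $1/n$ from $\bar x=\xstar/\sqrt n$) gives \cref{eq:rdm-identity}.

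There is no genuine obstacle; the only point needing care is bookkeeping. First, $B_w^{\mathsf T}\xstar$ is indexed by $(\ell-1)$-sets, with the indicator $i\notin R$ matching exactly the range of $\partial_\ell$. Second, $Z$ acts on the $\ell$-set index, so the factor $\xstar_i$ must be absorbed before the sum over $i$ is taken.
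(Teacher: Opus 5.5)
Your proposal is correct and follows the paper's proof essentially verbatim: both compute the $R$-th entry of $B_{Zu}^{\mathsf T}\bar x$ as $n^{-1/2}\xstar^R(\partial_\ell u)(R)$, using $\xstar^{R\cup\{i\}}=\xstar^R\xstar_i$ and $\xstar_i^2=1$. Both then square, sum over $R$, and divide by $\ell$.
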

\begin{proof}
For $R\in\binom{[n]}{\ell-1}$,
\[
 \bar x^{\mathsf T}B_{Zu}(\cdot,R)
 =n^{-1/2}\xstar^R\sum_{i\notin R}u_{R\cup\{i\}}
 =n^{-1/2}\xstar^R(\partial_\ell u)(R).
\]
Square, sum over $R$, and divide by $\ell$.
\end{proof}

\begin{theorem}[Johnson-to-RDM transfer]\label{thm:transfer}
Fix $s\ge1$ and $0<c\le1$.  There are $c_{s,c}>0$ and $n_0(s,c)$ such
that, for every $s\le\ell\le n/3$ and $n\ge n_0(s,c)$, every unit vector
$u\in\mathbb R^{\Omega_\ell}$ satisfying
\begin{equation}\label{eq:johnson-energy}
 u^{\mathsf T}J_su\ge c\,b_{s,\ell}
\end{equation}
obeys
\begin{equation}\label{eq:transfer}
 \bar x^{\mathsf T}\rhoone(Zu)\bar x\ge c_{s,c}.
\end{equation}
\end{theorem}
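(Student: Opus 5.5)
The plan is to compress the degree-$s$ energy through the iterated down operator and then peel off $s-1$ factors with the crude norm bound \cref{eq:down-norm}. By \cref{lem:rdm-identity}, \cref{eq:transfer} is equivalent to
\[
 \|\partial_\ell u\|^2\ \ge\ c_{s,c}\,n\ell .
\]
So I would bound $\|\partial_\ell^{[s]}u\|^2$ from below using \cref{eq:down-s} together with the hypothesis \cref{eq:johnson-energy}. Then I would bound it from above by a multiple of $\|\partial_\ell u\|^2$ using the factorization \cref{eq:iterated-down}. No diagonalization of the Johnson scheme is needed.

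\emph{Lower bound.} By \cref{eq:down-s},
\[
 \|\partial_\ell^{[s]}u\|^2=\sum_{j=0}^{s}\binom{\ell-j}{s-j}\,u^{\mathsf T}J_ju .
\]
The $j=s$ term is at least $c\,b_{s,\ell}$ by \cref{eq:johnson-energy}. For $j<s$ I only use the trivial bound $|u^{\mathsf T}J_ju|\le\|J_j\|=b_{j,\ell}$, since $J_j$ is a nonnegative matrix with constant row sum $b_{j,\ell}$. This gives
\[
 \sum_{j<s}\binom{\ell-j}{s-j}b_{j,\ell}\ \le\ \sum_{j<s}\ell^{s-j}\,\ell^{j}n^{j}\ \le\ C_s\,\ell^s n^{s-1}.
\]
On the other hand, $\ell\le n/3$ gives $n-\ell\ge 2n/3$, so
\[
 b_{s,\ell}=\binom\ell s\binom{n-\ell}s\ \ge\ (\ell/s)^s\,(2n/(3s))^s\ \ge\ c_s\,\ell^s n^s .
\]
Hence for $n\ge n_0(s,c)$ the lower-order terms are at most $\tfrac c2 b_{s,\ell}$, and
\[
 \|\partial_\ell^{[s]}u\|^2\ \ge\ \tfrac c2\, b_{s,\ell}.
\]

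\emph{Upper bound.} From $\partial_\ell^{[s]}=\frac1{s!}\partial_{\ell-s+1}\cdots\partial_{\ell-1}\,\partial_\ell$ and \cref{eq:down-norm},
\[
 \|\partial_\ell^{[s]}u\|^2\ \le\ \frac1{(s!)^2}\prod_{j=\ell-s+1}^{\ell-1}j(n-j+1)\;\|\partial_\ell u\|^2\ \le\ \frac{(\ell n)^{s-1}}{(s!)^2}\,\|\partial_\ell u\|^2 .
\]

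\emph{Conclusion.} Combining the two bounds,
\[
 \|\partial_\ell u\|^2\ \ge\ \tfrac c2\,(s!)^2\,\frac{b_{s,\ell}}{(\ell n)^{s-1}}\ \ge\ c_{s,c}\,\ell n .
\]
Dividing by $n\ell$ and applying \cref{eq:rdm-identity} yields \cref{eq:transfer}.

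The only delicate point is the lower bound. The cross terms $u^{\mathsf T}J_ju$ for $j<s$ may be negative, and the argument must not rely on any eigenvalue information. It works because each such term is smaller than the main term $b_{s,\ell}$ by a factor of order $1/n$, uniformly in $\ell\le n/3$. The constant $n_0(s,c)$ is exactly what absorbs these terms. I would double-check that the crude bounds $\binom{\ell-j}{s-j}\le\ell^{s-j}$ and $b_{j,\ell}\le\ell^jn^j$ hold uniformly, including at the endpoint $\ell=s$.
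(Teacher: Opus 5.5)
Your proposal is correct and follows essentially the same route as the paper: a lower bound on $\|\partial_\ell^{[s]}u\|^2$ from the expansion of $(\partial_\ell^{[s]})^{\mathsf T}\partial_\ell^{[s]}$ into Johnson matrices with $|u^{\mathsf T}J_ju|\le b_{j,\ell}$, an upper bound by peeling off the norms $\|\partial_j\|^2=j(n-j+1)$, and then the RDM identity. The only difference is that you use crude polynomial bounds ($\binom{\ell-j}{s-j}b_{j,\ell}\le\ell^sn^j$, $b_{s,\ell}\ge c_s\ell^sn^s$, $j(n-j+1)\le\ell n$) where the paper uses the exact ratio identity and the exact product; these bounds are valid uniformly for $s\le\ell\le n/3$ and large $n$, and they give $\|\partial_\ell u\|^2\ge c_{s,c}\,\ell n$ directly.
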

\begin{proof}
By \cref{eq:down-s} and $u^{\mathsf T}J_ju\ge-\|J_j\|\ge-b_{j,\ell}$,
\[
 \|\partial_\ell^{[s]}u\|^2
 \ge c\,b_{s,\ell}
 -\sum_{j=0}^{s-1}\binom{\ell-j}{s-j}b_{j,\ell}
 \ \ge\ \frac c2\,b_{s,\ell}
 \qquad(n\ge n_0(s,c)),
\]
because the exact identity
$\binom{\ell-j}{s-j}b_{j,\ell}/b_{s,\ell}
=\binom sj\binom{n-\ell}{j}/\binom{n-\ell}{s}$ makes each lower-order
ratio $O_s((n-\ell)^{j-s})$, negligible for $\ell\le n/3$.  On the
other hand, \cref{eq:iterated-down,eq:down-norm} give
\[
 \|\partial_\ell^{[s]}u\|^2
 \le\frac1{(s!)^2}
 \prod_{h=1}^{s-1}(\ell-h)(n-\ell+h+1)\,
 \|\partial_\ell u\|^2,
\]
and, writing $(a)_s:=a(a-1)\cdots(a-s+1)$, we have
$(s!)^2b_{s,\ell}=(\ell)_s(n-\ell)_s$.  Combining the two displays
yields
$\|\partial_\ell u\|^2
\ge\tfrac c2\,\ell(n-\ell)\prod_{h=1}^{s-1}
\tfrac{n-\ell-h}{n-\ell+h+1}
\ge c'_{s,c}\,\ell(n-\ell)$
for large $n$.  Now apply \cref{eq:rdm-identity} and use
$(n-\ell)/n\ge2/3$.
\end{proof}

\begin{lemma}\label{lem:rounding}
Let $\sigma\succeq0$ have trace one, put $\bar x:=\xstar/\sqrt n$, and
assume $\bar x^{\mathsf T}\sigma\bar x\ge c$, $0<c\le1$.  Let $P$
be the spectral projector of $\sigma$ onto eigenvalues at least $c/4$.
Then
\begin{equation}\label{eq:rounding-projector}
 \rank P\le\frac4c,
 \qquad
 \|P\bar x\|^2\ge\frac c2.
\end{equation}
From any orthonormal basis of $\Ran P$, randomized clipped rounding of
the basis vectors and their negatives, repeated $O_c(\log n)$ times,
produces a list of $O_c(\log n)$ Boolean vectors that, with probability
$1-o(1)$, contains one with
$|\ip z{\xstar}|\ge c'_cn$ for a constant $c'_c>0$.
\end{lemma}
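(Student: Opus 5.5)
The plan has two parts. First I prove the deterministic spectral facts \cref{eq:rounding-projector}, then I analyze the randomized rounding.

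\emph{Rank bound.} Write $\sigma=\sum_i\lambda_i w_iw_i^{\mathsf T}$ with $\lambda_i\ge0$ and $\sum_i\lambda_i=1$. The number of eigenvalues that are at least $c/4$ is at most $(c/4)^{-1}=4/c$, which gives $\rank P\le 4/c$.

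\emph{Mass bound.} Split $\bar x=P\bar x+(I-P)\bar x$. Since $P$ commutes with $\sigma$, there are no cross terms, so
\[
 c\le\bar x^{\mathsf T}\sigma\bar x
 =(P\bar x)^{\mathsf T}\sigma P\bar x+((I-P)\bar x)^{\mathsf T}\sigma(I-P)\bar x .
\]
The first term is at most $\|\sigma\|\,\|P\bar x\|^2\le\|P\bar x\|^2$, because $\|\sigma\|\le\Tr\sigma=1$. The second is at most $(c/4)\|(I-P)\bar x\|^2\le c/4$. Hence $\|P\bar x\|^2\ge 3c/4\ge c/2$.

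\emph{Rounding.} Let $w_1,\ldots,w_q$ be the orthonormal basis, with $q\le 4/c$. Since $\sum_j\ip{w_j}{\bar x}^2=\|P\bar x\|^2\ge c/2$, some index $j$ has $\ip{w_j}{\bar x}^2\ge c^2/8$. Replacing $w_j$ by $-w_j$ if needed, which is why the negatives are included, we may take $\ip{w_j}{\xstar}\ge(c/\sqrt8)\sqrt n$. Let $w:=w_j$, a unit vector. Define the clipped rescaling $y_i:=\max(-1,\min(1,\lambda\sqrt n\,w_i))$ with a constant $\lambda=\lambda_c$, and round each coordinate independently: $z_i=1$ with probability $(1+y_i)/2$ and $z_i=-1$ otherwise. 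Then $\E\ip z{\xstar}=\ip y{\xstar}$.

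I lower-bound $\ip y{\xstar}$ against $\lambda\sqrt n\ip w{\xstar}\ge\lambda c n/\sqrt8$. The error is supported on coordinates with $|w_i|>1/(\lambda\sqrt n)$. Both $y$ and $\lambda\sqrt n\,w$ share signs there, so the error is at most $\sum_{|w_i|>1/(\lambda\sqrt n)}\lambda\sqrt n|w_i|\le\lambda\sqrt n\cdot\lambda\sqrt n\sum_i w_i^2=\lambda^2 n$, using $|w_i|\le\lambda\sqrt n\,w_i^2$ on that set. Choosing $\lambda=c/(2\sqrt8)$ gives $\ip y{\xstar}\ge\lambda cn/(2\sqrt8)=:2c'_cn$.

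\emph{Concentration and amplification.} $\ip z{\xstar}$ is a sum of $n$ independent terms bounded by $1$, so by Hoeffding $\ip z{\xstar}\ge c'_cn$ except with probability $e^{-\Omega_c(n)}$. A single round therefore already succeeds with high probability. The $O_c(\log n)$ repetitions over the $2q$ signed basis vectors only add slack, and the list has $O_c(\log n)$ entries. A union bound over the list's good index gives success with probability $1-o(1)$.

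The main obstacle is the clipping step: a basis vector may concentrate its mass on a few coordinates. The $\ell^1$-versus-$\ell^2$ bound above handles this uniformly, which is why no eigengap or delocalization assumption is needed.
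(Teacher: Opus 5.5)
Your proof is correct and follows the paper's argument step for step: the same trace-one rank count, the same split of the quadratic form giving $\|P\bar x\|^2\ge 3c/4$, the same pigeonhole choice of a signed basis vector with $\ip{w}{\bar x}\ge c/\sqrt8$, and the same clipping scale (your $\lambda=c/(2\sqrt8)$ is the paper's $1/K$), giving expected overlap $c^2n/32$. The differences are cosmetic. You bound the clipped mass by the pointwise inequality $|w_i|\le\lambda\sqrt n\,w_i^2$ rather than by Cauchy--Schwarz, and you note, correctly, that Hoeffding already gives failure probability $e^{-\Omega_c(n)}$, so the paper's $O_c(\log n)$ repetitions, which it uses to boost a constant success probability, are only slack in your version.
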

\begin{proof}
The rank bound follows from $\Tr\sigma=1$; and
$c\le\|P\bar x\|^2+c/4$ because the form is at most $c/4$ on
$\Ran(I-P)$ and at most one on $\Ran P$.  Since
$\sum_j\ip{w_j}{\bar x}^2\ge c/2$ over an orthonormal basis
$w_1,\ldots,w_R$ of $\Ran P$ with $R\le4/c$, some $w$ has
$\ip w{\bar x}\ge\gamma_0:=\sqrt{c^2/8}$ after a sign flip.  Clip its
coordinates at $K/\sqrt n$ with $K:=2/\gamma_0$; since at most $n/K^2$
coordinates exceed the clip, of total absolute mass at most
$\sqrt n/K$ by Cauchy--Schwarz, the clipped vector $w'$ still has
$\ip{w'}{\xstar}\ge\tfrac12\gamma_0\sqrt n$.  Rounding coordinate $i$
independently to a sign of mean $w'_i\sqrt n/K$ gives expected overlap
at least $\gamma_0^2n/4$, hence overlap $\ge\gamma_0^2n/8$ with
constant probability by Hoeffding.  Repeat over the $R$ basis vectors,
their negatives, and $O_c(\log n)$ independent rounds.
\end{proof}

\begin{lemma}\label{lem:validation}
Let $\cL$ be a random list of at most $C\log n$ Boolean vectors,
independent of a validation pool that consists either of exactly
$m_{\rm v}$ i.i.d.\ clauses or of a Poisson number $M_{\rm v}$ of clauses
of mean $m_{\rm v}$, where
$m_{\rm v}\ge C_{k,c}\rho^{-2}\log n$.  Suppose that with probability
$1-o(1)$ the list contains $z$ with $|q_z|\ge c$, where
$q_z:=n^{-1}\ip z{\xstar}$.  Define $T(z)$ as the empirical advantage
of $z$ on the realized validation pool: the average of $y\,z^{F}$ over
its clauses $(F,y)$, with $T:=0$ if the pool is empty.  Then,
with probability $1-o(1)$, the maximizer of $|T|$ over $\cL$ satisfies
$|q_z|\ge c'_{k,c}>0$.  For odd $k$, the maximizer of $T$ over
$\cL\cup(-\cL)$ additionally satisfies $q_z\ge c'_{k,c}$, fixing the
global sign.
\end{lemma}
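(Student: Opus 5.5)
Since $\cL$ is independent of the validation pool, I condition on $\cL$ and treat it as a fixed list of at most $C\log n$ vectors. I then show that every $z\in\cL$ has $T(z)$ concentrated around its mean, and take a union bound over the list.

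\emph{Mean of $T(z)$.} For a single planted clause $(F,y)$ with $F$ uniform in $\binom{[n]}k$, we have
\[
 \E[y z^F]=\rho\,\E_F\bigl[(z\xstar)^F\bigr].
\]
Write $w:=z\xstar$ (coordinatewise), so that $\sum_i w_i=nq_z$. The quantity $\E_F w^F$ is the $k$-th elementary symmetric polynomial of $w$ divided by $\binom nk$. Expanding through power sums (Newton's identities), and using $w_i^2=1$, which gives $p_2=n$, $p_3=p_1$, and so on, yields
\[
 \E_F w^F=q_z^k+O_k(1/n).
\]
Hence the mean $\mu_z:=\rho(q_z^k+O_k(1/n))$. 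For the good vector $z$ this gives $|\mu_z|\ge\rho c^k/2$ once $n$ is large. For odd $k$ the sign of $\mu_z$ equals the sign of $q_z$ up to the $O(1/n)$ error. For even $k$, $\mu_z\ge0$ up to $O(1/n)$ and depends only on $|q_z|$, which is why only $|q|$ can be certified.

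\emph{Concentration.} For fixed $z$, the summands $y z^F$ are i.i.d.\ signs. Hoeffding's inequality gives
\[
 |T(z)-\mu_z|\le\tau \quad\text{except with probability } 2e^{-m_{\rm v}\tau^2/2}.
\]
I take $\tau:=\rho c^k/8$. Then $m_{\rm v}\ge C_{k,c}\rho^{-2}\log n$ with $C_{k,c}$ large makes this probability $n^{-2}$, and a union bound over $C\log n$ vectors (or $2C\log n$ including negatives) costs $o(1)$. In the Poisson case, I first condition on $M_{\rm v}$. The bound $\Pr\{M_{\rm v}<m_{\rm v}/2\}\le e^{-m_{\rm v}/8}=o(1)$ handles small pools, since $m_{\rm v}\to\infty$; given $M_{\rm v}$, the clauses are i.i.d.\ and the same bound applies with $m_{\rm v}/2$. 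The empty-pool convention is only ever invoked on this failure event.

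\emph{Conclusion.} On the good event, the maximizer $z$ of $|T|$ satisfies $|T(z)|\ge\rho c^k/2-\tau$. Therefore
\[
 \rho|q_z|^k\ge|\mu_z|-O(\rho/n)\ge\rho c^k/2-2\tau-O(\rho/n)\ge\rho c^k/8 .
\]
This gives $|q_z|\ge c':=(c^k/8)^{1/k}$. The factor $\rho$ cancels, so $c'$ depends only on $k$ and $c$, as required.

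For odd $k$, over $\cL\cup(-\cL)$ the good vector contributes $T\ge\rho c^k/2-\tau$ after choosing its sign, since $T(-z)=-T(z)$ when $k$ is odd. The maximizer then has $\mu_z\ge\rho c^k/4$, which forces $q_z^k>0$ and hence $q_z\ge c'$.

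The only delicate step is the mean computation. One must make sure the $O_k(1/n)$ correction to $\E_F w^F$ is uniform in $z$ and holds without assuming $q_z$ is small. This follows because Newton's identities express $e_k(w)$ as a polynomial in $p_1=nq_z$ and the bounded lower-order power sums, with the leading term $p_1^k/k!$.
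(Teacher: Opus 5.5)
Your proof is correct and follows the paper's argument essentially step for step. The shared steps are the mean estimate $\E T(z)=\rho q_z^k+O_k(\rho/n)$, conditioning the Poisson pool on $M_{\rm v}\ge m_{\rm v}/2$, Hoeffding at deviation $\rho c^k/8$ with a union bound over the list, and the antisymmetry $T(-z)=-T(z)$ for odd $k$. The differences are minor: you derive the mean estimate via Newton's identities, where the paper just asserts the sampling-without-replacement fact, and your lossier bookkeeping gives $c'=c\,8^{-1/k}$ rather than the paper's $c\,2^{-1/k}$.
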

\begin{proof}
For fixed $z$, sampling a uniform $k$-subset without replacement gives
\begin{equation}\label{eq:validation-mean}
 \E T(z)=\rho\,q_z^k+O_k(\rho/n),
\end{equation}
since the product of $k$ coordinates sampled without replacement from a
$\pm1$ sequence of mean $q_z$ has mean $q_z^k+O_k(1/n)$.  In the Poisson
case, condition on $M_{\rm v}\ge m_{\rm v}/2$, which fails with
probability $e^{-cm_{\rm v}}$.  Hoeffding's inequality at deviation
$t:=\rho c^k/8$, with $C_{k,c}$ large, gives $|T(z)-\E T(z)|\le t$
simultaneously for all list elements, and the $O_k(\rho/n)$ term is at
most $t$ for large $n$.  The good candidate then has
$|T|\ge\rho c^k-2t$, so the maximizer $z^*$ satisfies
$\rho|q_{z^*}|^k\ge\rho c^k-4t=\rho c^k/2$, i.e.\
$|q_{z^*}|\ge c\,2^{-1/k}$.  For odd $k$, $T(-z)=-T(z)$ while
$\E T(z)$ has the sign of $q_z$, so maximizing the signed score over
$\cL\cup(-\cL)$ fixes the orientation.
\end{proof}

\section{Sharp trace bounds for normalized Kikuchi operators}
\label{sec:trace}

This section proves the log-free null estimate in its two natural
forms, from which every spectral upper bound in the paper follows.  The
\emph{single-label} theorem (\cref{thm:single-trace}) is the model
case (one family of signed labels, one degree-plus-floor
normalizer) and covers even arity directly.  The \emph{symmetric
pair} theorem (\cref{thm:packet-trace}) runs the same opening/closing
accounting for moves that toggle two disjoint residual labels of a
single signed family at once, with a retained numerator that may shrink
under insertion and a monotone completion in the denominator; it covers
odd inference (\cref{sec:odd}), whose operator the quantum algorithm
reuses unchanged (\cref{sec:quantum}), and odd refutation
(\cref{sec:refutation}).  Its hypotheses
\ref{P-monotone}--\ref{P-mass} are short enough that each application
verifies them in a few lines.

\subsection{Walks, excess, and the active frontier}

Fix a residual size $2r$.  A \emph{trace word} of length $2t$ consists of
rows $S_0,S_1,\ldots,S_{2t}=S_0$ in $\Omega_\ell$ together with, at each
step, either one residual label of size $2r$, or an unordered pair of
disjoint residual labels of size $2r$ each, whose partial matching carries
$S_{i-1}$ to $S_i$.  A residual is \emph{legal} at $S$ if exactly $r$ of
its vertices lie in $S$.  Labels are physical: coincident supports remain
distinct labels.  Let $c_i$ be the number of distinct residual labels that
have appeared by step $i$, and define the \emph{excess}
\begin{equation}\label{eq:excess}
 \chi_i:=\ell+rc_i-
 \Bigl|S_0\cup\bigcup_{a\text{ used by step }i}C_a\Bigr|,
\end{equation}
which measures how many vertices the walk has revisited rather than
explored.  Because every step toggles its residual labels,
\begin{equation}\label{eq:parity-identity}
 S_i\sd S_0=\bigsd_{a\in\cO_i}C_a,
\end{equation}
where $\cO_i$ is the set of residual labels used an odd number of times by
step $i$; we call these labels \emph{active}.  At the first use of a
residual, exactly $r$ of its vertices lie outside the current row.  If
$h$ of those exterior vertices were exposed earlier, that first use
increases the excess by exactly $h$, and hence
\begin{equation}\label{eq:excess-sum}
 \chi_{2t}=\sum_{\text{first uses }a}h_a.
\end{equation}
These identities depend only on the residual labels and remain valid when a
step toggles two disjoint residuals at once.

The excess controls how many active labels can ever be legal at the
current row, the quantity that governs the cost of closings.

\begin{lemma}[Active legal labels]\label{lem:active}
At every prefix, with $S$ the current row and $\chi$ the current excess,
\begin{equation}\label{eq:active}
 \#\{a\in\cO:|C_a\cap S|=r\}\le\frac{\ell+2\chi}{r}.
\end{equation}
\end{lemma}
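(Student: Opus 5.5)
The plan is a weighted double count over the vertices of the current row, using the parity identity \cref{eq:parity-identity}. Fix a prefix with current row $S$, active set $\cO$, and let $\cO'\subseteq\cO$ be the active labels legal at $S$. Since each $a\in\cO'$ has exactly $r$ vertices in $S$,
\[
 r\,|\cO'|=\sum_{v\in S}\#\{a\in\cO':v\in C_a\}\le\sum_{v\in S}m_\cO(v),
 \qquad
 m_\cO(v):=\1\{v\in S_0\}+\#\{a\in\cO:v\in C_a\}.
\]
So it suffices to bound $\sum_{v\in S}m_\cO(v)$ by $\ell+2\chi$.

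\emph{Step 1 (active excess).} Put $U_\cO:=S_0\cup\bigcup_{a\in\cO}C_a$ and $\chi_\cO:=\ell+r|\cO|-|U_\cO|$. Then $\sum_{v\in U_\cO}m_\cO(v)=\ell+2r|\cO|$. By \cref{eq:parity-identity}, $v\in S$ exactly when $m_\cO(v)$ is odd. Hence every $v\in U_\cO\setminus S$ has $m_\cO(v)$ even and positive, so $m_\cO(v)\ge2$. Also $S\subseteq U_\cO$ and $|S|=\ell$. Therefore
\[
 \sum_{v\in S}m_\cO(v)\le \ell+2r|\cO|-2\bigl(|U_\cO|-\ell\bigr)=\ell+2\chi_\cO .
\]
It remains to show $\chi_\cO\le\chi$.

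\emph{Step 2 ($\chi_\cO\le\chi$, by induction along the walk).} At the start both quantities are $0$. I would check how each step changes $\chi_\cO$ compared with $\chi$.
\begin{itemize}
\item \emph{Closing of $a$.} Here $r|\cO|$ drops by $r$. The set $U_\cO$ loses only vertices $v$ with $m_\cO(v)=1$ whose unique active label is $a$. Such vertices lie in the current row, by the parity observation above, and hence in $C_a\cap S_{i-1}$, which has exactly $r$ elements because $a$ is legal there. So $\chi_\cO$ does not increase, while $\chi$ is unchanged.
\item \emph{Fresh opening.} The $r$ interior vertices of $C_a$ lie in $S_{i-1}\subseteq U_\cO$. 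Of the $r$ exterior vertices, only those already in $U_\cO$ fail to be new, and these are among the $h$ previously exposed ones. So $\chi_\cO$ rises by at most $h$, which is exactly the increase of $\chi$ by \cref{eq:excess-sum}.
\end{itemize}

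\emph{Main obstacle: reactivations.} When an inactive label $a$ becomes active again, $\chi$ does not change, but exterior vertices of $C_a$ that already sit in $U_\cO$ would raise $\chi_\cO$. To handle this I would strengthen the induction hypothesis to a potential that remembers deactivated labels. Concretely, the claim to carry along is: $\chi-\chi_\cO$ is at least the number of vertices of $C_b\setminus U_\cO$, summed over currently inactive used labels $b$, that have since been re-exposed by active labels or by $S_0$. Equivalently, one can work with the full exposed set $U$ but weight each inactive label by its $r$ ``released'' private vertices. A closing that removes fewer than $r$ private vertices, or a later re-exposure, then banks credit, and the reactivation spends that credit.

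If this bookkeeping becomes awkward, the fallback is to run Step 1 directly with use-multiplicities. Let $w(v)$ count $\1\{v\in S_0\}$ plus all uses of labels containing $v$; its parity again detects membership in $S$. One then compares $\sum_v w(v)$ with the excess via \cref{eq:excess-sum}, accepting the factor $2$ already present in \cref{eq:active}.
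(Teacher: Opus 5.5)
Your Step~1 is correct and is, in substance, the paper's incidence count. The paper counts the $2r|\cO|$ active incidences, giving at least two to each vertex of $(U_\cO\setminus S_0)\setminus S$ and at least one to each vertex of $S_0\setminus S$; folding $\1\{v\in S_0\}$ into $m_\cO$ is the same computation. Step~1 reduces the lemma to $|U_\cO|\ge\ell+r|\cO|-\chi$, i.e.\ $\chi_\cO\le\chi$.

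\textbf{The gap: you never prove $\chi_\cO\le\chi$.} The bare hypothesis $\chi_\cO\le\chi$ is not inductive. At a reactivation, $\chi_\cO$ can rise while $\chi$ stays fixed, so a strengthened invariant is essential. The one you state is not coherent as written: read literally it is vacuous, because a vertex of $C_b\setminus U_\cO$ cannot currently lie in an active residual or in $S_0$. No version of it is checked against the three step types. The fallback does not work either. With use-multiplicities, the same parity count gives $\sum_{v\in S}w(v)\le\ell+2\chi+2r\,(i-c_i)$ after $i$ steps, and the number $i-c_i$ of repeated uses is not controlled by $\chi$.

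\textbf{The missing idea: no dynamics are needed.} For each used label $a$, let $N_a$ be the set of its exterior vertices that were new at its first use, i.e.\ outside $S_0$ and every earlier residual. Then:
\begin{itemize}
\item $|N_a|=r-h_a$ and $N_a\subseteq C_a\setminus S_0$;
\item the sets $N_a$ are pairwise disjoint, since a vertex is new only once;
\item membership in $C_a$ never changes, so for every \emph{active} $a$ one has $N_a\subseteq U_\cO\setminus S_0$, regardless of the closings and reactivations in between.
\end{itemize}
Hence
\[
 |U_\cO|\ \ge\ \ell+\sum_{a\in\cO}(r-h_a)\ \ge\ \ell+r|\cO|-\chi
\]
by \cref{eq:excess-sum} and $h_b\ge0$ for inactive $b$. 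This is the paper's bound $|U^+|\ge r|\cO|-\chi$, and it is exactly $\chi_\cO\le\chi$. Equivalently, at every prefix
\[
 \chi-\chi_\cO=\sum_{b\text{ used},\,b\notin\cO}h_b+\Bigl(|U_\cO|-\ell-\sum_{a\in\cO}|N_a|\Bigr)\ge0,
\]
which is the static form of the credit your amortization was after. With this one line in place of Step~2, your Step~1 finishes the proof.
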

\begin{proof}
Let $U:=\bigcup_{a\in\cO}C_a$ and $U^+:=U\setminus S_0$.  At the first
use of each active label, at least $r-h_a$ exterior vertices are new;
these vertices are distinct across first uses and lie in $U^+$, so
$|U^+|\ge r|\cO|-\chi$.  Put
$b:=|S\setminus S_0|=|S_0\setminus S|$.  By
\cref{eq:parity-identity}, $U^+\cap S=S\setminus S_0$, so this
intersection has size $b$.  Every vertex of $U^+\setminus S$ lies in a
positive even number of active residuals, hence in at least two, and every
vertex of $S_0\setminus S$ lies in at least one.  Since the total number
of active residual incidences is $2r|\cO|$, the number lying inside $S$ is
at most
\[
 2r|\cO|-2(|U^+|-b)-b\le2\chi+b\le2\chi+\ell.
\]
Every active label legal at $S$ contributes exactly $r$ such incidences.
\end{proof}

\subsection{Chronologies}

A \emph{chronology} records only the combinatorial skeleton of a walk:
the opening/closing pattern, which labels at a step are first uses, the
allocation of the final excess among those first uses, and the recycled
exterior positions.  Old-label identities are deliberately not encoded;
they are summed later through a degree or port-capacity bound.  The
point of the next lemma is that chronologies are cheap: exponential in
$t$, with no factor depending on $n$ or on the sample size.

\begin{lemma}\label{lem:chronology}
Fix $r\ge1$, $t\ge1$, and $\chi\ge0$.
\begin{enumerate}
\item For single-label words of length $2t$ with exactly $t$ openings,
the chronologies with final excess $\chi$ number at most $C_r^t2^\chi$.
\item For pair words of length $2t$, in which each step uses an
unordered pair of disjoint residual labels and every label has even
total multiplicity, the chronologies with final excess $\chi$ number at
most $C_r^t2^\chi$.
\end{enumerate}
\end{lemma}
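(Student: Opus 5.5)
A chronology is an encoding by a bounded number of discrete choices per step, so I will bound the number of chronologies by a product over the $2t$ steps of the number of admissible choices at each step, together with one global factor for distributing the excess.

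\emph{Part (i).} The opening/closing pattern is a word in $\{O,C\}^{2t}$ with exactly $t$ openings, so there are at most $\binom{2t}{t}\le4^t$ such patterns. Each opening must then be marked either as a first use or as the reactivation of an inactive old label; this is one bit per opening, hence at most $2^t$ choices. Let $f\le t$ be the number of first uses. The final excess $\chi$ is split among them as $\chi=\sum_a h_a$ by \cref{eq:excess-sum}. The number of such splits is the number of compositions $\binom{\chi+f-1}{f-1}\le2^{\chi+f-1}\le2^{\chi}2^{t}$. For each first use with $h_a\ge1$ we must still record which of its $r$ exterior positions are the recycled ones. Since $h_a\le r$, there are at most $\binom r{h_a}\le2^r$ choices, giving at most $2^{rt}$ overall. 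Multiplying these bounds gives $(4\cdot2\cdot2\cdot2^r)^t2^\chi=C_r^t2^\chi$.

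Two points need care. First, the identities of the recycled vertices themselves are \emph{not} part of the chronology: they are paid later by the $O(t/n)^{h}$ factors in the trace theorem, and only the positions (which slots are recycled) are recorded. Second, the composition bound must not hide a factor depending on $\chi$ other than $2^\chi$, and the estimate $\binom{\chi+f-1}{f-1}\le2^{\chi+f-1}$ indeed gives only $2^\chi$ times an exponential in $t$.

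\emph{Part (ii).} A pair step toggles two labels at once, and each of them is independently opened or closed, so a step has one of the types $OO$, $OC$, $CO$, $CC$. That is at most $4^{2t}$ patterns, because the number of label-uses is $4t$ and no equality between openings and closings per step is needed. Every label has even multiplicity, so the uses again split into alternating opening/closing occurrences, and the per-use bits (first use or reactivation) contribute at most $2^{4t}$. The excess allocation now runs over at most $4t$ first uses, costing $2^{\chi}2^{4t}$, and the recycled positions cost $2^{4rt}$. The unordered pair needs no orientation bit, or at most a factor $2^{2t}$. The resulting bound is again $C_r^t2^\chi$.

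The main obstacle is ensuring that nothing in the encoding secretly needs identity information. In particular, when a step is a closing or a reactivation, the chronology deliberately omits \emph{which} old label is used; that choice is summed later by \cref{lem:active} or by the degree $M_S$. One must check that the recorded data (pattern, first-use flags, allocation, positions) are well-defined functions of the walk. They are, since each is read off chronologically, so the injection into tuples of these choices proves the count.
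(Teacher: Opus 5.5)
Your proposal is correct and follows essentially the same route as the paper: bound the opening/closing pattern by $4^t$, pay one first-use bit per opening, bound the excess allocation by a composition count $\le 2^{\chi+O(t)}$, and pay $2^r$ for the recycled exterior slots of each first use, with the pair case handled by the same counts. The differences are cosmetic: you compose into the actual number $f$ of first uses rather than $t$ slots, and in part (ii) you use the looser bound $4t$ instead of $2t$ on the number of first uses.
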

\begin{proof}
For (1): each of the $2t$ steps is an opening or a closing, so the
opening pattern can be chosen in at most $2^{2t}=4^t$ ways, and
recording at each opening whether the label is a first use or a
previously seen inactive one costs a further $2^t$; both factors are
absorbed into $C_r^t$.  There are at most $t$ first uses.  Distributing
the excess $\chi$ among them as ordered nonnegative summands is choosing
a composition, of which there are at most
$\binom{\chi+t-1}{t-1}\le2^{\chi+t}$; and at each first use, the set of
recycled exterior positions is a subset of its $r$ exterior slots, which
costs at most $2^r$ per first use, hence $2^{rt}$ in total.
For (2): a chronology records, at each step, how many of the two
labels are active and the first-use/old
status of each inactive label, which costs $C^t$.  A word has at most
$2t$ distinct labels, hence at most $2t$ first uses, so the excess
allocation costs at most $2^{\chi+2t}$ and the recycled exterior
positions at most $2^{2r\cdot2t}$, by the same counts as in (1).
\end{proof}

\subsection{The single-label trace theorem}

\begin{theorem}[Normalized single-label trace]\label{thm:single-trace}
Fix $r\ge1$ and $\delta>0$, and let $\ell$ satisfy \cref{eq:range}.  Let
$\{C_a\}$ be either a Poisson family or a deterministic number of
independent uniform residual supports in $\binom{[n]}{2r}$.  Let
$\sigma_a$ be independent Rademacher signs and let $u_a\in[-1,1]$ be
coefficients independent of the signs and, conditional on any auxiliary
marks, independent of the supports.  Define
\begin{equation}\label{eq:single-operator}
 \cS:=\sum_a u_a\sigma_aA_{C_a},
 \qquad
 G:=\sum_a|u_a|P_{C_a},
 \qquad
 \Gamma:=G+dI,
\end{equation}
where $d>0$ is deterministic.  Assume that, after every fixed branch
prefix of a $2t$-step trace and for every $h\ge0$, the conditional expected
weighted mass of labels that are new on that branch, legal at its current
row, and recycle exactly $h$ previously exposed exterior vertices is at
most
\begin{equation}\label{eq:single-mass}
 d\left(\frac{C_rt}{n}\right)^h.
\end{equation}
Then, for $t=\lfloor c\ell\log n\rfloor$ with
$c=c(r,\delta)>0$ sufficiently small,
\begin{equation}\label{eq:single-moment}
 \E\Tr(\Gamma^{-1/2}\cS\Gamma^{-1/2})^{2t}
 \le N\left(\frac{C_r\ell}{rd+\ell}\right)^t
 \sum_{\chi\ge0}
 \left(\frac{C_rt}{n}\right)^\chi
 \left(1+\frac{2\chi}{\ell}\right)^t,
\end{equation}
and, for every fixed $B>0$,
\begin{equation}\label{eq:single-tail}
 \bbP\left\{
 \|\Gamma^{-1/2}\cS\Gamma^{-1/2}\|
 >C_{r,\delta,B}\sqrt{\frac{\ell}{rd+\ell}}
 \right\}\le n^{-B\ell}.
\end{equation}
The same conclusion holds for a fixed finite superposition of independent
families, with their masses and diagonal degrees added.
\end{theorem}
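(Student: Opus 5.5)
We expand the trace as a sum over closed words and bound it one branch at a time. Write $K:=\Gamma^{-1/2}\cS\Gamma^{-1/2}$. Then
\[
 \Tr K^{2t}=\sum_{S_0,\ldots,S_{2t-1}}\ \sum_{a_1,\ldots,a_{2t}}\ \prod_{i=1}^{2t}\frac{u_{a_i}\sigma_{a_i}}{\sqrt{\Gamma_{S_{i-1}}\Gamma_{S_i}}}\,(A_{C_{a_i}})_{S_{i-1},S_i}.
\]
Since the word is closed, each row denominator appears exactly once per departure. So the weight is $\prod_i u_{a_i}\sigma_{a_i}/\Gamma_{S_{i-1}}$.

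Averaging over the Rademacher signs kills every word in which some label has odd multiplicity. On the survivors we bound $\prod|u_{a_i}|$, and by $|u_a|\le1$ the factor $|u_a|^{\text{mult}}\le|u_a|$ at each opening; closings are bounded by $1$ in the $u$-weight.

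Next we organize the surviving words by root row $S_0$ (giving the factor $N$) and by chronology. \Cref{lem:chronology}(1) shows there are at most $C_r^t2^\chi$ chronologies with final excess $\chi$; the factor $2^\chi$ is absorbed into $(C_rt/n)^\chi$.

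For a fixed chronology we use \cref{lem:branch-recursion}, or \cref{rem:fixed-recursion} in the fixed-sample case. The prefix weight is the product of chronological lower bounds on the denominators. Concretely, we replace $\Gamma_S$ by $d+M_S$, where $M_S=\sum|u_a|\1\{C_a\text{ legal at }S\}$ runs over the labels exposed on the branch. This only increases the weight, because unexposed labels add nonnegative degree to $G$. It also makes $1/\Gamma$ insertion-nonincreasing and branch-measurable.

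We then choose the backward bounds $B_P$ step by step, according to the type of each step.
\begin{itemize}
\item At an opening with an old (exposed, inactive) label, we sum $|u_a|$ over the exposed legal labels at $S$. This gives at most $M_S/(d+M_S)$.
\item At a fresh opening recycling $h$ vertices, the mass hypothesis \cref{eq:single-mass} gives at most $(d/(d+M_S))(C_rt/n)^h$, with $h$ fixed by the chronology.
\end{itemize}
Summing these two cases over the chronology's choices (old versus fresh is recorded in the chronology), the identity \cref{eq:opening-identity} makes each opening cost at most $1$ times $(C_rt/n)^{h}$. The product over first uses yields $(C_rt/n)^\chi$ by \cref{eq:excess-sum}.

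At a closing, the label must be active and legal at the current row. By \cref{lem:active} there are at most $(\ell+2\chi)/r$ such labels, each of weight $\le1$. The denominator is at least $d+M_S$, and $M_S$ itself counts these labels with their $|u_a|$ weights. I expect this step to be the main obstacle: with $u$-weights the count bound must be combined with the denominator to give
\[
 \frac{\sum_{\text{active legal}}|u_a|}{d+M_S}\le\frac{x}{d+x},\qquad x\le\frac{\ell+2\chi}{r},
\]
using that $x\mapsto x/(d+x)$ is increasing. This yields the factor $(\ell+2\chi)/(rd+\ell+2\chi)$. Pulling out the factor $(1+2\chi/\ell)$ gives
\[
 \frac{\ell+2\chi}{rd+\ell+2\chi}\le\frac{\ell}{rd+\ell}\Bigl(1+\frac{2\chi}{\ell}\Bigr).
\]
Here $\chi$ must be taken as the final excess, which dominates the current one, and the count is attached to the chronology. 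The $t$ closings then contribute $(\ell/(rd+\ell))^t(1+2\chi/\ell)^t$, which proves \cref{eq:single-moment}.

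For the tail bound \cref{eq:single-tail}, we use $(1+2\chi/\ell)^t\le e^{2\chi t/\ell}$ and $t\le c\ell\log n$. Then each summand is at most
\[
 \bigl(C_rt\,n^{2c}/n\bigr)^\chi.
\]
Since $t\le\ell\log n\le n^{1-\delta}$ and $c<\delta/4$, the base is below $n^{-\delta/2}$, so the sum is $O(1)$. Markov's inequality applied to the $2t$-th moment then gives
\[
 \bbP\{\|K\|>\lambda\}\le N\,C_r^t\,\Bigl(\frac{\ell}{rd+\ell}\Bigr)^t\lambda^{-2t}.
\]
With $N\le n^\ell$ and $t=c\ell\log n$, choosing $\lambda=C\sqrt{\ell/(rd+\ell)}$ with $C^{2}\ge e^{(B+1)/c}C_r$ makes this at most $n^{-B\ell}$. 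For a finite superposition of independent families the same argument applies, with masses and degrees added, since the labels of each family are charged separately against the common floor.
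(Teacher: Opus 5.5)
Your proposal is correct and follows the paper's proof essentially step for step. The shared steps are sign averaging, branchwise replacement of $\Gamma(S,S)$ by the chronological denominator $d+M_S$, the identity $M_S/(d+M_S)+d/(d+M_S)=1$ at openings, the closing bound via the active-label count and monotonicity of $x\mapsto x/(d+x)$, the chronology count, and the geometric series plus Markov. One bookkeeping fix: keep the exact coefficient $\prod_i|u_{a_i}|$ rather than giving closings unit $u$-weight as your second paragraph says. Your closing bound $X_S/(d+X_S)$ needs the closing numerator to be the $|u_a|$-weighted mass $X_S$, which is what the chronological degree $M_S$ dominates.
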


\begin{proof}
Cyclic cancellation of the diagonal square roots expands the trace into
closed labeled row words, with one inverse diagonal attached to each
departure row.  Averaging the signs leaves only words in which every label
has even multiplicity.  Such a word has exactly $t$ openings and $t$
closings, according as the label's previous multiplicity is even or odd.
After taking absolute values, its coefficient is bounded by the product of
the $|u_a|$'s.

The filtration point is branchwise.  Fix one chronology, one initial row,
and one concrete branch prefix $P$.  Let $G_P$ be the degree contributed by
the labels already exposed on that branch.  Every denominator in the
completed instance is at least the corresponding chronological denominator
$d+G_P(S,S)$, because unexposed labels add nonnegative degree.  We replace
each final inverse denominator by this chronological upper bound at the
moment the departure row is charged.  The weight accumulated before the
next extension is then measurable with respect to the data on this one
branch.  We sum the conditional extension bounds branch by branch and only
then sum over branches; no sigma-field is shared by incompatible prefixes.

Consider an opening from a branch ending at $S$.  Let
$M_S:=G_P(S,S)$.  Previously exposed inactive labels that are legal at
$S$ have total coefficient mass at most $M_S$.  If the opening has zero
excess, a new label has conditional mass at most $d$ by
\cref{eq:single-mass}.  Hence old and new zero-excess openings together
have charge
\[
 \frac{M_S+d}{d+M_S}=1.
\]
If the opening has excess $h\ge1$, it is a first use; use the baseline
$d$ and \cref{eq:single-mass} to obtain charge
$C_r(C_rt/n)^h$.

At a closing after step $i$, let $X_S$ be the coefficient mass of active
labels legal at $S$.  By \cref{lem:active} and $|u_a|\le1$,
$X_S\le(\ell+2\chi_i)/r\le(\ell+2\chi)/r$, because excess is
nondecreasing along the branch, while these same labels contribute $X_S$
to the chronological degree.  Thus the total closing charge is at most
\[
 \frac{X_S}{d+X_S}
 \le\frac{\ell+2\chi}{rd+\ell+2\chi}
 \le \frac{\ell}{rd+\ell}
      \left(1+\frac{2\chi}{\ell}\right),
\]
and there are exactly $t$ closings, so
multiplying the branch charges and using \cref{eq:excess-sum} gives the
claimed power of $\ell/(rd+\ell)$ and the excess factor.  The factor
$2^\chi$ from \cref{lem:chronology}(1) is absorbed by increasing the
constant in $(C_rt/n)^\chi$.  Summing the chronologies and initial rows
gives
\[
 N\left(\frac{C_r\ell}{rd+\ell}\right)^t
 \sum_{\chi\ge0}
 \left(\frac{C_rt}{n}\right)^\chi
 \left(1+\frac{2\chi}{\ell}\right)^t,
\]
which is \cref{eq:single-moment}.  Finally,
$(C_rt/n)e^{2t/\ell}\le C_rn^{-\delta+2c}$ by \cref{eq:range}; choose
$c<\delta/4$, sum the geometric series, and apply Markov's inequality
using $N\le n^\ell$.
\end{proof}

\begin{remark}[Where the logarithm went]\label{rem:sharp-openings}
The identity $M_S/(d+M_S)+d/(d+M_S)=1$ is the entire log-removal
mechanism: exposed openings are charged to the degree they created at
the actual departure row, fresh ones to the mean floor, and no maximum
over the $N=\binom n\ell$ row degrees is ever taken.  The branchwise
chronological denominator makes this rigorous: it is fixed before the
conditional mass of a new label is estimated.
\end{remark}

\subsection{The symmetric pair trace theorem}

At odd arity, one move toggles two disjoint residual labels from a
\emph{single} family, and every label carries its own independent
sign, so averaging forces both labels of every surviving step to even
multiplicity: the opening/closing accounting runs symmetrically in the
two labels of a move.  As anticipated in \cref{sec:tech-branch}, the
numerator must be insertion-nonincreasing and the diagonal a monotone
completion.  The theorem below states these requirements as four short
hypotheses: \ref{P-monotone} is the monotonicity just described,
\ref{P-diagonal} is row-wise domination, \ref{P-port} bounds the
amplitude one label can carry at one row, and \ref{P-mass} bounds the
deletion mass of every way a step picks up fresh labels.  For a
\emph{row edge} $e$ joining $S$ and $T$ through the pair $\{a,b\}$, we
call the incidences $(S,a)$ its \emph{ports}.  The amplitudes are
bounded by $1$ in every algorithmic construction, and by $4$ in the one
analysis-only use, the decoupled comparison of
\cref{sec:odd-comparison}; $C_0$ also absorbs the $r$-dependent
constants of the mass hypotheses.

\begin{theorem}[Symmetric pair trace]\label{thm:packet-trace}
Fix $r\ge1$, $\delta>0$, and $C_0\ge1$, and let $\ell$ satisfy
\cref{eq:range}.  Let $\Xi$ be a single family of marked labels, each
label $a$ carrying a residual $C_a\in\binom{[n]}{2r}$: either a marked
Poisson process of finite intensity, or a deterministic number of
i.i.d.\ marked labels.  An unordered pair $p=\{a,b\}$ of distinct labels
with disjoint residuals carries the symmetric partial matching
$B_{C_a,C_b}$; write $\mathcal E(p)$ for its undirected row edges, and
for $e=(p,\{S,T\})\in\mathcal E(p)$ put $A_e:=E_{S,T}+E_{T,S}$, so that
$S\sd T=C_a\sd C_b$.  Given the full unsigned marked configuration
$\xi$, let $\omega_e(\xi)\in[0,C_0]$ be the retained amplitude of the
row edge $e$, and let $G_\xi\succeq0$ be diagonal.  Assume:
\begin{enumerate}[label=\textup{(P\arabic*)}]
\item\label{P-monotone} For every row edge already exposed on the
current branch, $\omega_e(\xi)$ is insertion-nonincreasing, and $G_\xi$
is insertion-nondecreasing.
\item\label{P-diagonal} Pointwise, for every row $S$,
\begin{equation}\label{eq:packet-diagonal}
 G_\xi(S,S)\ge
 \sum_{e:S\in e}\omega_e(\xi).
\end{equation}
\item\label{P-port} For every row $S$ and physical label $a$,
\begin{equation}\label{eq:packet-port}
 \sum_{e:S\in e,\,a\in p(e)}\omega_e(\xi)\le C_0.
\end{equation}
\item\label{P-mass} After any fixed chronological branch prefix of a
$2t$-step trace, the deletion masses obtained after dropping the
insertion-nonincreasing retention factors of the fresh edges satisfy,
for total first-use excess $h$,
\begin{align}
 \text{two new labels:}&\quad C_0d(C_0t/n)^h,\label{eq:packet-mass-two}\\
 \text{exactly one new label, the partner among at most $4t$ exposed:}
 &\quad C_0t\beta(C_0t/n)^h,\label{eq:packet-mass-one}\\
 \text{a new partner of one fixed exposed label:}
 &\quad C_0\beta(C_0t/n)^h,\label{eq:packet-mass-partner}
\end{align}
where $d>0$ is a deterministic floor and $0<\beta\le C_0$ satisfies
$\beta^2\le C_0d/n$ and $t\beta\le c_0d$ for a sufficiently small
$c_0=c_0(r,C_0)$.  In the Poisson case these are the deletion masses
of \cref{eq:poisson-insertion}; in the
fixed-sample case they are those of
\cref{eq:fixed-deletion}, with the floor $d$ held fixed throughout.
\end{enumerate}
Let the physical labels carry signs $\sigma_a$ that, conditional on the
unsigned configuration $\xi$, are independent Rademacher variables, and
define
\begin{equation}\label{eq:packet-operator}
 \cS_\xi:=\sum_p\sum_{e\in\mathcal E(p)}
 \omega_e(\xi)\,\sigma_{a(p)}\sigma_{b(p)}A_e,
 \qquad
 \cR_\xi:=dI+G_\xi.
\end{equation}
Then, for $t=\lfloor c\ell\log n\rfloor$ with
$c=c(r,\delta,C_0)>0$ sufficiently small,
\begin{equation}\label{eq:packet-moment}
 \E\Tr(\cR_\xi^{-1/2}\cS_\xi\cR_\xi^{-1/2})^{2t}
 \le N\left(\frac{C_{r,C_0}\ell}{d}\right)^t
 \sum_{\chi\ge0}
 \left(\frac{C_{r,C_0}t}{n}\right)^{\chi/2}
 \left(1+\frac{2\chi}{\ell}\right)^t,
\end{equation}
and, for every fixed $B>0$,
\begin{equation}\label{eq:packet-tail}
 \bbP\left\{
 \|\cR_\xi^{-1/2}\cS_\xi\cR_\xi^{-1/2}\|
 >C_{r,\delta,B,C_0}\sqrt{\frac\ell d}
 \right\}\le n^{-B\ell}.
\end{equation}
\end{theorem}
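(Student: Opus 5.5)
The proof will mirror that of \cref{thm:single-trace}, with the branch recursion \cref{lem:branch-recursion} (or \cref{rem:fixed-recursion} in the fixed-sample case) doing the probabilistic bookkeeping.

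\emph{Expansion and sign averaging.} Expand $\Tr(\cR_\xi^{-1/2}\cS_\xi\cR_\xi^{-1/2})^{2t}$ into closed words of row edges $e_1,\dots,e_{2t}$. Each word carries the product $\prod_i\omega_{e_i}$, one inverse diagonal $(d+G_\xi(S_{i-1},S_{i-1}))^{-1}$ per departure row, and the sign product $\prod_i\sigma_{a(p_i)}\sigma_{b(p_i)}$. Conditionally on $\xi$ the signs are independent Rademachers, so only words in which every physical label has even total multiplicity survive. We classify each step by how many of its two labels are currently active: two, one, or none. This is the pair chronology of \cref{lem:chronology}(2). That lemma supplies at most $C^t2^\chi$ skeletons, and the excess identities \cref{eq:parity-identity,eq:excess-sum} and \cref{lem:active} apply unchanged to pair words.

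\emph{Branchwise charging.} Fix a chronology, an initial row and a branch prefix. By \ref{P-monotone}, every retained amplitude $\omega_e$ only decreases and every diagonal $G_\xi$ only increases when unexposed labels are inserted. Hence the prefix weight, with each inverse denominator replaced by its chronological value $(d+G_P(S,S))^{-1}$, is insertion-nonincreasing. This is exactly the setting of \cref{lem:branch-recursion}. We then bound each step type by a deterministic factor $B$.

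\emph{Exposed steps.} A step whose two labels are both already exposed (old or active) is charged using \ref{P-diagonal}. The total retained amplitude of exposed edges at $S$ is at most $G_P(S,S)$, so together with the fresh contributions we get the identity \cref{eq:opening-identity}, and the charge is at most one.

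\emph{Closing steps.} A step with an active label $a$ that is legal at $S$ is charged by \ref{P-port}. Each such label carries amplitude at most $C_0$ at $S$, and by \cref{lem:active} there are at most $(\ell+2\chi)/r$ such labels. The charge is therefore at most $C_0(\ell+2\chi)/(r d)$.

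\emph{Fresh steps.} A step introducing two new labels costs $C_0(C_0t/n)^h$ by \cref{eq:packet-mass-two}, after dividing by the floor $d$. A step introducing exactly one new label has two cases:
\begin{itemize}
\item the new label's partner is an inactive exposed label, giving $C_0t\beta/d\le C_0c_0$ by \cref{eq:packet-mass-one};
\item the new label is the partner of a fixed active label, giving $C_0\beta/d$ by \cref{eq:packet-mass-partner}.
\end{itemize}

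\emph{Pairing the charges into a $\sqrt{\ell/d}$ rate per step.} Every label has even multiplicity, so the number of label-closings equals the number of label-openings, namely $2t$ counted over labels. A step can be a pure closing (both labels active), a pure opening, or mixed. I would pair the charges as follows:
\begin{itemize}
\item a pure closing step is worth $(\ell/d)^{1}\cdot$ polynomial factors but is matched by two prior openings;
\item a mixed step pays one closing $\sqrt{\ell/d}$-share and one partner factor $\beta/d$.
\end{itemize}
Since $\beta/d\le\sqrt{C_0/(nd)}\le\sqrt{\ell/d}\cdot n^{-1/2}$ by $\beta^2\le C_0d/n$, every step contributes at most $C\sqrt{\ell/d}\,(1+2\chi/\ell)^{1/2}$. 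Over $2t$ steps this gives $(C\ell/d)^t(1+2\chi/\ell)^t$.

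\emph{Excess.} Recycled vertices cost $(C_0t/n)^h$ per opening. We retain only the square root $(C_0t/n)^{\chi/2}$, so that the other half can absorb the $2^\chi$ from the chronology count and any loss from the pairing above. This produces \cref{eq:packet-moment}.

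\emph{Conclusion and main obstacle.} The tail bound \cref{eq:packet-tail} then follows exactly as in \cref{thm:single-trace}: with $t=\lfloor c\ell\log n\rfloor$ and $c$ small relative to $\delta$, the sum over $\chi$ is geometric, and Markov's inequality together with $N\le n^\ell$ finishes. The main obstacle is the mixed steps. Their $\beta$ factor must be balanced against closings so that the per-step rate is $\sqrt{\ell/d}$, not $\ell/d$ on some steps and $O(1)$ on others. I would attempt this by charging each one-new-label step jointly with the later closing of its new label, using $t\beta\le c_0d$ to make the $4t$ partner choices affordable. If that fails, I would fall back on the pure amplitude-product bound via \ref{P-port}.
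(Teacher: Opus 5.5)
Your outline matches the paper's proof: sign averaging to even multiplicities, the classification by the number $j_i\in\{0,1,2\}$ of active labels in a step, chronological denominators fed into \cref{lem:branch-recursion}, and the identity $\sum_i j_i=2t$. However, the step you flag as the main obstacle is left open, and the computation you give for it does not work.

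A mixed step with a fresh partner first selects which active label closes. By \cref{lem:active} there are at most $A_\chi:=(\ell+2\chi)/r$ choices. It then pays the partner mass, so its charge is $C_0A_\chi\beta(C_0t/n)^h/d$. The counting produces no ``closing $\sqrt{\ell/d}$-share''; it produces the integer factor $A_\chi$. Your chain $\beta/d\le\sqrt{C_0/(nd)}\le\sqrt{\ell/d}\,n^{-1/2}$ either omits this factor, or, once it is restored, gives $\sqrt{\ell/d}\cdot A_\chi/\sqrt n$. That is not $O(\sqrt{\ell/d})$ when $\ell\gg\sqrt n$, a regime allowed by \cref{eq:range}. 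Your proposed repair does not help either. The condition $t\beta\le c_0d$ only makes this charge $O(1)$, and the later closing of the new label is already charged in its own step, so nothing supplies the missing factor $\sqrt{A_\chi/d}$.

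The missing idea is to split $A_\chi$ symmetrically before using $\beta^2\le C_0d/n$:
\[
 \frac{A_\chi\beta}{d}=\sqrt{\frac{A_\chi}{d}}\,\sqrt{\frac{A_\chi\beta^2}{d}}\le\sqrt{\frac{A_\chi}{d}}\,\sqrt{\frac{C_0A_\chi}{n}} .
\]
Then $A_\chi\le C_rt$ (since $\chi\le4rt$) and $t/n\le n^{-\delta}$ make the second factor small, so the step costs $\sqrt{A_\chi/d}\,(Ct/n)^{h+1/2}$.

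For a mixed step with an exposed partner, you must similarly combine the two bounds you state in separate paragraphs: $G/(d+G)\le1$ from \ref{P-diagonal} and $C_0A_\chi/d$ from \ref{P-port}. Together they give $\min\{1,C_0A_\chi/d\}\le C_0\sqrt{A_\chi/d}$.

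With $j_i=1$ steps at $O(\sqrt{A_\chi/d})$, $j_i=2$ steps at $2C_0A_\chi/d$, and $j_i=0$ steps at $O((Ct/n)^h)$, the identity $\sum_ij_i=2t$ gives exactly $t$ factors of $A_\chi/d=(\ell/(rd))(1+2\chi/\ell)$. It also gives $\#\{j_i=0\}=\#\{j_i=2\}\le t$, which is the precise form of your ``matched by two prior openings''. This shows the $O(1)$ pure-opening charges cost only $C^t$. Your literal claim that every step costs $C\sqrt{\ell/d}\,(1+2\chi/\ell)^{1/2}$ is false for $j_i=0$ steps, and holds only after this amortization.
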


\begin{proof}
Cyclic cancellation of the diagonal square roots expands the trace into
closed labeled row words of length $2t$, each step carrying one
unordered pair of disjoint residual labels, with one inverse diagonal
attached to each departure row and the product of the retained
amplitudes as coefficient.  Conditional on $\xi$ the signs are
independent, so averaging them leaves only words in which \emph{every}
physical label has even total multiplicity.  Call a label \emph{active}
when it has been used an odd number of times; an appearance of an
inactive label is an \emph{opening} and an appearance of an active
label a \emph{closing}.  Every label ends inactive, so its appearances
alternate openings and closings, and exactly half of the $4t$ label
appearances of the word are closings.  Writing $j_i\in\{0,1,2\}$ for
the number of active labels in the pair used at step $i$, this reads
\begin{equation}\label{eq:pair-parity}
 \sum_{i=1}^{2t}j_i=2t.
\end{equation}
Track the excess \cref{eq:excess} over both residual labels of every
step; the identities \cref{eq:parity-identity,eq:excess-sum} hold
verbatim.  A word has at most $2t$ distinct labels, each first use
recycles at most $2r$ exposed vertices, so $\chi\le4rt$, and by
\cref{lem:chronology}(2) a word of final excess $\chi$ has at most
$C_r^t2^\chi$ chronologies.

The deferred-decision step is branchwise.  Fix one chronology and one
prefix $P$, keep all labels first used on $P$ as explicit atoms, and
regard the unexposed atoms as an independent background.  Let $\Phi_P$
be the product of the edge amplitudes and inverse departure-row
diagonals already accumulated on this prefix, evaluated in the explicit
configuration plus the background; by \ref{P-monotone}, $\Phi_P$ is
insertion-nonincreasing.  Define the expected weight of all completions
of $P$ recursively, backwards from the leaves, and apply
\cref{lem:branch-recursion} (in the fixed-sample case, its form from
\cref{rem:fixed-recursion}).  For an extension using only exposed
labels, sum its amplitude pointwise through
\cref{eq:packet-diagonal,eq:packet-port}.  For an extension introducing
one or two fresh labels, the deletion step moves the selected atoms to
the explicit list, insertion monotonicity deletes them from the old
prefix factors, and \ref{P-mass} bounds the fresh integral.  The
deterministic backward bounds of \cref{eq:branch-recursion} are exactly
the following step charges, in which $S$ denotes the departure row,
$\chi$ the final excess, $h$ the excess created at the step, and
\[
 A_\chi:=\frac{\ell+2\chi}{r},
 \qquad
 \zeta:=\frac{C_{r,C_0}t}{n}.
\]

\emph{Steps with $j_i=0$.}  If both labels are exposed and inactive,
the available row edges at $S$ have total amplitude at most
$G_\xi(S,S)$ by \ref{P-diagonal}, so the charge is at most
$G_\xi(S,S)/(d+G_\xi(S,S))\le1$.  If the step uses fresh labels, the
two-new and one-new masses of \ref{P-mass} total at most
$C_0d(C_0t/n)^h+C_0t\beta(C_0t/n)^h\le2C_0d(C_0t/n)^h$, using
$t\beta\le c_0d$; dividing by the floor $d$, the charge is at most
$\zeta^h\le\zeta^{h/2}$ after adjusting constants, since
$t/n\le n^{-\delta}$ in the range \cref{eq:range}.

\emph{Steps with $j_i=1$.}  First choose the active label of the pair:
it is legal at $S$, so \cref{lem:active} allows at most $A_\chi$
choices, the excess being nondecreasing along the branch.  If the
partner is exposed, \ref{P-port} bounds the available amplitude by
$C_0$ per choice, while \ref{P-diagonal} bounds the total by
$G_\xi(S,S)$; hence the charge is at most
\[
 \min\left\{\frac{G_\xi(S,S)}{d+G_\xi(S,S)},\,
 \frac{C_0A_\chi}{d}\right\}
 \le\min\left\{1,\frac{C_0A_\chi}{d}\right\}
 \le C_0\sqrt{\frac{A_\chi}{d}}.
\]
If the partner is fresh with excess $h$,
\cref{eq:packet-mass-partner} gives at most
$C_0A_\chi\beta(C_0t/n)^h/d$.  Since $\chi\le4rt$ gives
$A_\chi\le C_rt$, and $\beta^2\le C_0d/n$,
\[
 \frac{A_\chi\beta}{d}
 =\sqrt{\frac{A_\chi}{d}}\cdot
 \sqrt{\frac{A_\chi\beta^2}{d}}
 \le\sqrt{\frac{A_\chi}{d}}\cdot
 \sqrt{\frac{C_rC_0t}{n}},
\]
so this charge is at most
$\sqrt{A_\chi/d}\,\zeta^{h+1/2}\le\sqrt{A_\chi/d}\,\zeta^{h/2}$ after
adjusting constants.

\emph{Steps with $j_i=2$.}  Choose one of the two active labels, at
most $2A_\chi$ choices by \cref{lem:active}, and apply
\ref{P-port}: the charge is at most
$\min\{1,2C_0A_\chi/d\}\le2C_0A_\chi/d$.

Multiply the charges along the word.  By \cref{eq:pair-parity}, the
total exponent of $A_\chi/d$ is
\[
 \tfrac12\,\#\{i:j_i=1\}+\#\{i:j_i=2\}
 =\tfrac12\sum_{i=1}^{2t}j_i=t,
\]
and by \cref{eq:excess-sum} the fresh factors multiply to
$\zeta^{\chi/2}$; the $j_i=0$ charges and all remaining constants
contribute $C_{r,C_0}^t$.  Since
$A_\chi=(\ell/r)(1+2\chi/\ell)$, one chronology therefore contributes
at most $(C_{r,C_0}\ell/d)^t(1+2\chi/\ell)^t\zeta^{\chi/2}$.  Apply the
chronology count, absorb $2^\chi$ into the constant of
$\zeta^{\chi/2}$, and sum over the $N$ initial rows to obtain
\cref{eq:packet-moment}.  Finally,
$(C_{r,C_0}t/n)^{1/2}e^{2t/\ell}\le C_{r,C_0}n^{-\delta/2+2c}$ by
\cref{eq:range}; choose $c<\delta/4$, sum the geometric series, and
apply Markov's inequality using $N\le n^\ell$.
\end{proof}

\section{Even arity: one normalized matrix}\label{sec:even}

Throughout this section $k=2r$ is even; a single degree-plus-floor
normalized matrix performs detection and recovery (and refutation, in
\cref{sec:refutation}).  For the inference statements, the algorithm
first applies \cref{lem:subsample} with
$\mu:=\tfrac12C_{k,\delta}\rho^{-2}\Dkl\le m/2$ and splits the
retained clauses into a \emph{spectral pool} and a \emph{validation
pool}, each Poisson of mean $m_{\rm s}:=\mu/2$; the fixed-$m$ theorems
are assembled in \cref{sec:exact}.  The deterministic certificate
(\cref{prop:even-certificate}) uses the full input, no subsampling, and
no randomness at all.

Given a clause family, define on the slice $\Omega_\ell$ the signed
adjacency matrix of its Kikuchi graph, its degree matrix, and their
normalization:
\begin{equation}\label{eq:even-operator}
 A:=\sum_{a}y_aA_{F_a},
 \qquad
 D:=\sum_{a}P_{F_a},
 \qquad
 \Gamma:=D+\bd I,
 \qquad
 K:=\Gamma^{-1/2}A\,\Gamma^{-1/2},
\end{equation}
where
\begin{equation}\label{eq:even-mean-degree}
 \bd:=\bar m\,\frac{b_{r,\ell}}{\binom n{2r}}
 =\Theta_r\!\left(\bar m\,\frac{\ell^r}{n^r}\right)
\end{equation}
is a deterministic scalar and $\bar m$ is the mean clause count of the
family: $\bar m=m_{\rm s}$ for the spectral pool, and $\bar m=m$ when the
matrix is built from the full fixed-$m$ input.  The sample condition
\cref{eq:inference-threshold} is equivalent, up to $r$-dependent
constants, to $\rho^2\bd\ge C\ell$.  For every realization and every
signing,
\begin{equation}\label{eq:even-contraction}
 \|K\|\le1,
\end{equation}
because \cref{eq:domination} gives
$|v^{\mathsf T}Av|\le v^{\mathsf T}Dv\le v^{\mathsf T}\Gamma v$.

\subsection{Null norm and detection}

\begin{proposition}[Even null norm]\label{prop:even-null}
Let the clause family be either a Poisson pool of mean $\bar m$ or
exactly $\bar m$ i.i.d.\ clauses; in particular, the full fixed-$m$ input
is allowed.  For every fixed $B>0$, under the null,
\begin{equation}\label{eq:even-null}
 \bbP\left\{
 \|K\|>C_{r,\delta,B}\sqrt{\frac{\ell}{r\bd+\ell}}
 \right\}\le n^{-B\ell}.
\end{equation}
\end{proposition}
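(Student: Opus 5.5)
This is a direct specialization of \cref{thm:single-trace}, with residual size $2r=k$, labels $C_a:=F_a$, coefficients $u_a:=1$, and floor $d:=\bd$.

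Under the null the labels $y_a$ are independent Rademacher signs, independent of the supports. So with $\sigma_a:=y_a$ we get $\cS=A$, $G=\sum_aP_{F_a}=D$, $\Gamma=D+\bd I$, and $\Gamma^{-1/2}\cS\Gamma^{-1/2}=K$. Both admissible sampling modes, a Poisson pool and a deterministic number $\bar m$ of i.i.d.\ uniform supports, are explicitly covered by the theorem. The fixed-sample case uses \cref{rem:fixed-recursion} and \cref{eq:fixed-deletion}, with the floor held fixed. Once the hypotheses are checked, \cref{eq:single-tail} is exactly \cref{eq:even-null}.

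The only hypothesis to verify is the mass condition \cref{eq:single-mass}. Fix a branch prefix with current row $S$, and suppose the walk so far has exposed a vertex set $W$, with $|W|\le\ell+2rt$.

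First take $h=0$. The conditional expected number of unexposed labels legal at $S$ is at most (number of unexposed labels, or the intensity $\bar m$) times $\bbP\{|F\cap S|=r\}$. This probability equals $\Tr P_F/N=b_{r,\ell}/\binom n{2r}$ by \cref{eq:support-count} and the relation $N b_{r,\ell}=\binom n{2r}\Tr P_F$. Hence the mass is at most $\bar m\,b_{r,\ell}/\binom n{2r}=\bd$.

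Now take $h\ge1$. Legality forces $|F\cap S|=r$, and we additionally require that exactly $h$ of the $r$ exterior vertices lie in $W\setminus S$. Condition on $F\cap S$ and on the exterior being a uniform $r$-subset of $[n]\setminus S$. The probability that $h$ prescribed exterior slots hit $W$ is at most $\binom rh\bigl(|W|/(n-\ell)\bigr)^h$. Using $|W|\le\ell+2rt\le C_rt$ (since $t\ge c\ell\log n\ge\ell$ for large $n$) and $n-\ell\ge 2n/3$, this gives $(C_rt/n)^h$ times the $h=0$ bound, i.e.\ mass at most $\bd(C_rt/n)^h$.

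The main care point is that the conditioning is genuinely branchwise and that in the fixed-sample case the count of unexposed labels is at most $\bar m$. Both are handled exactly as the theorem prescribes, so I expect no real obstacle.
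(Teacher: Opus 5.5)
Your proposal is correct and follows the paper's proof. Both specialize \cref{thm:single-trace} with $u_a=1$ and $d=\bd$, then verify the mass hypothesis \cref{eq:single-mass}: the $h=0$ mass is $\bar m\,b_{r,\ell}/\binom n{2r}=\bd$, and confining $h$ exterior vertices to the $O_r(t)$ exposed ones costs a factor $(C_rt/n)^h$. The only cosmetic difference is that you phrase the $h\ge1$ case as a hypergeometric probability bound, whereas the paper directly counts interior, recycled, and free exterior choices.
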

\begin{proof}
Apply \cref{thm:single-trace} with $u_a=1$ and $d=\bd$; only the mass
hypothesis \cref{eq:single-mass} needs checking.  In either model,
conditional on any exposure prefix the unexposed supports remain
i.i.d.\ uniform, so the zero-excess mass of new labels legal at a row
is at most $\bar m\,b_{r,\ell}/\binom n{2r}=\bd$ (for the fixed count,
$(\bar m-\#\text{exposed})\,b_{r,\ell}/\binom n{2r}\le\bd$).
Prescribing $h$ recycled exterior vertices confines them to the exposed
set, of size $O_r(t)$: at most $\binom\ell r$ interior choices,
$\binom{C_rt}h$ recycled choices, and free choices of the remaining
exterior vertices give mass at most $\bd\,(C_rt/n)^h$.
\end{proof}

\begin{lemma}\label{lem:even-witness}
Under the planted law, for the spectral pool, with probability $1-o(1)$,
\begin{equation}\label{eq:even-witness}
 \lambda_{\max}(K)\ \ge\ \frac\rho2-o(\rho),
\end{equation}
uniformly over the levels in \cref{eq:range}.
\end{lemma}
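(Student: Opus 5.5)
The plan is to exhibit an explicit test vector whose Rayleigh quotient for $K$ is about $\rho/2$. No spectral or trace machinery is needed for this lower bound; \cref{prop:even-null} is used only later, for the separation from the null. The natural test vector undoes both the gauge and the normalizer: take $w:=\Gamma^{1/2}Z\1$. Since $\Gamma$ and $Z$ are diagonal and commute,
\[
 w^{\mathsf T}Kw=\1^{\mathsf T}ZAZ\1,
 \qquad
 w^{\mathsf T}w=\1^{\mathsf T}Z\Gamma Z\1=\1^{\mathsf T}\Gamma\1=\Tr D+\bd N .
\]
By the gauge identity below \cref{eq:gauge}, $ZAZ=\sum_a y_a\xstar^{F_a}A_{F_a}=\sum_a\eta_aA_{F_a}$. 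By \cref{eq:support-count}, every clause contributes the same deterministic amount $\tau:=\1^{\mathsf T}A_{F}\1=\Tr P_F=\binom{2r}r\binom{n-2r}{\ell-r}$, whatever its support. Writing $M$ for the Poisson number of clauses in the spectral pool, this gives
\[
 \lambda_{\max}(K)\ \ge\ \frac{\tau\sum_{a\le M}\eta_a}{\tau M+\bd N}.
\]

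Next I would identify the two denominator terms. The relation $N\,b_{r,\ell}=\binom n{2r}\Tr P_F$ and the definition \cref{eq:even-mean-degree} with $\bar m=m_{\rm s}$ give exactly $\bd N=m_{\rm s}\tau$. The quotient therefore equals
\[
 \frac{\sum_{a\le M}\eta_a}{M+m_{\rm s}},
\]
a purely scalar quantity in which the supports have disappeared entirely. This is what makes the bound uniform over the levels in \cref{eq:range}.

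It remains to show that this scalar is $\rho/2-o(\rho)$ with probability $1-o(1)$. For $M$, Poisson concentration gives $|M-m_{\rm s}|\le m_{\rm s}^{2/3}$ with overwhelming probability. Conditional on $M$, the $\eta_a$ are i.i.d.\ with mean $\rho$, so Chebyshev or Hoeffding gives
\[
 \sum_{a\le M}\eta_a=\rho M+O_{\bbP}\bigl(\sqrt M\bigr).
\]
The quotient is then $\tfrac{\rho}{2}\bigl(1+O(m_{\rm s}^{-1/3})\bigr)-O_{\bbP}(m_{\rm s}^{-1/2})$.

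The only point needing care, and the step I expect to be the main obstacle, is that the fluctuation must be $o(\rho)$ rather than merely $o(1)$. This requires $\rho\sqrt{m_{\rm s}}\to\infty$. It holds because $m_{\rm s}\asymp C_{k,\delta}\rho^{-2}\Dkl$ gives $\rho^2m_{\rm s}\gtrsim\Dkl\ge n^{k/2}/n^{k/2-1}=n\to\infty$, uniformly in $\ell\le n/3$ and even in the extended range \cref{eq:even-floor}. The failure probability is therefore $o(1)$ uniformly in the level, which yields \cref{eq:even-witness}.
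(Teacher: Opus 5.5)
Your proposal is correct and matches the paper's proof. The paper also tests the gauged normalized matrix with $\Gamma^{1/2}\1$, which is equivalent to your test vector $\Gamma^{1/2}Z\1$ for $K$. It then uses \cref{eq:support-count} and $\bd N=m_{\rm s}\Tr P_F$ to reduce the Rayleigh quotient to $\sum_a\eta_a/(M_{\rm s}+m_{\rm s})$, and concludes by Chebyshev from $\rho^2m_{\rm s}\gtrsim\Dkl\ge n\to\infty$, uniformly in the level.
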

\begin{proof}
The gauge commutes with $\Gamma$ and $ZAZ=\sum_a\eta_aA_{F_a}$.
Testing the gauged matrix with $\Gamma^{1/2}\1$ and using
\cref{eq:support-count} gives the Rayleigh quotient
$\sum_a\eta_a/(M_{\rm s}+m_{\rm s})$, with $M_{\rm s}$ the realized
pool size.  At the threshold
$\rho^2m_{\rm s}\ge c\,\Dkl\ge cn\to\infty$, so by Chebyshev
$\sum_a\eta_a=(1+o(1))\rho m_{\rm s}$ and
$M_{\rm s}=(1+o(1))m_{\rm s}$ with probability $1-o(1)$.
\end{proof}

\begin{proposition}[Even detection]\label{prop:even-detection}
If $m\ge C_{k,\delta}\rho^{-2}\Dkl$ with $C_{k,\delta}$ sufficiently
large, then outputting \textsf{planted} if and only if a
\cref{lem:krylov} run on $K$, at accuracy $\gamma=\rho/12$, returns a
Rayleigh quotient at least $\rho/3$, achieves strong detection.
\end{proposition}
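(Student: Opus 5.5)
We prove \cref{prop:even-detection} by combining the null norm bound of \cref{prop:even-null} with the planted witness of \cref{lem:even-witness}, and then translating through the Krylov guarantee of \cref{lem:krylov}. The algorithm first applies \cref{lem:subsample} with $\mu=\tfrac12C_{k,\delta}\rho^{-2}\Dkl\le m/2$. On the failure event it outputs \textsf{null}; this event has probability $e^{-\mu/8}=o(1)$. We build $K$ from the spectral pool. By \cref{eq:even-contraction} we have $\|K\|\le1$, so \cref{lem:krylov} applies to $H=K$ with $\gamma=\rho/12$. Its cost is $O(\rho^{-1/2}\log(N/\rho))$ matrix--vector products, each costing $n^{\ell+O_k(1)}$.

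\emph{Null side.} Under $Q_m$ the pool is a Poisson null pool, and \cref{prop:even-null} gives $\|K\|\le C_{r,\delta,B}\sqrt{\ell/(r\bd+\ell)}$ except with probability $n^{-B\ell}$. The sample condition is equivalent to $\rho^2\bd\ge C\ell$, with $C$ growing with $C_{k,\delta}$. Hence the bound is at most $C_{r,\delta,B}\sqrt{\ell/(r\bd)}\le C_{r,\delta,B}\rho/\sqrt{rC}$. Fix $B=1$ and take $C_{k,\delta}$ large enough that this is below $\rho/4$. \Cref{lem:krylov} states that every output satisfies $v^{\mathsf T}Kv\le\lambda_{\max}(K)$ deterministically. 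So the returned Rayleigh quotient is at most $\rho/4<\rho/3$, and the test outputs \textsf{null} with probability $1-o(1)$.

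\emph{Planted side.} Under the planted mixture, \cref{lem:even-witness} gives $\lambda_{\max}(K)\ge\rho/2-o(\rho)$ with probability $1-o(1)$. With probability at least $1-N^{-2}$, Krylov then returns a quotient at least $\rho/2-o(\rho)-\rho/12\ge\rho/3$ for large $n$, since $5\rho/12>\rho/3$ leaves a margin of $\rho/12$ to absorb the $o(\rho)$. \Cref{lem:even-witness} is stated for the joint law $\mathbf P_m$, that is, for every fixed $\xstar$ by gauge invariance, so it transfers to the mixture $P_m$. Adding the two error probabilities gives total error $o(1)$, which is strong detection.

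The only delicate point is the bookkeeping of constants. The witness constant $\rho/2$ is fixed and independent of $C_{k,\delta}$, while the null scale shrinks like $C_{k,\delta}^{-1/2}\rho$. A single large choice of $C_{k,\delta}$ therefore separates the two regimes at the threshold $\rho/3$ with room for the Krylov slack $\rho/12$. One must also check that $\bd$ is computed from the pool mean $m_{\rm s}=\mu/2$, which is still a constant multiple of $C_{k,\delta}\rho^{-2}\Dkl$, so the equivalence with $\rho^2\bd\ge C\ell$ holds with $C\to\infty$ as $C_{k,\delta}\to\infty$.
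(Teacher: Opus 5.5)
Your proposal is correct and follows the paper's own proof. You choose $C_{k,\delta}$ so that the null bound of \cref{prop:even-null}, which scales like $\rho/\sqrt{C_{k,\delta}}$, falls below $\rho/4$, and use the deterministic upper bound $v^{\mathsf T}Kv\le\lambda_{\max}(K)$ of \cref{lem:krylov} under the null; under planting you combine \cref{lem:even-witness} with the Krylov accuracy $\rho/12$. The extra bookkeeping (subsampling failure, computing $\bd$ from the pool mean $m_{\rm s}$, transferring from fixed $\xstar$ to the mixture) is consistent with what the paper leaves implicit.
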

\begin{proof}
Increase $C_{k,\delta}$ so that the bound of \cref{prop:even-null} is
below $\rho/4$, which is possible since the null bound is at most
$C_{r,\delta}\rho/\sqrt{C_{k,\delta}}$ at the threshold.  Under the
null every Rayleigh quotient is then below $\rho/3$, while under
planting \cref{lem:even-witness,lem:krylov} give a quotient at least
$\rho/2-o(\rho)-\rho/12\ge\rho/3$; all failure events are $o(1)$.
\end{proof}

\subsection{Planted comparison and weak recovery}

Put $\tau=m_{\rm s}/\binom n{2r}$ for the spectral pool, so that
\begin{equation}\label{eq:even-population}
 \E[ZAZ]=\rho\tau J_r,
 \qquad
 \bd=\tau\,b_{r,\ell}.
\end{equation}
The next proposition shows that, after gauging and normalizing, the
planted matrix is uniformly close to this population mean: symmetrizing
with an independent copy turns the centered matrix into a single-label
signed superposition, to which the trace theorem applies directly.

\begin{proposition}[Even planted comparison]\label{prop:even-comparison}
Assume $\bd\ge C_{r,\delta}\ell$.  With
probability $1-o(1)$,
\begin{equation}\label{eq:even-comparison}
 \left\|\Gamma^{-1/2}\bigl(ZAZ-\rho\tau J_r\bigr)\Gamma^{-1/2}\right\|
 \ \le\ C_{r,\delta}\sqrt{\frac\ell\bd}.
\end{equation}
\end{proposition}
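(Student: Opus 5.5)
The plan is to route the comparison through \cref{lem:copy-transfer}, with the symmetrized difference of two independent pools playing the role of $S$ and the single-label trace theorem supplying $\lambda_0$. Write the gauged spectral-pool matrix as $ZAZ=\sum_a\eta_aA_{F_a}$ (the gauge identity from the proof of \cref{lem:even-witness}). Let $A'$ be built from an independent Poisson pool $\{(F'_b,\eta'_b)\}$ of the same mean $m_{\rm s}$. By \cref{eq:even-population}, $\E_{A'}[ZA'Z]=\rho\tau J_r$, so
\[
 ZAZ-\rho\tau J_r=\E_{A'}S(A,A'),
 \qquad
 S:=\sum_a\eta_aA_{F_a}-\sum_b\eta'_bA_{F'_b}.
\]
I would take $R:=D+D'+2\bd I$, which is positive definite and dominates $S$ by \cref{eq:domination}: $|v^{\mathsf T}Sv|\le v^{\mathsf T}(D+D')v\le v^{\mathsf T}Rv$.

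The key observation is that $S$ is exactly a single-label signed superposition. The union of the two pools is a Poisson process of mean $2m_{\rm s}$, with each atom assigned to the first or second pool by an independent fair coin. Given the union of supports and noise marks, the pool-membership signs $\sigma_c\in\{\pm1\}$ are therefore independent Rademacher. Consequently $S=\sum_c u_c\sigma_cA_{F_c}$ with $u_c=\eta_c\in[-1,1]$, and these coefficients are independent of the signs and of the supports. Moreover $R=G+dI$ with $G=\sum_c|u_c|P_{F_c}=D+D'$ and $d=2\bd$.

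The mass hypothesis \cref{eq:single-mass} for the union family is checked verbatim as in \cref{prop:even-null}: the union has intensity $2m_{\rm s}$, so its zero-excess legal mass is $2\bd=d$, and prescribing $h$ recycled vertices costs $(C_rt/n)^h$. \Cref{thm:single-trace} then gives
\[
 \bbP\bigl\{\|R^{-1/2}SR^{-1/2}\|>\lambda_0\bigr\}\le p_0:=n^{-B\ell},
 \qquad
 \lambda_0:=C_{r,\delta,B}\sqrt{\ell/\bd}.
\]

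Next I verify the hypotheses of \cref{lem:copy-transfer}, taking $\cG$ to be the whole space.
\begin{itemize}
\item[(a)] \emph{Mean bound.} Since $\E D'=\bd I$, we have $\E_{A'}R=D+3\bd I\preceq3\Gamma$, so $C=3$.
\item[(b)] \emph{Pointwise bound.} Take $\cH:=\{M'_{\rm s}\le2m_{\rm s}\}$, where $M'_{\rm s}$ is the size of the copy pool. Every diagonal entry of $D'$ is at most $M'_{\rm s}$, and $\Gamma\succeq\bd I$. Hence on $\cH$,
\[
 R\preceq\bigl(3+2m_{\rm s}/\bd\bigr)\Gamma=:M\Gamma,
 \qquad
 m_{\rm s}/\bd=\binom n{2r}/b_{r,\ell}\le n^{2r}.
\]
So $M\sqrt{p_0}\le n^{2r+1}n^{-B\ell/2}$, which is negligible once $B$ is a large constant.
\item[(c)] \emph{Bad event.} On $\cH^c$ we have $\E_{A'}[\1_{\cH^c}R]\preceq\zeta\Gamma$ with $\zeta\le\E[\1_{\cH^c}(M'_{\rm s}+3\bd)]/\bd$. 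The Poisson tail bound (as in \cref{lem:subsample}) makes this $e^{-cm_{\rm s}}\mathrm{poly}(n)$, also negligible.
\end{itemize}
The conclusion \cref{eq:copy-conclusion} then gives
\[
 \bigl\|\Gamma^{-1/2}(ZAZ-\rho\tau J_r)\Gamma^{-1/2}\bigr\|\le3\lambda_0+o\bigl(\sqrt{\ell/\bd}\bigr).
\]
This holds outside probability $\sqrt{p_0}=o(1)$. The error terms are $o(\sqrt{\ell/\bd})$ because $\bd\le m_{\rm s}=n^{O_k(1)}$ while they decay like $n^{-\Omega(B\ell)}$. The assumption $\bd\ge C_{r,\delta}\ell$ is used only to replace $\sqrt{\ell/(r d+\ell)}$ by $\sqrt{\ell/\bd}$ cleanly.

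The step I expect to need the most care is the symmetrization claim: the pool-membership signs must be conditionally independent Rademacher given the union family with its marks $\eta$. The noise is not sign-symmetric ($\E\eta=\rho$), so the Rademacher structure has to come from the exchangeable coin assignment and not from $\eta$. The conditional form of the transfer lemma also has to be applied with the copy's randomness only, so that no union bound over data-dependent vectors arises. If the Poisson-thinning argument for the signs proves delicate, my fallback is to realize the pair $(A,A')$ as a union pool of mean $2m_{\rm s}$ thinned by independent fair coins, invoking the thinning clause of \cref{lem:subsample}.
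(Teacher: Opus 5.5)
Your proposal is correct and follows the paper's proof essentially step for step. It uses the same independent Poisson copy and the same superposition with fair pool-membership signs, which turns $Z(A-A')Z$ into a single-label family with $u_j=\eta_j$, $R=D+D'+2\bd I$ and floor $2\bd$; it then applies \cref{lem:copy-transfer} with $C=3$, as the paper does. The only difference is your choice $\cH=\{M'_{\rm s}\le 2m_{\rm s}\}$, which gives $M=O(n^{2r})$ independently of $\rho$. The paper instead takes $\cH=\{M'\le n^{4k}\}$ and uses the polynomial bound $m_{\rm s}\le n^{3k}$. Both error terms are absorbed the same way.
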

\begin{proof}
Let $(A',D')$ be an independent planted Poisson copy of the spectral
pool, with sample count $M'$.  Superpose the two point processes.
Conditional on the superposed supports and noise marks, the assignment of
each point to one of the two copies is an independent fair sign
$\varsigma_j$, so
\[
 Z(A-A')Z\ \overset{d}{=}\ \sum_j\varsigma_j\eta_jA_{F_j},
\]
a single-label signed superposition in which every sign is attached to
one fixed support and repeated supports remain distinct labels.  With
$R:=D+D'+2\bd I$, \cref{thm:single-trace} applied to the superposition
(coefficients $u_j=\eta_j$, which are independent of the signs
$\varsigma_j$) gives
\[
 \bbP\bigl\{\|R^{-1/2}Z(A-A')ZR^{-1/2}\|
 >C_{r,\delta}\sqrt{\ell/\bd}\bigr\}\le n^{-B_0\ell},
\qquad B_0:=10k.
\]
Moreover $\E'R=D+3\bd I\preceq3\Gamma$; and since
$m_{\rm s}\le n^{3k}$ at fixed $\rho$, on
$\cH:=\{M'\le n^{4k}\}$ one has
$D'\preceq M'I\preceq n^{4k}\Gamma$ (as $\Gamma\succeq\bd I$,
$\bd\ge1$), while the Poisson tail gives
$\E'[\1_{\cH^c}R]\preceq n^{-B_0\ell}\Gamma$.  Finally
$ZAZ-\rho\tau J_r=\E'[Z(A-A')Z]$; apply \cref{lem:copy-transfer} with
$C=3$, $M=n^{4k}+3$, $\zeta=n^{-B_0\ell}$, and $p_0=n^{-B_0\ell}$, the
term $M\sqrt{p_0}\le n^{4k-5k\ell}$ being absorbed.
\end{proof}

\begin{proposition}[Even weak recovery]\label{prop:even-recovery}
At $m\ge C_{k,\delta}\rho^{-2}\Dkl$ with $C_{k,\delta}$ sufficiently
large, the top-of-spectrum computation for $K$, followed by the rounding
and validation of \cref{sec:common-recovery}, outputs a Boolean vector
$\widehat x$ with $|\ip{\widehat x}{\xstar}|\ge c_kn$ with probability
$1-o(1)$, in $n^{\ell+O_k(1)}$ time.
\end{proposition}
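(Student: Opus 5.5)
The plan is to convert the top-eigenvector output of \cref{lem:krylov} into constant normalized Johnson energy in the channel $J_r$ for the gauged vector. Then I would apply \cref{thm:transfer} and finish with \cref{lem:rounding,lem:validation}.

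Step one: run \cref{lem:krylov} on $K$ at accuracy $\gamma=\rho/12$ to get a unit $v$ with $v^{\mathsf T}Kv\ge\rho/3$; this uses \cref{lem:even-witness} and needs only $O_{k,\rho}(\log N)$ products, each costing $n^{\ell+O_k(1)}$. Put $w:=\Gamma^{-1/2}v$ and note that $Z$ commutes with $\Gamma$. Then \cref{prop:even-comparison} gives
\[
 \rho\tau\,(Zw)^{\mathsf T}J_r(Zw)\ \ge\ v^{\mathsf T}Kv-C_{r,\delta}\sqrt{\ell/\bd}\ \ge\ \rho/4,
\]
once $C_{k,\delta}$ is large enough that $\sqrt{\ell/\bd}\le c\rho$.

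Step two is the main obstacle. The energy is measured against $w=\Gamma^{-1/2}v$, not against a unit vector, while \cref{thm:transfer} needs a unit $u$ with $u^{\mathsf T}J_ru\ge c\,b_{r,\ell}$. Since $\Gamma\succeq\bd I$, we have $\|w\|^2\le1/\bd$, so with $u:=Zw/\|w\|$ we get
\[
 u^{\mathsf T}J_ru\ \ge\ \frac{\bd}{4\tau}\ =\ \tfrac14 b_{r,\ell},
\]
using $\bd=\tau b_{r,\ell}$ from \cref{eq:even-population}. So the degree-plus-floor normalization costs nothing here: the floor $\bd I$ alone bounds $\|w\|$. Because $\|w\|$ has no useful lower bound, I would also rule out $w=0$ explicitly: the displayed inequality forces $w\ne0$. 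Concretely, the algorithm computes $\Gamma^{-1/2}v$ and normalizes it. It never needs $Z$, since $\rhoone(u')$ for $u'=w/\|w\|$ equals $\rhoone(Z\,Zu')$ and $ZZ=I$.

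Step three: apply \cref{thm:transfer} with $s=r$, $c=1/4$, taking the argument there to be $Zu'$, so that the transferred vector is $Z(Zu')=u'$. This gives $\bar x^{\mathsf T}\rhoone(u')\bar x\ge c_r$. Then \cref{lem:rounding} with $\sigma=\rhoone(u')$ produces a list of $O(\log n)$ candidates, one of which has overlap $\ge c'n$. Finally \cref{lem:validation}, run on the independent validation pool of mean $m_{\rm s}\gg\rho^{-2}\log n$, selects $\widehat x$ with $|\ip{\widehat x}{\xstar}|\ge c_kn$. The list is independent of the validation pool by the Poisson splitting of \cref{lem:subsample}.

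The failure probabilities all sum to $o(1)$: the Krylov failure, the comparison event, the witness event, the rounding failure and the validation failure. The running time is dominated by the $n^{\ell+O_k(1)}$ cost of the matrix–vector products and of forming $\rhoone$, which is an $n\times n$ matrix assembled from $N\cdot n$ entries.
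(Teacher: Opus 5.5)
Your proposal is correct and is essentially the paper's proof: \cref{lem:krylov} plus \cref{lem:even-witness} give $v^{\mathsf T}Kv\ge\rho/3$, and \cref{prop:even-comparison} turns this into a lower bound on $\rho\tau$ times the $J_r$-energy. The floor $\bd I\preceq\Gamma$ then gives $\|w\|^2\le1/\bd$, hence $u^{\mathsf T}J_ru\ge b_{r,\ell}/4$, and \cref{thm:transfer}, \cref{lem:rounding}, and \cref{lem:validation} finish. The only differences are cosmetic: you place the gauge $Z$ in $u$ rather than in $w=\Gamma^{-1/2}Zv$, which is equivalent since $Z$ commutes with $\Gamma$ and $Z^2=I$; and your explicit check that $w\ne0$ is unnecessary, because $\Gamma\succ0$ and $v$ is a unit vector.
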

\begin{proof}
Let $v$ be a unit vector with $v^{\mathsf T}Kv\ge\rho/3$, supplied under
the planted law by \cref{lem:krylov} at accuracy $\rho/12$ together with
\cref{lem:even-witness}.  Put
\[
 w:=\Gamma^{-1/2}Zv,\qquad u:=w/\|w\|,
\]
so that $w^{\mathsf T}\Gamma w=1$ and
$w^{\mathsf T}(ZAZ)w=v^{\mathsf T}Kv\ge\rho/3$.  Choosing
$C_{k,\delta}$ large enough that the right side of
\cref{eq:even-comparison} is at most $\rho/12$,
\[
 \rho\tau\,w^{\mathsf T}J_rw\ \ge\ \frac\rho3-\frac\rho{12}
 \ \ge\ \frac\rho4 .
\]
Since $\bd\|w\|^2\le w^{\mathsf T}\Gamma w=1$ and $\bd=\tau b_{r,\ell}$,
dividing by $\|w\|^2$ gives
$u^{\mathsf T}J_ru\ge b_{r,\ell}/4$.  The vector available to the
algorithm is $\Gamma^{-1/2}v/\|\Gamma^{-1/2}v\|=Zu$, because the
diagonal $\Gamma$ commutes with $Z$ and $Z^2=I$; no knowledge of
$\xstar$ is needed.  The algorithm computes the one-particle density
matrix $\rhoone(Zu)$ exactly, in $n^{\ell+O_k(1)}$ time, and
\cref{thm:transfer} (with $s=r\le\ell$) shows
$\bar x^{\mathsf T}\rhoone(Zu)\bar x\ge c_{r}>0$.
\Cref{lem:rounding} produces a list of $O_k(\log n)$ Boolean vectors
containing one of constant overlap, and \cref{lem:validation} selects
one on the independent validation pool; the runtime is dominated by the
$n^{\ell+O_k(1)}$ matrix--vector and assembly costs
(\cref{sec:evaluation}).
\end{proof}

\section{Odd arity: rooted pairs for inference}\label{sec:odd}

Throughout this section $k=2r+1$ is odd, $\rho>0$ is fixed, $\ell$
satisfies \cref{eq:range}, and the spectral and validation pools are
Poisson pools of mean
$m_{\rm s}=\tfrac14C_{k,\delta}\rho^{-2}\Dkl$ each, obtained from the
input by \cref{lem:subsample}.  A single odd clause changes the parity
of a row and cannot act on one slice, so we implement the rooted pairs
of \cref{sec:tech-odd}: clauses are rooted at one variable and moved
along pairs with a common root and disjoint even residuals.

Each definition below realizes one design decision from the overview:
the \emph{mutual exact-two} rule makes each root's contribution a
partial matching and the whole operator a deterministic contraction;
\emph{dummy candidates} pad every rooted type to a common intensity,
smoothing the planted mean into a Johnson matrix and keeping the linear
part of the centered planted matrix small; and the \emph{capped
completion} charges the potential pairs of each cell, capped at one, so
that insertion can only increase it (the monotonicity \ref{P-monotone})
while each root contributes at most one to the diagonal, which is what
the coherent oracle of \cref{sec:quantum} needs---the classical and
quantum odd operators coincide.

\subsection{Candidates, the exact-two graph, and the capped completion}
\label{sec:odd-construction}

Root every clause of the spectral pool uniformly at one of its $k$
variables: $F_e=\{u(e)\}\mathbin{\dot\cup}C(e)$ with $|C(e)|=2r$.  The
rooted type space is
\[
 \cT:=\{(u,C):u\in[n],\ C\in\tbinom{[n]\setminus\{u\}}{2r}\},
 \qquad |\cT|=n\binom{n-1}{2r},
\]
and rooting a uniform clause at a uniform variable produces a uniform
rooted type, so one rooted type has intensity
\begin{equation}\label{eq:odd-pi}
 \pi:=\frac{m_{\rm s}}{|\cT|}.
\end{equation}
For $S\in\Omega_\ell$, the number of residual types at root $u$ legal at
$S$ is
\begin{equation}\label{eq:odd-cell}
 D_{u,S}=
 \begin{cases}
 \binom{\ell-1}r\binom{n-\ell}r,&u\in S,\\[1mm]
 \binom\ell r\binom{n-\ell-1}r,&u\notin S.
 \end{cases}
\end{equation}
Put
\begin{equation}\label{eq:odd-alpha}
 D_+:=\max_{u,S}D_{u,S}=\Theta_r(\ell^rn^r),
 \qquad q:=D_+^{-1},
 \qquad \alpha:=\pi D_+
 =\Theta_r\!\left(C_{k,\delta}\rho^{-2}\sqrt{\ell/n}\right).
\end{equation}
We call $\alpha$ the \emph{cell sparsity}: it is the expected number of
real candidates in one root--row cell.  Since $\rho$ is fixed and
$\ell/n=o(1)$ by \cref{eq:range}, one has $\alpha=o(1)$; in particular,
$\alpha\le1/4$ for all sufficiently large $n$.

Generate a Poisson candidate process of intensity $q$ at every rooted
type and mark each candidate real with probability $\alpha$: by
thinning, the real candidates are exactly the rooted spectral pool, and
operationally the algorithm adds an independent dummy process of
intensity $q-\pi$.  Write $\theta_e\in\{0,1\}$ for the real mark;
multiplicities at one rooted type remain distinct physical labels.

The expected total number of candidates is
$\lambda_{\rm cand}:=q|\cT|=\Theta_r(n^{r+1}/\ell^r)$.  The algorithm
aborts candidate generation if the realized count exceeds
\begin{equation}\label{eq:candidate-cap}
 C_{\rm cap}:=4\lambda_{\rm cand}+(\log n)^2,
\end{equation}
in which case detection outputs \textsf{null} and recovery outputs an
arbitrary assignment; by the Poisson Chernoff bound the abort
probability is at most $e^{-c(\lambda_{\rm cand}+\log^2n)}$,
superpolynomially small.

For a root $u$ and a row $S$, let
$\cL_u(S):=\{e:u(e)=u,\ |C(e)\cap S|=r\}$
be the set of candidates legal in the cell $(u,S)$.  For distinct
candidates $e,f$ with common root $u$ and disjoint residuals, and a row
$S$ at which both residuals are legal, put $T:=S\sd C(e)\sd C(f)$; both
residuals are then legal at $T$ as well.  Retain the undirected row edge
$g=(u;\{e,f\};\{S,T\})$ if and only if
\begin{equation}\label{eq:exact-two}
 \cL_u(S)=\cL_u(T)=\{e,f\},
\end{equation}
record it once, with any fixed canonical ordering of $\{S,T\}$, and set
$A_g:=E_{S,T}+E_{T,S}$.  We call \cref{eq:exact-two} the \emph{mutual
exact-two} rule; it is symmetric in $e$ and $f$, decided by the
candidate process alone, and ignores the marks.

Define the observed matrix, the candidate population, and the capped
completion:
\begin{align}
 \cC_{\rm o}&:=\sum_{g=(u;\{e,f\})}\theta_e\theta_f\,y_ey_f\,A_g,
 \label{eq:odd-C}\\
 \cP_{\rm o}&:=\alpha^2\sum_gA_g,
 \label{eq:odd-population}\\
 h_u(S)&:=\sum_{\substack{\{e,f\}\subseteq\cL_u(S)\\
 C(e)\cap C(f)=\varnothing}}(\theta_e+\alpha)(\theta_f+\alpha),
 \qquad
 g_u(S):=\min\{1,h_u(S)\},
 \label{eq:odd-completed}\\
 \cG_{\rm o}(S,S)&:=\sum_{u=1}^ng_u(S),
 \qquad
 d:=n\alpha^2,
 \qquad
 \Gamma_{\rm o}:=\cG_{\rm o}+dI,
 \qquad
 H_{\rm o}:=\Gamma_{\rm o}^{-1/2}\cC_{\rm o}\,\Gamma_{\rm o}^{-1/2}.
 \label{eq:odd-H}
\end{align}
The sums defining $\cC_{\rm o}$ and $\cP_{\rm o}$ run over the retained
edges; the potential $h_u(S)$ runs over all unordered disjoint
candidate pairs in the cell, retained or not.  At the inference
threshold,
\begin{equation}\label{eq:odd-d-scale}
 d=\Theta_r(C_{k,\delta}^2\rho^{-4}\ell).
\end{equation}
The capped completion in \cref{eq:odd-completed} is the key dependence
repair: inserting a candidate may destroy an old retained edge, but it
can only add terms to every potential $h_u(S)$, hence only increase
$\cG_{\rm o}$.  This is exactly hypothesis \ref{P-monotone} of
\cref{thm:packet-trace}.

\begin{lemma}[Rootwise matching and contraction]\label{lem:odd-uniqueness}
For every candidate and mark realization:
\begin{enumerate}
\item for every root $u$, the retained graph
$K_u:=\sum_{g:u(g)=u}A_g$ is a partial matching; in particular, at most
one retained edge is incident with a given cell $(u,S)$, and for every
fixed candidate $e$ the subgraph $L_e:=\sum_{g:e\in g}A_g$ is a partial
matching as well;
\item if a retained edge $g=(u;\{e,f\};\{S,T\})$ has
$\theta_e\theta_f=1$, then $g_u(S)=g_u(T)=1$; irrespective of the
marks, $g_u(S)\ge\alpha^2$ and $g_u(T)\ge\alpha^2$;
\item $\|H_{\rm o}\|\le1$ and
$\|\Gamma_{\rm o}^{-1/2}\cP_{\rm o}\,\Gamma_{\rm o}^{-1/2}\|\le1$.
\end{enumerate}
\end{lemma}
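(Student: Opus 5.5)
For part (1), I will argue directly from the mutual exact-two rule \cref{eq:exact-two}. Fix a root $u$ and a row $S$. A retained edge $g=(u;\{e,f\};\{S,T\})$ incident with $S$ requires $\cL_u(S)=\{e,f\}$, so the pair $\{e,f\}$ is determined by $S$. Once the pair is fixed, the partner row is also fixed: $T=S\sd C(e)\sd C(f)$. Since each edge is recorded only once, at most one retained edge at root $u$ touches $S$, and therefore $K_u$ is a partial matching. The same conclusion holds for $L_e$. Every edge containing $e$ has root $u(e)$, so $L_e$ is a subgraph of $K_{u(e)}$, and a subgraph of a partial matching is again a partial matching.

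For part (2), the key observation is that $\cL_u(S)=\{e,f\}$ with $C(e)\cap C(f)=\varnothing$. Hence the potential $h_u(S)$ has exactly one term, namely $(\theta_e+\alpha)(\theta_f+\alpha)$. If $\theta_e\theta_f=1$, this term equals $(1+\alpha)^2\ge1$, so $g_u(S)=1$. Without any assumption on the marks, the term is at least $\alpha^2$, and $\alpha^2\le1$ because $\alpha\le1/4$. This gives $g_u(S)\ge\alpha^2$. By the symmetry of \cref{eq:exact-two} in $S$ and $T$, the same argument applies at $T$.

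For part (3), I will prove the domination $|v^{\mathsf T}\cC_{\rm o}v|\le v^{\mathsf T}\cG_{\rm o}v$, and its analogue for $\cP_{\rm o}$. For each edge $g$ we have $|v^{\mathsf T}A_gv|=2|v_Sv_T|\le v_S^2+v_T^2$. Only edges with $\theta_e\theta_f=1$ contribute to $\cC_{\rm o}$, and for those the coefficient $y_ey_f$ has absolute value one. Summing over edges,
\[
 |v^{\mathsf T}\cC_{\rm o}v|\le\sum_S v_S^2\,\#\{g:\ S\in g,\ \theta_e\theta_f=1\}.
\]
By part (1), for each root $u$ at most one such edge meets $S$. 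By part (2), that edge satisfies $g_u(S)=1$. Summing over $u$, the count at row $S$ is therefore at most $\cG_{\rm o}(S,S)$.

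For $\cP_{\rm o}$ the argument is identical, except that each edge carries weight $\alpha^2$ and is charged to $g_u(S)\ge\alpha^2$. This gives $|v^{\mathsf T}\cP_{\rm o}v|\le v^{\mathsf T}\cG_{\rm o}v\le v^{\mathsf T}\Gamma_{\rm o}v$. Finally, substituting $v=\Gamma_{\rm o}^{-1/2}w$ in both dominations yields the two norm bounds.

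I do not expect a real obstacle. The only point that needs care is the single-term identity for $h_u(S)$ in part (2), since that identity is what lets each root's contribution to the diagonal be capped at one.
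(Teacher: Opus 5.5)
Your proof is correct and follows the paper's argument essentially step for step. The exact-two rule pins down the pair, and hence the partner row, at each cell; the retained pair contributes $(\theta_e+\alpha)(\theta_f+\alpha)$ to the potential at both endpoints; and the rowwise charge $2|v_Sv_T|\le v_S^2+v_T^2$ against the cell caps gives domination by $\cG_{\rm o}\preceq\Gamma_{\rm o}$. The only cosmetic difference is that you use the exact single-term identity for $h_u(S)$, whereas the lower bound $h_u(S)\ge(\theta_e+\alpha)(\theta_f+\alpha)$ already suffices; the cap at one comes from the definition $g_u=\min\{1,h_u\}$, not from that identity.
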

\begin{proof}
(1) If a retained edge is incident with the cell $(u,S)$, its two
labels are exactly the two elements of $\cL_u(S)$, so no second
retained edge at root $u$ meets $S$; passing to a subgraph gives the
claim for $L_e$.  (2) The pair $\{e,f\}$ lies in $\cL_u(S)$ and in
$\cL_u(T)$ with disjoint residuals, so it contributes
$(\theta_e+\alpha)(\theta_f+\alpha)$ to both potentials, at least $1$
when $\theta_e=\theta_f=1$ and at least $\alpha^2$ always.  (3) By (1)
and (2), for every $v$,
\[
 |v^{\mathsf T}\cC_{\rm o}v|
 \le\sum_g\theta_{e(g)}\theta_{f(g)}
 \bigl(v_{S(g)}^2+v_{T(g)}^2\bigr)
 \le\sum_Sv_S^2\sum_{u=1}^ng_u(S)
 =v^{\mathsf T}\cG_{\rm o}v,
\]
because at most one retained edge meets each cell and each retained
observed edge has $g_u=1$ at both endpoints.  Likewise
$|v^{\mathsf T}\cP_{\rm o}v|\le v^{\mathsf T}\cG_{\rm o}v$, since the
population weight $\alpha^2$ of a retained edge is dominated by the
cell caps at both endpoints.  Congruence by $\Gamma_{\rm o}^{-1/2}$ and
$\cG_{\rm o}\preceq\Gamma_{\rm o}$ prove (3).
\end{proof}

\subsection{Population and diagonal scale}

Define the deterministic mean population row sum
\begin{equation}\label{eq:odd-population-def}
 d_\circ:=\frac1N\,\E\,\1^{\mathsf T}\cP_{\rm o}\1 .
\end{equation}
For later use, define the common support size of two disjoint residuals
by
\begin{equation}\label{eq:odd-Delta}
 \Delta_{r,\ell}:=\Tr(P_CP_D)
 =\binom{2r}{r}^{2}\binom{n-4r}{\ell-2r}
 \qquad(C\cap D=\varnothing).
\end{equation}

\begin{lemma}\label{lem:odd-palm-degree}
There are constants $0<c_r<C_r<\infty$ such that
\begin{equation}\label{eq:odd-dcirc}
 c_rd\le d_\circ\le C_rd,
 \qquad
 \E\cP_{\rm o}=\frac{d_\circ}{b_{2r,\ell}}J_{2r}.
\end{equation}
\end{lemma}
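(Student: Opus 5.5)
We plan to compute $\E\cP_{\rm o}$ entrywise by symmetry and then estimate the single scalar it depends on. For $S,T\in\Omega_\ell$, the entry $(\cP_{\rm o})_{S,T}$ equals $\alpha^2$ times the number of retained edges joining $S$ and $T$. Such an edge needs a root $u$ and disjoint residuals $C(e),C(f)$, legal at $S$, with $C(e)\sd C(f)=S\sd T$. Hence $|S\sd T|=4r$, and the entry vanishes unless $T$ is at Johnson distance $2r$ from $S$. The candidate process is a Poisson process whose intensity $q$ is constant on rooted types and invariant under permutations of $[n]$, and the exact-two rule is permutation-equivariant. So $\E(\cP_{\rm o})_{S,T}$ depends only on the orbit of $(S,T)$ under $\mathrm{Sym}(n)$, which is determined by the distance. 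This gives $\E\cP_{\rm o}=\kappa J_{2r}$ for a scalar $\kappa$. Taking the normalized trace against $\1\1^{\mathsf T}$ and using that $J_{2r}$ has constant row sum $b_{2r,\ell}$ yields $N\,d_\circ=\kappa N b_{2r,\ell}$, which is the second identity in \cref{eq:odd-dcirc}.

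For the two-sided bound on $d_\circ$, fix $S$ and compute the expected row sum $\E\sum_{T}(\cP_{\rm o})_{S,T}=\alpha^2\,\E\#\{\text{retained edges at }S\}$. Because of \cref{lem:odd-uniqueness}(1), this is $\alpha^2\sum_u\bbP\{\text{the cell }(u,S)\text{ carries a retained edge}\}$. A retained edge at $(u,S)$ needs three things: $|\cL_u(S)|=2$, the two residuals disjoint, and $|\cL_u(T)|=2$ at the partner row $T$. The count $|\cL_u(S)|$ is Poisson with mean $qD_{u,S}\in[c_r,1]$ by \cref{eq:odd-cell,eq:odd-alpha}, since the two cases of \cref{eq:odd-cell} differ only by a factor $\Theta_r(1)$ for $\ell\le n/3$.
\begin{itemize}
\item \emph{Upper bound.} Since the probability of exactly two candidates is at most $1$, each cell contributes at most $1$, so $d_\circ\le n\alpha^2=d$.
\item \emph{Lower bound.} By the Poisson law, $\bbP\{|\cL_u(S)|=2\}\ge c_r$. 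Given the two candidates $e,f$, their residuals are i.i.d.\ uniform among the legal residuals, so they are disjoint with probability $1-O_r(1/\ell+1/n)\ge1/2$ (using $\ell\ge k-1$ and adjusting constants for small $\ell$).
\end{itemize}

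It remains to show that, conditionally on this, $\cL_u(T)=\{e,f\}$ with probability bounded below. This is the main obstacle, because the cells $(u,S)$ and $(u,T)$ share candidates. We split the residual types legal at $T$ into those also legal at $S$ and those not. Types legal at both, other than $e,f$, were already excluded, since $\cL_u(S)=\{e,f\}$. Types legal at $T$ but not at $S$ lie in a region disjoint from the cell $(u,S)$. By Poisson independence on disjoint type sets, the number of candidates there is Poisson with mean at most $qD_{u,T}\le1$, independent of the conditioning. So it is zero with probability at least $e^{-1}$. Multiplying the three probabilities and summing over the $n$ roots gives $d_\circ\ge c_r n\alpha^2=c_rd$.

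The care needed here is to check that the conditioning on $e,f$ (which determine $T$) commutes with the independence over disjoint type sets. We would handle this by conditioning first on the points in the cell $(u,S)$ and then using the Poisson restriction property for the complement.
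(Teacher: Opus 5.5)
Your proof is correct and follows essentially the paper's argument. The symmetry step is identical. Your lower bound (condition on the cell $(u,S)$ holding exactly two points, then apply the Poisson restriction property to the rest of the cell $(u,T)$) is the conditional form of the paper's Palm/Mecke pinning of an ordered disjoint pair with void probability at least $e^{-2}$. Your upper bound reads the per-cell bound directly off the one-retained-edge-per-cell property instead of counting type pairs.

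One correction is needed. At the bottom level $\ell=2r$ and a root $u\in S$, the set $S\setminus\{u\}$ has only $2r-1$ elements, so two disjoint legal residuals cannot exist and the disjointness probability is $0$; no adjustment of constants rescues it. Restrict the lower-bound sum to the $n-\ell\ge 2n/3$ roots $u\notin S$, as the paper does.
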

\begin{proof}
Fix a row $S$ and Palm-pin two distinct candidates $e,f$ with a common
root $u$, disjoint residuals, and both residuals legal at $S$; let
$T:=S\sd C(e)\sd C(f)$.  The pinned pair is retained exactly when no
further candidate lands in the cells $(u,S)$ or $(u,T)$, whose mean
occupancy is at most $q(D_{u,S}+D_{u,T})\le2$; by the Poisson void
probability, retention has probability between $e^{-2}$ and $1$.  For
roots $u\notin S$, the number of ordered disjoint legal residual type
pairs is
$\binom\ell r\binom{\ell-r}r\binom{n-\ell-1}r\binom{n-\ell-1-r}r
\ge c_rD_+^2$, so the Mecke equation, the intensity $q^2$ per ordered
type pair, and the weight $\alpha^2$ give an expected row sum at least
$e^{-2}c_rn\alpha^2$; the trivial upper count gives at most
$nq^2D_+^2\alpha^2=d$.  Permutation equivariance, and the fact that
every retained edge joins rows at Johnson distance $2r$, make
$\E\cP_{\rm o}$ a scalar multiple of $J_{2r}$; its constant row sum
identifies the scalar as $d_\circ/b_{2r,\ell}$.
\end{proof}

\begin{proposition}
\label{prop:odd-population}
For every fixed $B>0$, the population estimate
\begin{equation}\label{eq:odd-pop-conc}
 \left\|\cP_{\rm o}-\frac{d_\circ}{b_{2r,\ell}}J_{2r}\right\|
 \le C_{r,B}\alpha^2(\sqrt{n\log N}+\log N)=o(d)
\end{equation}
holds with probability at least $1-2N^{-B}$.  Independently of $B$,
with probability $1-o(1)$,
\begin{equation}\label{eq:odd-raw-trace}
 \Tr\Gamma_{\rm o}\le C_rdN
\end{equation}
and
\begin{equation}\label{eq:odd-max-diagonal}
 \max_{S\in\Omega_\ell}\Gamma_{\rm o}(S,S)\le C_r(d+\log N).
\end{equation}
\end{proposition}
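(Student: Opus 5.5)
Split $\cP_{\rm o}$ by roots and treat the three claims separately.

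\emph{Concentration \cref{eq:odd-pop-conc}.} Write $\cP_{\rm o}=\alpha^2\sum_{u=1}^nK_u$, where $K_u$ is the rootwise retained graph of \cref{lem:odd-uniqueness}(1). Each $K_u$ is a partial matching, so $\|K_u\|\le1$. The $K_u$ are mutually independent, because retention at root $u$ depends only on the candidates rooted at $u$. By \cref{lem:odd-palm-degree} and linearity, $\E\cP_{\rm o}=\frac{d_\circ}{b_{2r,\ell}}J_{2r}$. Hence $\cP_{\rm o}-\E\cP_{\rm o}=\alpha^2\sum_u(K_u-\E K_u)$ is a sum of $n$ independent centered symmetric $N\times N$ matrices, each of norm at most $2$. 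Their variance proxy satisfies
\[
\bigl\|\textstyle\sum_u\E(K_u-\E K_u)^2\bigr\|\le\bigl\|\textstyle\sum_u\E K_u^2\bigr\|\le n,
\]
since $K_u^2$ is a diagonal projector. Matrix Bernstein then gives $\|\sum_u(K_u-\E K_u)\|\le C_B(\sqrt{n\log N}+\log N)$ with probability at least $1-2N^{-B}$, since the dimensional factor $N$ is absorbed by taking deviation $\asymp\sqrt{n(B+1)\log N}+(B+1)\log N$. Multiplying by $\alpha^2$ gives the first inequality. For the $o(d)$ claim, compare with $d=n\alpha^2$: we need $\sqrt{\log N/n}\to0$ and $\log N/n\to0$. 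Both follow from $\log N\le\ell\log n\le n^{1-\delta}$ in \cref{eq:range}.

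\emph{Trace bound \cref{eq:odd-raw-trace}.} Use $\Tr\Gamma_{\rm o}=dN+\sum_S\sum_ug_u(S)$ and $g_u(S)\le h_u(S)$. Take expectations of $h_u(S)$ with the Mecke equation. Ordered pairs of disjoint legal candidates in cell $(u,S)$ number at most $D_{u,S}^2\le D_+^2$ in type count, with intensity $q^2$ per ordered pair. The weight $\E(\theta_e+\alpha)(\theta_f+\alpha)$ equals $(2\alpha)^2=4\alpha^2$, since the marks are independent Bernoulli$(\alpha)$ given the candidates. So $\E h_u(S)\le 4q^2D_+^2\alpha^2=4\alpha^2$, and $\E\Tr\cG_{\rm o}\le 4Nn\alpha^2=4dN$. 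Markov's inequality with a slowly growing factor would only give $1-o(1)$ with a non-constant loss. Instead I would bound the variance: $h_u(S)$ for distinct roots are independent, so $\Var\Tr\cG_{\rm o}=\sum_u\Var(\sum_S g_u(S))$. A crude fourth-moment/Chebyshev bound should give $\Tr\cG_{\rm o}\le C_r dN$ with probability $1-o(1)$. If this is messy, Markov with constant $C_r$ large yields failure probability at most $4/C_r$, and the constant is then chosen depending on the desired $o(1)$; I expect the independence-across-roots Chebyshev to work cleanly.

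\emph{Max diagonal \cref{eq:odd-max-diagonal}.} Fix $S$. Then $\cG_{\rm o}(S,S)=\sum_ug_u(S)$ is a sum of $n$ independent $[0,1]$ variables with mean at most $4\alpha^2n=4d$. The multiplicative Chernoff bound gives
\[
\bbP\{\cG_{\rm o}(S,S)>C(d+\log N)\}\le N^{-2}
\]
for suitable $C$. A union bound over the $N$ rows finishes, with failure probability $N^{-1}=o(1)$. This is exactly where the cap $g_u\le1$ is essential.

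The main obstacle is the independence across roots used in the first and third parts. I must check that the retention rule \cref{eq:exact-two} and the potentials $h_u$ involve only candidates with root $u$; by the definition of $\cL_u(S)$ this holds. Beyond that, the Bernstein dimensional factor is harmless only because $\log N=o(n)$.
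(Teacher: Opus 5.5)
Your concentration argument for \cref{eq:odd-pop-conc} (independent root blocks $K_u$, $\|K_u\|\le1$, variance proxy at most $n$, matrix Bernstein, then $\log N\le\ell\log n=o(n)$) is the paper's proof. So is your maximum-diagonal argument for \cref{eq:odd-max-diagonal} (independent capped cells $g_u(S)\in[0,1]$ with $\E g_u(S)\le\E h_u(S)=O(\alpha^2)$, a Chernoff bound, and a union bound over rows). The gap is in the trace bound \cref{eq:odd-raw-trace}. Your mean estimate $\E\Tr\cG_{\rm o}\le4dN$ is correct, but you never prove concentration. Markov with a fixed $C_r$ only gives failure probability $4/C_r$, a constant. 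Letting the constant grow to force $o(1)$ is not allowed, because $C_r$ may depend only on $r$. The Chebyshev route you propose does work, but the variance estimate is the entire content and it is not automatic. For fixed $u$, $\sum_S g_u(S)$ is a sum of $N$ strongly dependent terms, all driven by the same candidate process at root $u$, and the cap $g_u\le1$ gives nothing useful at this scale.

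The missing idea is to collapse each root's contribution to a single scalar. Every disjoint residual pair is jointly legal on exactly $\Delta_{r,\ell}$ rows, so
\[
 \sum_S g_u(S)\le\sum_S h_u(S)\le\tfrac12\Delta_{r,\ell}Z_u^2,
 \qquad
 Z_u:=\sum_{e:u(e)=u}(\theta_e+\alpha).
\]
Here $Z_u$ is compound Poisson with mean $2s$, where $s:=\alpha\Lambda=m_{\rm s}/n$ and $\Lambda:=q\binom{n-1}{2r}$, and all its cumulants are $O(s)$. The paper applies Chebyshev to $\sum_uZ_u^2$, using:
\begin{itemize}
\item independence over roots;
\item $\Var(Z_u^2)\le C(s^3+s^2+s)$;
\item $s\ge c_{r,\rho}>0$ at the fixed-bias threshold.
\end{itemize}
This gives $\sum_uZ_u^2\le Cns^2$ with probability $1-o(1)$. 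The counting estimate $\Delta_{r,\ell}\Lambda^2\le C_rN$ then converts $\Delta_{r,\ell}ns^2$ into $C_rn\alpha^2N=C_rdN$; it follows from $\Delta_{r,\ell}/N\le C_r(\ell/n)^{2r}$ and $\Lambda\le C_r(n/\ell)^r$. Equivalently, in your variance formulation, $\Var\sum_Sg_u(S)\le\frac14\Delta_{r,\ell}^2\E Z_u^4\le C\Delta_{r,\ell}^2s^4$, and the same two facts give $\Var\Tr\cG_{\rm o}\le C_r(dN)^2/n$, so Chebyshev finishes. Without the domination by $Z_u^2$, the lower bound on $s$, and the estimate $\Delta_{r,\ell}\Lambda^2\le C_rN$, your claim that a crude fourth-moment bound should suffice is unsupported.
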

\begin{proof}
\emph{Concentration.}  Write $\cP_{\rm o}=\alpha^2\sum_{u=1}^nK_u$
with $K_u$ the retained matching at root $u$: the root blocks are
independent with $\|K_u\|\le1$, so matrix Bernstein~\cite{Tropp} gives
\cref{eq:odd-pop-conc}, which is $o(d)=o(n\alpha^2)$ because
$\log N\le\ell\log n=o(n)$.

\emph{Trace scale.}  Put $\Lambda:=q\binom{n-1}{2r}$,
$s:=\alpha\Lambda=m_{\rm s}/n$, and
$Z_u:=\sum_{e:u(e)=u}(\theta_e+\alpha)$, a compound Poisson variable
with $\E Z_u=2s$, $\Var Z_u\le2s$, and
$\Var(Z_u^2)\le C(s^3+s^2+s)$.  Since $g_u\le h_u$ and every disjoint
residual pair is legal on exactly $\Delta_{r,\ell}$ rows, dropping the
disjointness restriction gives
$\Tr\cG_{\rm o}\le\Delta_{r,\ell}\sum_uZ_u^2/2$.  Independence over
roots, $s\ge c_{r,\rho}>0$ in the fixed-bias level range, and Chebyshev
give $\sum_uZ_u^2\le Cns^2$ with probability $1-o(1)$; the elementary
estimates $\Delta_{r,\ell}/N\le C_r(\ell/n)^{2r}$ and
$\Lambda\le C_r(n/\ell)^r$ give $\Delta_{r,\ell}\Lambda^2\le C_rN$, so
the right side is at most $C_rn\alpha^2N=C_rdN$.

\emph{Maximum.}  Fix $S$.  The $g_u(S)$ are independent over $u$, lie
in $[0,1]$, and the second factorial moment of the cell process gives
$\E g_u(S)\le\E h_u(S)\le\tfrac12(qD_{u,S})^2(2\alpha)^2\le2\alpha^2$;
scalar Bernstein and a union bound over the $N$ rows finish.
\end{proof}

\subsection{The log-free null norm}

\begin{proposition}[Odd null norm]\label{prop:odd-null}
For every fixed $B>0$, under the null,
\begin{equation}\label{eq:odd-null}
 \bbP\left\{\|H_{\rm o}\|>
 C_{r,\delta,B}\sqrt{\frac\ell d}\right\}\le n^{-B\ell}.
\end{equation}
\end{proposition}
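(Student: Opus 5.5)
The plan is to apply \cref{thm:packet-trace} directly to $H_{\rm o}$, with the following identifications: the label family is the full candidate process (real plus dummy); the retained amplitude is $\omega_g:=\theta_e\theta_f$, with the mutual exact-two indicator absorbed into the sum over retained edges; the diagonal is $G_\xi:=\cG_{\rm o}$; and the floor is $d=n\alpha^2$. Under the null, the labels $y_e$ of real candidates are independent Rademacher signs given the unsigned configuration. For dummies, whose amplitude vanishes, we assign independent fictitious signs. With these choices $\cC_{\rm o}$ has exactly the form \cref{eq:packet-operator}, and $\cR_\xi=\Gamma_{\rm o}$. The tail bound \cref{eq:packet-tail} is then exactly \cref{eq:odd-null}. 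It remains to verify \ref{P-monotone}--\ref{P-mass} with $C_0=1$ and a suitable $\beta$.

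\ref{P-monotone}: inserting a candidate can only enlarge the legal sets $\cL_u(S)$. So an exact-two edge can be destroyed but never created, and $\omega_g$ is insertion-nonincreasing. Every $h_u(S)$ gains only nonnegative terms, so $\cG_{\rm o}$ is insertion-nondecreasing; this is the capped-completion remark after \cref{eq:odd-d-scale}. \ref{P-diagonal}: by \cref{lem:odd-uniqueness}(1), at most one retained edge meets each cell $(u,S)$. By part (2), if that edge has $\theta_e\theta_f=1$ then $g_u(S)=1$. Summing over $u$ gives $\sum_{g\ni S}\omega_g\le\cG_{\rm o}(S,S)$. \ref{P-port}: a label $e$ has a single root $u$, and $L_e$ is a partial matching, so at most one retained edge containing $e$ meets $S$. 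The port sum is therefore at most $1$.

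The substantive step is \ref{P-mass}. We drop the retention factor and compute Mecke intensities in which each fresh candidate carries weight $\theta$, hence mean $\alpha q$ per rooted type.

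For two new labels at row $S$, sum over roots $u$ and ordered disjoint legal residual pairs. Each cell contributes at most $(qD_{u,S}\alpha)^2\le\alpha^2$, so the total is at most $n\alpha^2=d$. Prescribing $h$ recycled exterior vertices confines them to the $O_r(t)$ exposed vertices. Each such vertex costs a factor $C_rt/n$, exactly as in \cref{prop:even-null}.

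For one new label whose partner $f$ is among the at most $4t$ exposed labels, the root is fixed to be $u(f)$. The fresh residual then has mass at most $qD_{u,S}\alpha\le\alpha$, so we take $\beta:=\alpha$. This yields $4t\beta(C_rt/n)^h$ for \cref{eq:packet-mass-one}, and similarly $\beta(C_rt/n)^h$ for \cref{eq:packet-mass-partner}.

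The side conditions are then checked as follows. First, $\beta^2=\alpha^2=d/n$. Second, $t\beta\le c_0d$ amounts to $t\le c_0n\alpha$. This holds because $t\asymp\ell\log n$ while $n\alpha\asymp\rho^{-2}\sqrt{\ell n}$, and \cref{eq:range} gives $\ell\log^2 n\ll n$.

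The point I expect to need the most care is that the excess bookkeeping in \ref{P-mass} must track the root vertex as well as the residuals. A fresh candidate whose root coincides with an exposed vertex is not counted in $\chi$, which is defined through residuals only. I would handle this in the two-new-labels count by bounding the root choice by $n$ without any recycling gain. A fresh candidate forced to share the root of a fixed exposed label is already accounted for by the factor $\beta$ rather than $d$. I would check that no case needs a gain from the root.

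A second point is that the dummy candidates carry amplitude zero yet still appear in $\Xi$. They therefore enter the monotonicity in \ref{P-monotone}, but they contribute nothing to the masses in \ref{P-mass}, because of the factor $\theta$. With these checks complete, \cref{thm:packet-trace} gives \cref{eq:odd-null}.
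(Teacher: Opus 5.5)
Your proposal is correct and is essentially the paper's proof. Like the paper, you apply the symmetric pair trace theorem to the full candidate family, with amplitude $\theta_e\theta_f$ times the mutual exact-two indicator; you read off (P1)--(P3) from the rootwise matching lemma and the capped completion; and you verify (P4) with $d=n\alpha^2$ and $\beta=\alpha$, using $qD_{u,S}\le1$, the realness factor $\alpha$, and a factor $C_rt/n$ per recycled vertex.

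The only slip is the choice $C_0=1$. Your own bounds $4t\beta$ and $(C_rt/n)^h$ require $C_0\ge\max\{4,C_r\}$, so take $C_0=C_0(r)$ large, as the paper does. This costs nothing, since $t\beta/d=t/(n\alpha)=o(1)$ still satisfies any fixed $c_0(r,C_0)$.
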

\begin{proof}
Apply \cref{thm:packet-trace}, with $C_0=C_0(r)$ sufficiently large, to
the candidate family, with retained amplitudes
\[
 \omega_g(\xi):=\theta_e\theta_f\,
 \1\{\cL_u(S)=\cL_u(T)=\{e,f\}\}
\]
on the row edges $g=(u;\{e,f\};\{S,T\})$ and amplitude zero on pairs
without a common root.  Under the null, the signs $y_e$ of the real
candidates are independent Rademacher conditional on the unsigned
marked configuration (dummy candidates may be given auxiliary
independent signs, on which nothing depends).  Inserting a candidate
can only destroy an old exact-two indicator and only increase every
potential $h_u(S)$, hence $\cG_{\rm o}$, which is \ref{P-monotone};
and by \cref{lem:odd-uniqueness}, at most one retained edge meets each
cell, its amplitude is dominated by the cell cap $g_u$ at both
endpoints, and a fixed candidate carries at most one retained edge per
row, which are \ref{P-diagonal} and \ref{P-port}.

It remains to verify the fresh masses \ref{P-mass}, with $d=n\alpha^2$
and $\beta:=\alpha$.  At a fixed root--row cell the total candidate
intensity is at most $qD_{u,S}\le1$, and requiring a candidate to be
real contributes a factor $\alpha$.  If a first-used residual recycles
$h$ exterior vertices, choosing those vertices from the $O_r(t)$
exposed set costs $(C_rt/n)^h$.  Consequently two new real labels with
a common root, both legal at the current row, have total mass at most
$C_rn\alpha^2(C_rt/n)^h=C_rd(C_rt/n)^h$; one new real label paired with
one of at most $4t$ exposed labels has mass at most
$C_rt\alpha(C_rt/n)^h$; and a new partner of one fixed exposed label
has mass at most $C_r\alpha(C_rt/n)^h$.  The side conditions hold:
$\beta^2=\alpha^2=d/n$, and
\begin{equation}\label{eq:odd-mixed-small}
 \frac{t\alpha}{d}=\frac{t}{n\alpha}
 \le C_r\sqrt{\frac\ell n}\log n=o(1)
\end{equation}
for $t\asymp\ell\log n$, using $\rho\le1$ and \cref{eq:range}.
\end{proof}

\subsection{A centered planted comparison}
\label{sec:odd-comparison}

Condition on the candidate process and set $Y_e:=\theta_e\eta_e$, so
that, conditional on the skeleton, the $Y_e$ are independent with mean
$\mu:=\rho\alpha$; put $X_e:=Y_e-\mu$.  The gauge cancels the common
root together with the planted residual monomials, so
$Z\cC_{\rm o}Z=\sum_{g}Y_{e(g)}Y_{f(g)}A_g$, and expanding
$Y_eY_f=(X_e+\mu)(X_f+\mu)$ gives the exact centering
\begin{equation}\label{eq:odd-centering}
 Z\cC_{\rm o}Z-\rho^2\cP_{\rm o}
 =\cQ+\mu\cL,
 \qquad
 \cQ:=\sum_gX_{e(g)}X_{f(g)}A_g,
 \qquad
 \cL:=\sum_eX_eL_e,
\end{equation}
with $L_e=\sum_{g:e\in g}A_g$ the matchings of
\cref{lem:odd-uniqueness}.  The linear part is small because the
thinning keeps $\mu$ tiny---this is where the dummy candidates earn
their keep---and the quadratic part is bounded by decoupling against an
independent copy of the marks.

\begin{proposition}[Odd planted comparison]\label{prop:odd-comparison}
Assume $m\ge C_{k,\delta}\rho^{-2}\Dkl$, with the leading constant
sufficiently large.  Then, with probability $1-o(1)$,
\begin{equation}\label{eq:odd-comparison}
 \left\|\Gamma_{\rm o}^{-1/2}
 \bigl(Z\cC_{\rm o}Z-\rho^2\cP_{\rm o}\bigr)
 \Gamma_{\rm o}^{-1/2}\right\|
 \le C_{r,\delta}\sqrt{\frac\ell d}.
\end{equation}
\end{proposition}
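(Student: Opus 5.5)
Starting from the exact centering \cref{eq:odd-centering}, I bound the quadratic part $\cQ$ and the linear part $\mu\cL$ separately, each after congruence by $\Gamma_{\rm o}^{-1/2}$. I condition throughout on the candidate skeleton $\xi$: the real marks and the retained edges $g$ are fixed, and only the $Y_e$ are random. The $Y_e$ are conditionally independent with mean $\mu=\rho\alpha$, and they take values in $\{0,\pm1\}$, so $|X_e|\le2$. The goal for each part is a bound of order $\sqrt{\ell/d}$ with probability $1-o(1)$.

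\emph{Linear part.} Each $L_e$ is a partial matching by \cref{lem:odd-uniqueness}, so $|v^{\mathsf T}X_eL_ev|\le2\,v^{\mathsf T}P_{L_e}v$, where $P_{L_e}$ denotes the support projector of $L_e$. This gives $|v^{\mathsf T}\cL v|\le2\sum_e v^{\mathsf T}P_{L_e}v$. A row $S$ lies in the support of $L_e$ only through a retained edge at root $u(e)$ meeting the cell $(u,S)$. Each cell carries at most one retained edge, with two labels, so row $S$ is counted at most $2n$ times. Every such edge also has $g_u(S)\ge\alpha^2$ by \cref{lem:odd-uniqueness}(2). Summing over roots, $\sum_eP_{L_e}\preceq\min\{2n,\,2\alpha^{-2}\cG_{\rm o}\}$ entrywise on the diagonal. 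Hence
\[
\mu\,\|\Gamma_{\rm o}^{-1/2}\cL\,\Gamma_{\rm o}^{-1/2}\|\le4\rho\alpha\cdot\frac{n}{d}=\frac{4\rho}{\alpha}.
\]
This crude bound is too weak, because $\alpha=o(1)$. The linear part must therefore be treated as a random signed sum rather than bounded pointwise.

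\emph{Linear part, refined.} I run a single-label trace with labels $e$, symmetric mean-zero coefficients $X_e$ (after symmetrizing with an independent copy of the marks via \cref{lem:copy-transfer}), and matchings $L_e$. Since $L_e$ is a matching, the opening charge per label at a row is at most $\alpha\cdot(\text{cells})/(d+\cdot)$. The result I aim for is $\mu\|\cdot\|\lesssim\mu\sqrt{\ell\cdot(\text{mass})}/d$ with $\text{mass}\asymp n\alpha$, that is, $\rho\alpha\sqrt{\ell n\alpha}/(n\alpha^2)=\rho\sqrt{\ell/(n\alpha)}$. This is $\le\sqrt{\ell/d}$ exactly because $\rho^2\le\alpha$, up to constants at threshold: $\ell/(n\alpha)\cdot\rho^2=\rho^2\ell/(n\alpha)\le\ell/(n\alpha^2)$ iff $\rho^2\le1/\alpha$, which holds. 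So the target bound is consistent.

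\emph{Quadratic part.} I decouple: $\cQ$ is compared with $\sum_g X_eX'_fA_g$ using an independent copy $X'$ of the marks on the shared skeleton, via the conditional form of \cref{lem:copy-transfer} (this is the analysis-only use with amplitudes up to $4$). The decoupled matrix is then symmetrized by independent Rademacher signs $\sigma_e$, which is legitimate since the $X_e-X'_e$ are symmetric. It is then a symmetric pair operator to which \cref{thm:packet-trace} applies, with $C_0=4$, the same retained amplitudes as in \cref{prop:odd-null} multiplied by $|X_e||X_f|\le4$, and $R=\cG_{\rm o}+\cG'_{\rm o}+2dI$. The hypotheses \ref{P-monotone}--\ref{P-mass} are checked exactly as in \cref{prop:odd-null}, with $\beta=C\alpha$. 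The mean of the mark factors is $\E|X_e|\lesssim\alpha$, which is also what makes the fresh masses match $d$.

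For the transfer, \cref{eq:copy-mean} needs $\E'R\preceq C\Gamma_{\rm o}$. This should follow from $\E'g'_u(S)\le2\alpha^2$ conditional on the skeleton, summed over roots, giving $\preceq C(d+\cG_{\rm o})$. The pointwise event $\cH$ uses the crude bound $g_u\le1$, so $R\preceq(2n+2d)I\preceq n^{O(1)}\Gamma_{\rm o}$. The $\sqrt{p_0}$ term is absorbed as in \cref{prop:even-comparison}.

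\emph{Main obstacle.} I expect the hardest step to be the decoupling itself. The quadratic form $\sum_gX_eX_fA_g$ has coupled marks, so passing to the product with an independent copy needs a matrix decoupling inequality at the level of quadratic forms with normalizer. My first attempt would be the identity $X_eX_f=\tfrac12\E'\bigl[(X_e+X'_e)(X_f+X'_f)-(X_e-X'_e)(X_f-X'_f)\bigr]$ plus cross terms, each term symmetrized so that \cref{thm:packet-trace} applies, together with controlling $\E'$ of the copy's diagonal so that \cref{eq:copy-mean} holds.
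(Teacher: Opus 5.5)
Your overall architecture (exact centering, a decoupled pair trace for $\cQ$ transferred through \cref{lem:copy-transfer}, and a separate bound for $\mu\cL$) is the paper's. The gap is that the step you yourself call the main obstacle is never resolved, and the identity you propose for it is wrong: $\tfrac12\E'\bigl[(X_e+X'_e)(X_f+X'_f)-(X_e-X'_e)(X_f-X'_f)\bigr]=\E'[X_eX'_f+X'_eX_f]=0$. The missing idea is that no decoupling inequality, and no passage to $\sum_gX_eX'_fA_g$, is needed. Let $Y'$ be an independent copy of the marks on the same candidate skeleton and put $\Delta_e:=Y_e-Y'_e=X_e-X'_e$. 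Then $\E'[\Delta_e\Delta_f\mid Y]=X_eX_f$ for distinct labels, so $\cQ=\E'[\cQ_\Delta\mid Y]$ exactly, where $\cQ_\Delta:=\sum_g\Delta_{e(g)}\Delta_{f(g)}A_g$. This is literally the form $X(A)=\E_{A'}S(A,A')$ required by \cref{lem:copy-transfer}. Exchangeability of each pair $(Y_e,Y'_e)$ makes $\sign(\Delta_e)$ independent Rademacher given the magnitudes $|\Delta_e|$ (though not given $Y$). Hence \cref{thm:packet-trace} applies to $\cQ_\Delta$ under the joint law with no added signs, and the Markov step inside the transfer lemma turns this into a bound for $\cQ$. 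Relatedly, you must condition on the candidate process only. If the real marks $\theta_e$ are fixed, as you write, then $\E Y_e=\rho\theta_e\ne\mu$, so $X_e$ is not centered, and both the identity above and any Bernstein step for $\cL$ break.

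Second, your normalizer $R=\cG_{\rm o}+\cG'_{\rm o}+2dI$ violates \ref{P-diagonal}. Take an exact-two cell $\{e,f\}$ with $\theta_e=\theta'_f=1$ and $\theta_f=\theta'_e=0$. The decoupled amplitude is $|\Delta_e\Delta_f|=1$ (likewise $|X_eX'_f|\approx1$), while $g_u(S)+g'_u(S)=2\alpha(1+\alpha)=o(1)$, because two separate completions never see the cross product. The cap has to be built from the decoupled amplitudes themselves: $g_u^\Delta:=\min\{4,h_u^\Delta\}$, with $h_u^\Delta(S):=\sum|\Delta_e\Delta_f|$ over disjoint pairs in $\cL_u(S)$. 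The comparison with $\Gamma_{\rm o}$ then rests on $\E'|\Delta_e|\le\theta_e+\alpha$, which is exactly why $h_u$ carries the weights $\theta_e+\alpha$. Independence across labels and concavity of the cap give $\E'g_u^\Delta\le4g_u$. A nonzero $g_u^\Delta$ forces $g_u\ge\alpha^2$, so $g_u^\Delta\le4\alpha^{-2}g_u$ pointwise; that is, $C=4$, $M=4\alpha^{-2}$, $\zeta=0$.

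Finally, your refined linear part cites \cref{thm:single-trace}, which concerns residual matchings $A_{C_a}$. But $L_e$ moves a row by $C(e)\sd C(f)$ with a configuration-dependent partner $f$ that carries no sign, so \cref{eq:parity-identity} and \cref{lem:active} do not apply. No trace is needed. Since $\sum_eL_e^2\preceq2nI$ and $\E X_e^2\le\alpha$, matrix Bernstein gives $\|\cL\|\lesssim\sqrt{n\alpha\log N}+\log N$. After multiplying by $\mu=\rho\alpha$ and normalizing by $\Gamma_{\rm o}\succeq n\alpha^2I$, the squared ratio to $\ell/d$ is at most $C_{k,\delta}(\sqrt{\ell/n}\,\log n+\ell\log^2n/n)=o(1)$, using $\rho^2\alpha\le C_{k,\delta}\sqrt{\ell/n}$ and $\log N\le\ell\log n$. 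The Bernstein logarithm is affordable because $\mu$ is tiny, which is what the dummy candidates buy.
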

\begin{proof}
\emph{Linear part.}  Each retained edge belongs to two labels and each
root graph is a partial matching, so
$\sum_eL_e^2=2\sum_gA_g^2\preceq2\sum_uK_u^2\preceq2nI$.  Conditional
on the skeleton the $X_eL_e$ are independent and centered, with
$\|X_eL_e\|\le2$ and variance proxy
$\|\sum_e\E[X_e^2]L_e^2\|\le2n\alpha$, since
$\Var(Y_e)\le\E Y_e^2=\alpha$.  Matrix Bernstein gives
$\|\cL\|\le C_B(\sqrt{n\alpha\log N}+\log N)$ with probability
$1-N^{-B}$, hence, with $\mu=\rho\alpha$ and
$\Gamma_{\rm o}\succeq dI=n\alpha^2I$,
\[
 \left\|\Gamma_{\rm o}^{-1/2}(\mu\cL)\,\Gamma_{\rm o}^{-1/2}\right\|
 \le C_B\,\rho\left[\sqrt{\frac{\log N}{n\alpha}}
 +\frac{\log N}{n\alpha}\right].
\]
Relative to $\sqrt{\ell/d}=\sqrt\ell/(\sqrt n\,\alpha)$, the two terms
have squared ratios at most
$C_{k,\delta}(\sqrt{\ell/n}\,\log n+\ell\log^2n/n)=o(1)$, using
$\log N\le\ell\log n$, $\rho^2\alpha\le C_{k,\delta}\sqrt{\ell/n}$ from
\cref{eq:odd-alpha}, and \cref{eq:range}; so the linear part is
$o(\sqrt{\ell/d}\,)$, uniformly in $\rho\le1$.

\emph{Quadratic part, decoupled.}  Let $Y'$ be an independent copy of
the marks on the same skeleton and put $\Delta_e:=Y_e-Y'_e\in[-2,2]$.
For distinct labels $\E'[\Delta_e\Delta_f\mid Y]=X_eX_f$, so
$\cQ=\E'[\cQ_\Delta\mid Y]$ with
$\cQ_\Delta:=\sum_g\Delta_{e(g)}\Delta_{f(g)}A_g$.  Under the joint law
of $(Y,Y')$, conditional on the skeleton and the magnitudes
$(|\Delta_e|)_e$, the signs $\sign(\Delta_e)$ of the nonzero increments
are independent Rademacher variables, by exchangeability of each pair
$(Y_e,Y'_e)$ and independence across labels.  (The Rademacher property
holds given the magnitudes, not conditional on the realized $Y$ alone;
this is why the trace bound is proved under the joint law and only then
transferred.)  Define
\[
 h_u^\Delta(S):=\sum_{\substack{\{e,f\}\subseteq\cL_u(S)\\
 C(e)\cap C(f)=\varnothing}}|\Delta_e\Delta_f|,
 \qquad
 g_u^\Delta:=\min\{4,h_u^\Delta\},
 \qquad
 \Gamma_\Delta:=dI+\diag\Bigl(\sum_{u=1}^ng_u^\Delta(S)\Bigr),
\]
and apply \cref{thm:packet-trace}, with $C_0=C_0(r)\ge4$ sufficiently
large, signs $\sign(\Delta_e)$, and amplitudes
$\omega_g=|\Delta_{e(g)}\Delta_{f(g)}|\,
\1\{\cL_u(S)=\cL_u(T)=\{e,f\}\}\le4$: insertion destroys old
indicators and increases $\Gamma_\Delta$ (\ref{P-monotone}); a retained
edge contributes $|\Delta_e\Delta_f|\le g_u^\Delta$ at both endpoint
cells, at most one retained edge meets each cell, and a fixed label
carries at most one retained edge per row (\ref{P-diagonal},
\ref{P-port}); and the masses verified in the proof of
\cref{prop:odd-null} apply verbatim with the realness factor $\alpha$
replaced by $\E'|\Delta_e|\le2\alpha$, so \ref{P-mass} holds with
$\beta=C_r\alpha$.  Hence, under the joint law,
$\|\Gamma_\Delta^{-1/2}\cQ_\Delta\Gamma_\Delta^{-1/2}\|
\le C_{r,\delta,B}\sqrt{\ell/d}$ except with probability $n^{-B\ell}$.

\emph{Transfer.}  Conditional on $Y$, independence across labels and
$\E'|\Delta_e|\le\theta_e+\alpha$ give $\E'h_u^\Delta\le h_u$, so
concavity and $\min\{4,x\}\le4\min\{1,x\}$ give
$\E'g_u^\Delta\le4g_u$; pointwise, $g_u^\Delta>0$ forces a disjoint
candidate pair in the cell, whence $g_u\ge\alpha^2$ and
$g_u^\Delta\le4\alpha^{-2}g_u$.  Thus
$\E'\Gamma_\Delta\preceq4\Gamma_{\rm o}$ and
$\Gamma_\Delta\preceq4\alpha^{-2}\Gamma_{\rm o}$, while the rootwise
matching and the cap give
$|v^{\mathsf T}\cQ_\Delta v|\le v^{\mathsf T}\Gamma_\Delta v$
pointwise, as in \cref{lem:odd-uniqueness}.  The conditional form of
\cref{lem:copy-transfer}, with $C=4$,
$M=4\alpha^{-2}\le C_{k,\delta}\,n$ (by \cref{eq:odd-alpha} and
$\rho\le1$), $\zeta=0$, and $p_0=n^{-B\ell}$ with $B$ large, yields
$\|\Gamma_{\rm o}^{-1/2}\cQ\,\Gamma_{\rm o}^{-1/2}\|
\le C_{r,\delta}\sqrt{\ell/d}$ with probability $1-o(1)$.  Combining
the two parts through \cref{eq:odd-centering} proves the claim.
\end{proof}

\subsection{Detection and weak recovery}

\begin{proposition}\label{prop:odd-witness}
There is a constant $c_r^{\rm sig}>0$, depending only on $r$, such that
under the planted law, with probability $1-o(1)$,
\begin{equation}\label{eq:odd-witness}
 \lambda_{\max}(H_{\rm o})\ge c_r^{\rm sig}\rho^2.
\end{equation}
\end{proposition}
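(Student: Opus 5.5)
Mirroring the even witness (\cref{lem:even-witness}), I will test the gauged operator with the vector $v:=\Gamma_{\rm o}^{1/2}\1$. Since $\Gamma_{\rm o}$ is diagonal and commutes with $Z$, we have $\lambda_{\max}(H_{\rm o})=\lambda_{\max}(\Gamma_{\rm o}^{-1/2}Z\cC_{\rm o}Z\,\Gamma_{\rm o}^{-1/2})$. The Rayleigh quotient of $v$ is therefore
\[
 \frac{\1^{\mathsf T}Z\cC_{\rm o}Z\1}{\Tr\Gamma_{\rm o}}.
\]
By \cref{eq:odd-raw-trace}, the denominator is at most $C_rdN$ with probability $1-o(1)$. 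It thus suffices to show $\1^{\mathsf T}Z\cC_{\rm o}Z\1\ge c_r\rho^2dN$ with probability $1-o(1)$.

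For the numerator I will use the exact centering \cref{eq:odd-centering}:
\[
 \1^{\mathsf T}Z\cC_{\rm o}Z\1=\rho^2\,\1^{\mathsf T}\cP_{\rm o}\1+\1^{\mathsf T}\cQ\1+\mu\,\1^{\mathsf T}\cL\1 .
\]
The population term is handled by \cref{eq:odd-pop-conc}. That bound together with $\1^{\mathsf T}J_{2r}\1=Nb_{2r,\ell}$ gives $\1^{\mathsf T}\cP_{\rm o}\1\ge d_\circ N-o(d)N$. Then \cref{lem:odd-palm-degree} gives $d_\circ\ge c_rd$, so this term is at least $\tfrac{c_r}{2}\rho^2dN$.

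For the fluctuation terms I will not use the operator-norm comparison \cref{prop:odd-comparison}. Its bound $\sqrt{\ell/d}$ times $\Tr\Gamma_{\rm o}$ is only $\rho^2/\sqrt{C_{k,\delta}}\cdot dN$ in scale, which would still be acceptable after enlarging $C_{k,\delta}$. A direct second-moment argument conditional on the skeleton is cleaner, so I will use that.

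Conditional on the skeleton, the $X_e$ are independent and centered, and both scalar statistics are mean zero. For the quadratic term,
\[
 \Var(\1^{\mathsf T}\cQ\1)=\sum_{\{e,f\}}\E X_e^2\,\E X_f^2\Bigl(\sum_{g\ni\{e,f\}}2\Bigr)^2\le\alpha^2\sum_{\{e,f\}}w_{ef}^2,
\]
where $w_{ef}$ is the number of retained row edges carried by the pair. For the linear term, $\Var(\1^{\mathsf T}\cL\1)\le\alpha\sum_e(\1^{\mathsf T}L_e\1)^2$.

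The main obstacle is bounding these weight sums. A single pair $\{e,f\}$ can carry up to $\Delta_{r,\ell}$ row edges, so $w_{ef}\le\Delta_{r,\ell}$. Also $\sum_{\{e,f\}}w_{ef}=\tfrac12\1^{\mathsf T}\sum_gA_g\1$, whose expectation is of order $dN/\alpha^2$. Hence
\[
 \E\,\alpha^2\sum w_{ef}^2\lesssim \Delta_{r,\ell}\,dN .
\]
Comparing with $(\rho^2dN)^2$, the ratio is $\Delta_{r,\ell}/(\rho^4dN)\le C_r(\ell/n)^{2r}/(\rho^4d)\to0$. The linear term is treated the same way: $\1^{\mathsf T}L_e\1\le2\Delta_{r,\ell}$ per partner, and the number of candidates with a retained partner is about $n\Lambda^2\alpha$-scale. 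I expect the same ratio to vanish after using $\mu=\rho\alpha$ and $\alpha=o(1)$. I will check these expectation counts via the Mecke equation exactly as in \cref{lem:odd-palm-degree}, and then apply Markov's inequality to the conditional variances.

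Chebyshev then gives that both fluctuation terms are $o(\rho^2dN)$ with probability $1-o(1)$. Combining this with the denominator bound yields $\lambda_{\max}(H_{\rm o})\ge c_r^{\rm sig}\rho^2$, with $c_r^{\rm sig}$ of order $c_r/(4C_r)$.

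If the variance bookkeeping for $\cL$ turns out to be delicate, the fallback is the route noted above. Take the Rayleigh quotient of $Z\cC_{\rm o}Z$ minus $\rho^2\cP_{\rm o}$, bound it by \cref{prop:odd-comparison} times the norm of the test vector, and absorb the resulting $C_{r,\delta}\sqrt{\ell/d}=O(\rho^2/C_{k,\delta})$ into the signal by choosing $C_{k,\delta}$ large. This is legitimate because $c_r^{\rm sig}$ is independent of $C_{k,\delta}$, as the ratio $d_\circ/d$ is.
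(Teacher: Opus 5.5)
Your proposal is correct, and its primary route differs from the paper's. Your fallback is exactly the paper's proof: it tests the gauged normalized matrix with $\Gamma_{\rm o}^{1/2}\1$, uses \cref{eq:odd-pop-conc,eq:odd-raw-trace} and \cref{lem:odd-palm-degree} for the population term and the denominator, and controls the whole centered part $\cQ+\mu\cL$ at once through the operator-norm comparison \cref{prop:odd-comparison}. It absorbs $C_{r,\delta}\sqrt{\ell/d}=O(\rho^2/C_{k,\delta})$ by enlarging the sample constant. Note that this scale is $\rho^2/C_{k,\delta}$, not $\rho^2/\sqrt{C_{k,\delta}}$ as you first wrote.

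Your primary route replaces that norm bound with a conditional Chebyshev bound on the two scalars $\1^{\mathsf T}\cQ\1$ and $\mu\,\1^{\mathsf T}\cL\1$. Because retention depends only on the candidate skeleton, both scalars are centered given it. Their conditional variances are $4\sum w_{ef}^2\Var X_e\Var X_f\le4\alpha^2\sum w_{ef}^2$ (you dropped a harmless factor $4$) and $\mu^2\alpha\sum_e(\1^{\mathsf T}L_e\1)^2$.

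What your route buys is elementarity and a slightly stronger witness statement. It bypasses the decoupling and the pair trace theorem entirely. The fluctuations are also $o(\rho^2dN)$ for every sample constant, because $\Delta_{r,\ell}/N$ and $\Tr P_C/N$ are polynomially small. The paper's route is instead a two-line reuse of a comparison it needs anyway for recovery.

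For the linear term, your partner-counting sketch is vaguer than it needs to be. The clean bound is $\1^{\mathsf T}L_e\1\le\Tr P_{C(e)}$, since $L_e$ is a partial matching supported on rows where $e$ is legal. This gives $\sum_e(\1^{\mathsf T}L_e\1)^2\le4\Tr P_C\cdot\#\{\text{retained edges}\}$. Since $\Tr P_C/N\le C_r(\ell/n)^r$, the Chebyshev ratio is then at most $C_r\alpha(\ell/n)^r/(\rho^2d)=C_r(\ell/n)^r/(\rho^2n\alpha)\to0$, using $\rho^2\alpha\asymp C_{k,\delta}\sqrt{\ell/n}$.
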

\begin{proof}
On the events of \cref{prop:odd-population}, using
\cref{lem:odd-palm-degree},
\[
 \1^{\mathsf T}\cP_{\rm o}\1
 =d_\circ N+o(d)N\ge c_rdN,
 \qquad
 \1^{\mathsf T}\Gamma_{\rm o}\1
 =\Tr\Gamma_{\rm o}\le C_rdN.
\]
Test the gauged normalized matrix with
$g=\Gamma_{\rm o}^{1/2}\1/\|\Gamma_{\rm o}^{1/2}\1\|$.
By \cref{prop:odd-comparison}, its Rayleigh quotient is at least
$c_r\rho^2-C_{r,\delta}\sqrt{\ell/d}$.  By
\cref{eq:odd-d-scale}, the second term is at most
$C_{r,\delta}\rho^2/C_{k,\delta}$; choose the sample constant large.
\end{proof}

\begin{theorem}[Odd detection]\label{thm:odd-detection}
If $m\ge C_{k,\delta}\rho^{-2}\Dkl$ with $C_{k,\delta}$ sufficiently
large, a gap-free Krylov run on $H_{\rm o}$
to additive accuracy $c_r^{\rm sig}\rho^2/4$, followed by a fixed
threshold between the null and planted constants, achieves strong
detection.
\end{theorem}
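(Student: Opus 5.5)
The plan mirrors the proof of \cref{prop:even-detection}: compare the null upper bound from \cref{prop:odd-null} with the planted lower bound from \cref{prop:odd-witness}, and place a threshold between them.

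\textbf{Null side.} Set $\gamma:=c_r^{\rm sig}\rho^2/4$ and take the threshold $\tau:=c_r^{\rm sig}\rho^2/2$. By \cref{eq:odd-d-scale}, $d=\Theta_r(C_{k,\delta}^2\rho^{-4}\ell)$. Hence the null bound in \cref{prop:odd-null} satisfies $C_{r,\delta,B}\sqrt{\ell/d}\le C'_{r,\delta}\rho^2/C_{k,\delta}$. Enlarge $C_{k,\delta}$ until this is at most $c_r^{\rm sig}\rho^2/4<\tau$. Under the null, outside an event of probability $n^{-B\ell}$, we then have $\lambda_{\max}(H_{\rm o})\le\|H_{\rm o}\|<\tau$. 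The upper half of \cref{lem:krylov} says every returned Rayleigh quotient is at most $\lambda_{\max}$, so the test outputs \textsf{null}.

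\textbf{Planted side.} \Cref{prop:odd-witness} gives $\lambda_{\max}(H_{\rm o})\ge c_r^{\rm sig}\rho^2$ with probability $1-o(1)$. The lower half of \cref{lem:krylov} returns a quotient of at least $c_r^{\rm sig}\rho^2-\gamma=\tfrac34c_r^{\rm sig}\rho^2>\tau$, so the test outputs \textsf{planted}. \Cref{lem:krylov} applies because $\|H_{\rm o}\|\le1$ deterministically by \cref{lem:odd-uniqueness}(3). The iteration count is $O(\rho^{-1}\log N)$, and each matrix--vector product costs $n^{\ell+O_k(1)}$ once the candidate cap \cref{eq:candidate-cap} is enforced.

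\textbf{Error accounting.} The detector outputs \textsf{null} in three situations: Poisson subsampling fails (probability $e^{-\mu/8}$), candidate generation aborts (superpolynomially small probability), or the Krylov run fails (probability $N^{-2}$). Outputting \textsf{null} in these cases cannot hurt under $Q_m$. Under $P_m$ they add $o(1)$ to the error. The statement is about the planted mixture $P_m$, while \cref{prop:odd-witness} holds for each fixed $\xstar$. The bounds there are uniform in $\xstar$ by gauge symmetry, so averaging preserves them.

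The main obstacle I expect is consistency of constants. The signal constant $c_r^{\rm sig}$ must not degrade as $C_{k,\delta}$ grows, while the null bound must shrink like $1/C_{k,\delta}$. Checking \cref{prop:odd-witness} resolves this: its lower bound $c_r\rho^2$ comes from the Palm estimate of \cref{lem:odd-palm-degree}, where $d_\circ/d$ is a pure $r$-constant. So $C_{k,\delta}$ can be enlarged freely without affecting $c_r^{\rm sig}$.
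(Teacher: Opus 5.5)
Your proposal is correct and follows the paper's proof. Like the paper, you combine the null bound of \cref{prop:odd-null} with the scale \cref{eq:odd-d-scale}, the planted witness of \cref{prop:odd-witness}, and \cref{lem:krylov} applied to the deterministic contraction of \cref{lem:odd-uniqueness}(3), and you enlarge $C_{k,\delta}$ to separate the constants. Your extra checks are accurate and only make explicit what the paper's two-line proof leaves implicit: that $c_r^{\rm sig}$ does not depend on $C_{k,\delta}$, that the bounds are uniform in $\xstar$ by the gauge, and that abort and failure events default to \textsf{null}.
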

\begin{proof}
By \cref{prop:odd-null,eq:odd-d-scale}, under the null
$\|H_{\rm o}\|\le C_{r,\delta}\rho^2/C_{k,\delta}$ with probability
$1-o(1)$, while under planting \cref{prop:odd-witness} gives a top
eigenvalue at least $c_r^{\rm sig}\rho^2$; increase $C_{k,\delta}$ to
separate the constants and apply \cref{lem:krylov}, the deterministic
contraction being \cref{lem:odd-uniqueness}(3).
\end{proof}

\begin{proposition}[Odd weak recovery]\label{prop:odd-recovery}
Under the hypotheses of \cref{thm:odd-detection}, the high-Rayleigh vector
for $H_{\rm o}$, followed by \cref{thm:transfer,lem:rounding,lem:validation},
outputs $\widehat x\in\{\pm1\}^n$ with
$\ip{\widehat x}{\xstar}\ge c_kn$ with probability $1-o(1)$, in
$n^{\ell+O_k(1)}$ time.
\end{proposition}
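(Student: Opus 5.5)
The plan is to follow the proof of \cref{prop:even-recovery} step for step, replacing the even comparison by \cref{prop:odd-comparison} and the Johnson channel $J_r$ by $J_{2r}$.

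\emph{Obtaining a high-Rayleigh vector.} Under planting, \cref{prop:odd-witness} gives $\lambda_{\max}(H_{\rm o})\ge c_r^{\rm sig}\rho^2$. Since $\|H_{\rm o}\|\le1$ by \cref{lem:odd-uniqueness}(3), \cref{lem:krylov} at accuracy $c_r^{\rm sig}\rho^2/4$ returns a unit $v$ with $v^{\mathsf T}H_{\rm o}v\ge\tfrac34c_r^{\rm sig}\rho^2$. This holds with probability $1-o(1)$, using $O_{k,\rho}(\log N)$ products, each of cost $n^{\ell+O_k(1)}$.

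\emph{Transferring the energy to $J_{2r}$.} Set $w:=\Gamma_{\rm o}^{-1/2}Zv$, so $w^{\mathsf T}\Gamma_{\rm o}w=1$ and $w^{\mathsf T}Z\cC_{\rm o}Zw\ge\tfrac34c_r^{\rm sig}\rho^2$. By \cref{prop:odd-comparison} and \cref{eq:odd-d-scale}, the error $C_{r,\delta}\sqrt{\ell/d}$ is at most $C_{r,\delta}\rho^2/C_{k,\delta}$. Enlarging $C_{k,\delta}$ therefore gives $\rho^2w^{\mathsf T}\cP_{\rm o}w\ge\tfrac12c_r^{\rm sig}\rho^2$.

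Next apply \cref{prop:odd-population} and \cref{lem:odd-palm-degree}. They give $\|\cP_{\rm o}-(d_\circ/b_{2r,\ell})J_{2r}\|=o(d)$, and $\|w\|^2\le1/d$ because $\Gamma_{\rm o}\succeq dI$. Hence
\[
 \frac{d_\circ}{b_{2r,\ell}}\,w^{\mathsf T}J_{2r}w\ge\tfrac12c_r^{\rm sig}-o(1).
\]
Dividing by $\|w\|^2\le1/d$ and using $d_\circ\le C_rd$ gives, for $u:=w/\|w\|$,
\[
 u^{\mathsf T}J_{2r}u\ge c'_r\,b_{2r,\ell}.
\]
This is \cref{eq:johnson-energy} with $s=2r$, which is legitimate since $2r=k-1\le\ell$ by \cref{eq:range}. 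The vector the algorithm actually holds is $\Gamma_{\rm o}^{-1/2}v/\|\cdot\|=Zu$, because $\Gamma_{\rm o}$ is diagonal and commutes with $Z$, and $Z^2=I$. No knowledge of $\xstar$ is needed.

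\emph{Rounding and validation.} \Cref{thm:transfer} gives $\bar x^{\mathsf T}\rhoone(Zu)\bar x\ge c_r$, and $\rhoone(Zu)$ is computable in $n^{\ell+O_k(1)}$ time. \Cref{lem:rounding} then produces an $O(\log n)$ list containing some $z$ with $|\ip z\xstar|\ge c'n$.

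Here the recovered sign is ambiguous, since the density matrix is invariant under $\xstar\mapsto-\xstar$. This is resolved by the odd-$k$ clause of \cref{lem:validation}: maximize the signed empirical advantage over $\cL\cup(-\cL)$ on the independent Poisson validation pool. The validation pool has mean $m_{\rm s}\gg\rho^{-2}\log n$, so it yields $\ip{\widehat x}\xstar\ge c_kn$.

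The failure events to control are the Krylov failure, the $o(1)$ events of \cref{prop:odd-comparison,prop:odd-witness,prop:odd-population}, the candidate-cap abort, and the Poisson subsampling failure. Each is $o(1)$, so a union bound gives success with probability $1-o(1)$.

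\emph{Main obstacle.} The only delicate point is the second step, passing from $\cP_{\rm o}$ to $J_{2r}$. It requires that the concentration error $o(d)$ be measured against $\|w\|^2\le1/d$, and that $d_\circ/d$ be bounded above and below by constants. Both facts are supplied by \cref{lem:odd-palm-degree} and \cref{prop:odd-population}, so this step should go through once the constants are ordered correctly: first fix $c_r^{\rm sig}$, then choose $C_{k,\delta}$ large.
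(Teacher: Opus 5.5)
Your proposal is correct and follows essentially the same route as the paper's proof. Both take the Krylov vector, unweight it by $\Gamma_{\rm o}^{-1/2}$ and gauge, and apply \cref{prop:odd-comparison} with the sample constant chosen after $c_r^{\rm sig}$. Both then pass from $\cP_{\rm o}$ to $J_{2r}$ via \cref{eq:odd-pop-conc}, $d\|w\|^2\le1$ and $d_\circ\le C_rd$, and finish with \cref{thm:transfer} at $s=2r$, gap-free rounding, and signed validation over $\cL\cup(-\cL)$ to fix the orientation.
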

\begin{proof}
Let $v$ be a unit vector with
$v^{\mathsf T}H_{\rm o}v\ge c_r\rho^2$, and put
$w:=\Gamma_{\rm o}^{-1/2}Zv$ and $u:=w/\|w\|$, so that
$w^{\mathsf T}\Gamma_{\rm o}w=1$.  Choosing the leading sample constant
so that the comparison error in \cref{eq:odd-comparison} is at most
$c_r\rho^2/2$ gives $w^{\mathsf T}\cP_{\rm o}w\ge c'_r$; by
\cref{eq:odd-pop-conc} and the floor $d\|w\|^2\le1$,
\[
 \frac{d_\circ}{b_{2r,\ell}}w^{\mathsf T}J_{2r}w
 \ge c'_r-o(d)\|w\|^2\ge c'_r/2,
\]
and since $d_\circ\|w\|^2\le C_r$, division by $\|w\|^2$ gives
$u^{\mathsf T}J_{2r}u\ge c''_rb_{2r,\ell}$.  The vector available to
the algorithm is
$\Gamma_{\rm o}^{-1/2}v/\|\Gamma_{\rm o}^{-1/2}v\|=Zu$, because the
diagonal gauge commutes with the normalizer.  Apply
\cref{thm:transfer} with $s=2r$, then gap-free list rounding and
independent validation; since $k$ is odd, validation over the list and
its negatives fixes the orientation.
\end{proof}

\section{From weak to exact recovery: proof of the inference theorems}
\label{sec:exact}

\subsection{Boosting by targeted voting}

The spectral stage ends at constant overlap; reducing the error further
is independent of the hierarchy, and its target-error dependence
reappears in the minimax bound of \cref{thm:minimax}.

\begin{proof}[Proof of \cref{thm:main-boost} (upper bound)]
Apply \cref{lem:subsample} to the $m_{\rm c}$ fresh clauses with
$\mu=m_{\rm c}/2$, obtaining a Poisson cleanup pool; the failure
probability $e^{-m_{\rm c}/16}$ is $o(1)$ since $m_{\rm c}\ge n$, by
\cref{eq:cleanup-curve} and $\vartheta\le\vartheta_0(k)<1$.  For
every cleanup clause, draw a target $i\in F$ uniformly and cast the vote
\begin{equation}\label{eq:targeted-vote}
 Y_{F,i}:=y_F\prod_{j\in F\setminus\{i\}}\widetilde x_j .
\end{equation}
Let $U_i$ be the sum of the votes targeted at $i$ and output
$\widehat x_i:=\sign(U_i)$, breaking a zero tie uniformly.

Choose $\sigma\in\{\pm1\}$ so that
$q:=n^{-1}\ip{\sigma\widetilde x}{\xstar}\ge c$.  This sign is used only
in the analysis; the algorithm never needs to know it.  Conditional on
targeting $i$, the remaining set
$F\setminus\{i\}$ is a uniform $(k-1)$-subset of $[n]\setminus\{i\}$,
and sampling without replacement differs from independent sampling by
$O_k(1/n)$, so
\begin{equation}\label{eq:vote-mean}
 \E\bigl[\sigma^{k-1}x_i^\star\,Y_{F,i}\ \big|\ i\text{ targeted}\bigr]
 =\rho\bigl(q^{k-1}+O_k(n^{-1})\bigr)\ \ge\ c_k\rho .
\end{equation}
By Poisson marking, the votes split into $n$ independent Poisson
coordinate processes of mean $\bar m:=m_{\rm c}/(2n)$ each.  For one
coordinate, the moment generating function of the compound Poisson sum
gives, with $V:=\sigma^{k-1}x_i^\star Y\in[-1,1]$ and $\E V\ge c_k\rho$,
\[
 \bbP\{\widehat x_i\ne\sigma^{k-1}x_i^\star\}
 \le\inf_{0<s\le1}\exp\bigl(\bar m(\E e^{-sV}-1)\bigr)
 \le\exp\bigl(\bar m(-sc_k\rho+s^2)\bigr)\Big|_{s=c_k\rho/2}
 \le\exp\bigl(-c'_k\rho^2m_{\rm c}/n\bigr).
\]
Choose the constant in \cref{eq:cleanup-curve} so that the right side is
at most $\vartheta^4$.  The coordinate errors are independent.  If
$\vartheta n\ge10\log n$, a Chernoff bound on the dominating
$\operatorname{Bin}(n,\vartheta^4)$ count gives at most $\vartheta n$
errors with probability $1-o(1)$; if $\vartheta n<10\log n$, the same
bound with a larger fixed constant makes each coordinate error
probability at most $n^{-5}$, so no error occurs.  Since
$\sigma^{k-1}=\sigma$ for even $k$ and $\sigma^{k-1}=1$ for odd $k$, the
output matches the identifiability of the model, proving the claimed
bound on $\err_k$; the running time is $O(km_{\rm c}+n)$.
\end{proof}

\subsection{Proof of \texorpdfstring{\cref{thm:main-inference}}{the inference theorem}}

\begin{figure}[t]
\centering
\fbox{\begin{minipage}{0.94\textwidth}\small
\textbf{The inference algorithm at level $\ell$.}
\emph{Input:} $I=(F_a,y_a)_{a\le m}$, level $\ell$, bias $\rho$.
\begin{enumerate}[leftmargin=1.7em,itemsep=0.1em,topsep=0.2em]
\item\emph{Subsample and split.}  Retain a Poisson pool of mean
$\tfrac12C_{k,\delta}\rho^{-2}\Dkl$ (\cref{lem:subsample}), split into
spectral and validation pools; on failure, output \textsf{null} and an
arbitrary assignment.
\item\emph{Build the operator.}  Even $k$: form
$A$, $D$, $\Gamma=D+\bd I$, and $K$ as in \cref{eq:even-operator}.
Odd $k$: root every spectral clause, adjoin the dummy candidate
process, build the mutual exact-two graph and the capped diagonal, and
form $H_{\rm o}$ as in \cref{eq:odd-H}; abort if the candidate count
exceeds $C_{\rm cap}$ of \cref{eq:candidate-cap}.
\item\emph{Detect and retain a high-Rayleigh vector.}  Run
\cref{lem:krylov} once on $K$ (resp.\ $H_{\rm o}$), output
\textsf{planted} iff the Rayleigh quotient is at least $\rho/3$
(resp.\ $(c_r^{\rm sig}/2)\rho^2$), and retain the vector $v$.
\item\emph{List.}  Form
$u':=\Gamma^{-1/2}v/\|\Gamma^{-1/2}v\|$ (with
$\Gamma_{\rm o}$ at odd arity), compute $\rhoone(u')$ exactly, project
onto eigenvalues $\ge c_k/4$, and apply randomized clipped rounding
$O_k(\log n)$ times (\cref{lem:rounding}).
\item\emph{Validate.}  Output the maximizer of $|T(z)|$ over the list
on the validation pool; for odd $k$, of $T(z)$ over the list and its
negatives (\cref{lem:validation}).
\end{enumerate}
\emph{Cost:} $n^{\ell+O_k(1)}$ time, $\binom n\ell\,n^{O_k(1)}$ working
memory.
\end{minipage}}
\caption{The algorithm of \cref{thm:main-inference}.  Only the operator
built in Step~2 and the acceptance thresholds depend on the parity of
$k$.}
\label{fig:inference}
\end{figure}

\begin{proof}[Proof of \cref{thm:main-inference}]
The algorithm is summarized in \cref{fig:inference}.  Detection is
\cref{prop:even-detection} for even $k$ and \cref{thm:odd-detection} for
odd $k$; weak recovery is \cref{prop:even-recovery} and
\cref{prop:odd-recovery}.

Every probabilistic ingredient (the subsampling coupling, the trace and
population events, the candidate cap, the Krylov failure, and the
rounding and validation failures) has been bounded uniformly over the
levels in \cref{eq:range} for fixed $(k,\delta,\rho)$, and there are
$O_k(1)$ of them, so the total error is $o(1)$ uniformly.  For the
running time: by \cref{lem:construction}, all operators, matrix--vector
products, and the exact one-particle density matrix cost
$n^{\ell+O_k(1)}$; the candidate count is polynomial by the abort
\cref{eq:candidate-cap}; and \cref{lem:krylov} uses
$O_{k,\rho}(\log N)$ products, on deterministic contractions by
\cref{eq:even-contraction} and \cref{lem:odd-uniqueness}(3), so its
accuracy guarantees apply as stated.
\end{proof}

Combining \cref{thm:main-inference} with \cref{thm:main-boost},
$C_{k,\delta}\rho^{-2}\Dkl+C_k\rho^{-2}n\log n$ clauses give exact
recovery with probability $1-o(1)$.

\section{Refutation: sound certificates at both parities}
\label{sec:refutation}

This section constructs the certificates behind
\cref{thm:main-refutation}; exact evaluation and the sum-of-squares
proofs follow in \cref{sec:evaluation}.  A certificate may not
subsample: it must account for every clause of the given input.  At
even arity this costs nothing: the normalized matrix of
\cref{sec:even}, built from the full input, is already a deterministic
certificate.  At odd arity we root the actual clauses, retain a
same-root disjoint row edge exactly when its pair is the mutual
exact-two pair at both endpoint cells, and charge every omitted edge to
an explicit scalar deficit; the retention is insertion-decreasing, the
capped diagonal plus floor is insertion-increasing (\ref{P-monotone}
again), and the null value is bounded by the pair trace run directly in
the fixed-$m$ model, with no Poissonization and no auxiliary coloring.
\Cref{fig:refutation} summarizes the procedure.  This section relies on
\cref{sec:trace}, on \cref{sec:even}
(\cref{eq:even-operator,prop:even-null}), and on the exact-two rule and
the counting quantities $D_+$ and $\Delta_{r,\ell}$ from
\cref{sec:odd}; it is independent of the rest of \cref{sec:odd}.

\begin{figure}[t]
\centering
\fbox{\begin{minipage}{0.94\textwidth}\small
\textbf{The refutation certificate at level $\ell$.}
\emph{Input:} $I=(F_a,y_a)_{a\le m}$, level $\ell$, target advantage
$\eps$.
\begin{enumerate}[leftmargin=1.7em,itemsep=0.1em,topsep=0.2em]
\item\emph{Even $k$.}  Build $A$, $D$, $\Gamma$ from \emph{all} $m$
clauses (\cref{eq:even-operator} with $\bar m=m$); compute an exact
enclosure $U\ge\|K\|$ at tolerance $\eps/4$ by rational bisection with
inertia tests (\cref{sec:solvers}); output $\Cert(I):=2U$.  No
randomness, no subsampling, no blocking.
\item\emph{Odd $k$.}  If $m$ exceeds the threshold size, partition the
clauses into deterministic blocks (\cref{lem:blocking}).  In each
block: root every clause uniformly at random; enumerate the retained
mutual exact-two edges, the capped diagonal $G_{\rm ref}$, and the
scalars $\Def$, $N_{\rm int}$; compute an exact enclosure
$U\ge\max\{0,\lambda_{\max}(H_{\rm ref})\}$; output the block value
\cref{eq:odd-cert-value}.  Combine blocks by the weighted clipped
average of \cref{lem:blocking}.
\end{enumerate}
\emph{Guarantee:} the output dominates $\sup_x|V_I(x)|$ on every input
and for every rooting; only the value attained under the null is
probabilistic.
\end{minipage}}
\caption{The algorithm of \cref{thm:main-refutation}.}
\label{fig:refutation}
\end{figure}

\subsection{Even arity: a deterministic certificate}

\begin{proposition}[Even deterministic certificate]\label{prop:even-certificate}
Build $A,D,\Gamma,K$ from \emph{all} $m$ input clauses, with
$\bar m=m$ in \cref{eq:even-mean-degree}.  For every input and every
$x\in\{\pm1\}^n$,
\begin{equation}\label{eq:even-certificate}
 |V_I(x)|\ \le\ 2\|K\| .
\end{equation}
Consequently, any exactly verified $U\ge\|K\|$ gives the sound
certificate $\Cert(I):=2U$.  If the bisection of \cref{sec:solvers} is
run to tolerance $U\le\|K\|+\eps/4$, using the a priori bracket
$[0,1]$ from \cref{eq:even-contraction}, then for every
$m\ge C_{k,\delta}\eps^{-2}\Dkl$ the certificate is at most
$\eps$ under the null with probability $1-o(1)$.  The procedure is
deterministic and requires neither subsampling nor blocking.
\end{proposition}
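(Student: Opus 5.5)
Our plan is to establish the pointwise inequality \cref{eq:even-certificate} by testing $A$ against the lifted vector $v_x:=(x^S)_{S\in\Omega_\ell}$, and then to obtain the null bound from \cref{prop:even-null} applied to the full fixed-$m$ input.

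\emph{Soundness.} Fix $x\in\{\pm1\}^n$. If $S\sd T=F_a$ then $x^Sx^T=x^{F_a}$. Combining this with \cref{eq:support-count} gives
\[
 v_x^{\mathsf T}Av_x=\sum_a y_ax^{F_a}\Tr P_{F_a}=\Tr P\cdot mV_I(x),
\]
where $\Tr P:=\binom{2r}r\binom{n-2r}{\ell-r}$ is the common number of legal rows. In the same way, $v_x^{\mathsf T}\Gamma v_x=\Tr D+\bd N=m\Tr P+\bd N$. The relation $Nb_{r,\ell}=\binom n{2r}\Tr P$ together with the definition \cref{eq:even-mean-degree} with $\bar m=m$ gives $\bd N=m\Tr P$ exactly. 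Hence $v_x^{\mathsf T}\Gamma v_x=2m\Tr P$.

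Set $w:=\Gamma^{1/2}v_x$. Then $|v_x^{\mathsf T}Av_x|=|w^{\mathsf T}Kw|\le\|K\|\,w^{\mathsf T}w=\|K\|\,v_x^{\mathsf T}\Gamma v_x$, and substituting the two identities yields $|V_I(x)|\le2\|K\|$ on every input. Any exactly verified $U\ge\|K\|$, obtained via the pencil criterion \cref{eq:pencil}, therefore gives the sound certificate $\Cert(I):=2U$. The factor $2$ is exactly the price of the floor $\bd I$ matching the mean degree. No randomness enters this step: it is a deterministic algebraic identity.

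\emph{Null value.} We apply \cref{prop:even-null} to exactly $m$ i.i.d.\ clauses with $\bar m=m$, which the proposition explicitly allows. This gives $\|K\|\le C_{r,\delta}\sqrt{\ell/(r\bd+\ell)}$ with probability $1-n^{-B\ell}$. Since $\bd=\Theta_r(m\ell^r/n^r)$, the hypothesis $m\ge C_{k,\delta}\eps^{-2}\Dkl$ gives $\bd\ge c_rC_{k,\delta}\eps^{-2}\ell$. Thus $\|K\|\le C'_{r,\delta}\eps/\sqrt{C_{k,\delta}}\le\eps/4$ once $C_{k,\delta}$ is large.

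Bisection started from the a priori bracket $[0,1]$ of \cref{eq:even-contraction}, run to tolerance $\eps/4$, then outputs $U\le\|K\|+\eps/4\le\eps/2$, so $\Cert(I)=2U\le\eps$. The bisection uses exact rational inertia tests, since $A$ and $\Gamma$ have rational entries once $\bd$ is rational. The procedure is deterministic and uses all clauses.

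\emph{Main obstacle.} The only point needing care is the exact balance $\Tr D=\bd N$ when $\bar m=m$ (rather than a Poisson mean). This is what makes the constant exactly $2$ pointwise, with no dependence on the realized clause multiset. I expect this to go through directly from the counting identity $Nb_{r,\ell}=\binom n{2r}\Tr P$ stated in \cref{sec:slice}.
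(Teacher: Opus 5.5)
Your proposal is correct and follows the paper's proof essentially step for step. The paper also tests with $z_S=x^S$ and uses $\bd N=m\Tr P_F$ to get $z^{\mathsf T}\Gamma z=2m\Tr P_F$, hence $|V_I(x)|\le2\|K\|$. It then bounds the null value by applying \cref{prop:even-null} directly to the full fixed-$m$ instance, with the bisection tolerance adding at most $\eps/2$ to $2\|K\|$.
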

\begin{proof}
For $z_S=x^S$ one has $z_Sz_T=x^{S\sd T}$, so by
\cref{eq:support-count},
\[
 z^{\mathsf T}Az=m\,\Tr P_F\cdot V_I(x),
 \qquad
 z^{\mathsf T}\Gamma z=z^{\mathsf T}Dz+\bd\,\|z\|^2
 =2m\,\Tr P_F,
\]
using $\bd N=m\Tr P_F$.  The generalized Rayleigh inequality applied to
$A$ and to $-A$ proves \cref{eq:even-certificate}.  For the null value,
\cref{prop:even-null} applies directly to the full fixed-$m$ instance,
for every $m$, with no Poisson device; since $\bd$ is increasing in $m$,
the bound $\|K\|\le C_{r,\delta,B}\sqrt{\ell/(r\bd+\ell)}$ is at most
$\eps/4$ for all $m\ge C_{k,\delta}\eps^{-2}\Dkl$ once
$C_{k,\delta}$ is large, and the tolerance adds at most $\eps/2$.
\end{proof}

\subsection{Odd arity: the exact-two certificate graph}
\label{sec:oddref-construction}

Throughout the rest of this section $k=2r+1$, the full input is used
without subsampling, and the only internal randomness is the rooting.
Root every clause independently and uniformly and write
\[
 F_a=\{u_a\}\mathbin{\dot\cup}C_a,
 \qquad |C_a|=2r,
 \qquad
 g_u(x):=\sum_{a:u_a=u}y_ax^{C_a}.
\]
For every $x\in\{\pm1\}^n$,
\begin{equation}\label{eq:odd-cauchy}
 \bigl(mV_I(x)\bigr)^2
 =\Bigl(\sum_ux_ug_u(x)\Bigr)^2
 \le n\sum_ug_u(x)^2
 =n\bigl(m+P_0(x)+P_{\rm int}(x)\bigr),
\end{equation}
where $P_0$ is the ordered sum over distinct same-root pairs with disjoint
residuals and $P_{\rm int}$ the ordered sum over those with intersecting
residuals.  If $N_{\rm int}$ counts the latter ordered pairs, then
$|P_{\rm int}(x)|\le N_{\rm int}$.

By \cref{eq:odd-Delta}, a disjoint physical pair has exactly
$\Delta_{r,\ell}$ compatible ordered rows; since the involution
$S\mapsto S\sd C_a\sd C_b$ has no fixed point, the pair generates
$q_0:=\Delta_{r,\ell}/2$ compatible unordered row transitions.

The certificate graph is the mutual exact-two graph of
\cref{sec:odd-construction}, built from the rooted clauses themselves;
no dummy candidates and no marks are needed.  For a root $u$ and a row
$S$, let $\cL_u(S):=\{a:u_a=u,\ |C_a\cap S|=r\}$.  A \emph{complete row
edge} is a tuple $e=(\{a,b\},\{S,T\})$ with $a\ne b$, $u_a=u_b=:u$,
$C_a\cap C_b=\varnothing$, both residuals legal at $S$, and
$T=S\sd C_a\sd C_b$, recorded once per unordered row pair, so that each
unordered disjoint physical pair generates exactly $q_0$ complete row
edges.  Retain $e$ if and only if
\begin{equation}\label{eq:hard-isolation}
 \cL_u(S)=\cL_u(T)=\{a,b\},
\end{equation}
write $\omega_e\in\{0,1\}$ for this retention indicator, and put
$A_e:=E_{S,T}+E_{T,S}$.  Exactly as in \cref{lem:odd-uniqueness}, at
most one retained edge is incident with each cell $(u,S)$; in
particular, for every row $S$ and clause $a$,
\begin{equation}\label{eq:hard-port-capacity}
 \sum_{e:S\in e,\,a\in e}\omega_e\le1.
\end{equation}
Define
\begin{equation}\label{eq:odd-hard-ops}
\begin{gathered}
 C_{\rm ref}:=\sum_e\omega_ey_{a(e)}y_{b(e)}A_e,
 \qquad
 G_{\rm ref}(S,S):=\sum_{u=1}^ng^{\rm ref}_u(S),\\
 g^{\rm ref}_u(S):=\1\bigl\{\exists\,\{a,b\}\subseteq\cL_u(S):
 C_a\cap C_b=\varnothing\bigr\},
\end{gathered}
\end{equation}
and
\begin{equation}\label{eq:odd-dstar}
 d_*:=\binom m2\frac1n
 \frac{\binom{n-1-2r}{2r}}{\binom{n-1}{2r}}
 \frac{\Delta_{r,\ell}}N
 =\Theta_r\!\left(\frac{m^2\ell^{2r}}{n^{2r+1}}\right),
 \qquad
 \Gamma_{\rm ref}:=G_{\rm ref}+d_*I,
 \qquad
 H_{\rm ref}:=\Gamma_{\rm ref}^{-1/2}C_{\rm ref}\Gamma_{\rm ref}^{-1/2}.
\end{equation}
The scalar $d_*$ is the exact fixed-$m$ mean of the complete row
degree.  Inserting a clause can only destroy old retention indicators,
and can only increase the capped diagonal $G_{\rm ref}$: this is
\ref{P-monotone} again, now in the fixed-sample model.  A retained edge
forces $g^{\rm ref}_u=1$ at both endpoint cells, and at most one
retained edge meets each cell, so for every vector $v$,
\begin{equation}\label{eq:odd-hard-contraction}
 |v^{\mathsf T}C_{\rm ref}v|
 \le\sum_e\omega_e(v_{S(e)}^2+v_{T(e)}^2)
 \le v^{\mathsf T}G_{\rm ref}v,
 \qquad\text{hence}\qquad \|H_{\rm ref}\|\le1.
\end{equation}

For an unordered disjoint physical pair $p$, let
$W_p:=\sum_{e\text{ generated by }p}\omega_e$.  With $Q$ the number of
such pairs and $E_{\rm ret}$ the retained edge set,
\begin{equation}\label{eq:hard-deficit}
 \Def:=\sum_p(q_0-W_p)
 =q_0Q-|E_{\rm ret}|
 =\sum_{e\text{ complete}}(1-\omega_e).
\end{equation}
The deficit is an explicitly computable nonnegative integer, and it is
how every discarded transition is paid for; likewise
$N_{\rm int}=\sum_uM_u(M_u-1)-2Q$, with $M_u$ the number of clauses
rooted at $u$.

\begin{lemma}[Deterministic odd certificate]\label{lem:odd-cert}
Set $\bar g:=N^{-1}\Tr G_{\rm ref}$.  For every input, every rooting,
and every $x\in\{\pm1\}^n$,
\begin{equation}\label{eq:odd-pair-upper}
 P_0(x)\le
 \frac{2N(\bar g+d_*)}{\Delta_{r,\ell}}
 \max\{0,\lambda_{\max}(H_{\rm ref})\}
 +\frac{4\Def}{\Delta_{r,\ell}}.
\end{equation}
Consequently, for every certified
$U\ge\max\{0,\lambda_{\max}(H_{\rm ref})\}$,
\begin{equation}\label{eq:odd-cert-value}
 \Cert_{\rm odd}(I):=\frac1m\left\{n\left[
 m+\frac{2N(\bar g+d_*)}{\Delta_{r,\ell}}U
 +\frac{4\Def}{\Delta_{r,\ell}}+N_{\rm int}
 \right]\right\}^{1/2}
\end{equation}
is an upper bound on $\sup_x|V_I(x)|$.
\end{lemma}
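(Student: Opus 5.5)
The plan is to mimic the even certificate of \cref{prop:even-certificate}: test the certificate pencil against the character vector $z_S:=x^S$, $S\in\Omega_\ell$, and account for omitted transitions through $\Def$.

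First, I rewrite $P_0(x)$ as a quadratic form. For a complete row edge $e=(\{a,b\},\{S,T\})$ we have $S\sd T=C_a\sd C_b=C_a\mathbin{\dot\cup}C_b$, since the residuals are disjoint. Hence
\[
 z_Sz_T=x^{C_a}x^{C_b}.
\]
Let $C_{\rm full}:=\sum_{e\text{ complete}}y_{a(e)}y_{b(e)}A_e$. Each edge contributes $2y_ay_bz_Sz_T$ to $z^{\mathsf T}C_{\rm full}z$. Each unordered disjoint physical pair generates exactly $q_0$ complete edges, and $P_0$ counts each unordered pair twice as an ordered pair. Therefore
\[
 z^{\mathsf T}C_{\rm full}z=\sum_p 2q_0\,y_ay_bx^{C_a}x^{C_b}=q_0P_0(x).
\]

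Second, I split off the retained part. Write $C_{\rm full}=C_{\rm ref}+\sum_e(1-\omega_e)y_ay_bA_e$. Every omitted edge contributes at most $2$ in absolute value to the quadratic form, because $|z_S|=1$. Using \cref{eq:hard-deficit}, this gives
\[
 q_0P_0(x)\le z^{\mathsf T}C_{\rm ref}z+2\Def .
\]

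Third, I bound the retained part by the generalized Rayleigh inequality, the first equivalence in \cref{eq:pencil}. With $\lambda^+:=\max\{0,\lambda_{\max}(H_{\rm ref})\}$ we have $z^{\mathsf T}C_{\rm ref}z\le\lambda^+\,z^{\mathsf T}\Gamma_{\rm ref}z$; taking the positive part covers the case of a negative top eigenvalue. Since $z_S^2=1$,
\[
 z^{\mathsf T}\Gamma_{\rm ref}z=\Tr G_{\rm ref}+d_*N=N(\bar g+d_*).
\]
Dividing by $q_0=\Delta_{r,\ell}/2$ yields exactly \cref{eq:odd-pair-upper}. None of these steps uses randomness or any property of the rooting, so the bound holds pointwise for every input and rooting.

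Finally, I insert \cref{eq:odd-pair-upper} and $|P_{\rm int}(x)|\le N_{\rm int}$ into the Cauchy--Schwarz bound \cref{eq:odd-cauchy}. The right side is monotone in $\lambda^+$, so it stays valid after replacing $\lambda^+$ by any certified $U\ge\lambda^+$. Taking square roots and dividing by $m$ bounds $|V_I(x)|$ by $\Cert_{\rm odd}(I)$ for every $x$. The only delicate point is the factor-of-two bookkeeping: ordered versus unordered pairs, the factor $2$ in $A_e=E_{S,T}+E_{T,S}$, and $q_0=\Delta_{r,\ell}/2$. These must combine to give precisely the constants $2N(\bar g+d_*)/\Delta_{r,\ell}$ and $4\Def/\Delta_{r,\ell}$, so I would verify them carefully against the definitions of $P_0$ and of complete row edges.
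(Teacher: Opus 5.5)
Your proof is correct and follows the paper's argument. Both test the pencil on $z_S=x^S$, use $z^{\mathsf T}\Gamma_{\rm ref}z=N(\bar g+d_*)$, apply the generalized Rayleigh bound with the positive part, and divide by $q_0=\Delta_{r,\ell}/2$. Your split of the complete-edge matrix into retained and omitted edges, each omitted edge costing at most $2$, is the paper's per-pair inequality $2s\le2(W_p/q_0)s+2(1-W_p/q_0)$ multiplied through by $q_0$, so your constants $2N(\bar g+d_*)/\Delta_{r,\ell}$ and $4\Def/\Delta_{r,\ell}$ come out exactly right.
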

\begin{proof}
For $z_S=x^S$, every row edge generated by $p=\{a,b\}$ satisfies
$z^{\mathsf T}A_ez=2x^{C_a\sd C_b}$, so
$z^{\mathsf T}C_{\rm ref}z=2\sum_pW_py_ay_bx^{C_a\sd C_b}$ while
$P_0(x)=2\sum_py_ay_bx^{C_a\sd C_b}$.  Summing the termwise bound
$2s\le2(W/q_0)s+2(1-W/q_0)$, valid for $s\in\{\pm1\}$ and
$0\le W\le q_0$, gives
$P_0(x)\le q_0^{-1}z^{\mathsf T}C_{\rm ref}z+2q_0^{-1}\Def$.  Since
$z_S^2=1$, one has
$z^{\mathsf T}\Gamma_{\rm ref}z=\Tr\Gamma_{\rm ref}=N(\bar g+d_*)$, so
the generalized Rayleigh bound and $q_0=\Delta_{r,\ell}/2$ prove
\cref{eq:odd-pair-upper}; combine it with \cref{eq:odd-cauchy} and
$|P_{\rm int}(x)|\le N_{\rm int}$.
\end{proof}

\subsection{The null norm: the pair trace in the fixed-sample model}

Define the cell scale of the rooted input by
\begin{equation}\label{eq:odd-mu}
 \pi_0:=\frac m{(2r+1)\binom n{2r+1}},
 \qquad
 \mu_0:=\pi_0D_+
 =\Theta_r\!\left(\frac{m\ell^r}{n^{r+1}}\right),
\end{equation}
so that $\mu_0$ bounds the mean number of clauses legal in one
root--row cell, and $d_*\asymp_rn\mu_0^2$ by \cref{eq:odd-dstar}.

\begin{proposition}[Odd refutation null norm]\label{prop:iso-null}
Let $t=\lfloor c\ell\log n\rfloor$ and assume
\begin{equation}\label{eq:iso-conditions}
 d_*\ge C_r\ell,
 \qquad
 \mu_0\le c_r,
 \qquad
 t\mu_0\le c_rd_*.
\end{equation}
Then the fixed-$m$ null model satisfies
\begin{equation}\label{eq:iso-null}
 \|H_{\rm ref}\|\le C_{r,\delta}\sqrt{\frac\ell{d_*}}
\end{equation}
with failure probability $n^{-\Omega_r(\ell)}$.
\end{proposition}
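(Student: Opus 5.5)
The plan is to apply \cref{thm:packet-trace} in its fixed-sample form (\cref{rem:fixed-recursion}), with floor $d:=d_*$, to the family of the $m$ rooted clauses of the null input. Under the null, the labels $y_a$ are independent Rademacher signs, independent of the supports and of the rooting. So conditional on the unsigned marked configuration $\xi=((u_a,C_a))_{a\le m}$ the signs are as \cref{thm:packet-trace} requires, and $C_{\rm ref}$ has exactly the form \cref{eq:packet-operator}. The amplitudes are $\omega_e=\1\{\cL_u(S)=\cL_u(T)=\{a,b\}\}$ on complete row edges and zero on pairs without a common root; the diagonal is $G_\xi:=G_{\rm ref}$, so that $\cR_\xi=\Gamma_{\rm ref}$. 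The conclusion \cref{eq:packet-tail}, with $B$ a fixed constant, then gives \cref{eq:iso-null} with failure probability $n^{-B\ell}$.

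The structural hypotheses follow exactly as in the proof of \cref{prop:odd-null}. For \ref{P-monotone}, inserting a clause can only enlarge some $\cL_u(S)$, so it can only destroy exact-two indicators, and it can only turn $g^{\rm ref}_u(S)$ from $0$ to $1$. For \ref{P-diagonal}, a retained edge forces $g_u^{\rm ref}=1$ at both endpoint cells, and at most one retained edge meets each cell, as in \cref{eq:odd-hard-contraction}. \ref{P-port} is \cref{eq:hard-port-capacity}. All of these hold with $C_0=1$, enlarged to some $C_0(r)$ to absorb constants.

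The substance is \ref{P-mass}, with $d=d_*$ and $\beta:=C_r\mu_0$. After a fixed branch prefix, the at most $m$ unexposed clauses are i.i.d.\ uniform and independently rooted, so each lands on a given rooted type with probability $\pi_0$. At a fixed root--row cell $(u,S)$, the expected number of unexposed clauses legal there is at most $m\pi_0D_{u,S}\le\mu_0$. Recycling $h$ exterior vertices from the $O_r(t)$ exposed set costs $(C_rt/n)^h$ by the counting already used in \cref{prop:even-null}. These facts give the three required masses.
\begin{itemize}
\item \emph{Two new labels.} Summing over the $n$ roots and using the second factorial moment $(m)_2\pi_0^2$ per ordered type pair, the mass is $\le C_r n\mu_0^2(C_rt/n)^h\le C_0d_*(C_0t/n)^h$, by $d_*\asymp_r n\mu_0^2$.
\item \emph{One new label partnered with one of at most $4t$ exposed labels.} The exposed partner fixes the root, so the mass is $\le 4t\mu_0(\cdots)^h$.
\item \emph{A new partner of a fixed exposed label.} The mass is $\le\mu_0(\cdots)^h$.
\end{itemize}
Since the retention factors are insertion-nonincreasing, \cref{eq:fixed-deletion} lets us drop them, which is exactly the deletion mass the theorem asks for. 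The side conditions then read $\beta\le C_0$, which holds by $\mu_0\le c_r$; $\beta^2\le C_0d_*/n$, which holds by $d_*\asymp n\mu_0^2$; and $t\beta\le c_0d_*$, which is the hypothesis $t\mu_0\le c_rd_*$ once $c_r$ is small.

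The remaining condition $d_*\ge C_r\ell$ is not needed by the trace theorem. It only ensures that the bound $\sqrt{\ell/d_*}$ is nontrivial.

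The main obstacle I expect is the exactness of these fixed-sample masses. Repeated supports must remain distinct physical labels. The rooting randomness must be folded into the i.i.d.\ marks, so that \cref{rem:fixed-recursion} applies with $m'\le m$ unexposed labels. The constant $d_*$ must dominate the conditional two-new mass uniformly over prefixes, which relies on the explicit formula \cref{eq:odd-dstar} being a count over all $\binom m2$ pairs rather than unexposed ones. I would handle the last point by noting that removing exposed labels only decreases $(m')_2$, while the combinatorial cell counts $D_{u,S}\le D_+$ are deterministic.
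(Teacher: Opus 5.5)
Your proposal is correct and follows the paper's proof essentially line for line: the fixed-sample form of \cref{thm:packet-trace} with floor $d_*$ and retention indicators as amplitudes, \ref{P-monotone}--\ref{P-port} from the exact-two structure, and \ref{P-mass} with $\beta=C_r\mu_0$, the same three mass bounds, and the same side conditions. One harmless normalization slip: with the paper's $\pi_0=m/((2r+1)\binom n{2r+1})$, a single clause hits a given rooted type with probability $\pi_0/m$, not $\pi_0$, so the intermediate bounds should read $m'\pi_0D_{u,S}/m\le\mu_0$ per cell and $(m')_2\pi_0^2/m^2\le\pi_0^2$ per ordered type pair; these give exactly the final masses you state.
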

\begin{proof}
Apply \cref{thm:packet-trace}, with $C_0=C_0(r)$ sufficiently large and
in its fixed-sample form,
directly to the $m$ rooted clause occurrences: the labels are i.i.d.\
uniform rooted types, the fresh-label steps are integrated by
\cref{eq:fixed-deletion} through \cref{rem:fixed-recursion}, and the
deterministic floor $d_*$ is held fixed throughout.  Under the null,
the signs $y_a$ are independent Rademacher conditional on the rooted
supports.  The retained amplitude of a complete row edge is its
retention indicator \cref{eq:hard-isolation}: inserting a clause can
only destroy an old indicator and can only increase $G_{\rm ref}$,
which is \ref{P-monotone}; a retained edge is dominated by the capped
diagonal at both endpoint cells, and each clause carries at most one
retained edge per row (\cref{eq:hard-port-capacity}), which are
\ref{P-diagonal} and \ref{P-port}.

For the fresh masses \ref{P-mass}, take $\beta:=C_r\mu_0$.  A root--row
cell has clause mass at most $\mu_0$, and prescribing $h$ recycled
exterior vertices costs $(C_rt/n)^h$ as always.  Hence two new clauses
with a common root, both legal at the current row, have total mass at
most $C_rn\mu_0^2(C_rt/n)^h\le C_rd_*(C_rt/n)^h$; pairing one new
clause with one of at most $4t$ exposed clauses costs at most
$C_rt\mu_0(C_rt/n)^h$; and a new partner of one fixed exposed clause
costs at most $C_r\mu_0(C_rt/n)^h$.  The side conditions hold:
$\beta\le C_0$ since $\mu_0\le c_r$,
$\beta^2\le C_rd_*/n$ because $d_*\asymp_rn\mu_0^2$, and
$t\beta\le c_rd_*$ is \cref{eq:iso-conditions}.  The tail
\cref{eq:packet-tail} proves the claim.
\end{proof}

\subsection{Degree, deficit, and one-block tightness}

\begin{lemma}\label{lem:odd-corrections}
Fix $\eps>0$ and suppose
$m=\Theta_{r,\delta}(\eps^{-2}\Dkl)$, so that
\[
 \mu_0=\Theta_r(\eps^{-2}\sqrt{\ell/n}),
 \qquad
 d_*=\Theta_r(\eps^{-4}\ell).
\]
If $\mu_0=o(1)$, then with probability $1-O(\sqrt{\mu_0})-o(1)$,
\begin{equation}\label{eq:odd-corrections}
 \bar g\le(1+o(1))d_*,
 \qquad
 \Def\le C_r\sqrt{\mu_0}\,Nd_*,
 \qquad
 \frac{nN_{\rm int}}{m^2}=o(1).
\end{equation}
\end{lemma}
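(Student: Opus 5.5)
Each of the three estimates in \cref{eq:odd-corrections} will come from a first-moment computation in the fixed-$m$ null model (the labels $y$ play no role; only the rooted supports matter), followed by Markov or Chebyshev. Write $M_{u,S}:=|\cL_u(S)|$ for the cell occupancy. Rooting a uniform clause at a uniform variable gives a uniform rooted type, so $M_{u,S}$ is binomial with mean at most $\mu_0$, and distinct cells are nearly independent.

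\emph{Degree.} The cell indicator satisfies $g^{\rm ref}_u(S)\le\#\{\text{unordered disjoint pairs in }\cL_u(S)\}$. Summing over $u$ and $S$, each unordered disjoint same-root pair contributes exactly $\Delta_{r,\ell}$ rows. Hence $\Tr G_{\rm ref}\le Q\,\Delta_{r,\ell}$, and by the definition \cref{eq:odd-dstar} we have $\E Q\,\Delta_{r,\ell}=Nd_*$ exactly. It therefore suffices to show $Q=(1+o(1))\E Q$. Here $Q$ is a degree-two U-statistic in the $m$ i.i.d.\ rooted clauses, with mean $\asymp m^2/n\to\infty$. Its variance is dominated by pairs sharing one clause, which contributes $O(m^3/n^2)$, so $\Var Q=o((\E Q)^2)$ because $m/n\to\infty$. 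Chebyshev then gives $\bar g\le(1+o(1))d_*$.

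\emph{Intersecting pairs.} $\E N_{\rm int}\le m^2\cdot\frac1n\cdot O_r(1/n)$, since a same-root pair has probability $1/n$ and its two residuals intersect with probability $O_r(1/n)$. Thus $\E[nN_{\rm int}/m^2]=O_r(1/n)$, and Markov gives $o(1)$ with probability $1-o(1)$.

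\emph{Deficit (the main step).} By \cref{eq:hard-deficit}, $\Def$ counts complete row edges $e=(\{a,b\},\{S,T\})$ that are not retained. Such an edge fails \cref{eq:hard-isolation} exactly when some third clause $c\ne a,b$ with root $u$ is legal at $S$ or at $T$. I would bound $\E\Def$ by the expected number of such triples $(a,b,c)$ with a marked row, using \cref{eq:fixed-deletion} with $s=3$. The pair $(a,b)$ with row $S$ contributes the complete-degree mass $Nd_*$ in total. Conditionally on that pair, a third clause legal in the cell $(u,S)$ or $(u,T)$ has mass at most $2\mu_0$, by the same cell count $D_{u,S}\le D_+$ used in \cref{lem:odd-palm-degree}. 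Hence $\E\Def\le 2\mu_0Nd_*\cdot(1+o(1))$.

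This gives $\E\Def\le C_r\mu_0Nd_*$, and Markov at level $C_r\sqrt{\mu_0}\,Nd_*$ fails with probability at most $\sqrt{\mu_0}$. That is exactly the stated $O(\sqrt{\mu_0})$ failure term, which explains its form.

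The obstacle I expect is making this triple count rigorous at the fixed-$m$ level. The fixed-sample model is close to but not exactly Poisson, so I will use the exact factorial moments $(m)_3$ and uniformity of rooted types rather than cell independence. One also has to check that triples where $c$'s residual meets $C_a\cup C_b$ do not inflate the count. Those triples only restrict $c$'s type, so the bound $\mu_0$ on legal-in-cell mass still applies.

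Collecting the three events by a union bound gives total failure probability $O(\sqrt{\mu_0})+o(1)$.
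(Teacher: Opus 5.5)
Your proposal is correct and follows essentially the same route as the paper. The degree step (Chebyshev on $Q$, together with $\Tr G_{\rm ref}\le\Delta_{r,\ell}Q$ and $Nd_*=\Delta_{r,\ell}\E Q$) and the intersection step (first moment plus Markov) are identical. For the deficit, both arguments witness each omitted edge by a third legal clause in one of its endpoint cells, take a first moment of order $\mu_0Nd_*$, and apply Markov at scale $\sqrt{\mu_0}\,Nd_*$. The only difference is bookkeeping: the paper packages the count deterministically as $\Def\le3T_3$, with $T_3=\sum_{u,S}\binom{|\cL_u(S)|}{3}$, and bounds $\E T_3\le C_r\mu_0Nd_*$ via the binomial third factorial moment and $n\mu_0^2\asymp_r d_*$. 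You instead condition on the pair and pay $\mu_0$ for the independent third clause, which avoids that last conversion.
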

\begin{proof}
\emph{Degree.}  A fixed index pair is a same-root disjoint pair with
probability
$\frac1n\binom{n-1-2r}{2r}/\binom{n-1}{2r}=\Theta_r(1/n)$; disjoint
index pairs are independent, and two pair indicators sharing one clause
are simultaneously one with probability $O_r(n^{-2})$.  Hence
$\E Q=\binom m2\frac1n\binom{n-1-2r}{2r}/\binom{n-1}{2r}$ exactly, and
$\Var Q=O_r(m^2/n+m^3/n^2)=o((\E Q)^2)$, so
$Q=(1+o(1))\E Q$ with probability $1-o(1)$.  Every disjoint pair is
legal on $\Delta_{r,\ell}$ ordered rows, so
$\Tr G_{\rm ref}\le\Delta_{r,\ell}Q$, while
$d_*=\Delta_{r,\ell}\,\E Q/N$ by \cref{eq:odd-dstar}; this proves the
first claim.

\emph{Deficit.}  Define the triple-collision count
\begin{equation}\label{eq:triple-collision}
 T_3:=\sum_{u\in[n],\,S\in\Omega_\ell}\binom{|\cL_u(S)|}{3}.
\end{equation}
We claim $\Def\le3T_3$.  An omitted complete edge fails
\cref{eq:hard-isolation} at one of its endpoint cells; since its two
clauses are legal there, that cell contains a third legal clause.
Assign the omitted edge to one such cell.  A pair
$\{a,b\}\subseteq\cL_u(S)$ determines the opposite endpoint
$T=S\sd C_a\sd C_b$, so at most $\binom{|\cL_u(S)|}2$ omitted edges are
assigned to the cell $(u,S)$, and cells with fewer than three legal
clauses receive none.  Since $\binom q2\le3\binom q3$ for $q\ge3$,
summing over cells proves the claim.  A cell count is binomial with
mean at most $\mu_0$, so its third factorial moment is at most
$\mu_0^3$ and $\E\binom{|\cL_u(S)|}3\le\mu_0^3/6$; hence
\[
 \E T_3\le C_r\,nN\mu_0^3\le C_r\mu_0\,Nd_* ,
\]
using $n\mu_0^2\le C_rd_*$.  Markov's inequality at scale
$\sqrt{\mu_0}\,Nd_*$ gives the second claim with probability
$1-O_r(\sqrt{\mu_0})$.

\emph{Intersections.}  Two rooted clauses share a root with probability
$1/n$ and, conditionally, have intersecting residuals with probability
$O_r(1/n)$, so
$\E N_{\rm int}=O_r(m^2/n^2)$.  Markov's inequality therefore gives
$nN_{\rm int}/m^2=o(1)$ with probability $1-o(1)$.  Under the
quantitative condition $\ell/n\le c_{r,\delta}\eps^{12}$, the lower
bound $\ell\ge2r$ implies $n\eps^2\to\infty$; applying Markov after
division by $\eps^2$ strengthens this to
$nN_{\rm int}/m^2=o(\eps^2)$ in probability.  This completes the third
claim.
\end{proof}

\begin{theorem}[Odd two-sided strong refutation on one block]\label{thm:odd-one-block}
Fix $0<\eps<1$ independently of $n$.  There is $C_{r,\delta}$ such that,
for all sufficiently large $n$, whenever
$m=C_{r,\delta}\eps^{-2}\Dkl$ up to a fixed constant factor, the
certificate \cref{eq:odd-cert-value}, evaluated with an exact upper
enclosure
$\max\{0,\lambda_{\max}(H_{\rm ref})\}\le U
\le\max\{0,\lambda_{\max}(H_{\rm ref})\}+\eps^2/(20C_r)$, is sound on every input
and at most $\eps$ under the null with probability $1-o(1)$.
\end{theorem}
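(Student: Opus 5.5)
Soundness on every input and every rooting is immediate from \cref{lem:odd-cert}: for any certified $U\ge\max\{0,\lambda_{\max}(H_{\rm ref})\}$ the value \cref{eq:odd-cert-value} dominates $\sup_x|V_I(x)|$, and the exact enclosure of \cref{sec:solvers} provides such a $U$, with the a priori bracket $[0,1]$ coming from \eqref{eq:odd-hard-contraction}. So the work is entirely the null value. Squaring, $\Cert_{\rm odd}(I)^2$ is the sum of four terms, $n/m$, $\frac{2nN(\bar g+d_*)}{m^2\Delta_{r,\ell}}U$, $\frac{4n\Def}{m^2\Delta_{r,\ell}}$ and $nN_{\rm int}/m^2$. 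I will show each is at most $\eps^2/4$ on an event of probability $1-o(1)$.

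The first normalization I need is $nNd_*/(m^2\Delta_{r,\ell})=\Theta_r(1)$. From \eqref{eq:odd-dstar}, $d_*N/\Delta_{r,\ell}=\binom m2 n^{-1}(1+o(1))$, so this ratio is about $1/2$. Fix a constant $C_r$ with $nNd_*/(m^2\Delta_{r,\ell})\le C_r$.
\begin{itemize}
\item \emph{The $m$ term.} $n/m\le n/\Dkl=(\ell/n)^{k/2-1}\to0$, so it is negligible.
\item \emph{The spectral term.} By \cref{lem:odd-corrections}, $\bar g\le(1+o(1))d_*$, so the term is at most $4C_rU$. It therefore suffices that $U\le\eps^2/(20C_r)+\eps^2/(20C_r)$, which gives $\le 8\eps^2/20$; if needed I will retune the constant so the spectral part stays below $\eps^2/4$. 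The enclosure tolerance supplies half of this.
\item \emph{The deficit term.} \cref{lem:odd-corrections} gives $\Def\le C_r\sqrt{\mu_0}Nd_*$, so this term is $O_r(\sqrt{\mu_0})$. Since $\mu_0=\Theta_r(\eps^{-2}\sqrt{\ell/n})\to0$ for fixed $\eps$ by \cref{eq:range}, it is $o(1)$, and the failure probability $O(\sqrt{\mu_0})$ is also $o(1)$.
\item \emph{The intersection term.} \cref{lem:odd-corrections} gives $nN_{\rm int}/m^2=o(1)$ directly.
\end{itemize}

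The remaining requirement is $\max\{0,\lambda_{\max}(H_{\rm ref})\}\le\|H_{\rm ref}\|\le\eps^2/(20C_r)$ under the null. Here I invoke \cref{prop:iso-null}, which yields $\|H_{\rm ref}\|\le C_{r,\delta}\sqrt{\ell/d_*}$. With $d_*=\Theta_r(C_{r,\delta}^2\eps^{-4}\ell)$ at $m=C_{r,\delta}\eps^{-2}\Dkl$, this is $\le C'_{r,\delta}\eps^2/C_{r,\delta}$, which is below $\eps^2/(20C_r)$ once the sample constant is large.

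The main obstacle is checking the hypotheses \eqref{eq:iso-conditions} uniformly in the stated regime:
\begin{itemize}
\item $d_*\ge C_r\ell$ holds by taking the sample constant large.
\item $\mu_0\le c_r$ holds because $\mu_0\to0$.
\item $t\mu_0\le c_rd_*$ reduces to $t/(n\mu_0)\to0$, using $d_*\asymp n\mu_0^2$. Since $t/(n\mu_0)\asymp\eps^2\sqrt{\ell/n}\,\ell\log n/\ell$, this is of order $\eps^2\sqrt{\ell/n}\log n$, which is $o(1)$ by \cref{eq:range}, as in \eqref{eq:odd-mixed-small}.
\end{itemize}
A second point of care is the circularity in constants: $C_r$ in the tolerance depends only on $r$, while the sample constant $C_{r,\delta}$ is chosen afterwards. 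The statement's phrase ``up to a fixed constant factor'' means the bounds must be monotone in $m$ over that window, and they are: $d_*$ grows while the other relative terms shrink or stay bounded.

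Finally, I take a union bound over the $o(1)$ failure events of \cref{prop:iso-null}, \cref{lem:odd-corrections}, and the enclosure to conclude $\Cert_{\rm odd}(I)\le\eps$.
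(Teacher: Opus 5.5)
Your proposal is correct and follows essentially the same route as the paper: soundness from \cref{lem:odd-cert}, the hypotheses of \cref{prop:iso-null} checked exactly as you do (including $t\mu_0/d_*\lesssim\eps^2\sqrt{\ell/n}\,\log n=o(1)$), and the squared certificate bounded term by term via $Nd_*/\Delta_{r,\ell}=\Theta_r(m^2/n)$ and \cref{lem:odd-corrections}. The differences are only in bookkeeping. The paper absorbs $n/m$ into $C_r\Dkl/m$ instead of calling it $o(1)$. Your spectral term comes out at $8\eps^2/20$ rather than $\eps^2/4$, which is harmless because the other three terms are $o(1)$ at fixed $\eps$.
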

\begin{proof}
Soundness is \cref{lem:odd-cert}.  At the stated $m$,
$\mu_0=\Theta_r(\eps^{-2}\sqrt{\ell/n})=o(1)$ because $\eps$ is fixed and
\cref{eq:range} gives $\ell/n=o(1)$.  Also $d_*\ge C_r\ell$ after
increasing the sample constant, and
$t\mu_0/d_*\le C_r\eps^2\sqrt{\ell/n}\log n=o(1)$ by
\cref{eq:range}; hence \cref{prop:iso-null} applies.  The exact identity
\[
 \frac{Nd_*}{\Delta_{r,\ell}}
 =\binom m2\frac1n
   \frac{\binom{n-1-2r}{2r}}{\binom{n-1}{2r}}
 =\Theta_r(m^2/n)
\]
and \cref{lem:odd-corrections} give
\begin{equation}\label{eq:odd-cert-tight}
 \Cert_{\rm odd}(I)^2
 \le C_r\frac{\Dkl}{m}
 +C_r\sqrt{\mu_0}
 +C_r\sqrt{\frac\ell{d_*}}
 +C_r\bigl(U-\max\{0,\lambda_{\max}(H_{\rm ref})\}\bigr)
 +o(1).
\end{equation}
The diagonal term $n/m$ is absorbed into $\Dkl/m$ because
$\Dkl\ge n$ for $\ell\le n/3$.  Choose the leading sample constant so
that the first and third terms total at most $\eps^2/5$; the enclosure
tolerance contributes at most $\eps^2/20$; and the deficit and the
remaining $o(1)$ terms contribute at most $\eps^2/5$ for large $n$,
since $\sqrt{\mu_0}=o(1)$ at fixed $\eps$.  Quantitatively,
$\ell/n\le c_{r,\delta}\eps^{12}$ gives
$\sqrt{\mu_0}\le C_rc_{r,\delta}^{1/4}\eps^2$ and the strengthened
intersection estimate is $o(\eps^2)$, so the same budget is uniform in
the range of \cref{rem:quantitative-ranges}; the right side is then at
most $\eps^2$.
\end{proof}

\begin{lemma}\label{lem:blocking}
Fix $\eps\in(0,1)$ and a certificate procedure $\Cert_0$ that, on
instances whose size lies in $[m_0,2m_0]$, is pointwise sound and satisfies
$\Cert_0\le\eps/4$ under the null with probability $1-o(1)$, uniformly in
that window.  Then for every $m\ge m_0$ the following is pointwise sound
and at most $\eps$ under the null with probability $1-o(1)$: partition the
clause indices into deterministic blocks $I_j$ of sizes
$m_j\in[m_0,2m_0]$ and output
\[
 \Cert(I):=\sum_j\frac{m_j}m\min\{1,\Cert_0(I_j)\}.
\]
\end{lemma}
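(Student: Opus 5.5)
Soundness is immediate from convexity of the advantage. For any $x$, $V_I(x)=\sum_j\frac{m_j}{m}V_{I_j}(x)$, since $V_I(x)$ is the average of $y_ax^{F_a}$ over all clauses and the blocks partition the indices. Hence
\[
 |V_I(x)|\le\sum_j\frac{m_j}m|V_{I_j}(x)|.
\]
We then bound each term. We have $|V_{I_j}(x)|\le 1$ trivially, and $|V_{I_j}(x)|\le\Cert_0(I_j)$ by pointwise soundness of $\Cert_0$ on an instance of size $m_j\in[m_0,2m_0]$. Together these give $|V_{I_j}(x)|\le\min\{1,\Cert_0(I_j)\}$, so taking the supremum over $x$ gives $\Cert(I)\ge\sup_x|V_I(x)|$ on every input. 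Such a partition always exists for $m\ge m_0$: take $\lfloor m/m_0\rfloor$ blocks and distribute the remainder, so that every block has size between $m_0$ and $2m_0$.

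For the null value we cannot take a union bound over the blocks, because their number $J\asymp m/m_0$ may be unbounded while the per-block failure probability is only $o(1)$. I would use a first-moment argument instead. Let $p:=\sup_{m'\in[m_0,2m_0]}Q_{m'}\{\Cert_0>\eps/4\}$, which is $o(1)$ by the uniformity hypothesis. Under $Q_m$ the blocks are i.i.d.-clause null instances of their sizes, so each $I_j$ has law $Q_{m_j}$. Then
\[
 \min\{1,\Cert_0(I_j)\}\le\frac\eps4+\1\{\Cert_0(I_j)>\eps/4\},
\]
and therefore
\[
 \Cert(I)\le\frac\eps4+B,\qquad B:=\sum_j\frac{m_j}m\1\{\Cert_0(I_j)>\eps/4\}.
\]
Here $B\ge0$ and $\E B\le p$. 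Markov's inequality gives $\bbP\{B>3\eps/4\}\le 4p/(3\eps)=o(1)$, since $\eps$ is fixed. Hence $\Cert(I)\le\eps$ with probability $1-o(1)$. No independence across blocks is needed.

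The only point needing care is the uniformity of $p$ over block sizes and over $m$, which is exactly what the window hypothesis provides. The cap at $1$ is essential: without it, a single bad block could contribute an unbounded amount to the weighted average and the Markov step would fail. I expect no real obstacle beyond this.
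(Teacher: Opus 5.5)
Your proof is correct, and the soundness half is the same as the paper's. For the null value, however, you take a different and simpler route.

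The paper uses independence of the blocks under the null and splits into two regimes. With boundedly many blocks, a union bound makes every block good. With many blocks, a weighted Hoeffding inequality is applied to the clipped values, which lie in $[0,1]$, have means at most $\eps/4+o(1)$, and carry weights at most twice the reciprocal of the block count.

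You instead dominate each clipped block value by $\eps/4$ plus the indicator that the block is bad. This gives $\Cert(I)\le\eps/4+B$, where $B$ is the weighted fraction of bad blocks, and Markov's inequality then controls $B$. Your route needs only linearity of expectation and the marginal law $Q_{m_j}$ of each block. It uses no independence and no case split, and it is automatically uniform in the number of blocks and in $m$.

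What the paper's route buys is concentration. With many blocks its failure probability decays exponentially in $\eps^2$ times the block count, whereas yours is of order $p/\eps$. That is $o(1)$ only because $\eps$ is fixed; more generally you would need $p=o(\eps)$.

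For the lemma as stated, with $\eps$ fixed and only a $1-o(1)$ guarantee required, your argument is complete and arguably cleaner.
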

\begin{proof}
Soundness follows from $V_I=\sum_j(m_j/m)V_{I_j}$ and
$\sup_x|V_{I_j}(x)|\le1$.  Under the null the blocks are independent:
for boundedly many, a union bound makes all blocks good; otherwise the
clipped values lie in $[0,1]$ with means at most $\eps/4+o(1)$ and
weights at most twice the reciprocal of the block count, and weighted
Hoeffding gives a value below $\eps$ with probability $1-o(1)$.
\end{proof}

\section{Exact evaluation and sum-of-squares proofs}\label{sec:evaluation}

This section verifies the construction costs of every object used by
the algorithms (\cref{lem:construction}), converts the spectral
certificates of \cref{sec:refutation} into deterministically sound
exact evaluations, and turns those into explicit sum-of-squares proofs,
completing the proof of \cref{thm:main-refutation}.

\subsection{Construction costs}

\begin{lemma}\label{lem:construction}
On inputs with $m=n^{O_k(1)}$, all objects used by the algorithms are
assembled in $n^{\ell+O_k(1)}$ time: the even matrices $A,D,\Gamma$; the
odd candidate list, root--row cells, exact-two flags, retained edges,
and the capped diagonal $\cG_{\rm o}$; all complete odd-refutation row
edges and their retention indicators, the capped diagonal
$G_{\rm ref}$, and the scalars $Q,\Def,N_{\rm int}$; matrix--vector
products with every normalized operator; and the one-particle density
matrix $\rhoone(u')$ of a stored vector.
\end{lemma}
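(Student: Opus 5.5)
The plan is a direct cost accounting, object by object, resting on two facts. First, every legal row of a fixed even set $F$ of size $2s$ can be listed in time $\binom{n}{\ell-s}n^{O_k(1)}\le n^{\ell+O_k(1)}$: choose $S\cap F$ among the $\binom{2s}{s}$ halves and the remaining $\ell-s$ elements outside $F$. Second, $N=\binom n\ell\le n^\ell$, and each row is indexed by a sorted $\ell$-tuple, so dictionary lookup costs $O(\ell\log n)$. We also use the assumptions $m=n^{O_k(1)}$ and, at odd inference, the candidate cap $C_{\rm cap}=n^{O_k(1)}$ from \cref{eq:candidate-cap}.

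\emph{Even matrices.} For each clause, enumerate its $\Tr P_{F_a}$ legal rows by \cref{eq:support-count}. At each legal row, add $y_a$ to the entry $(S,S\sd F_a)$ of $A$ and add $1$ to $D(S,S)$. Store $A$ sparsely. The total cost is $m\cdot n^{\ell-r+O_k(1)}$, and a matrix--vector product with $K$ costs the same as reading the sparse $A$ plus two diagonal scalings.

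\emph{Odd objects, inference and refutation.} For each root $u$ and row $S$, the cell list $\cL_u(S)$ is built by streaming. For every candidate, or rooted clause, $e$, enumerate the $\binom{2r}{r}\binom{n-2r}{\ell-r}$ rows at which $C(e)$ is legal and append $e$ to the bucket $(u(e),S)$. The total work is (number of candidates)$\cdot n^{\ell+O_k(1)}$, and the same bound holds for memory, since only nonempty cells are stored.

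Given the buckets, the remaining quantities follow.
\begin{itemize}
\item The potentials $h_u(S)$ and indicators $g^{\rm ref}_u(S)$ come from checking all pairs in a bucket. This costs $\sum|\cL_u(S)|^2$, which is at most (bucket total)$\cdot\max|\cL|\le n^{\ell+O_k(1)}$.
\item The mutual exact-two flags are read off from buckets of size exactly two: compute $T=S\sd C_e\sd C_f$ and look up whether bucket $(u,T)$ equals $\{e,f\}$.
\item The complete odd-refutation edges of a disjoint pair are enumerated over its $\Delta_{r,\ell}$ rows. Hence $Q$, $|E_{\rm ret}|$, $\Def=q_0Q-|E_{\rm ret}|$ and $N_{\rm int}=\sum_u M_u(M_u-1)-2Q$ are computed with $O(m^2)\cdot n^{\ell+O_k(1)}$ work, which is acceptable since $m^2=n^{O_k(1)}$.
\end{itemize}
The normalized operators are sparse with $n^{\ell+O_k(1)}$ nonzeros, so each matrix--vector product costs that much.

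\emph{The one-particle density matrix.} Build $B_{u'}$ with $n\binom{n}{\ell-1}$ entries, each a lookup. The product $\rhoone=\ell^{-1}B_{u'}B_{u'}^{\mathsf T}$ is an $n\times n$ matrix, and each of its entries is a sum over $R\in\binom{[n]}{\ell-1}$. The total cost is therefore $n^2\binom n{\ell-1}\le n^{\ell+O(1)}$. It is better to loop over $S\in\Omega_\ell$ and over ordered pairs $i,j\in S$, accumulating $u'_Su'_{S\setminus\{i\}\cup\{j\}}$, which gives cost $N\ell n$.

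The only step needing care is the bucket-pair enumeration for the odd constructions. If a single cell were huge, the pair count could exceed the budget. This is handled deterministically by bounding by $(\#\text{candidates})^2\cdot n^{O_k(1)}$ rather than by per-cell maxima, so no probabilistic occupancy bound is needed. All bit complexities stay polynomial because every entry is an integer or a rational with $O(\log n)$-bit numerators and denominators; the one exception is $\alpha$, which is fixed rational data.
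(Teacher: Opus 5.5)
Your proposal is correct and follows essentially the same route as the paper's proof. Both enumerate the legal rows of each clause or candidate and sort them into root--row cells to read off exact-two flags, retained edges, capped potentials and retention indicators; both enumerate same-root disjoint pairs for $Q$, $\Def$ and $N_{\rm int}$ (polynomial via $m=n^{O_k(1)}$ and the candidate cap), and compute $\rhoone(u')$ as $n^2$ sums of at most $\binom n{\ell-1}$ terms.

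Two small slips do not affect the argument. First, in your alternative $\rhoone$ loop the second index must range over $j\notin S\setminus\{i\}$, because for $i\ne j\in S$ the set $S\setminus\{i\}\cup\{j\}$ is not an $\ell$-set. Second, your closing bound $(\#\text{candidates})^2 n^{O_k(1)}$ on the cell-pair work omits the factor $\binom{n-2r}{\ell-r}$ of co-legal rows per pair, which your earlier (bucket total)$\cdot\max|\cL|$ estimate correctly includes.
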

\begin{proof}
Each even clause is legal on at most $N$ rows, so building $A,D$ costs
$O(mN)n^{O_k(1)}$.  The expected number of odd candidates is
$q|\cT|=\Theta_r(n^{r+1}/\ell^r)$; a residual $C$ is legal on
$\Tr P_C=\binom{2r}{r}\binom{n-2r}{\ell-r}\le N$ rows.  Sorting the
root--row cells gives the lists $\cL_u(S)$, hence the exact-two flags
and the retained edges.  The capped diagonal is obtained by
accumulating the pair potentials $h_u(S)$ cell by cell and capping;
under the candidate cap \cref{eq:candidate-cap}, the total work is
$n^{\ell+O_k(1)}$ deterministically.  Mutual exact-twoness leaves at
most one retained edge per root and row.

For refutation, enumerate same-root disjoint physical pairs, count $Q$,
and list their $q_0$ compatible row edges.  Sorting the cell lists
$\cL_u(S)$ yields the exact-two flags; retain exactly the edges
satisfying \cref{eq:hard-isolation}, and accumulate $G_{\rm ref}$,
$|E_{\rm ret}|$, and $\Def=q_0Q-|E_{\rm ret}|$.  On a threshold block
the edge count is
$\Theta_r(N\ell\eps^{-4})$ in expectation; the deterministic
bound under $m=n^{O_k(1)}$ is still $n^{\ell+O_k(1)}$.  The intersection
count $N_{\rm int}$ is a direct same-root pair enumeration.  Sparse
matrix--vector products have the same complexity.  Finally,
$\rhoone(u')$ has $n^2$ entries, each a sum over at most
$\binom n{\ell-1}\le N$ terms.
\end{proof}

The exact spectral enclosures of \cref{sec:solvers} evaluate each
certificate with one-sided correctness in $n^{O_k(\ell)}$ bit operations;
this conservative bound includes coefficient growth, rational bisection,
and dense exact elimination.

\begin{remark}\label{rem:evaluation-exponent}
The exact-evaluation exponent $n^{O_k(\ell)}$ exceeds the construction
exponent $n^{\ell+O_k(1)}$, but this does not move refutation off the
curve \cref{eq:scale}: since $\Dkl$ is polynomial in $\ell$, running
the evaluation at a constant fraction of the level changes the clause
requirement only by a $k$-dependent constant factor.
\end{remark}

\subsection{Explicit sum-of-squares witnesses}

Work modulo the Boolean ideal $x_i^2=1$, and let
$z(x)=(x^S)_{S\in\Omega_\ell}$, a vector of degree-$\ell$ monomials.
We first record the exact factorization used below.

\begin{lemma}
\label{lem:rational-psd}
Every rational positive semidefinite matrix $M$ admits a permutation
$\Pi$, a rational unit lower-triangular matrix $L$, and a diagonal matrix
$\Lambda$ with nonnegative rational entries such that
\[
 \Pi^{\mathsf T}M\Pi=L\Lambda L^{\mathsf T}.
\]
Consequently every rational quadratic form defined by a positive
semidefinite matrix is a sum of squares of rational linear forms.
\end{lemma}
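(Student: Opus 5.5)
The natural route is symmetric Gaussian elimination over $\mathbb Q$ with diagonal pivoting, run by induction on the dimension $n$ of $M$. The one fact that makes this work for positive semidefinite matrices is the following. If $M\succeq0$ and $M_{ii}=0$, then the whole $i$-th row and column vanish. Indeed, the $2\times2$ principal minor on $\{i,j\}$ is $M_{ii}M_{jj}-M_{ij}^2=-M_{ij}^2$, and it must be nonnegative. Hence, unlike the indefinite case treated by Bunch--Kaufman in \cref{sec:solvers}, $2\times2$ block pivots are never needed.

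\emph{Inductive step.} If $M=0$, take $L=I$, $\Lambda=0$ and $\Pi=I$. Otherwise some diagonal entry $M_{ii}>0$; if all diagonal entries vanished, the observation above would force $M=0$. Choose a permutation $\Pi_1$ moving $i$ to the first position, and write
\[
\Pi_1^{\mathsf T}M\Pi_1=\begin{pmatrix}p&b^{\mathsf T}\\ b&M'\end{pmatrix},\qquad p>0 .
\]
Then
\[
\Pi_1^{\mathsf T}M\Pi_1=\begin{pmatrix}1&0\\ b/p&I\end{pmatrix}\begin{pmatrix}p&0\\0&M'-bb^{\mathsf T}/p\end{pmatrix}\begin{pmatrix}1&b^{\mathsf T}/p\\0&I\end{pmatrix}.
\]
The Schur complement $M'':=M'-bb^{\mathsf T}/p$ is rational, because only field operations over $\mathbb Q$ are used. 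It is also positive semidefinite, since the displayed factorization is a congruence by an invertible matrix and so preserves semidefiniteness, and $M''$ is a principal block of the middle factor. By induction, $\Pi_2^{\mathsf T}M''\Pi_2=L_2\Lambda_2L_2^{\mathsf T}$.

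\emph{Assembly.} Set $\Pi:=\Pi_1\,\diag(1,\Pi_2)$. Conjugating the first factorization by $\diag(1,\Pi_2)$ turns the first column $b/p$ into $\Pi_2^{\mathsf T}b/p$, which gives
\[
L=\begin{pmatrix}1&0\\ \Pi_2^{\mathsf T}b/p&L_2\end{pmatrix},\qquad \Lambda=\diag(p,\Lambda_2).
\]
Here $L$ is unit lower-triangular with rational entries, and $\Lambda$ is diagonal with nonnegative rational entries.

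\emph{Consequence.} For the sum-of-squares statement, write
\[
v^{\mathsf T}Mv=\sum_j\Lambda_{jj}\bigl((L^{\mathsf T}\Pi^{\mathsf T}v)_j\bigr)^2 .
\]
Each summand is a nonnegative rational weight times the square of a rational linear form. Each weight can be absorbed into the squares if desired: write $\Lambda_{jj}=\sum_{s=1}^4 q_s^2$ with $q_s\in\mathbb Q$ (Lagrange's four-square theorem applied to numerator times denominator, then divided by the square of the denominator). This step is the only mildly nontrivial point in the whole argument. Otherwise one simply reads ``sum of squares'' as allowing nonnegative rational weights, which is equivalent for SoS certificates.

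The step I expect to need care is the zero-pivot case, and it is handled entirely by the $2\times2$ minor argument. Everything else is routine bookkeeping of the permutations.
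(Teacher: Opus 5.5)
Your proof is correct and follows the paper's argument. Both use vanishing of the row and column at a zero diagonal entry via the $2\times2$ minor, a rational diagonal pivot step whose Schur complement is again positive semidefinite, induction on the dimension, and Lagrange's four-square theorem to absorb the nonnegative rational weights into squares. The only cosmetic difference is that you carry zero rows along until the residual block vanishes, rather than deleting them as the paper does; this yields the same factorization.
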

\begin{proof}
If a diagonal entry of a positive semidefinite matrix is zero, the
corresponding row and column vanish by
$|M_{ij}|^2\le M_{ii}M_{jj}$ and may be removed.  Otherwise permute a
positive rational diagonal entry into the first position.  A rational
Gaussian step factors off this pivot and leaves its rational Schur
complement, which is again positive semidefinite.  Induction gives the
stated $L\Lambda L^{\mathsf T}$ factorization.  Finally, each positive
rational number is a sum of four rational squares (apply Lagrange's
four-square theorem to the product of its numerator and denominator), so
all diagonal weights can be absorbed into rational linear forms.
\end{proof}

For even $k$, the certified pencil is rational and positive semidefinite,
$\theta\Gamma-A\succeq0$.  Apply \cref{lem:rational-psd} to obtain
$\Pi^{\mathsf T}(\theta\Gamma-A)\Pi=L\Lambda L^{\mathsf T}$.  Hence
\begin{equation}\label{eq:sos-pencil}
 \theta\,z(x)^{\mathsf T}\Gamma z(x)-z(x)^{\mathsf T}Az(x)
 =\sum_i\Lambda_{ii}
 \bigl((L^{\mathsf T}\Pi^{\mathsf T}z(x))_i\bigr)^2,
\end{equation}
and the last sentence of \cref{lem:rational-psd} converts this into a
literal sum of squares of rational polynomials of degree at most $\ell$.
Since $z^{\mathsf T}Az=m\Tr P_F\,V_I(x)$ and
$z^{\mathsf T}\Gamma z=2m\Tr P_F$ identically modulo the ideal, this is
a degree-$2\ell$ sum-of-squares proof of $V_I(x)\le2\theta$, and the
factorization of $\theta\Gamma+A$ gives the other side.

For odd $k$, the Cauchy--Schwarz step \cref{eq:odd-cauchy} has an exact
algebraic form: Lagrange's identity, with $\sum_ux_u^2$ replaced by $n$
using the ideal, gives
\begin{equation}\label{eq:sos-lagrange}
 n\sum_ug_u(x)^2-\Bigl(\sum_ux_ug_u(x)\Bigr)^2
 =\sum_{u<v}\bigl(x_vg_u(x)-x_ug_v(x)\bigr)^2 ,
\end{equation}
with ideal multipliers of degree $O_k(1)$.  The retained pair bound is
supplied by \cref{eq:sos-pencil} applied to the pencil
$\theta\Gamma_{\rm ref}-C_{\rm ref}$, and every deficit and intersection
correction is a termwise application of
\begin{equation}\label{eq:sos-monomial}
 1\pm x^A=\tfrac12(1\pm x^A)^2
 \qquad\text{modulo }x_i^2=1,
\end{equation}
with nonnegative integer coefficients such as $q_0-W_p$.  These
identities together give a sum-of-squares proof of
$\Cert'^2-V_I(x)^2\ge0$, where, for any prescribed rational tolerance
$\tau_0>0$, exact rational upper approximation of the square root produces
$\Cert_{\rm odd}\le\Cert'\le\Cert_{\rm odd}+\tau_0$.  The
rationalization error can therefore be made arbitrarily smaller than the
target advantage rather than consuming a fixed fraction of it.  Since
$\Cert'>0$, the identity
\[
 \Cert'\mp V_I
 =\frac{(\Cert'\mp V_I)^2+(\Cert'^2-V_I^2)}{2\,\Cert'}
\]
turns it into separate one-sided proofs without increasing the degree.
Every polynomial involved has degree $O_k(\ell)$.

\begin{corollary}[Explicit sum-of-squares certificates]\label{cor:sos}
At the sample scales of \cref{thm:main-refutation}, random $k$XOR has,
with probability $1-o(1)$, a degree-$O_k(\ell)$ sum-of-squares proof of
the certified two-sided strong-refutation bound, computable in
$n^{O_k(\ell)}$ bit operations.
\end{corollary}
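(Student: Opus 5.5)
The plan is to assemble \cref{cor:sos} from the certificates already built, handling the two parities separately and then combining blocks. At even $k$, take the instance built from all $m$ clauses. By \cref{prop:even-certificate}, with probability $1-o(1)$ under the null, $\|K\|\le\eps/4$ once $m\ge C_{k,\delta}\eps^{-2}\Dkl$. Rational bisection on $[0,1]$ with exact inertia tests (\cref{eq:pencil}) then produces a rational $\theta\le\eps/2$ for which both $\theta\Gamma-A\succeq0$ and $\theta\Gamma+A\succeq0$ are verified. Applying \cref{lem:rational-psd} to each pencil gives the identity \cref{eq:sos-pencil}. Since the monomial vector $z(x)$ has degree $\ell$, this is a degree-$2\ell$ sum-of-squares proof of $\pm V_I(x)\le2\theta\le\eps$. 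For the cost, the matrices have dimension $N\le n^\ell$ and rational entries of $O(\log m)=O_k(\log n)$ bits. Bareiss-style fraction-free elimination keeps intermediate entries as minors of polynomial bit length, and bisection to tolerance $\eps/4$ needs $O(\log(1/\eps))$ steps. Together these give $n^{O_k(\ell)}$ bit operations.

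At odd $k$, first apply \cref{lem:blocking} to reduce to blocks of size $\Theta(\eps^{-2}\Dkl)$. On each block, \cref{thm:odd-one-block} (run with target $\eps/4$) gives $\Cert_{\rm odd}\le\eps/4$ with probability $1-o(1)$, with exact enclosure $U$ obtained as above. The degree-$O_k(\ell)$ proof of $\Cert'^2-V_{I_j}^2\ge0$ is the chain described just before the corollary. First, Lagrange's identity \cref{eq:sos-lagrange} handles the Cauchy--Schwarz step. Second, the pencil factorization of $U\Gamma_{\rm ref}-C_{\rm ref}$ bounds $P_0$, using $z^{\mathsf T}\Gamma_{\rm ref}z=\Tr\Gamma_{\rm ref}$ modulo the ideal. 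Third, \cref{eq:sos-monomial} with nonnegative integer multipliers $q_0-W_p$ and the intersection terms absorbs the deficit and $P_{\rm int}$. Finally, the identity for $\Cert'\mp V_I$ converts the squared bound into two one-sided proofs. The block proofs then combine linearly: by \cref{lem:blocking}, $V_I=\sum_j(m_j/m)V_{I_j}$, and blocks whose certificate is clipped at $1$ can use the trivial degree-$k$ proof $1\pm x^{F_a}=\tfrac12(1\pm x^{F_a})^2$ clause by clause. A nonnegative combination of sum-of-squares proofs is again one, so the result is a proof of $\pm V_I\le\Cert(I)\le\eps$.

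The main obstacle is not the algebra, which is entirely the preceding identities, but accounting for degree and bit complexity uniformly. In the odd case, the certificate \cref{eq:odd-cert-value} is a square root, so I must check that the rational upper approximation $\Cert'$ with tolerance $\tau_0=\eps/8$ preserves the bound $\le\eps$. I must also check that the monomial multipliers in \cref{eq:sos-lagrange,eq:sos-monomial} have degree $O_k(\ell)$ after multiplying by $g_u$, which has degree $2r$, and by $x^{C_a\sd C_b}$, which has degree at most $4r\le2\ell$. Finally, the number of blocks is $m/\Theta(\Dkl)=n^{O_k(1)}$ in the explicit-input regime, so the total cost remains $n^{O_k(\ell)}$. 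The probability statement follows from the union or Hoeffding argument of \cref{lem:blocking}, since each block's proof exists exactly when its certificate is unclipped and small.
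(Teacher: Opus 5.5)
Your proposal is correct and follows essentially the paper's route. At even $k$ you take rational $L\Lambda L^{\mathsf T}$ factorizations of the verified pencils $\theta\Gamma\mp A$. At odd $k$ you chain Lagrange's identity, the pencil $U\Gamma_{\rm ref}-C_{\rm ref}$, the monomial identity for the deficit and intersection terms, and the $\Cert'\mp V_I$ identity, then assemble blockwise with trivial proofs for clipped blocks.

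The only slip is bookkeeping. Running a block at target $\eps/4$ and then adding $\eps/8$ of rounding slack gives $3\eps/8$ per block, not the $\eps/4$ that \cref{lem:blocking} is stated for. Either run the block at $\eps/8$, as the paper does, or note that any fixed per-block target strictly below $\eps$ suffices in that lemma's proof.
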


\subsection{Proof of \texorpdfstring{\cref{thm:main-refutation}}{the refutation theorem}}

\begin{proof}[Proof of \cref{thm:main-refutation}]
For even $k$, use all clauses and output $\Cert(I)=2U$, where $U$ is
an exact enclosure of $\|K\|$ at tolerance $\eps/4$.  This is sound and
at most $\eps$ under the null for every
$m\ge C_{k,\delta}\eps^{-2}\Dkl$ by
\cref{prop:even-certificate}; the procedure is deterministic and uses no
blocking.

For odd $k$, run the one-block construction at the internal target
$\eps/8$.  Increase the leading constant in the main theorem so that its
sample threshold dominates the corresponding
$C_{r,\delta}(\eps/8)^{-2}\Dkl$ threshold.  Use eigenvalue tolerance
$(\eps/8)^2/(20C_r)$ and then choose a rational upper approximation
$\Cert'$ to the square-root certificate with error at most $\eps/8$.
Thus a null block has $\Cert'\le\eps/4$ with probability $1-o(1)$,
while soundness holds for every input and rooting.  Apply
\cref{lem:blocking} to all larger inputs.  This explicit slack also makes
the degree-$O_k(\ell)$ sum-of-squares bound certify the advertised target
rather than a larger one.  For a clipped block, use either its computed
sum-of-squares proof or the trivial proof of the bound one; nonnegative
weighted sums of the block proofs preserve the degree.  The construction
costs are \cref{lem:construction}, the exact evaluation costs are those
of \cref{sec:solvers}, and the sum-of-squares proofs are \cref{cor:sos}.
\end{proof}

\section{Matching lower bounds in the same model}\label{sec:lower}

This section proves \cref{thm:main-lower} and the minimax bound
\cref{thm:minimax}.  Everything here concerns the fixed-$m$ laws $Q_m$
and $P_{m,x}$ of \cref{sec:results} directly: no Poissonization, no
conditioning, and no rare-collision estimate appears anywhere.
Conditional on the supports, the sign monomials form an exact
orthonormal basis, and the whole calculation reduces to an ordered count
of clause families with a prescribed parity boundary.  These families
are exactly the labeled even covers when the boundary is empty, which is
the promised mirror image of the trace expansions of \cref{sec:trace}.

\subsection{Sign degree}\label{sec:sign-degree}

Write $F=(F_1,\ldots,F_m)$ and $y=(y_1,\ldots,y_m)$ for the supports and
signs of an instance, and for $A\subseteq[m]$ put
$y_A:=\prod_{a\in A}y_a$.  Conditional on $F$, the family
$(y_A)_{A\subseteq[m]}$ is the Walsh orthonormal basis of the sign space
under $Q_m$, so every $g\in L^2(Q_m)$ has a unique expansion
$g(F,y)=\sum_Ag_A(F)y_A$ with square-integrable coefficients, and
Parseval holds conditionally.

\begin{definition}[Sign degree]\label{def:sign-degree}
For $D\ge0$, let $\cH_D\subseteq L^2(Q_m)$ be the closed subspace of
functions
\begin{equation}\label{eq:sign-degree}
 g(F,y)=\sum_{|A|\le D}g_A(F)\,y_A ,
\end{equation}
where each coefficient $g_A$ may be an arbitrary square-integrable
function of the \emph{entire} support sequence.  Write $\Pi_{\le D}$ for
the orthogonal projection onto $\cH_D$.
\end{definition}

The class is deliberately generous on the support side: the support
multihypergraph may be preprocessed without restriction, repeated
supports may be located and used freely, and only the number of sign
coordinates in a Walsh monomial is charged.  Products obey
$\cH_d\cdot\cH_d\subseteq\cH_{2d}$, because
$y_Ay_{A'}=y_{A\sd A'}$ and $|A\sd A'|\le|A|+|A'|$.

\begin{lemma}
\label{lem:class-containment}
Each of the following lies in $\cH_D$:
\begin{enumerate}
\item every polynomial of degree at most $D$ in the signed one-hot
variables $X_{a,\alpha}:=y_a\1\{F_a=\alpha\}$,
$\alpha\in\binom{[n]}k$;
\item every polynomial of degree at most $D$ in the aggregated signed
counts $Z_\alpha:=\sum_{a}y_a\1\{F_a=\alpha\}$;
\item every function of Efron--Stein degree at most $D$ in the
independent observations $(F_a,y_a)_{a\le m}$.
\end{enumerate}
Moreover, an independent random seed may be adjoined to the support-side
data, so the lower bounds below also cover randomized procedures of sign
degree $D$.
\end{lemma}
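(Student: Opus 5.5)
The plan is to verify all three items by one mechanism. Conditional on $F$, a function lies in $\cH_D$ exactly when its conditional Walsh expansion in the signs $y$ contains only monomials $y_A$ with $|A|\le D$. So it suffices to expand each candidate $g$ explicitly and read off the sign coordinates that appear in each term. We also need square-integrability of the coefficients, which comes for free: every $g\in L^2(Q_m)$ has a conditional expansion whose coefficients are square-integrable functions of $F$, by conditional Parseval. It therefore remains only to show that the coefficients with $|A|>D$ vanish.

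For item (1), a monomial of degree at most $D$ in the variables $X_{a,\alpha}$ is a product $\prod_{j\le D'}y_{a_j}\1\{F_{a_j}=\alpha_j\}$ with $D'\le D$. This equals $\bigl(\prod_j\1\{F_{a_j}=\alpha_j\}\bigr)\,y_{A}$, where $A$ is the set of indices appearing an odd number of times, since $y_a^2=1$. Hence $|A|\le D'\le D$, the coefficient is a bounded function of $F$, and linearity together with closedness of $\cH_D$ finishes the item.

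Item (2) reduces to item (1), because each $Z_\alpha=\sum_aX_{a,\alpha}$ is linear in the $X$'s; a degree-$D$ polynomial in the $Z$'s is therefore a degree-$D$ polynomial in the $X$'s.

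For item (3) I would use the Efron--Stein decomposition $g=\sum_{|B|\le D}g_B$. Each $g_B$ depends only on the observations $(F_b,y_b)_{b\in B}$. Conditional on $F$, such a function is a function of $(y_b)_{b\in B}$ alone, so its conditional Walsh expansion only involves $y_A$ with $A\subseteq B$, hence $|A|\le D$. The one point needing care, and the only real obstacle I foresee, is measurability and integrability of the resulting coefficients $g_A(F)=\E_Q[g\,y_A\mid F]$. These are conditional expectations of the $L^2$ function $g\,y_A$, hence square-integrable. Summing over $B$ keeps the expansion supported on $|A|\le D$.

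For the randomized statement, replace the support-side data $F$ by the pair $(F,\xi)$ with $\xi$ an independent seed. Conditional on $(F,\xi)$ the $y_A$ remain an orthonormal basis under the null, and the same arguments go through with coefficients $g_A(F,\xi)$. Any bound proved later for all such $g$ pointwise in $\xi$ then averages over $\xi$.
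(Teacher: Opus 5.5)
Your proposal is correct and follows essentially the same route as the paper. It reduces $y_a^2=1$ in one-hot monomials to get a support-dependent coefficient times $y_A$ with $|A|\le D$, rewrites count polynomials as one-hot polynomials, observes that each Efron--Stein term depending on at most $D$ coordinates is, conditionally on $F$, a function of at most $D$ signs, and absorbs the seed into the coefficients. Your remarks on the square-integrability of the conditional coefficients are harmless additions that the paper leaves implicit.
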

\begin{proof}
A monomial of degree at most $D$ in the one-hot variables involves at
most $D$ distinct sample indices; after reducing $y_a^2=1$ and
conditioning on $F$, it is a support-dependent coefficient times $y_A$
with $|A|\le D$.  Expanding a polynomial in the counts produces the same
form.  A term of an Efron--Stein decomposition depending on at most $D$
sample coordinates is, conditional on all supports, a function of at
most $D$ signs, whose Walsh expansion uses only subsets of those
coordinates.  For the last claim, a seed $U$ independent of the instance
is absorbed into the coefficients $g_A(F,U)$, and all arguments below
apply conditionally on $U$.
\end{proof}

\subsection{Exact boundary expansions}

For $A\subseteq[m]$, define the parity boundary of the corresponding
clause family in the realized supports,
\[
 \partial_FA:=\bigsd_{a\in A}F_a\ \subseteq[n];
\]
two labels with the same support cancel in the boundary, as they must.
For fixed $x$, the likelihood ratio is
\begin{equation}\label{eq:lr}
 \frac{dP_{m,x}}{dQ_m}(F,y)=\prod_{a=1}^m\bigl(1+\rho\,y_ax^{F_a}\bigr),
\end{equation}
since the support laws agree and the one-sign ratio is
$1+\rho y_ax^{F_a}$.  For $B\subseteq[n]$ define the tilted likelihood
ratio $L_B:=\E_x[x^B\,dP_{m,x}/dQ_m]$, with $x$ uniform; in particular
$L_\varnothing=dP_m/dQ_m$ for the planted mixture $P_m$.

\begin{lemma}\label{lem:boundary-lr}
For every $B\subseteq[n]$,
\begin{equation}\label{eq:boundary-expansion}
 L_B(F,y)=\sum_{A\subseteq[m]:\ \partial_FA=B}\rho^{|A|}\,y_A,
 \qquad
 \Pi_{\le D}L_B=\sum_{\substack{|A|\le D\\ \partial_FA=B}}
 \rho^{|A|}\,y_A,
\end{equation}
and consequently
\begin{equation}\label{eq:projection-norm}
 \bigl\|\Pi_{\le D}L_B\bigr\|_{L^2(Q_m)}^2
 =\sum_{t=0}^{\min\{D,m\}}\binom mt\rho^{2t}\,q_{t,B},
 \qquad
 q_{t,B}:=\bbP\{F_1\sd\cdots\sd F_t=B\},
\end{equation}
where $F_1,\ldots,F_t$ are independent uniform $k$-sets.
\end{lemma}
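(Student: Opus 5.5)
The plan is to expand the product in \cref{eq:lr}, average over $x$, and then read off the projection and norm from conditional Walsh orthonormality. First, for fixed $x$, expand
\[
 \prod_{a=1}^m\bigl(1+\rho\,y_ax^{F_a}\bigr)=\sum_{A\subseteq[m]}\rho^{|A|}y_A\prod_{a\in A}x^{F_a}
 =\sum_{A\subseteq[m]}\rho^{|A|}y_A\,x^{\partial_FA},
\]
using $x_i^2=1$, so that $\prod_{a\in A}x^{F_a}=x^{\bigsd_{a\in A}F_a}$; repeated supports cancel automatically. Multiply by $x^B$ and average over uniform $x$. Since $\E_x x^{C}=\1\{C=\varnothing\}$, we get $\E_x x^{B}x^{\partial_FA}=\E_x x^{B\sd\partial_FA}=\1\{\partial_FA=B\}$. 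This gives the first identity in \cref{eq:boundary-expansion}, and it holds pointwise in $(F,y)$.

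Next, the expansion just obtained is already the Walsh expansion of $L_B$ conditional on $F$. Its coefficients $g_A(F)=\rho^{|A|}\1\{\partial_FA=B\}$ are bounded functions of the supports, and the sum is finite. By uniqueness of the conditional Walsh expansion in \cref{def:sign-degree}, the projection $\Pi_{\le D}$ acts by truncating to $|A|\le D$. One point needs care: $\Pi_{\le D}$ is a projection in $L^2(Q_m)$ onto a space with arbitrary support-dependent coefficients, so it acts fiberwise. Concretely, the orthogonal complement of $\cH_D$ consists of the functions whose conditional coefficients vanish for $|A|\le D$, and conditional Parseval identifies the projection with truncation. This is the only step that needs justification beyond algebra, and it is routine.

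For \cref{eq:projection-norm}, apply conditional Parseval and then average over $F$:
\[
 \|\Pi_{\le D}L_B\|^2=\E_F\sum_{|A|\le D}\rho^{2|A|}\1\{\partial_FA=B\}
 =\sum_{t\le\min\{D,m\}}\rho^{2t}\sum_{|A|=t}\bbP\{\partial_FA=B\}.
\]
Because the supports are i.i.d.\ uniform $k$-sets, $\bbP\{\partial_FA=B\}$ depends only on $|A|=t$ and equals $q_{t,B}$. There are $\binom mt$ such sets $A$, which gives the stated formula. No step here is a real obstacle.
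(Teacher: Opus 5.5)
Your proposal is correct and follows the paper's proof essentially step for step. You expand the product likelihood ratio, average $x^{B\sd\partial_FA}$ over uniform $x$, identify $\Pi_{\le D}$ with conditional Walsh truncation, and then apply conditional Parseval and average over the i.i.d.\ supports, getting $\binom mt$ sets that each contribute $q_{t,B}$. Your remark that the orthogonal complement of $\cH_D$ consists of the functions whose conditional coefficients vanish for $|A|\le D$ spells out what the paper's phrase ``conditional Walsh truncation is the exact orthogonal projection'' leaves implicit.
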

\begin{proof}
Expanding \cref{eq:lr},
$x^B\,dP_{m,x}/dQ_m=\sum_A\rho^{|A|}y_Ax^{B\sd\partial_FA}$, and the
uniform average of $x^U$ is $\1\{U=\varnothing\}$; conditional Walsh
truncation is the exact orthogonal projection.  For the norm, apply
Parseval conditionally on $F$ and average: each labeled set $A$ of size
$t$ has boundary probability $q_{t,B}$, and there are $\binom mt$ of
them.
\end{proof}

\subsection{The boundary count}

\begin{lemma}\label{lem:boundary-count}
Let $B\subseteq[n]$ with $b:=|B|$, and let $F_1,\ldots,F_t$ be
independent uniform $k$-subsets of $[n]$.  If $kt-b$ is negative or odd
then $q_{t,B}=0$.  Otherwise, for $n\ge2k$,
\begin{equation}\label{eq:boundary-count}
 q_{t,B}\ \le\ \left(\frac{2kt}n\right)^{(kt+b)/2} .
\end{equation}
\end{lemma}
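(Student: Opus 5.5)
We bound the probability directly by counting ordered tuples of $k$-sets whose symmetric difference is $B$, and show that every incidence not explained by a "first visit" of a new vertex costs a factor of order $kt/n$.

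\emph{Step 1: parity and the trivial cases.} Write $U:=F_1\cup\cdots\cup F_t$ and let $\mu(v)$ be the multiplicity of $v$ among the $kt$ incidences. The event $F_1\sd\cdots\sd F_t=B$ means $\mu(v)$ is odd exactly for $v\in B$. So
\[
 kt=\sum_v\mu(v)\equiv b\pmod 2,\qquad kt\ge b.
\]
This gives $q_{t,B}=0$ when $kt-b$ is negative or odd. In the remaining case, every $v\in U\setminus B$ has $\mu(v)\ge2$ and every $v\in B$ has $\mu(v)\ge1$, so $B\subseteq U$ and
\[
 |U\setminus B|\le\frac{kt-b}2 .
\]

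\emph{Step 2: sequential exposure with a canonical order.} Expose $F_1,\ldots,F_t$ in turn. Inside $F_i$, list first the elements already in $B\cup F_1\cup\cdots\cup F_{i-1}$ (we call these \emph{constrained}), then the remaining, fresh elements. This canonical within-set ordering is what should avoid a $\binom{kt}{s}$ placement factor: only the number $c_i$ of constrained elements of $F_i$ is recorded, not their positions.

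Given the past, the constrained part of $F_i$ is a $c_i$-subset of a set of size at most $b+k(i-1)\le 2kt$, where we use $b\le kt$. Its probability under a uniform $k$-set is at most
\[
 \binom{2kt}{c_i}\binom{n}{k-c_i}\Big/\binom nk \ \lesssim\ \binom{k}{c_i}\left(\frac{2kt}{n}\right)^{c_i}.
\]
Every fresh element enlarges $U\setminus B$. Step 1 therefore forces
\[
 \sum_i c_i\ \ge\ kt-\frac{kt-b}2=\frac{kt+b}2=:s .
\]

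\emph{Step 3: summing over profiles, and the main obstacle.} Multiply these conditional bounds and sum over profiles $(c_1,\ldots,c_t)$ with $\sum_i c_i\ge s$. Because $2kt/n\le 1$ is not guaranteed, I would first note that $q_{t,B}\le1$ handles the case $2kt\ge n$. When $2kt<n$, extra constrained elements only help, so the sum is dominated by $\sum_i c_i=s$ and the bound reduces to
\[
 \Bigl(\sum_{\sum c_i=s}\prod_i\binom{k}{c_i}\Bigr)\left(\frac{2kt}{n}\right)^{s}
 \le\binom{kt}{s}\left(\frac{2kt}{n}\right)^{s}
 \quad\text{(Vandermonde).}
\]

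The residual factor $\binom{kt}{s}$, together with the ratio $\binom{2kt}{c}\binom{n}{k-c}\big/\binom nk$ versus $\binom kc(2kt/n)^c$ (where $n\ge2k$ is used), is exactly where the sharp constant $2k$ must be protected. I do not yet see the closing argument. I would first try to absorb $\binom{kt}{s}$ by a tighter count of the constrained targets. Such an element of $F_i$ must land either in $B$ (size $b$) or on a vertex of $U\setminus B$ that still needs to be matched (at most $(kt-b)/2$ of them). This gives at most $b+(kt-b)/2=s$ choices rather than $2kt$. Then $\binom{kt}{s}s^{s}/s!\cdot(\cdots)$-type estimates, using $\binom{kt}{s}\le (kt)^s/s!$ and $s^s/s!\le e^s$, would leave $(e\,kt/n)^s$ up to the ordering conversion $\bigl(n^k/(n)_k\bigr)^t$. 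That conversion, controlled for $n\ge2k$, should fit inside the slack between $e$ and $2$ per constrained incidence. I expect this final constant bookkeeping, rather than the parity structure, to be the delicate step.
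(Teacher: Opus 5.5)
\textbf{There is a genuine gap: as set up, your argument does not reach the stated constant $2k$, and the closing step you sketch fails for a concrete numerical reason.} Your parity step and the first-visit observation are correct: there are at most $h:=(kt-b)/2$ fresh vertices, hence at least $s=(kt+b)/2$ constrained incidences.

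\emph{Why the sketched closing step fails.} The estimate $\binom{kt}{s}(s/n)^s\le\frac{(kt)^s}{s!}\cdot\frac{s^s}{n^s}\le(e\,kt/n)^s$ lands \emph{above} the target $(2kt/n)^s$, because $e>2$. So there is no ``slack between $e$ and $2$'' in which to hide the ordering factor $\bigl(n^k/(n)_k\bigr)^t\ge1$; the inequality points the wrong way.

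\emph{The optimal placement estimate does not rescue it either.} The best estimate of the placement term is $\binom{kt}{s}s^s\le(2kt)^s$. It holds for $s\ge kt/2$, via $\binom Ks\le(K/s)^s\bigl(K/(K-s)\bigr)^{K-s}$. But it is tight to leading order. It therefore leaves no room for the ordering factor, nor for the profiles with $\sum_ic_i>s$, each of which contributes a factor strictly larger than one.

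\emph{Further losses in Steps 2--3 as written.} With your original target set of size $2kt$, the per-step bound is really $\binom kc\bigl(2kt/(n-k+1)\bigr)^c$. Under only $n\ge2k$, this hides another $2^c$. In addition, the tail over $\sum_ic_i>s$ costs a factor $1/(1-2kt/n)$, which is unbounded as $2kt\uparrow n$.

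\emph{What your route does give.} Your bookkeeping yields only $q_{t,B}\le(Ckt/n)^{s}$ for some absolute $C>2$. To get even this, you enlarge $C$ so that $Ckt/n\ge1$ is the trivial regime, and otherwise $2kt/n$ is small. That weaker form would suffice downstream, since \cref{thm:master} absorbs such constants into $c_k,C_k$. But it is not the lemma as stated.

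\textbf{The missing idea is to avoid sequential exposure altogether, so that the placement factor $\binom{kt}{s}$ is never created.}
\begin{enumerate}
\item The paper orders each support in all $k!$ ways. This turns every ordered tuple with boundary $B$ into $(k!)^t$ distinct vertex sequences of length $K=kt$. In each, vertex $v$ has multiplicity $\mu_v=2a_v+1$ for $v\in B$ and $\mu_v=2a_v$ otherwise, with $\sum_va_v=h$.
\item Dropping within-block distinctness, the number of such sequences is at most $K!\sum_{\sum_va_v=h}\prod_v(\mu_v!)^{-1}\le\frac{K!}{2^h}\sum_{\sum_va_v=h}\prod_v(a_v!)^{-1}=\frac{K!\,n^h}{2^h\,h!}$. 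This uses $\mu_v!\ge(2a_v)!\ge2^{a_v}a_v!$ and the multinomial identity.
\item Dividing by $(k!)^t\binom nk^t=(n)_k^t\ge(n/2)^K$ gives $q_{t,B}\le\frac{K!}{h!}\,2^{K-h}n^{-(K-h)}\le(2K/n)^{K-h}$, with $K-h=s$.
\end{enumerate}
The only powers of $2$ are the $2^K$ from $(n)_k\ge(n/2)^k$ and the $2^{-h}$ from the even factorials. This is exactly why the constant comes out as $2k$.
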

\begin{proof}
The boundary condition says each vertex of $B$ has odd total degree and
each other vertex even total degree, so $kt-b$ is twice the number
$h:=(kt-b)/2$ of surplus incidence pairs, proving the parity and range
constraints.  Let $N^{\rm ord}_{t,B}$ be the number of ordered support
tuples with boundary $B$, and give each support an ordering of its $k$
elements: each tuple yields exactly $(k!)^t$ vertex sequences of length
$K:=kt$, all with the stated degree parities, and distinct tuples yield
distinct sequences.  Dropping the constraint that entries within a block
are distinct,
\[
 (k!)^tN^{\rm ord}_{t,B}
 \ \le\
 K!\sum_{\substack{a_1,\ldots,a_n\ge0\\ \sum_va_v=h}}
 \frac1{\prod_{v\in B}(2a_v+1)!\ \prod_{v\notin B}(2a_v)!}
 \ \le\
 \frac{K!}{2^hh!}\,n^h,
\]
using $(2a)!\ge2^aa!$, $(2a+1)!\ge2^aa!$, and the multinomial identity
$\sum_{\sum a_v=h}\prod_v(a_v!)^{-1}=n^h/h!$.  Since
$(k!)^t\binom nk^t=(n)_k^t\ge(n/2)^K$ for $n\ge2k$, dividing by
$\binom nk^t$ gives
\[
 q_{t,B}\ \le\ \frac{K!}{h!}\,2^{K-h}\,n^{-(K-h)}
 \ \le\ \left(\frac{2K}n\right)^{K-h},
\]
because $K!/h!$ is a product of $K-h$ factors, each at most $K$.
Finally $K-h=(kt+b)/2$.
\end{proof}

The count permits repeated supports.  In particular
$q_{2,\varnothing}=1/\binom nk$, because two labeled clauses have empty
boundary exactly when their supports coincide; the resulting collision
term $\binom m2\rho^4/\binom nk$ in \cref{eq:projection-norm} is genuine.
On the level curve it has scale $\Theta_k(\ell^{-(k-2)})$, independently
of $\rho$; it can therefore be of constant order at bounded level (and
for every level when $k=2$).  We handle it exactly rather than discard it.

\subsection{The master bound}

Define, with the convention $D^{k/2-1}:=1$ for $k=2$,
\begin{equation}\label{eq:LambdaD}
 \Lambda_D:=e\,(2k)^{k/2}\,
 \frac{\rho^2mD^{k/2-1}}{n^{k/2}} .
\end{equation}

\begin{theorem}[Fixed-sample boundary likelihood bound]\label{thm:master}
Assume $D\le n/(2k)$ and $\Lambda_D\le1/2$.
\begin{enumerate}
\item For every nonempty $B\subseteq[n]$ of fixed size $b$,
\begin{equation}\label{eq:master-B}
 \bigl\|\Pi_{\le D}L_B\bigr\|_2^2
 \ \le\ \frac{\Lambda_D}{1-\Lambda_D}
 \left(\frac{2kD}n\right)^{b/2}
 \ \le\ C_{k,b}\left(\frac Dn\right)^{b/2}.
\end{equation}
\item For the planted mixture,
\begin{equation}\label{eq:master-empty}
 \bigl\|\Pi_{\le D}(L_\varnothing-1)\bigr\|_2^2
 \ \le\ \frac{\Lambda_D^2}{1-\Lambda_D}\ \le\ 2\Lambda_D^2 .
\end{equation}
\end{enumerate}
\end{theorem}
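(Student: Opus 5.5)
The plan is to plug \cref{lem:boundary-count} into the exact norm formula \cref{eq:projection-norm} of \cref{lem:boundary-lr} and bound the resulting series term by term by a geometric series in $\Lambda_D$. For each $t\le D$, use $\binom mt\le m^t/t!\le (em/t)^t$ and $q_{t,B}\le(2kt/n)^{(kt+b)/2}$. The $t$-th term is then at most
\[
 \Bigl(\frac{e\rho^2m}{t}\Bigr)^t\Bigl(\frac{2kt}{n}\Bigr)^{kt/2}\Bigl(\frac{2kt}{n}\Bigr)^{b/2}
 =\Bigl(e(2k)^{k/2}\frac{\rho^2m\,t^{k/2-1}}{n^{k/2}}\Bigr)^t\Bigl(\frac{2kt}{n}\Bigr)^{b/2}.
\]
Since $t\le D$ and $k\ge2$, the exponent satisfies $t^{k/2-1}\le D^{k/2-1}$ (trivially equal to $1$ when $k=2$). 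Hence the first factor is at most $\Lambda_D^t$ and the second at most $(2kD/n)^{b/2}$. The hypothesis $D\le n/(2k)$ guarantees $2kt/n\le1$, so no term of the count bound is used outside its range, and the condition $n\ge2k$ required by \cref{lem:boundary-count} follows.

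\emph{Nonempty boundary.} The $t=0$ term vanishes because $q_{0,B}=\1\{B=\varnothing\}=0$. Summing $\Lambda_D^t$ over $t\ge1$ gives $\Lambda_D/(1-\Lambda_D)$ times $(2kD/n)^{b/2}$, which is \cref{eq:master-B}. The second inequality there uses $\Lambda_D\le1/2$ together with the trivial constant $C_{k,b}=2(2k)^{b/2}$.

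\emph{Empty boundary.} Subtracting $1$ removes exactly the $t=0$ term, and $q_{1,\varnothing}=0$ since a single $k$-set never has empty boundary. The sum therefore starts at $t=2$, giving $\sum_{t\ge2}\Lambda_D^t=\Lambda_D^2/(1-\Lambda_D)\le2\Lambda_D^2$, which is \cref{eq:master-empty}. For odd $k$ the parity constraint in \cref{lem:boundary-count} additionally kills odd $t$, but this is not needed.

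The only delicate point is checking that the crude estimate $\binom mt\le(em/t)^t$ combines with the factor $t^{kt/2}$ to produce exactly $t^{(k/2-1)t}$ with no leftover powers of $t$, and that monotonicity in $t$ is used only through $t\le D$. A related point is the $k=2$ convention: there $t^{k/2-1}=1$, so nothing is lost. No additional collision handling is required, since repeated supports are already included in $q_{t,B}$; for instance, the collision term at $t=2$ is covered by the bound on $q_{2,\varnothing}$.
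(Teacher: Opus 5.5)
Your proposal is correct and matches the paper's proof essentially line for line. The paper likewise combines $\binom mt\le(em/t)^t$ with the boundary count to obtain the termwise bound $\Lambda_D^t(2kD/n)^{b/2}$ by monotonicity in $t\le D$, then sums the geometric series from $t=1$ for nonempty $B$ and from $t=2$ for $B=\varnothing$, since the $t=0$ term is the removed constant and $q_{1,\varnothing}=0$.
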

\begin{proof}
By $\binom mt\le(em/t)^t$ and \cref{lem:boundary-count},
\begin{equation}\label{eq:master-term}
 \binom mt\rho^{2t}q_{t,B}
 \ \le\
 \left[e(2k)^{k/2}\frac{\rho^2m\,t^{k/2-1}}{n^{k/2}}\right]^{t}
 \left(\frac{2kt}n\right)^{b/2}
 \ \le\ \Lambda_D^{\,t}\left(\frac{2kD}n\right)^{b/2}
\end{equation}
for $t\le D$, since $t^{k/2-1}$ and $(2kt/n)^{b/2}$ are nondecreasing
in $t$.  For nonempty $B$ the $t=0$ term is absent, and summing the
geometric series over $t\ge1$ in \cref{eq:projection-norm} proves
\cref{eq:master-B}; for $B=\varnothing$, the $t=0$ term is the removed
constant and the $t=1$ term vanishes, so the series starts at $t=2$
(which includes the collision term), giving \cref{eq:master-empty}.
\end{proof}

\subsection{Detection}

For $g\in L^2(Q_m)$ with $\Var_{Q_m}(g)>0$, define the standardized mean
advantage
$\Adv(g):=|\E_{P_m}g-\E_{Q_m}g|/\sqrt{\Var_{Q_m}(g)}$.

\begin{theorem}[Log-free detection lower bound, fixed-sample model]
\label{thm:ld-detection}
Assume $D\le n/(2k)$ and $\Lambda_D\le1/2$.  Then
\begin{equation}\label{eq:ld-detection}
 \sup_{\substack{g\in\cH_D\\\Var_{Q_m}(g)>0}}\Adv(g)
 \ =\ \bigl\|\Pi_{\le D}(L_\varnothing-1)\bigr\|_2
 \ \le\ \sqrt2\,\Lambda_D .
\end{equation}
In particular, if $m=o\bigl(n^{k/2}/(\rho^2D^{k/2-1})\bigr)$ then the
advantage of every sign-degree-$D$ statistic tends to zero, and strong
separation in the low-degree sense is impossible; the same holds, a
fortiori, for every encoding covered by \cref{lem:class-containment}.
\end{theorem}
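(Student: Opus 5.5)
The plan is the standard Hilbert-space duality for low-degree advantage, followed by an appeal to \cref{thm:master}.

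\emph{Step 1: reduce to the constant-free part.} For $g\in L^2(Q_m)$, change of measure gives
\[
\E_{P_m}g=\E_{Q_m}[gL_\varnothing]=\langle g,L_\varnothing\rangle_{L^2(Q_m)}.
\]
Since $\E_{Q_m}L_\varnothing=1$, this yields
\[
\E_{P_m}g-\E_{Q_m}g=\langle g-\E_{Q_m}g,\;L_\varnothing-1\rangle .
\]
The constant function lies in $\cH_D$; it is the term $A=\varnothing$ with the constant coefficient. Hence $g\mapsto g-\E_{Q_m}g$ maps $\cH_D$ into $\cH_D$. The centered elements of $\cH_D$ are exactly the elements of $\cH_D$ orthogonal to $1$. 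In particular, the conditional constant term $g_\varnothing(F)$ is allowed, but centering only removes the overall mean. Moreover $\Var_{Q_m}(g)=\|g-\E_{Q_m}g\|_2^2$.

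\emph{Step 2: compute the supremum by projection.} For centered $h\in\cH_D$ we have
\[
\langle h,L_\varnothing-1\rangle=\langle h,\Pi_{\le D}(L_\varnothing-1)\rangle .
\]
Cauchy--Schwarz then bounds the standardized advantage by $\|\Pi_{\le D}(L_\varnothing-1)\|_2$, provided this projection is itself orthogonal to constants. It is: $\langle 1,\Pi_{\le D}(L_\varnothing-1)\rangle=\langle 1,L_\varnothing-1\rangle=0$, since $1\in\cH_D$ and $\Pi_{\le D}$ is self-adjoint.

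Equality is attained at $g=\Pi_{\le D}(L_\varnothing-1)$, whenever this function is nonzero. If it is zero, every advantage vanishes and the supremum is $0$. This is the only small subtlety: with support-dependent constant terms $g_\varnothing(F)$ allowed, the projection of $L_\varnothing-1$ could still carry an $F$-dependent constant term. By \cref{lem:boundary-lr}, however, the $A=\varnothing$ coefficient of $L_\varnothing$ is exactly $1$, because $\partial_F\varnothing=\varnothing$ and $\rho^0=1$. So $L_\varnothing-1$ has no $A=\varnothing$ component at all, and everything is consistent.

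\emph{Step 3: quantitative bound and consequences.} \Cref{thm:master}(2) gives $\|\Pi_{\le D}(L_\varnothing-1)\|_2^2\le2\Lambda_D^2$, which is the inequality in the statement. If $m=o\bigl(n^{k/2}/(\rho^2D^{k/2-1})\bigr)$, then $\Lambda_D\to0$ by \cref{eq:LambdaD}. So $\Lambda_D\le1/2$ eventually and the advantage tends to $0$. The final clause, covering the other encodings, follows from \cref{lem:class-containment}, which places them in $\cH_D$; randomized statistics are handled conditionally on the seed.

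No step is a genuine obstacle. The only point needing care is the bookkeeping of the support-dependent constant term in Step 2, which the exact expansion of \cref{lem:boundary-lr} resolves.
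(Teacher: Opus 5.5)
Your proposal is correct and follows the paper's own proof. You center, rewrite the advantage as $\langle g,\Pi_{\le D}(L_\varnothing-1)\rangle$, apply Cauchy--Schwarz with equality at the normalized projection, and invoke \cref{thm:master}(2). Your extra check on the $A=\varnothing$ coefficient is harmless but unnecessary: $1\in\cH_D$ and the self-adjointness of $\Pi_{\le D}$ already make the maximizer centered, whatever support-dependent constant term it carries.
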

\begin{proof}
Centering, assume $\E_{Q_m}g=0$.  Then
$\E_{P_m}g=\ip g{L_\varnothing-1}_{L^2(Q_m)}
=\ip g{\Pi_{\le D}(L_\varnothing-1)}$ because $g\in\cH_D$, and
Cauchy--Schwarz gives
$\Adv(g)\le\|\Pi_{\le D}(L_\varnothing-1)\|_2$.  If the
projection is nonzero, equality is attained after normalizing
$g=\Pi_{\le D}(L_\varnothing-1)$; if it vanishes, both sides are zero.
Apply \cref{thm:master}.
\end{proof}

\begin{remark}[Fixed assignment versus mixture]\label{rem:mixture}
The theorem is stated for the uniform planted mixture, the standard
detection likelihood ratio; this loses nothing, since a test that
strongly separates $Q_m$ from $P_{m,x}$ uniformly in $x$ also separates
$Q_m$ from the mixture, contradicting \cref{thm:ld-detection}.
\end{remark}

\begin{remark}[What a low-degree converse is evidence for]
\label{rem:ld-status}
Low-degree lower bounds are not unconditional runtime lower bounds, and
recent work has sharpened what they can and cannot be taken to mean.
Holmgren and Wein~\cite{HolmgrenWein} gave the first counterexamples to
the low-degree conjecture; Buhai, Hsieh, Jain, and Kothari~\cite{BHJK}
refuted its quasi-polynomial form, and Mao~\cite{MaoLD} its
polynomial-time form, in both cases through constructed
permutation-invariant ensembles; and Jia and
Vijayaraghavan~\cite{JiaVijay} exhibit a natural problem the method
mispredicts.  None of these constructions is the planted $k$XOR
ensemble, and none contradicts the theorems of this section, which
concern a specific class of statistics and stand on their own.  In the
other direction, Hsieh, Kane, Kothari, Li, Mohanty, and
Tiegel~\cite{HKKLMT} convert low-degree bounds into rigorous lower
bounds against concrete algorithm classes.  We therefore read
\cref{tab:ledger} as: matched bounds within the low-degree framework,
and evidence for, but not proof of, computational optimality.
\end{remark}

\subsection{Recovery}

\begin{theorem}[Log-free recovery lower bound, fixed-sample model]
\label{thm:ld-recovery}
Let $d\le n/(4k)$ and suppose
$e(2k)^{k/2}\rho^2m(2d)^{k/2-1}/n^{k/2}\le1/2$.  Then every estimator
$f=(f_1,\ldots,f_n)$ with $f_i\in\cH_d$ and $|f_i|\le1$ pointwise obeys
\begin{equation}\label{eq:ld-recovery}
 \E_{x,\,I\sim P_{m,x}}
 \Bigl(\frac1n\sum_{i=1}^nx_if_i(I)\Bigr)^{2}
 \ \le\ \frac1n+C_k\sqrt{\frac dn},
\end{equation}
the expectation being over a uniform hidden assignment and the planted
instance.  Consequently, if $d=o(n)$, no such estimator attains
nonvanishing absolute overlap below
$m\asymp_kn^{k/2}/(\rho^2d^{k/2-1})$.  The statement is
parity-independent and already respects the global sign ambiguity for
even $k$.
\end{theorem}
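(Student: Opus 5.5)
We plan to reduce the overlap bound to the tilted likelihood ratios $L_{\{i,j\}}$ and then apply \cref{thm:master}(1) with $b=2$.

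\emph{Expanding the square.}
Write the left side of \cref{eq:ld-recovery} as
\[
 n^{-2}\sum_{i,j}\E_x\E_{P_{m,x}}[x_ix_jf_if_j].
\]
The diagonal terms $i=j$ contribute at most $n^{-2}\sum_i\E f_i^2\le1/n$, using $|f_i|\le1$.

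For $i\ne j$, change measure to the null:
\[
 \E_x\E_{P_{m,x}}[x_ix_jf_if_j]=\E_{Q_m}\bigl[f_if_j\,L_{\{i,j\}}\bigr],
\]
since $L_B=\E_x[x^B\,dP_{m,x}/dQ_m]$ with $B=\{i,j\}$. The product $f_if_j$ lies in $\cH_{2d}$ because sign degrees add under multiplication. Hence this expectation equals
\[
 \ip{f_if_j}{\Pi_{\le 2d}L_{\{i,j\}}}.
\]

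\emph{Bounding the off-diagonal sum.}
Summing over ordered pairs $i\ne j$, we get
\[
 \sum_{i\ne j}\ip{f_if_j}{\Pi_{\le2d}L_{ij}}
 \le\Bigl(\sum_{i\ne j}\|f_if_j\|_2^2\Bigr)^{1/2}
 \Bigl(\sum_{i\ne j}\|\Pi_{\le2d}L_{ij}\|_2^2\Bigr)^{1/2}.
\]
The first factor is at most $n$, using $|f_if_j|\le1$.

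For the second factor, apply \cref{thm:master}(1) with $D=2d$ and $b=2$. Its hypotheses hold: $2d\le n/(2k)$ follows from $d\le n/(4k)$, and $\Lambda_{2d}\le1/2$ is exactly the stated sample condition. This gives
\[
 \|\Pi_{\le2d}L_{ij}\|_2^2\le C_k(2d/n),
\]
so the second factor is at most $n\sqrt{C_k\,2d/n}$.

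Dividing by $n^2$, the off-diagonal part contributes at most $C_k\sqrt{d/n}$. Together with the diagonal bound this yields \cref{eq:ld-recovery}.

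\emph{Where the slack lies.}
This crude Cauchy--Schwarz loses a factor of $n$ relative to using orthogonality of the different $L_{ij}$. The loss is affordable only because the target bound is $\sqrt{d/n}$ rather than $d/n$. We therefore accept it and do not attempt a sharper decoupling.

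\emph{Consequences.}
Markov or Jensen then rules out nonvanishing absolute overlap when $d=o(n)$. The global sign ambiguity for even $k$ is harmless because the statistic is the square of the overlap, which is invariant under $x\mapsto -x$.

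The main point to check is the degree bookkeeping. We need $f_if_j\in\cH_{2d}$, and we need \cref{thm:master} to be applied at degree $2d$; this is why the theorem's hypotheses are stated in terms of $2d$.
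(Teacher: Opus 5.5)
Your proof is correct and follows the paper's argument. You expand the square, bound the diagonal terms by $1/n$, change measure to $L_{\{i,j\}}$, and project onto $\cH_{2d}$ using $f_if_j\in\cH_{2d}$. You then combine Cauchy--Schwarz with \cref{thm:master}(1) at $D=2d$, $b=2$, whose hypotheses are exactly $d\le n/(4k)$ and the stated sample condition. The only difference is cosmetic: you apply Cauchy--Schwarz once to the whole off-diagonal sum rather than termwise, which gives the same $C_k\sqrt{d/n}$ bound.
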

\begin{proof}
Expand the square.  The diagonal terms contribute at most $1/n$ by
boundedness.  For $i\ne j$, the change of measure gives the exact
identity
\[
 \E_{x,\,P_{m,x}}\bigl[x_ix_jf_if_j\bigr]
 =\E_{Q_m}\bigl[f_if_j\,L_{\{i,j\}}\bigr]
 =\E_{Q_m}\bigl[f_if_j\,\Pi_{\le2d}L_{\{i,j\}}\bigr],
\]
because $f_if_j\in\cH_{2d}$.  Cauchy--Schwarz, the pointwise bound
$\|f_if_j\|_2\le1$, and \cref{thm:master} with $b=2$ give
$|\E_{Q_m}[f_if_jL_{\{i,j\}}]|\le C_k\sqrt{d/n}$.  Averaging the
off-diagonal terms proves \cref{eq:ld-recovery}.
\end{proof}

\begin{corollary}\label{cor:ld-signed}
Let $k$ be odd and let $f_i\in\cH_d$ satisfy the weaker normalization
$\sum_i\|f_i\|_{L^2(Q_m)}^2\le n$.  Under the same condition,
\[
 \Bigl|\frac1n\sum_{i=1}^n
 \E_{x,\,P_{m,x}}[x_if_i(I)]\Bigr|
 \ \le\ C_k\Bigl(\frac dn\Bigr)^{1/4}.
\]
\end{corollary}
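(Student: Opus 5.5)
Since $x^\star$ is uniform and $\E_x x_i=0$, the planted correlation of a single coordinate can be rewritten through the tilted likelihood ratio with one-point boundary. I would use the same change of measure as in the proof of \cref{thm:ld-recovery}, but at first order rather than second. For each $i$,
\[
 \E_{x,\,P_{m,x}}[x_if_i(I)]=\E_{Q_m}\bigl[f_i\,L_{\{i\}}\bigr]
 =\E_{Q_m}\bigl[f_i\,\Pi_{\le d}L_{\{i\}}\bigr],
\]
where the second equality uses $f_i\in\cH_d$ and the self-adjointness of the orthogonal projection $\Pi_{\le d}$.

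Next I would bound the projected ratio. Apply \cref{thm:master}(1) with $B=\{i\}$, so $b=1$ and the degree parameter is $d$. This is legitimate because the hypothesis is stated at $2d$: since $d^{k/2-1}\le(2d)^{k/2-1}$, we get $\Lambda_d\le1/2$, and $d\le n/(4k)\le n/(2k)$. The theorem then gives $\|\Pi_{\le d}L_{\{i\}}\|_2^2\le C_k(d/n)^{1/2}$.

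Oddness of $k$ enters here. For a one-element boundary, \cref{lem:boundary-count} requires $kt-1$ to be even, which is possible only when $k$ is odd. For even $k$ the quantity vanishes identically, which is the global sign symmetry, so the corollary is really an odd-$k$ statement. Either way the bound holds, so I would simply note the parity remark.

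Then combine the two estimates. By Cauchy--Schwarz,
\[
 \Bigl|\E_{x,\,P_{m,x}}[x_if_i]\Bigr|\le\|f_i\|_2\,C_k(d/n)^{1/4}.
\]
Averaging over $i$ and applying Cauchy--Schwarz again in $i$ gives $\frac1n\sum_i\|f_i\|_2\le\bigl(\frac1n\sum_i\|f_i\|_2^2\bigr)^{1/2}\le1$, by the weaker normalization. This yields the claimed bound $C_k(d/n)^{1/4}$.

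The only step that needs care is the first one. We must check that the first-order change of measure is exact, i.e.\ that $L_{\{i\}}=\E_x[x_i\,dP_{m,x}/dQ_m]$ matches the definition preceding \cref{lem:boundary-lr}. We must also confirm that $f_i$ lies in $L^2(Q_m)$ so the pairing is well defined. Both follow directly from the definitions, so I expect no real obstacle.
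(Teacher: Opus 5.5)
Your proposal is correct and matches the paper's proof: rewrite each term as $\langle f_i,\Pi_{\le d}L_{\{i\}}\rangle$, bound the projection by \cref{thm:master} with $b=1$, and apply Cauchy--Schwarz first in $L^2(Q_m)$ and then over $i$. Your checks that the hypothesis at $2d$ gives $\Lambda_d\le1/2$ and $d\le n/(2k)$, and that $L_{\{i\}}$ vanishes for even $k$, are details the paper leaves implicit.
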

\begin{proof}
$\E_{x,P_{m,x}}[x_if_i]=\ip{f_i}{\Pi_{\le d}L_{\{i\}}}$, and
\cref{thm:master} with $b=1$ gives
$\|\Pi_{\le d}L_{\{i\}}\|_2^2\le C_k(d/n)^{1/2}$.  Apply
Cauchy--Schwarz in $L^2(Q_m)$ and then over $i$.
\end{proof}

\subsection{Minimax lower bounds for accurate and exact recovery}
\label{sec:minimax}

The argument in this subsection is standard, and sharp fixed-noise
thresholds are known in closely related parity-measurement models: at
$k=2$ for censored edge measurements~\cite{ABBS,HWX}, and at every
arity in the Bernoulli-hypergraph model of~\cite{ALS}, whose
exact-recovery threshold
$\frac{n\log n}{k(1-\sqrt{1-\rho^2})}(1+o(1))$ has small-bias
expansion $\tfrac2kn\rho^{-2}\log n\,(1+o(1))$; the corresponding
exponential rate at $k=2$ appears in~\cite{XJL}.  We need only the
uniform rate, and we prove it directly in the fixed-$m$,
with-replacement clause model used throughout, as a rate for the
expected error $\err_k$, which quotients the global sign exactly at
even arity.

\begin{theorem}[Minimax total-sample lower bounds for accurate and
exact recovery]\label{thm:minimax}
Even with unlimited computation,
\begin{equation}\label{eq:minimax}
 \inf_{\widehat x}\ \sup_{\xstar\in\{\pm1\}^n}
 \E_{I\sim P_{m,\xstar},\,U}
 \bigl[\err_k\bigl(\widehat x(I,U),\xstar\bigr)\bigr]
 \ \ge\ c_k\exp\bigl(-C_k\rho^2m/n\bigr),
\end{equation}
where $U$ denotes the estimator's internal randomness.  Consequently,
minimax expected error $\vartheta$ requires
$m=\Omega_k(n\rho^{-2}\log(1/\vartheta))$.  Moreover, by a separate
argument, exact recovery with probability $1-o(1)$ requires
$m=\Omega_k(n\rho^{-2}\log n)$.
\end{theorem}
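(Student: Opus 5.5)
The plan is to reduce both claims to per-coordinate binary testing. We use a Bayesian prior: $\xstar$ uniform on $\{\pm1\}^n$, with the estimator's internal randomness $U$ included. The Bayes risk lower-bounds the minimax risk.

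\textbf{Expected-error bound.} Write the Bayes expected error as $n^{-1}\sum_i\bbP\{\widehat x_i\ne x^\star_i\}$ at odd $k$. At even $k$, first replace $\err_k$ by the error of the better orientation. To handle the sign quotient, pin one coordinate, say condition on $x^\star_1=1$; flipping $x_1^\star$ costs at most $1/n$ per coordinate in the comparison, which is absorbed. For each $i$, compare the two hypotheses $\xstar$ and $\xstar$ with coordinate $i$ flipped. They differ only on clauses containing $i$, and each such clause's label law changes from bias $\rho$ to $-\rho$. The per-clause Hellinger affinity (or KL divergence) between $\mathrm{Bern}((1+\rho)/2)$ and $\mathrm{Bern}((1-\rho)/2)$ on a clause hitting $i$ is $\sqrt{1-\rho^2}$, i.e.\ a divergence of $O(\rho^2)$ when $\rho$ is bounded away from $1$. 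For $\rho$ near $1$, use the event that no clause contains $i$ instead.

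Concretely, I would use the Bretagnolle--Huber or Le Cam form $\bbP_{\rm err}\ge\frac12\mathrm{BC}^2$. Each of the $m$ clauses contains $i$ with probability $k/n$ independently, so the Bhattacharyya coefficient over the whole sample is
\[
\bigl(1-\tfrac kn+\tfrac kn\sqrt{1-\rho^2}\bigr)^m\ge\exp(-C_k\rho^2m/n),
\]
using $1-\sqrt{1-\rho^2}\le\rho^2$ and $k/n\le1/2$. Since the Bayes error for coordinate $i$ under the uniform prior on the other coordinates is at least the average over them of this two-point test error, averaging over $i$ gives \cref{eq:minimax}. The consequence for $\vartheta$ follows by inversion.

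\textbf{Exact recovery.} I would use the standard isolated-coordinate argument. Call a coordinate $i$ ``weak'' if the posterior odds of flipping $x_i^\star$ alone exceed one. Equivalently, its clauses carry a nonpositive net vote $\sum_{a\ni i}y_a x^{\star F_a\setminus\{i\}}\le0$, which has probability at least $\exp(-C_k\rho^2m/n)$, up to polynomial factors, by a binomial lower-tail (reverse Chernoff) bound. With $m\le c\,n\rho^{-2}\log n$, the expected number of weak coordinates is $n^{1-C_kc}\to\infty$. Weak coordinates are nearly independent: two fixed coordinates share a clause with probability $O(m/n^2)$. A second-moment computation therefore shows that at least two weak coordinates exist with probability $1-o(1)$, and two are needed at even $k$ because of the sign quotient. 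The MAP estimator, which is optimal for exact recovery, then errs with probability at least about $1/2$.

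\textbf{Main obstacle.} I expect the second-moment step for weak coordinates to be the hardest, because votes of coordinates sharing clauses are correlated. I would handle this by conditioning on the event that the two coordinates share no clause, and bounding the remaining dependence through the fact that the vote of $i$ only involves clauses containing $i$.
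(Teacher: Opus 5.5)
Your expected-error argument follows the paper's route: a genie reveals $\xstar_{-i}$, and each coordinate becomes a two-point test. For odd $k$ it is complete. Computing the Bhattacharyya coefficient of the whole sample, $\bigl(1-\tfrac kn(1-\sqrt{1-\rho^2})\bigr)^m\ge e^{-2k\rho^2m/n}$, is slightly cleaner than the paper's Kullback--Leibler/Bretagnolle--Huber bound conditional on $N_i\sim\operatorname{Bin}(m,k/n)$. It is uniform over $\rho\in(0,1]$, so your separate case for $\rho$ near $1$ is superfluous; the paper does need that case, since $\kappa(\rho)\to\infty$.

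The genuine gap is the sign quotient at even $k$. Pinning a \emph{fixed} coordinate does not reduce $\err_k$ to oriented per-coordinate errors. Let $\widetilde x$ be $\widehat x$ reoriented to agree with the revealed $x_1^\star$, and put $e:=n\,\err_k(\widehat x,\xstar)$. Then $d_H(\widetilde x,\xstar)=n-e$ whenever coordinate $1$ is one of the $e$ mistakes of the best orientation. For example, an output that is correct up to sign except at coordinate $1$ has $\err_k=1/n$ but oriented error $(n-1)/n$. Nothing prevents an arbitrary, non-equivariant estimator from erring at the pinned coordinate. So your claimed cost of $1/n$ is unjustified, and a lower bound on $\E d_H(\widetilde x,\xstar)$ does not transfer to $\E\,\err_k$.

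The paper's remedy is to randomize the anchor. Draw $J$ uniformly from $[n]$, reveal $x_J^\star$, and orient at $J$. Then $J$ is a mistake with probability $e/n$, so $\E_J d_H(\widetilde x,\xstar)=2e(1-e/n)\le2e$, and the genie bound applies to every $i\ne J$. Another valid fix is to test $x_i^\star$ by $\sigma_i'\widehat x_i$, where $\sigma_i'$ is the best orientation computed from the coordinates other than $i$; for $n\ge4$ one checks $\sum_i\1\{\sigma_i'\widehat x_i\ne x_i^\star\}\le4e$.

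For exact recovery you take a different and heavier route than the paper. The paper applies Fano's inequality to the $n$ single-flip hypotheses around a reference, with one anchor coordinate fixed at even $k$. Each lies within divergence $m\tfrac kn\kappa(\rho)\le C_k\rho^2m/n$ of the reference, so success probability $1-o(1)$ directly forces $\rho^2m/n\gtrsim_k\log n$, uniformly in $\rho\le1/2$, with coupon collector for $\rho>1/2$. No reverse-Chernoff estimate and no second moment are needed.

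Your weak-coordinate argument is the standard alternative, and its MAP step is sound. One weak coordinate already suffices at even $k$, since flipping coordinate $i$ of $\xstar$ does not give $\pm\xstar$ for $n\ge3$. Also, the weak event should read $x_i^\star\sum_{a\ni i}y_a(\xstar)^{F_a\setminus\{i\}}\le0$.

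The weaker point is the decorrelation. Conditioning on $i$ and $j$ sharing no clause needs the shared-clause probability $O(k^2m/n^2)=O(k^2\rho^{-2}\log n/n)$ to be small compared with $\bbP\{i\text{ weak}\}\approx n^{-C_kc}$. That holds for fixed $\rho$. It fails once $\rho^2n$ is only logarithmic, because then two fixed coordinates share a clause with probability bounded below. So as sketched your argument covers fixed bias, whereas Fano gives the $\rho$-uniform statement directly.
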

\begin{proof}
We prove Bayes lower bounds under convenient finite priors; each is also
a minimax bound.  The distributions throughout are the fixed-$m$ laws.

\emph{Step 1: removing the sign ambiguity, for even $k$.}  For odd $k$
set $\widetilde x:=\widehat x$ and skip to Step~2.  For even $k$,
write $e:=\min_\sigma d_H(\widehat x,\sigma\xstar)\le n/2$, draw an
anchor $J$ uniformly from $[n]$, reveal $x_J^\star$, and let
$\widetilde x$ be $\widehat x$ multiplied by the sign that makes it
agree with $\xstar$ at $J$: the anchor lies among the $e$ disagreeing
coordinates with probability $e/n$, in which case
$d_H(\widetilde x,\xstar)=n-e$, and otherwise it equals $e$, so
$\E_J\,d_H(\widetilde x,\xstar)=2e(1-e/n)\le2e$ and
\begin{equation}\label{eq:anchor}
 \E\,\err_k(\widehat x,\xstar)
 \ \ge\ \tfrac12\,\E\,\tfrac1nd_H(\widetilde x,\xstar),
\end{equation}
and it suffices to bound the right side, for estimators that may
additionally see $x^\star_J$.

\emph{Step 2: reduction to single coordinates.}  Since
$d_H(\widetilde x,\xstar)=\sum_i\1\{\widetilde x_i\ne x_i^\star\}$, it
suffices to lower bound each
$\bbP\{\widetilde x_i\ne x_i^\star\}$, $i\ne J$, by the corresponding
genie-aided error in which an oracle reveals every coordinate of
$\xstar$ except $x_i^\star$; the oracle can only decrease the error.

\emph{Step 3: the genie-aided test.}  Given all other coordinates,
each clause containing $i$, after dividing out the revealed variables,
is one binary-symmetric observation of $x_i^\star$ with bias $\rho$,
and clauses not containing $i$ are uninformative.  Their number is
$N_i\sim\operatorname{Bin}(m,k/n)$, the one-sample Kullback--Leibler
divergence is
$\kappa(\rho)=\rho\log\frac{1+\rho}{1-\rho}\le C\rho^2$ for
$\rho\le1/2$, and the Bretagnolle--Huber inequality, averaged over
$N_i$, gives
\[
 \bbP\{\widetilde x_i\ne x_i^\star\}
 \ \ge\ \tfrac14\,\E\,e^{-\kappa(\rho)N_i}
 =\tfrac14\Bigl(1-\tfrac kn(1-e^{-\kappa(\rho)})\Bigr)^{m}
 \ \ge\ c_k\,e^{-C_k\rho^2m/n}.
\]
For $\rho>1/2$, the event $N_i=0$ alone gives the same form.
Substituting into Step~2 and \cref{eq:anchor} proves
\cref{eq:minimax}.

For exact recovery with $\rho\le1/2$, consider the $n$ planted
assignments obtained by flipping one coordinate of a fixed reference
(fixing one anchor coordinate when $k$ is even).  The divergence between
any such hypothesis and the reference is at most
$m\cdot\tfrac kn\,\kappa(\rho)\le C_k\rho^2m/n$, so Fano's inequality
over the $n$ hypotheses forces $m=\Omega_k(n\rho^{-2}\log n)$.  For
$\rho>1/2$, the coupon-collector bound shows that
$\Omega_k(n\log n)$ clauses are needed to cover every coordinate, which
is again $\Omega_k(n\rho^{-2}\log n)$ since $\rho^{-2}\le4$.
\end{proof}

\subsection{Refutation converses}\label{sec:ref-converse}

Two different converse notions exist for refutation, and they should not
be conflated.  First, a pointwise sound refuter is in particular a
distinguisher, so the detection bound above yields a log-free,
same-model converse for refuters whose \emph{output} is a bounded
low-sign-degree statistic.

\begin{theorem}[Lower bound for bounded low-degree refuters]
\label{thm:ld-refutation}
Fix $0<\eps\le1/4$ and $D\le n/(2k)$, and let
\begin{equation}\label{eq:ld-ref-threshold}
 m\ \le\ c_k\,\frac{n^{k/2}}{\eps^2D^{k/2-1}}
\end{equation}
for a sufficiently small $c_k$.  Then there is no statistic
$C\in\cH_D$ (randomized seeds allowed) satisfying all of:
$0\le C\le1$ pointwise; soundness
$C(I)\ge\sup_x|V_I(x)|$ on every input; and
$Q_m\{C\le\eps\}\ge1-\delta_n$ with $\delta_n=o(\eps^2)$.
For fixed $\eps$ the last condition is ordinary null success with
probability $1-o(1)$.
\end{theorem}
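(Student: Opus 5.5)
Suppose such a $C\in\cH_D$ exists; we show it would yield a sign-degree-$D$ statistic with non-vanishing advantage, contradicting \cref{thm:ld-detection}. The mechanism is simple. Under the planted mixture, soundness forces $C$ to be large, because the planted assignment itself has advantage about $\rho$. Under the null, $C$ is small with high probability. Boundedness $0\le C\le1$ then converts this gap into a standardized advantage, with $\Var_{Q_m}(C)$ controlled by the small-probability event.

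Concretely, we take the planted bias $\rho:=2\eps$; since $\eps\le1/4$, this gives $\rho\le1/2$.

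\textbf{Planted side.} Under $P_{m,x}$ we have $V_I(x)=m^{-1}\sum_a\eta_a$, with mean $\rho=2\eps$. Hoeffding gives $V_I(x)\ge 3\eps/2$ except with probability $e^{-cm\eps^2}$. By soundness, $C\ge3\eps/2$ on that event, so
\[
 \E_{P_m}C\ \ge\ \tfrac32\eps\,(1-e^{-cm\eps^2}).
\]
If $m\eps^2$ is bounded, the claim is trivial in a different way. I would handle this by noting that the hypothesis range may be assumed to satisfy $m\eps^2\to\infty$. Otherwise $\sup_x|V_I(x)|\ge$ roughly $1$ already under the null for $m\lesssim n$, because few clauses can be satisfied simultaneously. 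This contradicts $Q_m\{C\le\eps\}\to1$, since $\sup_x|V_I(x)|\ge\sqrt{1/m}$-type bounds exceed $\eps$ when $m\eps^2$ is small. So we may take $\E_{P_m}C\ge\tfrac43\eps$.

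\textbf{Null side.} Using $0\le C\le1$ and the null success condition,
\[
 \E_{Q_m}C\le\eps+\delta_n,
 \qquad
 \Var_{Q_m}(C)\le\E_{Q_m}C^2\le\eps^2+\delta_n.
\]
With $\delta_n=o(\eps^2)$, this yields
\[
 \Adv(C)\ \ge\ \frac{\tfrac13\eps-\delta_n}{\sqrt{\eps^2+\delta_n}}\ \ge\ \tfrac14
\]
for large $n$. The variance is positive here, since otherwise $C$ would be a constant $\le\eps+o(\eps)$ on the support, contradicting the planted bound.

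\textbf{Contradiction.} On the other hand, \cref{thm:ld-detection} with $\rho=2\eps$ gives
\[
 \Adv(C)\ \le\ \sqrt2\,\Lambda_D\ =\ \sqrt2\,e(2k)^{k/2}\cdot4\eps^2mD^{k/2-1}/n^{k/2}\ \le\ C_k c_k,
\]
under \cref{eq:ld-ref-threshold}. Choosing $c_k$ small makes this below $1/4$ (and ensures $\Lambda_D\le1/2$). Randomized seeds are absorbed as in \cref{lem:class-containment}.

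One point needs care: the same statistic $C$ must be measured against both $Q_m$ and the $P_m$ of bias $2\eps$. This is fine because $C$ is a fixed function of the instance, and soundness holds on every input.

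\textbf{Main obstacle.} The step I expect to be delicate is the small-$m\eps^2$ regime, in which the planted concentration fails. There I would use instead that null instances with $m\le c\eps^{-2}$ clauses have $\sup_x|V_I(x)|>\eps$ with probability bounded below. Under the null, $V_I(x)$ for a fixed $x$ is a normalized Rademacher sum, with standard deviation $m^{-1/2}>\eps$ up to constants. Anticoncentration then makes $|V_I(x)|>\eps$ with constant probability, which contradicts $Q_m\{C\le\eps\}\to1$ via soundness.
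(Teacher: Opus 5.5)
Your proposal is correct and follows the paper's proof essentially step for step. Both arguments use Rademacher anticoncentration of $V_I(x_0)$ at a fixed $x_0$, along any subsequence where $m\eps^2$ stays bounded, to force $m\eps^2\to\infty$. Both then plant at a bias proportional to $\eps$; the paper takes $\rho=4\eps$ and obtains $\Adv(C)\ge2-o(1)$, while your $\rho=2\eps$ with $\Adv(C)\ge1/4$ works equally well. Both bound $\E_{Q_m}C\le\eps+\delta_n$ and $\Var_{Q_m}(C)\le\eps^2+\delta_n$ via $0\le C\le1$, and contradict \cref{thm:ld-detection} once $c_k$ is small. The satisfiability aside in your planted-side paragraph is unnecessary, since its justification is garbled and it would only cover bounded $m$ at fixed $\eps$, so keep just the anticoncentration argument from your last paragraph.
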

\begin{proof}
Suppose such a $C$ exists.  First, soundness and null success force
$m\eps^2\to\infty$: for a fixed assignment $x_0$, under $Q_m$ the value
$V_I(x_0)$ is $m^{-1}$ times a sum of $m$ independent Rademacher signs,
so if $\eps\sqrt m$ stayed bounded along a subsequence, Rademacher
anti-concentration would keep $Q_m\{|V_I(x_0)|\le\eps\}$ bounded away
from one, contradicting null success since $C\ge|V_I(x_0)|$.

Now compare $Q_m$ with the planted mixture at bias $\rho:=4\eps$.  Under
$P_{m,x}$,
$V_I(x)=m^{-1}\sum_a\eta_a$ with $\E\eta_a=4\eps$, so Hoeffding's
inequality gives $V_I(x)\ge3\eps$ with probability
$1-e^{-cm\eps^2}=1-o(1)$, uniformly in $x$; by soundness,
$\E_{P_m}C\ge3\eps(1-o(1))$.  Under the null,
$\E_{Q_m}C\le\eps+\delta_n$ and
$\Var_{Q_m}(C)\le\E_{Q_m}C^2\le\eps^2+\delta_n$.  Hence
$\Adv(C)\ge2-o(1)$.  But \cref{thm:ld-detection} at bias $4\eps$ bounds
$\Adv(C)\le C_k\,m\eps^2D^{k/2-1}/n^{k/2}$, which is smaller than $2$
once $c_k$ in \cref{eq:ld-ref-threshold} is small.  This is a
contradiction.
\end{proof}

The theorem is an incompatibility statement: pointwise soundness,
bounded low sign degree, and a small null value cannot coexist below
the curve.  The spectral certificate of \cref{thm:main-refutation} is
not a low-degree polynomial of the input, so \cref{thm:ld-refutation}
is \emph{not} a matching converse for it, and it is not entered as
such in \cref{tab:ledger}.

The second notion, and the community standard for certification, is a
sum-of-squares degree lower bound.  There the relevant converse is that
of Kothari, Mori, O'Donnell, and Witmer~\cite{KMOW}, and two facts about
it deserve emphasis.

\begin{remark}[The canonical sum-of-squares converse, in this model]
\label{rem:kmow}
First, the random-CSP model of~\cite{KMOW}, specialized to the parity
predicate, pushes forward exactly to the law $Q_m$, so the refutation
row of \cref{tab:ledger} is a single-model comparison; what is not
matched is the hierarchy strength.  Second, the $k$XOR predicate
supports a $(k-1)$-wise uniform distribution on satisfying assignments,
so the parameter $t$ in their theorem is $k-1$, and for $k\ge3$ and
clause density $\Delta=m/n$ their bound shows that sum-of-squares
degree up to order $n/(\Delta^{2/(k-2)}\log\Delta)$ cannot even weakly
refute; at the upper curve
$\Delta\asymp_k\eps^{-2}(n/\ell)^{(k-2)/2}$ this reads
\[
 d\ \lesssim_k\ \frac{\ell\,\eps^{4/(k-2)}}{\log\Delta},
\]
which is separated from the degree-$O_k(\ell)$ upper certificate of
\cref{cor:sos} by the logarithmic factor and, at vanishing target
advantage, by the $\eps$ dependence.  The boundary-count estimates of
this section do not close this gap: coefficientwise control of tilted
likelihood ratios does not construct a positive semidefinite
pseudoexpectation.  A log-free sum-of-squares lower bound at
degree $\Theta_k(\ell)$ would make the refutation row of
\cref{tab:ledger} sharp on both sides.
\end{remark}

\subsection{Proof of \texorpdfstring{\cref{thm:main-lower}}{the lower-bound theorem}}

\begin{proof}[Proof of \cref{thm:main-lower}]
Part (1) is \cref{thm:ld-detection} after absorbing the explicit
constant of \cref{eq:LambdaD} into $c_k$ and $C_k$; part (2) is
\cref{thm:ld-recovery}.  The accompanying cleanup converse
(\cref{rem:cleanup-converse}) is \cref{thm:minimax}.
\end{proof}

\section{Quantum detection and recovery}\label{sec:quantum}

The theorems so far are classical.  This section inserts the normalized
operators into the guided sparse-Hamiltonian framework of~\cite{SOKB} and obtains quantum
detection and weak recovery at every arity and fixed positive bias, on
the log-free sample curve, with fourth-root dependence on the Kikuchi
dimension.  There are four steps, one per subsection: import a
clause-product \emph{guide state} whose analysis is parity-free
(\cref{sec:quantum-guide}); observe that the capped odd operator of
\cref{sec:odd} is already suited to a coherent oracle
(\cref{sec:quantum-completion}); establish a two-threshold spectral
promise (\cref{sec:quantum-promise}); and filter, then recover from the
whole accepted spectral band (\cref{sec:quantum-recovery}).  Three
overheads separate the final cost from a clean fourth root: the guide
preparation costs $L^{O_k(L)}$, two condition-number comparisons cost
powers of $\chi_\Gamma\le C_k\log n$, and tomography costs polynomially
many copies; all three are absorbed into the runtime
\cref{eq:quantum-runtime}.

Throughout, $k\ge2$ and $\rho\in(0,1]$ are fixed independently of $n$;
the threshold retains its explicit $\rho^{-2}$ dependence, and all
further polynomial dependence on $1/\rho$ is absorbed into constants.
The level range is narrower than the classical one because the known
guide preparation has an $L^{O_k(L)}$ overhead and requires
$L^2\log n=o(n)$; this implies $L\log n\le n^{2/3}$ for large $n$, so
the classical spectral theorems may be invoked with one fixed value of
their range parameter.  For a requested level $\ell\ge k$, set
\begin{equation}\label{eq:quantum-divisible-level}
        L:=k\left\lfloor\frac\ell k\right\rfloor,
        \qquad N_L:=\binom nL,
        \qquad \mathsf D_{k,L}:=\frac{n^{k/2}}{L^{k/2-1}},
\end{equation}
so that $\ell/2\le L\le\ell$: replacing $\ell$ by $L$ changes the
sample curve by a $k$-dependent factor, never increases the slice
dimension, and puts the guide in the divisible setting $L=ck$.

\subsection{The clause-product guide}\label{sec:quantum-guide}

We first isolate what is imported from~\cite{SOKB}; all theorem and
equation numbers in this subsection refer to arXiv:2406.19378v2.  The
relevant theorem there is stated in the even-arity $k$XOR section, but
its guide calculation and preparation circuit do not use the parity of
$k$.

\begin{lemma}\label{lem:parity-free-guide}
Let $L=ck$ with $c\in\mathbb N$, let $n\ge kL$, and let $\mathcal A$ be the
normalized degree-$k$ planted polynomial of
\cite[Definition~36 and equation~(C6)]{SOKB}.  Write $G$ for its guiding
vector, $\bar z=N_L^{-1/2}((x^\star)^S)_{|S|=L}$, and $\mu_{\rm g}$ for
the mean of one normalized coefficient, whose variance is one.  Assume
\begin{equation}\label{eq:guide-mean-window}
 \frac{L}{\sqrt{\binom nk}}\le\mu_{\rm g}\le\frac1{10L^{k/2}}.
\end{equation}
Then every deterministic unit vector $a$, independent of the guide
polynomial and satisfying $\langle a,\bar z\rangle>0$, obeys
\begin{equation}\label{eq:raw-guide-overlap}
 \langle a,G\rangle
 \ge \frac12\sqrt{\operatorname{Part}_k(L)}\,\mu_{\rm g}^c
       \langle a,\bar z\rangle
\end{equation}
except with probability at most
$C_kc^2/\bigl(\binom nk\,\mu_{\rm g}^2|\langle a,\bar z\rangle|^2\bigr)$;
moreover,
\begin{equation}\label{eq:guide-norm-mean}
 \mathbb E\|G\|^2\le1.0202.
\end{equation}
Suppose in addition that the guide instance is a Poisson pool of mean
$m_{\rm g}$ with
$n\le m_{\rm g}=O(n^{k/2})$ and $L^2\log n=o(n)$.
Except with probability $O(m_{\rm g}^{-1/2})$ over that pool, there is a
quantum circuit of $O(m_{\rm g}L\log n)$ gates which prepares
\begin{equation}\label{eq:raw-guide-preparation}
 \beta_{\rm g}\,\frac{G}{\|G\|}\,|0\rangle+|\perp\rangle|1\rangle,
 \qquad
 \beta_{\rm g}\ge c_kL^{-L/2}.
\end{equation}
The statement holds for both parities of $k$.
\end{lemma}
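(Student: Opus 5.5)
The plan is to reduce \cref{lem:parity-free-guide} to the imported analysis of~\cite{SOKB}, checking line by line that nothing in it uses the parity of $k$, and to supply the few steps where that analysis is stated only for even $k$.

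\emph{The guiding vector.}  With $L=ck$, the guiding vector $G$ of \cite[Definition~36]{SOKB} is indexed by $L$-sets $S$.  Each $G_S$ is a normalized sum over the ways of writing $S$ as a disjoint union of $c$ clause supports, weighted by the product of the $c$ normalized coefficients.  It is therefore a degree-$c$ multilinear polynomial in independent coefficients, each with mean $\mu_{\rm g}(\xstar)^{F}$ and variance one.  Since $(\xstar)^S=\prod_j(\xstar)^{F_j}$ for any disjoint decomposition $S=F_1\mathbin{\dot\cup}\cdots\mathbin{\dot\cup}F_c$, the gauged mean is
\[
 \E G=\sqrt{\operatorname{Part}_k(L)}\,\mu_{\rm g}^c\,\bar z,
\]
up to the normalization fixed in (C6).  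This identity is where the parity of $k$ could in principle matter, and it does not: only disjointness of the blocks is used, never whether $k$ is even.

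\emph{The overlap bound \cref{eq:raw-guide-overlap}.}  For a deterministic unit vector $a$, I will bound $\Var\langle a,G\rangle$ by expanding $G$ in the orthogonal basis of products of centered coefficients, grouping terms by the number $j\ge1$ of centered factors.  The degree-$j$ layer contributes at most
\[
 \binom cj\,\mu_{\rm g}^{2(c-j)}\Bigl(\textstyle\binom nk\Bigr)^{-j}\cdot(\text{combinatorial factor}),
\]
and this ratio is summable under the left inequality of \cref{eq:guide-mean-window}.  The result is
\[
 \Var\langle a,G\rangle\le C_kc^2\,\mu_{\rm g}^{2c}\operatorname{Part}_k(L)\big/\bigl(\textstyle\binom nk\mu_{\rm g}^2\bigr).
\]
Chebyshev's inequality at half the mean then gives \cref{eq:raw-guide-overlap} with the stated failure probability.

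\emph{The norm bound \cref{eq:guide-norm-mean}.}  The same layer expansion, summed over all $S$, gives $\E\|G\|^2=\sum_j(\cdots)$.  The $j=0$ term is the normalized squared mean, and the right inequality of \cref{eq:guide-mean-window} makes the tail geometric with ratio at most $1/100$, giving $1.0202$.  This is verbatim the computation of~\cite{SOKB}, which I will quote, remarking only that it never inspects $(\xstar)^F$.

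\emph{Preparation.}  The circuit of~\cite{SOKB} first prepares a uniform superposition over ordered $c$-tuples of pool clauses and then coherently flags disjointness.  Its success amplitude is controlled by the fraction of ordered tuples that are pairwise disjoint, which is at least $1-O(L^2/n)=1-o(1)$ under $L^2\log n=o(n)$, together with the $c!\,(k!)^c$ reorderings, which produce the $L^{-L/2}$ loss in $\beta_{\rm g}$.  The event that the Poisson pool size deviates from $m_{\rm g}$ by more than a constant factor, or that repeated supports distort the normalization, has probability $O(m_{\rm g}^{-1/2})$ by Chebyshev, since $m_{\rm g}\ge n$.  The gate count $O(m_{\rm g}L\log n)$ comes from loading the pool with $O(\log n)$-bit addresses, $c\le L$ times.

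\emph{Main obstacle.}  The main difficulty is the variance bound, specifically controlling cross terms in which two decompositions of $S$ share some clauses.  I would try to handle these by the partition count $\operatorname{Part}_k(L)$, matching the argument of~\cite[Theorem~C]{SOKB}.
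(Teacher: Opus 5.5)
Your proposal takes the same route as the paper, which proves the lemma purely by importing \cite[Theorem~38, Lemmas~39--41, Theorem~69]{SOKB}. Both arguments rest on the observation that the mean, second-moment and preparation steps use only independence, the first two coefficient moments, disjointness of the blocks (so $(x^\star)^S=\prod_j(x^\star)^{F_j}$), and the overlap pattern of two partitions, never the parity of $k$.

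There is one slip in your norm sketch; it is harmless because you quote that computation from \cite{SOKB}. In the expansion by the number $j$ of centered factors, the $j=0$ term $\|\E G\|^2=\operatorname{Part}_k(L)\mu_{\rm g}^{2c}$ is tiny (at most $100^{-c}$), and the term equal to $1$ is the all-centered layer $j=c$. The right inequality of the window then makes all layers containing at least one mean factor contribute only $O(1/100)$ in total.
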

\begin{proof}
The proof of \cite[Theorem~38]{SOKB} gives the overlap estimate, and
\cite[Lemmas~40 and 41]{SOKB} the norm estimate.  Those arguments,
equations (C7)--(C30), use only independence, the first two coefficient
moments, the Boolean identity $((x^\star)^S)^2=1$, and the overlap
pattern of two partitions into $k$-sets; no step uses the parity of
$k$.  The window \cref{eq:guide-mean-window} and $n\ge kL$ imply all
hypotheses of Lemmas~39--41 there.  The preparation statement is
\cite[Theorem~69]{SOKB}; its circuit, equations (E42)--(E57), prepares
$c$ clause states, rejects intersecting $k$-sets, and sorts their
union, again independently of the parity of $k$.
\end{proof}

Use an independent guide pool of mean
$m_{\rm g}=A_{\rm g}\rho^{-2}\mathsf D_{k,L}$, with
$A_{\rm g}=A_{\rm g}(k)$ a fixed constant, so that, with
$\mathcal B_k:=\binom nk$,
\begin{equation}\label{eq:guide-mu-scale}
 \mathcal B_k\mu_{\rm g}^2=\rho^2m_{\rm g}
 =A_{\rm g}\mathsf D_{k,L}.
\end{equation}
The hypotheses of \cref{lem:parity-free-guide} are then automatic in
the quantum range: $\mathsf D_{k,L}\ge n$ and $L^2\log n=o(n)$ give
$\rho^2m_{\rm g}\ge L^2$, which is the lower bound in
\cref{eq:guide-mean-window};
$\mu_{\rm g}^2L^k\le C_kL^{k/2+1}/n^{k/2}=o(1)$ gives the upper bound;
and $n\le m_{\rm g}\le C_{k,\rho}n^{k/2}$ and $n\ge kL$ hold as well.
Finally, $\operatorname{Part}_k(L)=L!/((k!)^cc!)$ and
$\binom nk\le n^k/k!$, followed by Stirling's formula and
$\binom nL\ge(n/L)^L$, give
\begin{equation}\label{eq:guide-combinatorial-overlap}
 \operatorname{Part}_k(L)\mu_{\rm g}^{2c}
 \ge \frac{L!}{c!}n^{-L/2}L^{(1-k/2)L/k}\exp(-C_kL)
 \ge N_L^{-1/2}\exp(-C_kL).
\end{equation}

\subsection{The odd operator}\label{sec:quantum-completion}

For even $k$, the quantum Hamiltonian is the operator $(K,\Gamma)$ of
\cref{eq:even-operator} at level $L$; at odd arity it is
$(H_{\rm o},\Gamma_{\rm o})$ of \cref{eq:odd-H}, unchanged, at level
$L$---each root contributes at most one to the capped diagonal, which
is insertion-nondecreasing and computable cell by cell, so no separate
quantum construction is needed.  The next proposition collects the
facts used by the guided filter; all were proved in \cref{sec:odd}.

\begin{proposition}\label{prop:quantum-odd-normalization}
Fix odd $k=2r+1$ and fixed $\rho>0$.  At level $L$ and with a main
pool of mean $A_k\rho^{-2}\mathsf D_{k,L}$, choose $A_k$ sufficiently
large.  Then, for all sufficiently large $n$, the following hold.
\begin{enumerate}
\item Deterministically,
\begin{equation}\label{eq:quantum-odd-domination}
 \|H_{\rm o}\|\le1,
 \qquad
 \left\|\Gamma_{\rm o}^{-1/2}
 \cP_{\rm o}\,\Gamma_{\rm o}^{-1/2}\right\|\le1.
\end{equation}
\item Under either the null or planted support law, with probability
$1-o(1)$,
\begin{equation}\label{eq:quantum-odd-diagonal-bounds}
 \Tr\Gamma_{\rm o}\le C_kdN_L,
 \qquad
 \max_S\Gamma_{\rm o}(S,S)
 \le C_k(d+\log N_L).
\end{equation}
\item Under the null, for every fixed $B>0$,
\begin{equation}\label{eq:quantum-odd-null}
 \Pr\left\{\|H_{\rm o}\|>
 C_{k,B}\sqrt{L/d}\right\}\le n^{-BL}.
\end{equation}
Under planting, with probability $1-o(1)$,
\begin{equation}\label{eq:quantum-odd-comparison}
 \left\|\Gamma_{\rm o}^{-1/2}
 \bigl(Z\cC_{\rm o}Z-\rho^2\cP_{\rm o}\bigr)
 \Gamma_{\rm o}^{-1/2}\right\|
 \le C_k\sqrt{L/d}.
\end{equation}
\end{enumerate}
\end{proposition}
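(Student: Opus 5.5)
This proposition is a repackaging of results already proved in \cref{sec:odd}, now instantiated at level $L$ instead of $\ell$. The plan is to check that $L$ satisfies the hypotheses of those results and then cite them item by item. Since $\ell/2\le L\le\ell$ and $L\ge k>k-1$, the lower bound of \cref{eq:range} holds at level $L$. For the upper bound, the quantum-range hypothesis $L^2\log n=o(n)$ gives $L\log n\le n^{2/3}$ for large $n$, so \cref{eq:range} holds with the fixed value $\delta=1/3$. The main pool of mean $A_k\rho^{-2}\mathsf D_{k,L}$ plays the role of the spectral pool $m_{\rm s}$ at level $L$. With $A_k$ large, the sample condition needed in \cref{sec:odd} (that $C_{k,\delta}$ be sufficiently large) is met. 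Because $\rho$ is fixed and $L/n\to0$, the cell sparsity is $\alpha=o(1)$, and the quantitative condition \cref{eq:modbias} holds for large $n$.

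\emph{Item (1).} This is \cref{lem:odd-uniqueness}(3) verbatim. Its proof is deterministic and valid for every candidate and mark realization, so no probabilistic input is needed.

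\emph{Item (2).} The trace bound and the max-diagonal bound are \cref{eq:odd-raw-trace,eq:odd-max-diagonal} of \cref{prop:odd-population}, with $N$ replaced by $N_L$. The one point to confirm is that these estimates hold under both the null and the planted support law. This is immediate: $\Gamma_{\rm o}$ depends only on the candidate skeleton and the real marks $\theta_e$, and not on the signs $y_e$. The supports, rootings, dummy process and realness marks have the same law under null and planting, because planting changes only the labels. So the proof of \cref{prop:odd-population}, which uses only independence over roots, compound-Poisson moments and Bernstein's inequality, applies unchanged.

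\emph{Item (3).} The null tail \cref{eq:quantum-odd-null} is \cref{prop:odd-null} at level $L$. Its hypotheses come from \cref{thm:packet-trace}, whose side condition \cref{eq:odd-mixed-small} becomes $C_r\sqrt{L/n}\log n=o(1)$. This follows from $L^2\log n=o(n)$. The planted comparison \cref{eq:quantum-odd-comparison} is \cref{prop:odd-comparison} at level $L$.

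The only step I expect to require care is the $\rho$-dependence in the comparison. There, the ratio checks in the linear part use $\rho^2\alpha\le C\sqrt{L/n}$ together with $\ell\log^2 n/n=o(1)$; both hold in the quantum range. The constant $C_k$ is allowed to absorb fixed powers of $1/\rho$, as stipulated at the start of the section. With that, each item is a direct citation, and the constants $C_{r,\delta}$ with $\delta=1/3$ become constants depending only on $k$ and $B$.
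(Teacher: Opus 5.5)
Your proposal is correct and follows the paper's proof. Both cite \cref{lem:odd-uniqueness}(3) for item (1). Both cite the trace and maximum-diagonal bounds of \cref{prop:odd-population} for item (2), noting that $\Gamma_{\rm o}$ depends only on the candidate process and real marks, whose law planting does not change. Both cite \cref{prop:odd-null,prop:odd-comparison} at level $L$ for item (3), after checking that $L^2\log n=o(n)$ gives the range condition with one fixed $\delta$ and vanishing cell sparsity $\alpha$.
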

\begin{proof}
Part (1) is \cref{lem:odd-uniqueness}(3).  Part (2) is
\cref{eq:odd-raw-trace,eq:odd-max-diagonal}: their proofs use only the
candidate process and the real marks, whose joint law is the same under
the null and planted support laws.  Part (3) is
\cref{prop:odd-null,prop:odd-comparison} at level $L$.  The quantum
range implies $L\log n\le n^{2/3}$ for large $n$, so the trace theorems
apply with one fixed range parameter, and the odd cell sparsity
$\alpha=\Theta_k(\rho^{-2}\sqrt{L/n})$ tends to zero, as required.
\end{proof}

\subsection{The spectral promise and the weighted guide}
\label{sec:quantum-promise}

In the even case set $(H,\Gamma)=(K,\Gamma)$ from
\cref{eq:even-operator}; in the odd case
$(H,\Gamma)=(H_{\rm o},\Gamma_{\rm o})$ from \cref{eq:odd-H}.  Write
$\kappa_k:=1$ for even $k$ and $\kappa_k:=2$ for odd $k$.  After
increasing the main-pool sample constant, the even null and planted
comparison bounds, \cref{prop:quantum-odd-normalization}, and
\cref{prop:odd-population} give constants
$0<a_0<b_-<b_+<a_1<A_2$ such that, with probability $1-o(1)$,
\begin{equation}\label{eq:quantum-promise}
\begin{gathered}
 \|H\|\le a_0\rho^{\kappa_k}\quad\text{under the null};
 \qquad
 \langle p,Hp\rangle\ge a_1\rho^{\kappa_k},
 \quad
 \|H\|\le A_2\rho^{\kappa_k}
 \quad\text{under planting},\\
 \text{where}\quad
 p:=\Gamma^{1/2}\bar z/\|\Gamma^{1/2}\bar z\| .
\end{gathered}
\end{equation}
Indeed, at odd arity
$N_L^{-1}\1^{\mathsf T}\cP_{\rm o}\1\ge c_kd$ with high probability
while $N_L^{-1}\1^{\mathsf T}\Gamma_{\rm o}\1\le C_kd$ by
\cref{eq:quantum-odd-diagonal-bounds}, so testing the gauged matrix on
$\Gamma_{\rm o}^{1/2}\1$ and using
\cref{eq:quantum-odd-comparison,eq:quantum-odd-domination,eq:odd-pop-conc}
gives the planted witness, the planted norm bound, and, for every
high-Rayleigh vector after unweighting, the same constant normalized
$J_{2r}$-energy as in \cref{prop:odd-recovery}; the even case is the
corresponding argument from
\cref{prop:even-comparison,prop:even-recovery}, at signal scale $\rho$.
Choose $b_-$ above the uniform comparison-error threshold of the
recovery proofs and put
\begin{equation}\label{eq:two-cutoffs}
 \lambda_\pm:=b_\pm\rho^{\kappa_k},
 \qquad
 \Pi_\pm:=\1_{[\lambda_\pm,\infty)}(H),
 \qquad
 \|\Pi_+p\|^2\ge\frac{a_1-b_+}{A_2-b_+}\ge c_k,
\end{equation}
the last bound from \cref{eq:quantum-promise}; the relative gap between
the two cutoffs is constant.

\begin{lemma}\label{lem:quantum-condition-number}
Let $d=\bd$ in the even case and $d=n\alpha^2$ in the odd case, and put
$B_\Gamma:=C_k(d+\log N_L)$.  With probability $1-o(1)$ under either
model,
\begin{equation}\label{eq:quantum-condition-number}
 d\le\Gamma(S,S)\le B_\Gamma\quad\text{for every }S,
 \qquad
 \chi_\Gamma:=\frac{B_\Gamma}d\le C_k\log n.
\end{equation}
\end{lemma}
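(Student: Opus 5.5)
The lower bound $\Gamma(S,S)\ge d$ is deterministic in both cases, since $\Gamma=D+\bd I$ and $\Gamma_{\rm o}=\cG_{\rm o}+dI$ with $D,\cG_{\rm o}\succeq0$. The work is in the upper bound $\max_S\Gamma(S,S)\le C_k(d+\log N_L)$; the ratio bound then follows from it.

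At odd arity the upper bound is already \cref{eq:quantum-odd-diagonal-bounds}, that is \cref{eq:odd-max-diagonal} at level $L$. Its proof uses only the candidate process and the real marks, so it holds under both models.

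At even arity I would bound the maximum row degree directly. For a fixed row $S$, $D(S,S)=\sum_a\1\{|F_a\cap S|=r\}$. This is a Poisson count (spectral pool, via \cref{lem:subsample}) or a binomial count (fixed $m$) with mean exactly $\bd$ by \cref{eq:even-mean-degree}. Its law depends only on the supports, so it is identical under null and planting. A Chernoff tail gives $\bbP\{D(S,S)>C(\bd+u)\}\le e^{-u}$ for an absolute $C$. Taking $u=2\log N_L$ and a union bound over the $N_L$ rows gives $\max_SD(S,S)\le C(\bd+\log N_L)$ with failure probability $N_L^{-1}=o(1)$. This also covers the case $\bd\ll\log N_L$.

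For the ratio, write $\chi_\Gamma=C_k(1+\log N_L/d)$ and use $\log N_L\le L\log n$. At even arity the threshold gives $\bd\ge c\rho^{-2}L\ge cL$. At odd arity \cref{eq:odd-d-scale} gives $d=\Theta(C^2\rho^{-4}L)\ge cL$. In both cases $\log N_L/d\le C\log n$, so $\chi_\Gamma\le C_k\log n$ with $\rho$ fixed.

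The main point to check is that $d$ is at least a constant multiple of $L$ at the chosen pool sizes, which is exactly where the threshold constant enters. The logarithm is not removable here, because some rows genuinely have degree of order $\log N_L$ when $d$ is small. That is acceptable, since $\chi_\Gamma$ enters the quantum runtime only polynomially.
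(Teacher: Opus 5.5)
Your proposal is correct and follows the paper's proof. The paper likewise uses a scalar Chernoff bound plus a union bound over the $N_L$ rows at even arity, cites \cref{eq:quantum-odd-diagonal-bounds} (that is, \cref{eq:odd-max-diagonal}) at odd arity, and derives the ratio from $d\asymp_{k,\rho}L$ and $\log N_L\le L\log n$. You also make explicit two points the paper leaves implicit: the deterministic lower bound $\Gamma(S,S)\ge d$, and the fact that the degree law is the same under the null and planted models.
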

\begin{proof}
For even $k$, a row degree has mean $d$; scalar Chernoff and a union
bound over the $N_L$ rows give the maximum.  For odd $k$, this is
\cref{eq:quantum-odd-diagonal-bounds}.  At the threshold
$d\asymp_{k,\rho}L$ at either parity and $\log N_L\le L\log n$, which
gives the ratio bound.
\end{proof}

\begin{lemma}\label{lem:quantum-oracles}
Conditioned on the polynomial-size main and auxiliary clause lists, $H$
has sparsity $n^{O_k(1)}$, and its matrix-entry oracle, its
nonzero-location oracle, and the inverse location map can be
implemented reversibly with $n^{O_k(1)}$ gates and
$\widetilde O(L\log n)$ coherent workspace, without materializing a
vector or matrix of length $N_L$.
\end{lemma}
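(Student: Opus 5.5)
The plan is to make each oracle a short reversible computation driven by a row label. A row $S\in\Omega_L$ is encoded as a sorted list of $L$ indices, using $O(L\log n)$ qubits. The main and auxiliary clause lists (and at odd arity the rooted candidate list) are loaded in QROM of size $n^{O_k(1)}$. The candidate cap \cref{eq:candidate-cap} makes that list polynomial. Every entry of $H=\Gamma^{-1/2}C\Gamma^{-1/2}$ is then computed on the fly from $S$, $T$ and these lists, so no vector or matrix of length $N_L$ is ever materialized.

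\emph{Sparsity and location oracle.} At even arity the nonzeros of row $S$ are indexed by clauses $a$ with $|F_a\cap S|=r$, together with the diagonal. The $j$-th nonzero is therefore produced by letting $j$ name a clause index. We test legality by a sorted-merge intersection count in $O(L\log n)$ gates, and output $S\sd F_a$ by a reversible merge, or a dummy zero entry if the clause is not legal. This gives sparsity $m+1=n^{O_k(1)}$. At odd arity $j$ names an ordered candidate pair $(e,f)$ with a common root, so the sparsity is at most the square of the capped candidate count, still $n^{O_k(1)}$. We check disjointness and legality at $S$, compute $T=S\sd C(e)\sd C(f)$, and then test the mutual exact-two rule \cref{eq:exact-two}. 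That test means scanning all candidates with root $u$ and counting those legal at $S$ and at $T$. It is a loop over the polynomial list with an $O(\log n)$-bit counter, uncomputed afterwards. The inverse location map, from $(S,T)$ to $j$, is obtained symmetrically: $S\sd T$ together with the label identifies the clause or pair. So we record the label index $j$ itself and make the enumeration reversible in the standard sparse-access form, namely $(S,j)\mapsto(S,T_j)$ with $j$ recoverable from $(T_j,S)$ because the matching is an involution. Duplicate supports remain distinct labels; they give repeated entries that sum, which we handle by letting the oracle index physical labels.

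\emph{Entry oracle.} The numerator entry is $\sum y_a$ over labels realizing $S\sd T$, or $\theta_e\theta_f y_ey_f$ times the retention flag. The diagonal $\Gamma(S,S)$ is computed by one more pass over the lists. At even arity this is $D(S,S)=\#\{a:|F_a\cap S|=r\}$, then adding $\bd$. At odd arity, for each root $u$ we accumulate the potential $h_u(S)$ of \cref{eq:odd-completed} over candidate pairs in the cell, cap it at one, and sum over $u$. This is $n^{O_k(1)}$ reversible arithmetic on $O(\log n)$-bit fixed-point registers. We then compute $\Gamma(S,S)^{-1/2}\Gamma(T,T)^{-1/2}$ times the numerator, to polynomial precision, by reversible Newton iteration. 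Precision loss is controlled by \cref{lem:quantum-condition-number}, which makes all denominators between $d$ and $C_k(d+\log N_L)$. Every scratch register is uncomputed in reverse (Bennett), so the workspace stays $\widetilde O(L\log n)$.

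The main obstacle is the odd retention and cap computation while keeping the workspace small. The exact-two test and $h_u(S)$ involve pairs within a cell, and a naive approach would store the cell list. I would instead stream over the candidate list twice: once counting $|\cL_u(S)|$, and once accumulating the pair potential by a running sum of $(\theta_e+\alpha)$ and of its square. This works because $h_u$ equals the disjoint pairs, which is all pairs minus intersecting pairs. Intersecting pairs are subtracted by a nested loop that is uncomputed in turn. This keeps the workspace at $O(\log n)$ counters plus the row registers.
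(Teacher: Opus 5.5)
There is a genuine gap in your location oracle and its inverse. You index the nonzeros of row $S$ by physical labels (a clause at even arity, an ordered candidate pair at odd arity). Then $j\mapsto T_j$ need not be injective, so $(S,j)\mapsto(S,T_j)$ is not reversible and the inverse location map $(S,T)\mapsto j$ is not a function.

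The involution property only says that, for a \emph{fixed} label $a$, the rows $S$ and $S\sd F_a$ determine each other. From $(S,T)$ you learn only the support $S\sd T$, not which label realized it. Your sentence ``$S\sd T$ together with the label identifies the clause or pair'' presupposes the very output the inverse map is supposed to compute.

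These collisions are not negligible.
\begin{itemize}
\item The model keeps repeated supports as distinct labels. Their expected number $\binom m2/\binom nk$ is of constant order on the sample curve whenever $L$ is bounded, and for every $L$ at $k=2$.
\item At odd arity, if $j$ ranges over ordered pairs (as your sparsity count suggests), then $(e,f)$ and $(f,e)$ give the same $T$ for every retained edge.
\item Retained pairs at different roots with the same union $C(e)\cup C(f)$ also land on the same $T$.
\end{itemize}

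Letting ``the oracle index physical labels'' does not rescue the sparse-access form. If each repeated position returns the summed entry, $H$ is over-counted. If it returns per-label entries, you are really building a label-indexed (LCU-type) encoding that must keep the label in an ancilla. That is a different access model from the entry/location/inverse-location triple the lemma asserts.

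The missing step is deduplication, which is what the paper does. Let $j$ be the lexicographic rank of $T$ among the \emph{distinct} neighbors of $S$. At odd arity, enumerate these by root: mutual exact-twoness leaves at most one retained edge per root, hence at most $n$ moves per row. Both the $j$-th distinct neighbor and the rank of a given $T$ are then computed by repeated streaming over the polynomial lists, using only counters and the current $L$-set. The entry oracle sums all labels realizing $S\sd T$, as you already propose.

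A smaller point concerns precision. Entry errors accumulate over the $N_L^{1/4}L^{O_k(L)}n^{O_k(1)}$ queries of the guided filter. The fixed-point arithmetic therefore needs $\widetilde O(L\log n)$ bits, not $O(\log n)$-bit registers at ``polynomial precision''. This still fits within $n^{O_k(1)}$ gates and the stated workspace. The deterministic floor $\Gamma\succeq dI$, together with $\|H\|\le1$, is what controls this; no high-probability event is needed.

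The rest of your plan is sound: the row encoding, the on-the-fly legality and exact-two tests, the streaming computation of $h_u(S)$ and the capped diagonal, and Bennett uncomputation.
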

\begin{proof}
Encode a valid row by its sorted $L$-tuple in $O(L\log n)$ qubits and
extend $H$ by zero on invalid encodings.  A row query streams the
clause list (even $k$), or the root--row cells of the single candidate
list (odd $k$, where mutual exact-twoness leaves at most one retained
edge per root, hence at most $n$ moves, and \cref{eq:odd-completed}
gives the exact capped diagonal); the $j$th distinct neighbor is
selected lexicographically by repeated streaming and counting, so
deduplication and the inverse map need only counters and the current
$L$-set.  All lists have polynomial size with probability $1-o(1)$, failure
being declared otherwise.  Reversible fixed-precision arithmetic
supplies $C(S,T)/\sqrt{\Gamma(S,S)\Gamma(T,T)}$, with missing
adjacency positions padded by flagged dummy indices; since
$\|H\|\le1$, entry precision inverse-polynomial in the total query
bound, i.e.\ $\widetilde O(L\log n)$ bits, suffices.
\end{proof}

The oracle always uses the exact data-dependent diagonal; the known
cap $B_\Gamma$ enters only the clipped guide weighting below, well
defined on every input.  Let $G$ be
the independent raw guide and $g:=G/\|G\|$, and prepare
\begin{equation}\label{eq:weighted-guide}
 g_\Gamma:=\frac{\Gamma^{1/2}g}{\|\Gamma^{1/2}g\|}
\end{equation}
by a controlled rotation of amplitude
$\sqrt{\min\{\Gamma(S,S),B_\Gamma\}/B_\Gamma}$; on the condition-number
event this succeeds with probability at least $d/B_\Gamma$ and prepares
exactly $g_\Gamma$.  Condition on the main pool, so that
$H,\Gamma,\Pi_+$ are deterministic and independent of $G$, and let
$v:=\Pi_+p/\|\Pi_+p\|$ and $a:=\Gamma^{1/2}v/\|\Gamma^{1/2}v\|$, the
phase of $v$ fixed so that $\langle v,p\rangle>0$.  Two elementary
condition-number comparisons give
\begin{equation}\label{eq:two-condition-number-losses}
 |\langle a,\bar z\rangle|^2
 \ge\chi_\Gamma^{-1}\|\Pi_+p\|^2,
 \qquad
 |\langle v,g_\Gamma\rangle|^2
 \ge\chi_\Gamma^{-1}|\langle a,g\rangle|^2.
\end{equation}
By \cref{eq:guide-norm-mean}, the event $\|G\|^2\le C\log n$ fails with
probability $O(1/\log n)$; on it, \cref{lem:parity-free-guide} applied
to $a$ yields
\begin{equation}\label{eq:ideal-weighted-guide-overlap}
 \|\Pi_+g_\Gamma\|^2
 \ge
 \frac{\operatorname{Part}_k(L)\mu_{\rm g}^{2c}}
 {C\chi_\Gamma^2\log n}\,\|\Pi_+p\|^2,
\end{equation}
with failure probability
$O_k(L^2\chi_\Gamma/\mathsf D_{k,L})=o(1)$, because
$\mathsf D_{k,L}\ge n$ and $L^2\log n=o(n)$.  Keep the heralding flag
from \cref{eq:raw-guide-preparation} and the weighting flag, and extend
the Hamiltonian by $H\otimes|00\rangle\langle00|$.  Combining
\cref{eq:guide-combinatorial-overlap,eq:ideal-weighted-guide-overlap}
with $\beta_{\rm g}^2\ge c_kL^{-L}$ and the weighting success
probability $d/B_\Gamma$ gives a coherent guide whose squared
high-energy overlap satisfies
\begin{equation}\label{eq:prepared-guide-overlap}
 \gamma^2
 \ge N_L^{-1/2}L^{-O_k(L)}(\log n)^{-O_k(1)}.
\end{equation}

\begin{lemma}\label{lem:quantum-band-filter}
Let $H$ be an efficiently simulable sparse Hamiltonian and let
$0<\lambda_-<\lambda_+$ have constant relative separation.  Suppose a
preparable normalized state $|\Psi\rangle$ satisfies
$\|\1_{[\lambda_+,\infty)}(H)|\Psi\rangle\|\ge\gamma$.  For every
$\eta>0$, the guided sparse-Hamiltonian procedure can be run so that it
distinguishes this case from $\|H\|<\lambda_-$ and, in the former case,
retains an accepted density operator $\sigma$ satisfying
$\Tr(\1_{[\lambda_-,\infty)}(H)\sigma)\ge1-\eta$.  Relative to
\cite[Theorem~63]{SOKB}, the complexity changes only by polylogarithmic
factors in $1/\eta$.
\end{lemma}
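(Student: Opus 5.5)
We will treat \cref{lem:quantum-band-filter} as a small modification of the guided sparse-Hamiltonian algorithm of \cite[Theorem~63]{SOKB}: run quantum phase estimation (or a QSVT polynomial threshold filter) on $e^{iH\tau}$ applied to $|\Psi\rangle$, preceded by amplitude amplification. Efficient simulability gives block encodings of $H/\|H\|_{\max}$ with sparsity-dependent normalization. Since $\lambda_\pm$ have constant relative separation, a Chebyshev/QSVT approximation $f$ of a step function can be chosen with the following properties: it lies in $[1-\eta',1]$ on $[\lambda_+,\infty)\cap\operatorname{spec}$, in $[0,\eta']$ on $(-\infty,\lambda_-]$, and in $[0,1]$ everywhere. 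Its degree is $O(\lambda_+^{-1}\log(1/\eta'))$ in units of the block-encoding normalization, which is the same scaling as in \cite{SOKB} up to the $\log(1/\eta')$ factor.

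\emph{Detection.} Apply the block-encoded $f(H)$ to $|\Psi\rangle$ and measure the flag qubit. If $\|H\|<\lambda_-$, the acceptance probability is at most $\eta'^2$. In the planted case it is at least $(1-\eta')^2\gamma^2$. Amplitude estimation with $O(1/\gamma)$ repetitions distinguishes the two cases. This is exactly the structure of \cite[Theorem~63]{SOKB}, so only the polynomial degree changes, by a factor $\log(1/\eta)$.

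\emph{Retained state.} In the planted case, amplify to success and keep the post-selected state $\sigma\propto f(H)|\Psi\rangle\langle\Psi|f(H)$ (mixed over simulation and heralding errors). Decompose $|\Psi\rangle=\Pi_-|\Psi\rangle+(I-\Pi_-)|\Psi\rangle$ with $\Pi_-=\1_{[\lambda_-,\infty)}(H)$. Since $f(H)$ commutes with $\Pi_-$, the accepted unnormalized mass outside $\Ran\Pi_-$ is at most $\eta'^2$. The mass inside is at least $(1-\eta')^2\gamma^2$, coming from $\Pi_+\le\Pi_-$. Hence
\[
\Tr\bigl((I-\Pi_-)\sigma\bigr)\le\frac{\eta'^2}{(1-\eta')^2\gamma^2}.
\]

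\emph{Error budget (main obstacle).} The difficulty is that $\gamma$ is tiny, $N_L^{-1/4}$, so $\eta'$ must be chosen as $\eta'\le\gamma\sqrt{\eta}/2$. This makes $\log(1/\eta')=O(\log(1/\gamma)+\log(1/\eta))$. The $\log(1/\gamma)=O(L\log n)$ part is already polylogarithmic in the quantities absorbed into \cite[Theorem~63]{SOKB}, which itself amplifies at this precision. The genuinely new dependence is therefore only polylogarithmic in $1/\eta$. Hamiltonian simulation error and amplitude-amplification error must also be pushed below $\eta\gamma^2$, again at polylogarithmic cost. One point needs checking: imperfect amplification leaves a rejected branch, so we must condition on the heralded flag, and a trace-distance argument then bounds the total perturbation of $\Tr(\Pi_-\sigma)$ by $O(\eta)$. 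Rescaling $\eta$ finishes the proof.
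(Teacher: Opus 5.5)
Your proposal is correct and follows essentially the same route as the paper. Both use a spectral filter whose transition lies inside $[\lambda_-,\lambda_+]$, with the false-accept probability pushed down to $O(\eta\gamma^2)$: you do this with a QSVT threshold polynomial and $\eta'\le\gamma\sqrt\eta/2$, the paper with a phase-estimation acceptance window. Both then apply $\widetilde O(1/\gamma)$ amplitude amplification and retain the heralded accepted branch, which leaves relative weight $O(\eta)$ below $\lambda_-$. Your explicit bound $\Tr((I-\Pi_-)\sigma)\le\eta'^2/((1-\eta')^2\gamma^2)$ and your simulation-error remarks fill in details that the paper's sketch leaves implicit.
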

\begin{proof}
Run phase estimation with an acceptance window whose transition lies
strictly inside $[\lambda_-,\lambda_+]$, at precision a constant
multiple of $\lambda_+-\lambda_-$ and failure probability at most
$c\eta\gamma^2$: the component above $\lambda_+$ is accepted with
constant probability, the false-accept probability below $\lambda_-$ is
$O(\eta\gamma^2)$, and amplitude amplification costs
$\widetilde O(1/\gamma)$, leaving relative weight $O(\eta)$ below the
lower cutoff after acceptance.  This is the proof of
\cite[Theorem~63]{SOKB}, with the accepted branch retained rather than
only measured.
\end{proof}

\subsection{Recovery from the accepted band}\label{sec:quantum-recovery}

The guided circuit is retained coherently rather than used only as a
decision procedure, so we need a statement uniform over the whole
accepted spectral band.

\begin{proposition}\label{prop:quantum-cutoff-recovery}
On the high-probability planted event, let $\sigma$ be any density operator
supported on $\operatorname{Ran}\Pi_-$.  Apply the trace-nonincreasing
filter with Kraus operator $\sqrt d\,\Gamma^{-1/2}$ and condition on
success; equivalently, the normalized post-filter state is
\begin{equation}\label{eq:quantum-filtered-state}
 \widetilde\sigma
 :=\frac{\Gamma^{-1/2}\sigma\Gamma^{-1/2}}
 {\Tr(\Gamma^{-1}\sigma)}.
\end{equation}
Then apply the isometry
$|S\rangle\mapsto L^{-1/2}\sum_{i\in S}|i\rangle|S\setminus\{i\}\rangle$
and trace out the second register.  The resulting one-particle state
$\varrho$ satisfies
\begin{equation}\label{eq:quantum-rdm-mass}
 \left\langle\frac{\xstar}{\sqrt n},
 \varrho\frac{\xstar}{\sqrt n}\right\rangle\ge c_k>0.
\end{equation}
For even $k$, the conclusion is understood modulo the unavoidable global
sign.
\end{proposition}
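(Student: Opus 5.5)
The plan is to reduce to the classical single-vector argument of \cref{prop:even-recovery,prop:odd-recovery} by diagonalizing $\sigma$, then to use linearity of the filter and of the one-particle map.

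\emph{Decomposition.} Write $\sigma=\sum_jp_j|v_j\rangle\langle v_j|$ with $p_j\ge0$, $\sum_jp_j=1$, and unit $v_j\in\operatorname{Ran}\Pi_-$. Each $v_j$ then satisfies $\langle v_j,Hv_j\rangle\ge\lambda_-=b_-\rho^{\kappa_k}$.

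The $v_j$ may be complex. Since $H$, $\Gamma$, $Z$, $J_s$ and $\cP_{\rm o}$ are real symmetric, every quadratic form splits as $v^*Mv=(\operatorname{Re}v)^{\mathsf T}M(\operatorname{Re}v)+(\operatorname{Im}v)^{\mathsf T}M(\operatorname{Im}v)$. Every inequality below therefore holds for complex $v$ once it holds for real vectors, with the norm replaced by the sum of the two squared norms. I will state this once and then argue as if the vectors were real.

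\emph{Each eigenvector carries Johnson energy.} Fix $j$ and put $w_j:=\Gamma^{-1/2}Zv_j$ and $u_j:=w_j/\|w_j\|$. The comparison bounds are operator-norm statements on the high-probability planted event, so they hold for every vector simultaneously: \cref{eq:even-comparison} at even arity, and \cref{eq:odd-comparison} together with \cref{eq:odd-pop-conc} at odd arity (collected in \cref{prop:quantum-odd-normalization}). Recall that $b_-$ was chosen above the comparison-error threshold of the recovery proofs.

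Repeating the computations of \cref{prop:even-recovery,prop:odd-recovery} verbatim then gives $u_j^{\mathsf T}J_su_j\ge c_k\,b_{s,L}$, with $s=r$ for even $k$ and $s=2r$ for odd $k$. The only inputs are a Rayleigh quotient of at least $\lambda_-$, the floor $d\|w_j\|^2\le w_j^{\mathsf T}\Gamma w_j=1$, and, at odd arity, $d_\circ\asymp d$ from \cref{lem:odd-palm-degree}. \Cref{thm:transfer} then yields $\bar x^{\mathsf T}\rhoone(Zu_j)\bar x\ge c_k$ for every $j$, with one constant uniform in $j$.

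\emph{Linearity.} Because $\Gamma$ commutes with $Z$ and $Z^2=I$, we have $\Gamma^{-1/2}v_j=\|w_j\|\,Zu_j$. Hence
\[
 \widetilde\sigma=\sum_jq_j\,|Zu_j\rangle\langle Zu_j|,
 \qquad
 q_j:=\frac{p_j\|w_j\|^2}{\sum_ip_i\|w_i\|^2},
\]
which is a convex combination.

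Next I check that the isometry followed by the partial trace sends $|y\rangle\langle y|$ to $\rhoone(y)$. The reduced matrix entry is $L^{-1}\sum_RB_y(i,R)B_y(i',R)$, which is exactly \cref{def:rdm}, and the map is linear in the density operator. Therefore $\varrho=\sum_jq_j\rhoone(Zu_j)$, and \cref{eq:quantum-rdm-mass} follows by averaging the per-$j$ bounds.

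For even $k$, replacing $\xstar$ by $-\xstar$ multiplies $Z$ by $(-1)^L$. This leaves every rank-one projector unchanged, and $\bar x^{\mathsf T}\varrho\bar x$ is invariant under $\bar x\mapsto-\bar x$, so the sign ambiguity is harmless.

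\emph{Main obstacle.} The hard step is the second one: checking that the classical recovery arguments really need only the lower cutoff $\lambda_-$ and only norm-type events, so that they apply uniformly to every vector of the band, including data-dependent ones. In the classical setting they were applied to a single Krylov output, but the comparison bounds are operator-norm statements, so nothing there depends on which vector is chosen. I would also verify, at odd arity, that the $o(d)\|w\|^2$ population error in \cref{prop:odd-recovery} is absorbed at the constant $b_-$ rather than at the larger witness constant.
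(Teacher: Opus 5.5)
Your proposal is correct and follows the paper's proof essentially step for step. Every unit vector of $\operatorname{Ran}\Pi_-$ has Rayleigh quotient at least $b_-\rho^{\kappa_k}$, the operator-norm comparison bounds give uniform Johnson energy and hence, via \cref{thm:transfer}, uniform one-particle mass, and linearity of the filter and of the one-particle channel carries the bound through the convex mixture with weights proportional to $p_j\|\Gamma^{-1/2}v_j\|^2$. The only cosmetic difference is that you handle complex eigenvectors by splitting them into real and imaginary parts, while the paper notes that the transfer argument is sesquilinear and extends verbatim with adjoints; both are valid.
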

\begin{proof}
Every unit vector $v\in\operatorname{Ran}\Pi_-$ has Rayleigh quotient
at least $b_-\rho^{\kappa_k}$, so the uniform planted comparison shows
that, after applying $\Gamma^{-1/2}$ and the gauge in the analysis, the
normalized vector has constant positive energy in $J_{k/2}$ (even $k$),
respectively $J_{k-1}$ (odd $k$).  The proof of \cref{thm:transfer} is
sesquilinear and extends verbatim to complex vectors, with transposes
replaced by adjoints, giving \cref{eq:quantum-rdm-mass} for every such
$v$.  Diagonalizing $\sigma=\sum_jq_j|v_j\rangle\langle v_j|$ inside
$\operatorname{Ran}\Pi_-$, the post-filter state is the convex mixture
of the normalized $\Gamma^{-1/2}v_j$ with weights proportional to
$q_j\|\Gamma^{-1/2}v_j\|^2$; the one-particle channel is linear, so the
planted-mass bound survives the mixture.  The gauge commutes with the
diagonal filter and is used only in the analysis.
\end{proof}

The filter is implemented by a controlled rotation of amplitude
$\sqrt{d/\Gamma(S,S)}$ and succeeds on every state with probability at
least $d/B_\Gamma$.  To handle the coherent leakage allowed by
\cref{lem:quantum-band-filter}, use the accuracy
\begin{equation}\label{eq:quantum-leakage-choice}
        \eta:=\frac1{\log^2 n}\left(\frac d{B_\Gamma}\right)^2 .
\end{equation}
Indeed, if $\Tr(\Pi_-\sigma)\ge1-\eta$ and
$\sigma_-:=\Pi_-\sigma\Pi_-/\Tr(\Pi_-\sigma)$, the gentle-measurement
lemma gives $\|\sigma-\sigma_-\|_1\le3\sqrt\eta$; the
trace-nonincreasing map $X\mapsto d\,\Gamma^{-1/2}X\Gamma^{-1/2}$
contracts trace norm while retaining mass at least $d/B_\Gamma$ on
$\sigma_-$, so the two normalized post-filter outputs are within trace
distance $O((B_\Gamma/d)\sqrt\eta)=o(1)$, and
\cref{eq:quantum-rdm-mass} survives with a smaller constant.

Standard tomography then recovers the $n$-dimensional one-particle
state: estimating every entry of $\varrho$ to accuracy $n^{-2}$, by
repeated measurement in the standard basis and in the $O(n^2)$
two-dimensional rotated bases revealing the real and imaginary parts of
one off-diagonal entry, controls the operator norm at scale $1/n$;
Hoeffding's inequality and a union bound over entries give a Hermitian
estimate $\widehat\varrho$ with
$\|\widehat\varrho-\varrho\|\le\eta_{\rm tom}=o(1)$ from
$n^{O(1)}\log n$ accepted copies, with failure probability $n^{-1}$;
all of this is absorbed into \cref{eq:quantum-runtime}.

Since $x^\star$ is real, pass to real parts:
$R:=\operatorname{Re}(\varrho)$ is real positive semidefinite with
unit trace and $\bar x^{\mathsf T}R\bar x\ge c_k$ by
\cref{eq:quantum-rdm-mass}, while
$\widehat R:=\operatorname{Re}(\widehat\varrho)$ satisfies
$\|\widehat R-R\|\le\eta_{\rm tom}$.  With $\alpha_0:=c_k/4$ and
$\widehat\Pi:=\1_{[\alpha_0,\infty)}(\widehat R)$, Weyl's inequality
gives $\rank\widehat\Pi\le8/c_k$, and
$c_k-\eta_{\rm tom}\le\bar x^{\mathsf T}\widehat R\bar x
\le(1+\eta_{\rm tom})\|\widehat\Pi\bar x\|^2
+\alpha_0(1-\|\widehat\Pi\bar x\|^2)$ forces
$\|\widehat\Pi\bar x\|^2=\Omega_k(1)$.  Constant rank and constant
planted overlap are exactly the inputs of \cref{lem:rounding}, applied
to $\Ran\widehat\Pi$; an independent validation pool then gives a
Boolean vector of constant overlap.

\begin{theorem}[Log-free quantum detection and weak recovery]
\label{thm:quantum-inference}
Fix $k\ge2$ and a bias $\rho\in(0,1]$, independently of $n$.
Let $\ell\ge k$, let $L$ be as in
\cref{eq:quantum-divisible-level}, and assume
\begin{equation}\label{eq:quantum-range}
 L^2\log n=o(n).
\end{equation}
For a sufficiently large constant $C_k$,
\begin{equation}\label{eq:quantum-sample}
 m\ge C_k\rho^{-2}
       \frac{n^{k/2}}{\ell^{k/2-1}}
\end{equation}
suffices for quantum strong detection and weak recovery, with success
probability $1-o(1)$ uniformly in the planted assignment.  For even $k$,
recovery is understood modulo the unavoidable global sign.  In the reversible
read-only input model, the gate complexity is
\begin{equation}\label{eq:quantum-runtime}
 \widetilde O\!\left(
 N_L^{1/4}L^{O_k(L)}n^{O_k(1)}
 \right),
\end{equation}
and the coherent workspace is $\widetilde O(L\log n)$ qubits, excluding
polynomial-size read-only storage for the input and auxiliary classical
data.
\end{theorem}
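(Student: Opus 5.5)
The proof assembles the ingredients of this section in the order they were built. Everything starts from the sample budget. We split the input, via \cref{lem:subsample}, into four independent Poisson pools of constant-fraction means: a main pool at level $L$, a guide pool of mean $m_{\rm g}=A_{\rm g}\rho^{-2}\mathsf D_{k,L}$, a validation pool, and, for odd $k$, the dummy candidate process. Since $\ell/2\le L\le\ell$, we have $\mathsf D_{k,L}\le 2^{k/2-1}\mathsf D_{k,\ell}$. So \cref{eq:quantum-sample}, with $C_k$ large, covers all pools, and the failure probability is $e^{-\Omega(n)}$.

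The range \cref{eq:quantum-range} implies $L\log n\le n^{2/3}$. The classical trace theorems therefore apply with the fixed parameter $\delta=1/3$. This yields, with probability $1-o(1)$, the two-threshold promise \cref{eq:quantum-promise} under each law, and the condition-number event of \cref{lem:quantum-condition-number}.

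For \emph{detection}, we condition on the main pool and feed three things into \cref{lem:quantum-band-filter} with cutoffs $\lambda_\pm$: the oracles of \cref{lem:quantum-oracles}, the weighted guide \cref{eq:weighted-guide} heralded as described, and the extended Hamiltonian $H\otimes|00\rangle\langle00|$. Under planting, \cref{eq:two-cutoffs} and \cref{eq:prepared-guide-overlap} give overlap at least $\gamma$ above $\lambda_+$. Under the null, $\|H\|\le a_0\rho^{\kappa_k}<\lambda_-$. The procedure therefore separates the two laws.

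For the cost, we invoke \cite[Theorem~63]{SOKB}. Its complexity is $\widetilde O(1/\gamma)$ rounds of sparse Hamiltonian simulation, each round costing $n^{O_k(1)}$ oracle calls of $n^{O_k(1)}$ gates, plus guide preparation of $O(m_{\rm g}L\log n)$ gates. Substituting \cref{eq:prepared-guide-overlap} gives
$1/\gamma\le N_L^{1/4}L^{O_k(L)}(\log n)^{O_k(1)}$.
Together with the polylogarithmic dependence on $1/\eta$ and the choice \cref{eq:quantum-leakage-choice}, where $d/B_\Gamma\ge(C_k\log n)^{-1}$, this yields \cref{eq:quantum-runtime}. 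The workspace bound $\widetilde O(L\log n)$ comes from the oracle lemma together with the $O(L\log n)$-qubit register of the guide circuit.

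For \emph{recovery}, each accepted run produces $\sigma$ with $\Tr(\Pi_-\sigma)\ge1-\eta$. We then apply three steps:
\begin{enumerate}
\item the $\sqrt d\,\Gamma^{-1/2}$ filter, which succeeds with probability at least $d/B_\Gamma$;
\item the one-particle isometry;
\item the gentle-measurement argument.
\end{enumerate}
By \cref{prop:quantum-cutoff-recovery}, these give $\varrho$ with planted mass at least $c_k/2$. Repeating the run $n^{O(1)}\log n$ times yields tomography to operator-norm accuracy $o(1)$. We then use the real-part and Weyl argument to get a constant-rank projector with constant planted overlap. Finally, \cref{lem:rounding} produces the candidate list and \cref{lem:validation}, on the independent validation pool, selects the output; for odd $k$ it also fixes the sign. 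Each of these contributes only $n^{O(1)}$ repetitions, which is absorbed into the runtime.

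Uniformity in the planted assignment follows from the gauge symmetry. After gauging by $Z$, the laws of all pools are the same for every $\xstar$, and every failure event is a gauge-invariant statement about the gauged data. Hence each probability bound holds uniformly in $\xstar$, and a union bound over the $O(1)$ failure events completes the argument.

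The main obstacle is the guide-overlap bookkeeping. The probability that the overlap bound fails must be $o(1)$ \emph{conditionally on the main pool}, which is needed because $a$ depends on $\Pi_+$ and so on $H$. This in turn requires $L^2\chi_\Gamma/\mathsf D_{k,L}=o(1)$, which is exactly where $L^2\log n=o(n)$ is used. I would verify this carefully. I would also check that the $\log n$ loss from $\|G\|^2$ and the two factors of $\chi_\Gamma$ only affect the polylogarithmic overhead, and never touch the $N_L^{1/4}$ exponent.
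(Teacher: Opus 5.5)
Your proposal is correct and follows essentially the same route as the paper's proof. You subsample independent main, guide, and validation pools, and invoke the classical promises through $L\log n\le n^{2/3}$. You then run the band filter on $H\otimes|00\rangle\langle00|$, where the amplitude $\gamma$ from \cref{eq:prepared-guide-overlap} yields the $N_L^{1/4}L^{O_k(L)}$ cost, and finally filter, tomograph, round, and validate, including the conditional guide-failure bound $O_k(L^2\chi_\Gamma/\mathsf D_{k,L})=o(1)$.

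One small slip: the odd-arity dummy candidate process is not a pool split off from the input. Its intensity $q-\pi$ exceeds the clause intensity by a factor of order $1/\alpha\to\infty$, so the input could not supply it. The algorithm generates it internally, exactly as in the classical odd construction that your later steps use unchanged.
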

\begin{proof}
Generate independent main, guide, and validation pools by
\cref{lem:subsample}.  The range condition implies
$L\log n\le n^{2/3}$, $L/n=o(1)$, and, at fixed $\rho$, vanishing odd
cell sparsity, so the classical main-pool promises are automatic; the
guide hypotheses were verified after \cref{eq:guide-mu-scale}.  Apply
\cref{lem:quantum-band-filter} to the extended Hamiltonian, with
no-instance threshold $a_0\rho^{\kappa_k}$ and yes cutoff $\lambda_+$:
under the null the norm promise forces rejection, while under planting
the coherent guide has high-energy amplitude at least $\gamma$ from
\cref{eq:prepared-guide-overlap}; since the procedure uses the
amplitude $\gamma$, not its square, $1/\gamma$ contributes
$N_L^{1/4}L^{O_k(L)}$ up to polylogarithmic factors, and the oracle
costs, inverse cutoff, guide circuit, and precision factors are
polynomial and absorbed into \cref{eq:quantum-runtime}.  In the
planted case, retain the accepted state (the positive-energy band lies
in the joint herald sector $|00\rangle$, whose projection joins the
postselection), run the filter with thresholds $\lambda_-<\lambda_+$
and the error \cref{eq:quantum-leakage-choice}, prepare the
$n^{O(1)}\log n$ accepted copies used by tomography, and apply the
one-particle channel,
\cref{prop:quantum-cutoff-recovery,lem:rounding,lem:validation}.
\end{proof}

\begin{remark}[Scope of the quantum speedup]\label{rem:quantum-scope}
The algorithm has fourth-root dependence on the explicit Kikuchi
dimension $N_L$, and is genuinely $N_L^{1/4+o(1)}n^{O_k(1)}$ when
$\log L=o(\log(n/L))$, for example when $L=n^{o(1)}$; for polynomially
large $L$ the factor $L^{O_k(L)}$ may rival the fourth-root term, so we
do not describe that range as a quartic speedup.  Nor is this a quartic
separation from every classical architecture: for large constant arity,
\cite{GHOS} give a classical quadratic speedup in part of the range.
We make no quantum refutation claim, since bounded-error phase
estimation is not a zero-error one-sided certificate.
\end{remark}

\section{Boolean CSPs in the random-literal model}\label{sec:boolean-csp}

The upper bounds of this paper are stated for $k$XOR, but fixed Boolean
predicates in the uniform random-scope/random-literal model admit an
exact Fourier reduction: every nonzero Fourier character of a known
local planting law is, exactly, a noisy-XOR channel.  The transfer is
channelwise and carries over the classical inference and two-sided
refutation \emph{upper} bounds, and, for channels of degree at least
two, the quantum detection and weak-recovery upper bound under the
hypotheses of \cref{thm:quantum-inference}; degree-one channels are
handled directly.  It does not transfer the $k$XOR \emph{lower} bounds
to procedures that use the complete local output jointly, and it makes
no quantum-refutation claim.

\paragraph{Related work on general CSPs.}
Concurrent work of Chan, d'Orsi, and Xu~\cite{ChanDOrsiXu} extends
sum-of-squares and spectral refutation to general random CSPs without
uniformly random literals through odd and asymmetric Kikuchi operators;
their model is substantially broader than the random-literal reduction
here, whose point is the log-free transfer of the normalized $k$XOR
bounds.

Fix a constant arity $K$.  A local planting law is a known distribution
$\cW$ on $\{\pm1\}^K$.  A planted constraint chooses an ordered $K$-tuple
of distinct variables uniformly and reveals
\begin{equation}\label{eq:csp-observation}
 Y_j=x_{i_j}^\star W_j,
 \qquad W=(W_1,\ldots,W_K)\sim\cW.
\end{equation}
Under the null, the revealed literal vector is uniform on $\{\pm1\}^K$.
For nonempty $S\subseteq[K]$, define
\begin{equation}\label{eq:csp-fourier-bias}
 \mu_S:=\E_{W\sim\cW}\prod_{j\in S}W_j.
\end{equation}

\begin{lemma}\label{lem:exact-csp-channel}
If $\mu_S\ne0$, then
\begin{equation}\label{eq:csp-derived-label}
 b_a^S:=\sign(\mu_S)\prod_{j\in S}Y_{aj}
\end{equation}
is exactly a planted noisy $|S|$XOR label of bias $|\mu_S|$ on support
$\{i_{aj}:j\in S\}$.  Under the null it is exactly an unbiased XOR label.
\end{lemma}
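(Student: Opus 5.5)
The plan is to substitute the observation model \cref{eq:csp-observation} and compute. For a planted constraint $a$ with scope $(i_{a1},\ldots,i_{aK})$ of distinct variables, we have
\[
 \prod_{j\in S}Y_{aj}=\Bigl(\prod_{j\in S}x^\star_{i_{aj}}\Bigr)\prod_{j\in S}W_{aj}=(x^\star)^{F_a^S}\,\prod_{j\in S}W_{aj},
 \qquad F_a^S:=\{i_{aj}:j\in S\}.
\]
Here $|F_a^S|=|S|$ because the scope variables are distinct, so no cancellation occurs in the monomial. Setting $\eta_a:=\sign(\mu_S)\prod_{j\in S}W_{aj}\in\{\pm1\}$ gives $b_a^S=\eta_a\,(x^\star)^{F_a^S}$. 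Moreover, $\E\eta_a=\sign(\mu_S)\mu_S=|\mu_S|$ by \cref{eq:csp-fourier-bias}. This matches exactly the form of \cref{eq:model} with bias $\rho=|\mu_S|$.

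Next I check the distributional requirements of the model. The support $F_a^S$ is a uniform $|S|$-subset of $[n]$: the scope is a uniform ordered $K$-tuple of distinct variables, and its restriction to the coordinate positions in $S$ is a uniform ordered $|S|$-tuple, whose underlying set is uniform. The noise $\eta_a$ is a function of $W_a$ alone, and $W_a$ is independent of the scope. Across constraints, the pairs $(\text{scope}_a,W_a)$ are i.i.d., so the noise signs are independent of each other and of the supports. This is exactly the planted noisy $|S|$XOR law $P_{m,x^\star}$ with bias $|\mu_S|$.

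For the null, the revealed literal vector is uniform on $\{\pm1\}^K$ and independent of the scope. Therefore, for nonempty $S$, the product $\prod_{j\in S}Y_{aj}$ is a uniform sign independent of the scope, and multiplying by the fixed sign $\sign(\mu_S)$ preserves this. Independence across $a$ is inherited from the i.i.d. structure, which yields $Q_m$.

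No step here is a real obstacle. The only points needing care are distinctness of the scope variables, so that the support has size exactly $|S|$, and the fact that $\eta_a$ depends on $W_a$ but not on the scope, so that noise is independent of supports. Note also that only one channel $S$ is used at a time. Joint independence across different $S$ for the same constraint is neither claimed nor needed, since the transfer is channelwise.
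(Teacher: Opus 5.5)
Your proposal is correct and follows essentially the same route as the paper: the exact factorization of $\prod_{j\in S}Y_{aj}$ into the noise product $\prod_{j\in S}W_{aj}$ times the planted monomial, the uniformity of the restricted ordered scope, and the unbiasedness of a nonempty product of uniform literals under the null. You spell out the sign normalization, the bias computation, and the independence of the noise from the supports more explicitly than the paper does, but no additional idea is involved.
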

\begin{proof}
The identity
\[
 \prod_{j\in S}Y_{aj}
 =\left(\prod_{j\in S}W_{aj}\right)
  \left(\prod_{j\in S}x_{i_{aj}}^\star\right)
\]
is exact.  Restricting a uniform ordered tuple of distinct variables to
the positions in $S$ leaves a uniform ordered $|S|$-tuple, and under the
null the product of a nonempty subset of uniform independent literals is
unbiased.
\end{proof}

\begin{lemma}\label{lem:degree-one-channel}
Let $(I_a,Y_a)_{a=1}^m$ be the derived degree-one channel of fixed bias
$\beta>0$, so $I_a$ is uniform on $[n]$ and under planting
$Y_a=x_{I_a}^\star\eta_a$ with $\E\eta_a=\beta$.
\begin{enumerate}
\item The collision statistic
\[
 T:=\sum_{1\le a<b\le m}\1\{I_a=I_b\}Y_aY_b
\]
has null mean zero and variance $\binom m2/n$, while under planting
\[
 \E T=\beta^2\binom m2/n,
 \qquad
 \Var T\le C\left(m^2/n+\beta^2m^3/n^2\right).
\]
Consequently $m=\Theta(\beta^{-2}\sqrt n)$ gives constant statistical
advantage, and $\beta^2m/\sqrt n\to\infty$ gives strong detection.
\item The coordinatewise majority estimator
\[
 \widehat x_i:=\sign\!\left(\sum_{a:I_a=i}Y_a\right)
\]
(with fair tie breaking) has constant overlap with $x^\star$ with
probability $1-o(1)$ once $m\ge C\beta^{-2}n$.
\item On every input,
\[
 \sup_{x\in\{\pm1\}^n}
 \left|\frac1m\sum_aY_ax_{I_a}\right|
 =\frac1m\sum_{i=1}^n
 \left|\sum_{a:I_a=i}Y_a\right|.
\]
Under the null, this quantity is $O(\sqrt{n/m})$ with probability
$1-o(1)$.
\end{enumerate}
\end{lemma}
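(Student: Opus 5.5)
All three parts reduce to elementary computations once we group the observations by coordinate. Write $n_i:=\#\{a:I_a=i\}$ and $U_i:=\sum_{a:I_a=i}Y_a$. Then $T=\tfrac12\sum_i(U_i^2-n_i)$, and the counts $(n_i)$ are multinomial with mean $m/n$ each.

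\emph{Part (1).} For the moments of $T$ I expand over pairs. Under the null, the summands $\1\{I_a=I_b\}Y_aY_b$ are pairwise orthogonal: two distinct pairs share at most one label, and that label's sign is then unmatched. So the null mean is $0$ and the null variance is $\binom m2\bbP\{I_a=I_b\}=\binom m2/n$. Under planting, conditioning on $I_a=I_b=i$ gives $\E Y_aY_b=(x_i^\star)^2\beta^2=\beta^2$, which yields the stated mean. For the planted variance I sort the covariances of pairs $\{a,b\},\{c,e\}$ by overlap.
\begin{itemize}
\item Disjoint pairs: the events and marks are independent, so the covariance is $0$.
\item Pairs sharing one label, say $b=e$: the product is $\1\{I_a=I_b=I_c\}Y_aY_c$ (using $Y_b^2=1$). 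Its mean is $\beta^2/n^2$, while the product of the two means is $\beta^4/n^2$. There are $O(m^3)$ such pairs, giving $O(\beta^2m^3/n^2)$.
\item Identical pairs: these give $O(m^2/n)$.
\end{itemize}

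The detection claims then follow by Chebyshev. The signal $\beta^2m^2/(2n)$ divided by the null standard deviation $\asymp m/\sqrt n$ is $\asymp\beta^2m/\sqrt n$. The planted fluctuation term $\beta m^{3/2}/n$, relative to the signal, is $\asymp n^{1/2}\cdot n^{-1/2}\cdot(\beta^2m/\sqrt n)^{-1/2}$, which is also controlled. Thresholding at half the planted mean therefore gives constant advantage when $\beta^2m/\sqrt n$ is a large constant, and strong detection when it tends to infinity.

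\emph{Part (2).} I will mirror the cleanup proof of \cref{thm:main-boost}. Poissonize via \cref{lem:subsample}. Then the $n_i$ become independent $\operatorname{Pois}(m/n)$, and the $x_i^\star Y_a$ are i.i.d.\ with mean $\beta$. For each $i$,
\[
\E\,x_i^\star\widehat x_i\ \ge\ c\min\{1,\beta\sqrt{m/n}\}\ \ge\ c'
\qquad\text{once } m\ge C\beta^{-2}n .
\]
This is the one step needing care. I would prove it by a Berry--Esseen / Paley--Zygmund bound on the conditional random walk given $n_i$, restricted to the event $n_i\ge m/(2n)$. Ties contribute zero in expectation. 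The overlap $\sum_ix_i^\star\widehat x_i$ is then a sum of independent $[-1,1]$ terms with mean $\ge c'n$, so Hoeffding gives overlap $\ge c'n/2$ with probability $1-o(1)$.

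\emph{Part (3).} The identity is pointwise. The objective equals $m^{-1}\sum_i x_iU_i$, which is separable across coordinates, and it is maximized by $x_i=\sign U_i$; replacing $x$ by $-x$ shows the absolute value is attained the same way. Under the null, $\E|U_i|\le\sqrt{\E n_i}=\sqrt{m/n}$ by Jensen, so $\E\sum_i|U_i|\le\sqrt{mn}$. Markov's inequality then gives $m^{-1}\sum_i|U_i|=O(\sqrt{n/m})$ with probability $1-o(1)$; more precisely, for any $\omega_n\to\infty$ it is at most $\omega_n\sqrt{n/m}$. If a clean constant is wanted, I would instead compute the variance of $\sum_i|U_i|$, which is $O(n)$ under Poissonization, and apply Chebyshev.
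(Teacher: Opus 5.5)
Your Part (1) is the paper's argument: the same classification of pair covariances by overlap (exact orthogonality under the null; under planting, zero for disjoint pairs and at most $\beta^2/n^2$ for pairs sharing one occurrence), followed by Chebyshev. The ratio of the planted fluctuation to the signal is of order $(\beta^2m)^{-1/2}$, which is smaller than the expression you wrote, so your conclusion stands.

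Part (2) takes a genuinely different route to concentration. The paper bounds each coordinate by Hoeffding, $\Pr\{\widehat x_i\ne x_i^\star\mid N_i\}\le e^{-c\beta^2N_i}$, and averages over the binomial occupancy to get $\exp(-c'\beta^2m/n)$. It then restricts to a deterministic prefix of $\lceil C\beta^{-2}n\rceil$ samples, so that the bounded-differences inequality has exponent $\Omega_\beta(n)$. You Poissonize instead. This makes the coordinates exactly independent, so plain Hoeffding over coordinates replaces McDiarmid and no size-$\Theta(n)$ prefix is needed. The cost is that you are analyzing the majority on the random Poisson prefix: the coupling of \cref{lem:subsample} covers only the retained clauses, whose occupancies have mean $\mu/n\le m/(2n)$. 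You should say this explicitly. The paper also changes the estimator for large inputs, so this is bookkeeping rather than a defect.

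The step you flag as delicate is easy. For $m\ge C\beta^{-2}n$ the minimum in your bound is $1$. The conditional Hoeffding bound $\E[x_i^\star\widehat x_i\mid n_i]\ge1-2e^{-\beta^2n_i/2}$, averaged over the Poisson occupancy, already gives a constant close to $1$. Berry--Esseen and Paley--Zygmund are unnecessary.

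Part (3) has a gap in its concentration step. The identity and the first-moment bound $\E\,m^{-1}\sum_i|U_i|\le\sqrt{n/m}$ match the paper. Markov's inequality, however, gives only $\Pr\{m^{-1}\sum_i|U_i|>C\sqrt{n/m}\}\le1/C$, a constant rather than $o(1)$. That is weaker than the claim, and it is not enough for the use in \cref{sec:boolean-csp}. There the degree-one certificate must fall below a fixed fraction of $\eps$ with probability $1-o(1)$ at $m\ge C\eps^{-2}n$ for a fixed constant $C$.

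Your fallback needs two corrections.
\begin{enumerate}
\item The certificate is evaluated on the $m$ given clauses, so a Poissonized computation requires a de-Poissonization step.
\item The variance of $\sum_i|U_i|$ is not $O(n)$. For $m\gg n$, which is the relevant regime, each $|U_i|$ has variance $\Theta(m/n)$, so the total is $\Theta(m)$. That order would still suffice.
\end{enumerate}
The clean repair is the paper's. Replacing one observation $(I_a,Y_a)$ changes $\sum_i|U_i|$ by at most $2$, so McDiarmid's inequality at fixed $m$ gives $\Pr\{m^{-1}\sum_i|U_i|\ge2\sqrt{n/m}\}\le e^{-n/2}$. Equivalently, Efron--Stein bounds the variance by $2m$ directly in the fixed-$m$ model, and Chebyshev then applies because $\sqrt m=o(\sqrt{mn})$.
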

\begin{proof}
For (1), distinct pair summands have zero null covariance, even when
the index pairs share one occurrence, giving the exact variance; under
planting, a collision cancels the hidden spin, disjoint pair summands
are independent, and two pairs sharing an occurrence have covariance at
most $\beta^2/n^2$, giving the variance bound.  Chebyshev proves both
conclusions.  For (2), conditional on the occupancy
$N_i:=|\{a:I_a=i\}|$ the votes at $i$ are independent signs of mean
$\beta x_i^\star$, so
$\Pr\{\widehat x_i\ne x_i^\star\}\le\E e^{-c\beta^2N_i}
\le\exp(-c'\beta^2m/n)$; for larger inputs, run on a deterministic
prefix of $\lceil C\beta^{-2}n\rceil$ samples with $C$ so large that
the expected error fraction is below a constant less than $1/2$, and
bounded differences (exponent $\Omega_\beta(n)$) upgrades this to
constant overlap with probability $1-o(1)$.  For (3), optimize each
coordinate sign separately; under the null,
$\E|\sum_{a:I_a=i}Y_a|\le\sqrt{N_i}$, so Cauchy--Schwarz bounds the
expectation by $\sqrt{n/m}$, and bounded differences (one observation
moves the certificate by at most $2/m$) gives the high-probability
bound.
\end{proof}

\subsection{Inference for fixed known nonuniform planting laws}

The Walsh characters form a basis, so $\cW$ is nonuniform if and only
if some nonconstant $\mu_S$ is nonzero, and for a known law the
constantly many channels may be enumerated.  For every channel $S$ of
degree $s:=|S|\ge2$, \cref{thm:main-inference} applies at
\begin{equation}\label{eq:csp-inference-curve}
 m\ge C_{K,\delta}|\mu_S|^{-2}
       \frac{n^{s/2}}{\ell^{s/2-1}},
\end{equation}
and no additional odd-channel hypothesis is needed: the local law is
fixed, so every nonzero $|\mu_S|$ is a fixed constant and the relevant
cell-sparsity condition is automatic from \cref{eq:range}.  The quantum
theorem applies with $(k,\rho)$ replaced
by $(s,|\mu_S|)$ and with divisible level $s\lfloor\ell/s\rfloor$,
provided its quantum-range assumptions hold.  One may choose the
informative channel minimizing the resulting sample bound.

Degree-one channels are handled by \cref{lem:degree-one-channel}:
coordinate collisions give constant statistical advantage at
$m=\Theta(|\mu_S|^{-2}\sqrt n)$ and strong detection when
\begin{equation}\label{eq:degree-one-strong-detection}
 |\mu_S|^2m/\sqrt n\longrightarrow\infty,
\end{equation}
while coordinatewise aggregation gives weak recovery at
$m\ge C|\mu_S|^{-2}n$.  An odd-degree channel fixes the orientation
after held-out validation, and can also orient an estimate obtained
from an even channel; if all odd Fourier coefficients vanish, then
$\cW(w)=\cW(-w)$ and absolute orientation is not identifiable.  These
conclusions are channelwise: they do not assert optimality among
algorithms that use the complete local output jointly.

\subsection{Refutation for fixed predicates}

For refutation, let $P:\{\pm1\}^K\to\mathbb R$ be a fixed predicate or
bounded objective with Fourier expansion
\begin{equation}\label{eq:predicate-fourier}
 P(z)=\widehat P(\varnothing)
 +\sum_{\varnothing\ne S\subseteq[K]}
   \widehat P(S)\prod_{j\in S}z_j.
\end{equation}
Define
\[
 \val_I(x):=\frac1m\sum_{a=1}^m
 P(Y_{a1}x_{i_{a1}},\ldots,Y_{aK}x_{i_{aK}}),
 \qquad
 V_S(x):=\frac1m\sum_{a=1}^m
 \prod_{j\in S}Y_{aj}x_{i_{aj}}.
\]
For every input, the pointwise Fourier identity gives
\begin{equation}\label{eq:csp-certificate}
 \sup_x\left|\val_I(x)-\widehat P(\varnothing)\right|
 \le\sum_{\varnothing\ne S\subseteq[K]}
       |\widehat P(S)|\,\Cert_S(I),
\end{equation}
where $\Cert_S$ is a pointwise-sound two-sided certificate for the
corresponding XOR channel, supplied by \cref{thm:main-refutation} for
$|S|\ge2$.  At odd degree it may use auxiliary rooting, but soundness holds
for every auxiliary seed.  For a degree-one channel $S=\{j\}$ one may take
the exact certificate
\begin{equation}\label{eq:degree-one-certificate}
 \Cert_{\{j\}}(I)
 :=\frac1m\sum_{i=1}^n
 \left|\sum_{a:i_{aj}=i}Y_{aj}\right|
 =\sup_{x\in\{\pm1\}^n}|V_{\{j\}}(x)|.
\end{equation}
Under the random-literal null it is $O(\sqrt{n/m})$ with high
probability.  Correlations between channels are irrelevant to soundness,
and a union bound suffices for null tightness because $K$ is fixed.

The pointwise bound \cref{eq:csp-certificate} also transfers the
sum-of-squares proofs.  With
$\Cert_P(I):=\sum_{\varnothing\ne S\subseteq[K]}
|\widehat P(S)|\,\Cert_S(I)$, we claim that for each
$\tau\in\{\pm1\}$ the polynomial
$\Cert_P(I)-\tau\bigl(\val_I(x)-\widehat P(\varnothing)\bigr)$ has a
degree-$O_K(\ell)$ sum-of-squares proof modulo the ideal $x_i^2=1$; the
pointwise bound \cref{eq:csp-certificate}, and with it the supremum
bound, then follow semantically.  Each summand
$|\widehat P(S)|\,\bigl(\Cert_S-\tau\sign(\widehat P(S))V_S\bigr)$ with
$|S|\ge2$ is a nonnegative multiple of a one-sided channel bound,
certified in degree $O_K(\ell)$ by \cref{cor:sos}.  For a degree-one
channel $S=\{j\}$, write $a_i:=\sum_{a:i_{aj}=i}Y_{aj}$; modulo the
ideal, for $\sigma\in\{\pm1\}$,
\[
 |a_i|-\sigma a_ix_i
 =\frac{|a_i|}2\bigl(x_i-\sigma\sign(a_i)\bigr)^2,
\]
with either sign chosen when $a_i=0$, so summing over $i$ certifies
$m\,\Cert_{\{j\}}\mp m\,V_{\{j\}}\ge0$ in degree two.  If
$D_P:=\deg P=0$, the centered objective vanishes identically.  For a
Boolean predicate the Fourier coefficients are rational, so the
exact-evaluation and exact-SoS statements of \cref{sec:evaluation}
apply verbatim; for a general bounded objective, replace the
$\widehat P(S)$ by rational approximations and add the total
approximation error to the certificate as explicit slack, which keeps
it sound on every input since $|V_S(x)|\le1$ pointwise.

Let $D_P\ge2$, assume a common level $\ell\ge K$, and allocate a
predicate-dependent constant fraction of the target error to each
nonzero channel.  Since $\mathsf D_{s,\ell}=n(n/\ell)^{s/2-1}$ is
nondecreasing in $s\ge2$ and the direct degree-one certificate has
scale $\eps^{-2}n$, one obtains
\begin{equation}\label{eq:csp-refutation-curve}
 m\ge C_{P,\delta}\eps^{-2}
       \frac{n^{D_P/2}}{\ell^{D_P/2-1}}.
\end{equation}
For fixed $\eps>0$, this gives a pointwise-sound two-sided strong
refutation of every fixed predicate, with a sum-of-squares proof of
degree $O_K(\ell)$ and no additional asymptotic condition at odd
degree.  If $D_P=1$, the exact linear certificate has scale
$m\ge C_P\eps^{-2}n$.  If Fourier biases or allocated target
tolerances are allowed to vary with $n$, the quantitative extension of
\cref{rem:quantitative-ranges} applies channelwise.

In summary, the extension covers every fixed known nonuniform local
planting law (orientation identifiable exactly when some odd Fourier
channel is nonzero) and every fixed predicate in the uniform
random-scope/random-literal model; it does not cover adversarial
scopes, semirandom literals, or an unknown planting law without a
quantitative promise on an informative Fourier coefficient.

\section{Discussion and open problems}\label{sec:discussion}
\paragraph{A Kikuchi-type Ihara Bass formula}
The name of the Kikuchi method originates from~\cite{WEM}, motivated by the hierarchy of
Kikuchi approximations to free energy in statistical physics~\cite{Kikuchi1951}.  As the authors
of~\cite{WEM} note, their derivation expands about an inconsistent paramagnetic
point, so the connection between Kikuchi matrices and the Hessian of Kikuchi
free energy holds only to first order.  The Kikuchi matrix itself has a more direct physical interpretation
via mean-field theory; see~\cite{Hastings2019} and Appendix~B
of~\cite{SOKB}. That said, a Hessian of Kikuchi free energy can be principally derived, which would give a generalization of the Kikuchi matrix. It is an interesting question whether such a construction, as opposed to the normalization given here, also recovers the log-free SNR-runtime trade-off curve. This further motivates the notion of a generalized Ihara-Bass formula that connects the Kikuchi Hessian to a non-backtracking operator. The reference~\cite{dOrsiTrevisan} studies the Ihara-Bass formula in the context of random CSPs but stops at Bethe-level corrections and does not explore the higher Kikuchi-type corrections.

\paragraph{A log-free sum-of-squares lower bound.}
The refutation converse of~\cite{KMOW} retains a logarithmic degree
loss and an $\eps$ dependence (\cref{sec:ref-converse}); a
degree-$\Theta_k(\ell)$ sum-of-squares lower bound at
$m\asymp\eps^{-2}\Dkl$ would make the trade-off sharp on both sides,
and \cref{thm:ld-refutation} suggests that the obstruction is the
positivity of the pseudoexpectation, not the sampling model.

\paragraph{Vanishing bias at odd arity.}
The odd construction is uniform in the wider range of
\cref{rem:quantitative-ranges}, but extending the same $\rho^{-2}$
curve into the dense rooted-cell regime $\rho\ll(\ell/n)^{1/4}$
appears to require an odd operator that controls dense root cells
without discarding a constant fraction of the pair polynomial; the
non-spectral algorithm of~\cite{Mao2026} already achieves
unconditional bias dependence in its level range.

\paragraph{Faster non-spectral algorithms.}
For large constant $k$, the algorithm of~\cite{GHOS}
improves the running time quadratically in part of the parameter range; we leave open
whether it inherits the log-free sample curve or can be combined with
the normalized operators here.

\paragraph{Constants.}
We have not optimized the constants $C_{k,\delta}$ and $c_k$, and we
claim matching constants in no task. Determining the true constant, in the spirit of a
Kesten--Stigum threshold, is an interesting open question.

\bigskip

%\paragraph{Acknowledgments.}  To be added.

\subsection*{On the use of AI models.}
The central proof technique of this paper (sharp spectral bounds for
the normalized Kikuchi matrix via an accurate count of trace
walks) and the corresponding applications to detection, recovery, and refutation were conceived and developed by the authors. We have subsequently used a Large Language Model (GPT 5.5) to assist with transferring this technique to the matching lower bounds in the same model and the extension to more general predicates and planting laws. We have
 also used Large Language Models to assist with writing and proofreading throughout. We have verified every mathematical
statement.


\begin{thebibliography}{99}\small
\bibitem{SHmoore} A.~Schmidhuber and M.~B.~Hastings, \emph{A spectral proof of the hypergraph Moore bound}, arXiv:2607.26028 (2026).
\bibitem{ZK} L.~Zdeborov\'a and F.~Krzakala, \emph{Statistical physics of inference: thresholds and algorithms}, Adv.\ Phys.\ 65 (2016), 453--552.
\bibitem{BPW} A.~S.~Bandeira, A.~Perry, and A.~S.~Wein, \emph{Notes on computational-to-statistical gaps: predictions using statistical physics}, Port.\ Math.\ 75 (2018), 159--186.
\bibitem{KWB} D.~Kunisky, A.~S.~Wein, and A.~S.~Bandeira, \emph{Notes on computational hardness of hypothesis testing: predictions using the low-degree likelihood ratio}, in \emph{Mathematical Analysis, its Applications and Computation} (ISAAC 2019), Springer Proc.\ Math.\ Stat.\ 385, Springer, 2022, 1--50.
\bibitem{RichardMontanari} E.~Richard and A.~Montanari, \emph{A statistical model for tensor PCA}, NeurIPS 2014, 2897--2905.
\bibitem{WEM} A.~S.~Wein, A.~El~Alaoui, and C.~Moore, \emph{The Kikuchi hierarchy and tensor PCA}, J.\ ACM 72(5) (2025), Article 35; preliminary version in FOCS 2019.
\bibitem{Kikuchi1951} R.~Kikuchi, \emph{A theory of cooperative phenomena}, Phys.\ Rev.\ 81 (1951), 988--1003.
\bibitem{Hastings2019} M.~B.~Hastings, \emph{Classical and quantum algorithms for tensor principal component analysis}, Quantum 4 (2020), 237.
\bibitem{GKM} V.~Guruswami, P.~K.~Kothari, and P.~Manohar, \emph{Algorithms and certificates for Boolean CSP refutation: smoothed is no harder than random}, STOC 2022, 678--689.
\bibitem{HKM} J.-T.~Hsieh, P.~K.~Kothari, and S.~Mohanty, \emph{A simple and sharper proof of the hypergraph Moore bound}, SODA 2023, 2324--2344.
\bibitem{HKMMS} J.-T.~Hsieh, P.~K.~Kothari, S.~Mohanty, D.~Munh\'a Correia, and B.~Sudakov, \emph{Small even covers, locally decodable codes and restricted subgraphs of edge-colored Kikuchi graphs}, Int.\ Math.\ Res.\ Not.\ 2025, no.~5, article rnaf045.
\bibitem{SOKB} A.~Schmidhuber, R.~O'Donnell, R.~Kothari, and R.~Babbush, \emph{Quartic quantum speedups for planted inference}, Phys.\ Rev.\ X 15 (2025), 021077; arXiv:2406.19378v2.
\bibitem{SZ} A.~Schmidhuber and A.~Zlokapa, \emph{Quartic quantum speedups for community detection}, arXiv:2510.08494.
\bibitem{AGKM} O.~Alrabiah, V.~Guruswami, P.~K.~Kothari, and P.~Manohar, \emph{A near-cubic lower bound for 3-query locally decodable codes from semirandom CSP refutation}, STOC 2023, 1438--1448.
\bibitem{KothariManohar} P.~K.~Kothari and P.~Manohar, \emph{An exponential lower bound for linear 3-query locally correctable codes}, STOC 2024, 776--787.
\bibitem{HopkinsSteurer} S.~B.~Hopkins and D.~Steurer, \emph{Efficient Bayesian estimation from few samples: community detection and related problems}, FOCS 2017, 379--390.
\bibitem{Feige} U.~Feige, \emph{Small linear dependencies for binary vectors of low weight}, in \emph{Building Bridges: Between Mathematics and Computer Science}, Springer, 2008, 283--307.
\bibitem{RRS} P.~Raghavendra, S.~Rao, and T.~Schramm, \emph{Strongly refuting random CSPs below the spectral threshold}, STOC 2017, 121--131.
\bibitem{dOrsiTrevisan} T.~d'Orsi and L.~Trevisan, \emph{A Ihara--Bass formula for non-Boolean matrices and strong refutations of random CSPs}, CCC 2023, LIPIcs 264, Article 27, 27:1--27:16.
\bibitem{ChanDOrsiXu} S.~O.~Chan, T.~d'Orsi, and J.~Xu, \emph{Strongly refuting random CSP without literals}, arXiv:2604.27336 (2026).
\bibitem{KX} P.~K.~Kothari and J.~Xu, \emph{Smooth trade-off for tensor PCA via sharp bounds for Kikuchi matrices}, in \emph{Proceedings of the 2026 Annual ACM--SIAM Symposium on Discrete Algorithms (SODA)}, SIAM, 2026, 2617--2632, doi:10.1137/1.9781611978971.95.
\bibitem{ABBS} E.~Abbe, A.~S.~Bandeira, A.~Bracher, and A.~Singer, \emph{Decoding binary node labels from censored edge measurements: phase transition and efficient recovery}, IEEE Trans.\ Netw.\ Sci.\ Eng.\ 1 (2014), 10--22.
\bibitem{HWX} B.~Hajek, Y.~Wu, and J.~Xu, \emph{Achieving exact cluster recovery threshold via semidefinite programming: extensions}, IEEE Trans.\ Inform.\ Theory 62 (2016), 5918--5937; arXiv:1502.07738.
\bibitem{ALS} K.~Ahn, K.~Lee, and C.~Suh, \emph{Community recovery in hypergraphs}, IEEE Trans.\ Inform.\ Theory 65 (2019), 6561--6579; arXiv:1709.03670.
\bibitem{BKNPRW} A.~S.~Bandeira, D.~Kunisky, P.~Nizi\'c-Nikolac, L.~Pesenti, and R.~Wang, \emph{The hypergraph Moore bound}, arXiv:2607.14068v2 (2026).
\bibitem{NaorVerstraete} A.~Naor and J.~Verstra\"ete, \emph{Parity check matrices and product representations of squares}, Combinatorica 28 (2008), 163--185.
\bibitem{Tropp} J.~A.~Tropp, \emph{User-friendly tail bounds for sums of random matrices}, Found.\ Comput.\ Math.\ 12 (2012), 389--434.
\bibitem{Mao2026} S.~Mao, \emph{Near optimal algorithms for noisy $k$-XOR under low-degree heuristic}, arXiv:2604.10457 (2026).
\bibitem{Feige02} U.~Feige, \emph{Relations between average case complexity and approximation complexity}, STOC 2002, 534--543.
\bibitem{CGL} A.~Coja-Oghlan, A.~Goerdt, and A.~Lanka, \emph{Strong refutation heuristics for random $k$-SAT}, Combin.\ Probab.\ Comput.\ 16 (2007), 5--28.
\bibitem{AOW} S.~R.~Allen, R.~O'Donnell, and D.~Witmer, \emph{How to refute a random CSP}, FOCS 2015, 689--708.
\bibitem{BarakMoitra} B.~Barak and A.~Moitra, \emph{Noisy tensor completion via the sum-of-squares hierarchy}, COLT 2016, 417--445.
\bibitem{Grigoriev} D.~Grigoriev, \emph{Linear lower bound on degrees of Positivstellensatz calculus proofs for the parity}, Theoret.\ Comput.\ Sci.\ 259 (2001), 613--622.
\bibitem{Schoenebeck} G.~Schoenebeck, \emph{Linear level Lasserre lower bounds for certain $k$-CSPs}, FOCS 2008, 593--602.
\bibitem{AlekhnovichRazborov} M.~Alekhnovich and A.~A.~Razborov, \emph{Lower bounds for polynomial calculus: non-binomial case}, FOCS 2001, 190--199; expanded version, Proc.\ Steklov Inst.\ Math.\ 242 (2003), 18--35.
\bibitem{ODonnellWitmer} R.~O'Donnell and D.~Witmer, \emph{Goldreich's PRG: evidence for near-optimal polynomial stretch}, CCC 2014, 1--12.
\bibitem{MoriWitmer} R.~Mori and D.~Witmer, \emph{Lower bounds for CSP refutation by SDP hierarchies}, APPROX/RANDOM 2016, LIPIcs 60, 41:1--41:30; arXiv:1610.03029.
\bibitem{KMOW} P.~K.~Kothari, R.~Mori, R.~O'Donnell, and D.~Witmer, \emph{Sum of squares lower bounds for refuting any CSP}, STOC 2017, 132--145.
\bibitem{BHLM} A.~Basu, J.-T.~Hsieh, A.~D.~Lin, and P.~Manohar, \emph{Solving random planted CSPs below the $n^{k/2}$ threshold}, ICALP 2026; arXiv:2507.10833 (2025).
\bibitem{KothariECCC} P.~K.~Kothari, \emph{Kikuchi graphs of random hypergraphs are approximately Johnson}, ECCC Report TR26-099; arXiv:2606.08597 (2026).
\bibitem{SchrammWein} T.~Schramm and A.~S.~Wein, \emph{Computational barriers to estimation from low-degree polynomials}, Ann.\ Statist.\ 50 (2022), 1833--1858.
\bibitem{Ale03} M.~Alekhnovich, \emph{More on average case vs approximation complexity}, FOCS 2003, 298--307.
\bibitem{BHK+16} B.~Barak, S.~B.~Hopkins, J.~Kelner, P.~K.~Kothari, A.~Moitra, and A.~Potechin, \emph{A nearly tight sum-of-squares lower bound for the planted clique problem}, FOCS 2016, 428--437.
\bibitem{HKP+17} S.~B.~Hopkins, P.~K.~Kothari, A.~Potechin, P.~Raghavendra, T.~Schramm, and D.~Steurer, \emph{The power of sum-of-squares for detecting hidden structures}, FOCS 2017, 720--731.
\bibitem{Hop18} S.~B.~Hopkins, \emph{Statistical Inference and the Sum of Squares Method}, Ph.D.\ thesis, Cornell University, 2018.
\bibitem{BKW} A.~Blum, A.~Kalai, and H.~Wasserman, \emph{Noise-tolerant learning, the parity problem, and the statistical query model}, J.\ ACM 50 (2003), 506--519.
\bibitem{FPV} V.~Feldman, W.~Perkins, and S.~Vempala, \emph{On the complexity of random satisfiability problems with planted solutions}, STOC 2015, 77--86.
\bibitem{XJL} M.~Xu, V.~Jog, and P.-L.~Loh, \emph{Optimal rates for community estimation in the weighted stochastic block model}, Ann.\ Statist.\ 48 (2020), 183--204; arXiv:1706.01175.
\bibitem{MuscoMusco} C.~Musco and C.~Musco, \emph{Randomized block Krylov methods for stronger and faster approximate singular value decomposition}, NIPS 2015, 1396--1404.
\bibitem{KW} J.~Kuczy\'nski and H.~Wo\'zniakowski, \emph{Estimating the largest eigenvalue by the power and Lanczos algorithms with a random start}, SIAM J.\ Matrix Anal.\ Appl.\ 13 (1992), 1094--1122.
\bibitem{BunchKaufman} J.~R.~Bunch and L.~Kaufman, \emph{Some stable methods for calculating inertia and solving symmetric linear systems}, Math.\ Comp.\ 31 (1977), 163--179.
\bibitem{Bareiss} E.~H.~Bareiss, \emph{Sylvester's identity and multistep integer-preserving Gaussian elimination}, Math.\ Comp.\ 22 (1968), 565--578.
\bibitem{HolmgrenWein} J.~Holmgren and A.~S.~Wein, \emph{Counterexamples to the low-degree conjecture}, ITCS 2021, LIPIcs 185, article 75.
\bibitem{BHJK} R.-D.~Buhai, J.-T.~Hsieh, A.~Jain, and P.~K.~Kothari, \emph{The quasi-polynomial low-degree conjecture is false}, FOCS 2025; arXiv:2505.17360.
\bibitem{MaoLD} S.~Mao, \emph{The polynomial-time low-degree conjecture is false}, arXiv:2607.20318 (2026).
\bibitem{JiaVijay} H.~Jia and A.~Vijayaraghavan, \emph{Low-degree method fails to predict robust subspace recovery}, COLT 2026, PMLR 336, 3751--3781.
\bibitem{HKKLMT} J.-T.~Hsieh, D.~M.~Kane, P.~K.~Kothari, J.~Li, S.~Mohanty, and S.~Tiegel, \emph{Rigorous implications of the low-degree heuristic}, STOC 2026; arXiv:2601.05850.
\bibitem{GHOS} M.~Gupta, W.~He, R.~O'Donnell, and N.~G.~Singer, \emph{A classical quadratic speedup for planted $k$XOR}, arXiv:2508.09422 (2025).
\end{thebibliography}
\end{document}